\DeclareMathAlphabet{\mathcal}{OMS}{cmsy}{m}{n}
\colorlet{darkgreen}{green!50!black}
\newlength{\limage}
\newlength{\rimage}
\newlength{\rrimage}
\newcommand*{\defeq}{\mathrel{\rlap{\raisebox{0.3ex}{$\m@th\cdot$}}\raisebox{-0.3ex}{$\m@th\cdot$}}=}
\newcommand{\kw}[1]{{\textsf{#1}}\xspace}
\newcommand{\danger}{\kw{Dng}}
\newcommand{\nondanger}{\kw{NDng}}
\newcommand{\positions}{\kw{Pos}}
\newcommand{\pos}{\positions}
\newcommand{\full}{\kw{full}}
\newcommand{\appelem}[2]{\kw{Wants}(#1, #2)}
\newcommand{\appelemb}[3]{\kw{Wants}_{#1}(#2, #3)}
\newcommand{\chase}[2]{\kw{Chase}(#1, #2)}
\newcommand{\cov} {\kw{sim}}
\newcommand{\fragcup}{\wedge}
\newcommand{\formcup}{\cup}
\newcommand{\bsim}{\leq}
\newcommand{\bbsim}{\simeq}
\newcommand{\cc}{\mathrm{c}}
\newcommand{\f}{\mathrm{f}}
\newcommand{\ii}{\mathrm{i}}
\renewcommand{\ll}{\mathrm{l}}
\newcommand{\m}{\mathrm{m}}
\newcommand{\p}{\mathrm{p}}
\renewcommand{\t}{\mathrm{t}}
\newcommand{\eqfun}{\leftrightarrow_{\mathrm{FUN}}}
\newcommand{\eqids}{\sim_{\mathrm{\incd}}}
\newcommand{\eqidsclass}[1]{[#1]_{\mathrm{\incd}}}
\newcommand{\eqrev}{\eqids}
\newcommand{\ui}[2]{#1 \subseteq #2}
\newcommand{\ovl}{\kw{OVL}}
\newcommand{\overlap}{\ovl}
\newcommand{\myeat}[1]{}
\newcommand{\calA}{\mathcal{A}}
\newcommand{\calE}{\mathcal{E}}
\newcommand{\calF}{\mathcal{F}}
\newcommand{\calH}{\mathcal{H}}
\newcommand{\calN}{\mathcal{N}}
\newcommand{\calY}{\mathcal{Y}}
\newcommand{\calZ}{\mathcal{Z}}
\newcommand{\NN}{\mathbb{N}}
\newcommand{\ZZ}{\mathbb{Z}}
\newcommand{\fd}{\kw{FD}}
\newcommand{\uid}{\kw{UID}}
\newcommand{\ufd}{\kw{UFD}}
\newcommand{\incd}{\kw{ID}}
\newcommand{\cq}{\kw{CQ}}
\newcommand{\acq}{\kw{ACQ}}
\DeclareMathOperator{\id}{id}
\DeclareMathOperator{\pr}{\mathsf{pr}}
\DeclareMathOperator{\parts}{\mathfrak{P}}
\DeclareMathOperator{\dom}{dom}
\newcommand{\entfin}{\models_\mathrm{fin}}
\newcommand{\entunr}{\models_\mathrm{unr}}
\newcommand\np{\textup{NP}\xspace}
\newcommand{\fds}{\Sigma_{\mathrm{FD}}}
\newcommand{\ufds}{\Sigma_{\mathrm{UFD}}}
\newcommand{\ids}{\Sigma_{\mathrm{UID}}}
\newcommand{\idsb}{\ids}
\newcommand{\idsr}{\ids^{\mathrm{rev}}}
\newcommand{\con}{\Sigma}
\newcommand{\ucon}{\Sigma_{\mathrm{U}}}
\newcommand{\uconb}{\Sigma_{\mathrm{U}}}
\newcommand{\uconr}{\Sigma^{\mathrm{rev}}_{\mathrm{U}}}
\newcommand{\idprec}{\rightarrowtail}
\newcommand\restr[2]{{% we make the whole thing an ordinary symbol
  \kern-\nulldelimiterspace % automatically resize the bar with \right
  #1 % the function
  _{|#2} % this is the delimiter
  }}
\newcommand{\fdrestr}[2]{#1^{#2}}
\newcommand{\relrestr}[2]{\restr{#1}{#2}}
\newcommand{\quot}[2]{#1/{#2}}
\newcommand{\card}[1]{\left|#1\right|}
\newcommand{\mapb}{\lambda}
\newcommand{\deft}[1]{\textbf{#1}}
\newcommand{\defo}[1]{\emph{#1}}
\newcommand{\defp}[1]{\emph{#1}}
\newcommand{\arity}[1]{\card{#1}}
\newcommand{\invar}[1]{{\selectfont\textsf{#1}}}
\newcommand{\neqfunc}{n}
\newcommand{\neqidsc}{n}
\newcommand{\rfcl}{\afactcl}
\newcommand{\afactcl}{\kw{AFactCl}}
\newcommand{\lab}[1]{\Lambda(#1)}
\newcommand{\sprod}{\otimes}
\newcommand{\mprod}{\otimes}
\newcommand{\pire}{PI}
\newcommand{\mybf}[1]{\text{\textit{\textbf{#1}}}}
\newcommand{\is}[1]{\makebox[0pt]{\(\scriptstyle#1\)}}
\newcommand{\dense}{\mathrm{dense}}
\newcommand{\pre}{\mathrm{pre}}
\newcommand{\ach}{\mathrm{ach}}
\newcommand{\closure}[1]{#1^*}
\newcommand{\closuref}[1]{#1^{\mathrm{f}*}}
\newcommand{\cl}{\kw{CL}}
\newcommand{\NP}{$\mathrm{NP}$\xspace}
\newcommand{\ACZ}{$\mathrm{AC}^0$\xspace}
\newcommand{\factive}{\mathrm{active}}
\newcommand{\fnew}{\mathrm{new}}
\newcommand{\w}{\mathrm{witness}}
\newcommand{\rr}{\mathrm{reuse}}
\newcommand{\apxref}[1]{Appendix~\ref{#1}}
\newcommand{\secref}[1]{Section~\ref{#1}}
\newcommand{\secrefs}[2]{Sections~\ref{#1} and~\ref{#2}}
\newcommand{\secname}{section\xspace}
\newcommand{\subsecname}{subsection\xspace}
\newcommand{\secnames}{sections\xspace}
\newcommand{\secshortname}{Section}
\newcommand{\thepaper}{the paper\xspace}
\newcommand{\thispaper}{this paper\xspace}
\newcommand{\Thispaper}{This paper\xspace}
\newcommand{\myparagraph}[1]{\paragraph{\textbf{#1}}}
\newenvironment{maintheorem}{\begin{theorem}}{%
    \end{theorem}\ignorespacesafterend
}
\newenvironment{mainproposition}{\begin{proposition}}{%
    \end{proposition}\ignorespacesafterend
}
\newenvironment{mainexample}{\begin{example}}{%
    \end{example}\ignorespacesafterend
}
\newcommand{\proofof}{Proof of\xspace}
\newcommand{\mydefthm}[4]{
  \begin{restatable}[#3]{#2}{#1}
    \label{thm:#1}
    #4
  \end{restatable}
}
\newcommand{\qedef}{}
\newcommand{\descindent}{\null \hspace{13pt} \null}
\title{Finite Open-World Query Answering with Number Restrictions}
\author{Antoine Amarilli}
\affiliation{
\institution{LTCI, Télécom Paris, Institut Polytechnique de Paris}
\city{Paris}
\country{France}
}
\author{Michael Benedikt}
\affiliation{%
  \institution{The University of Oxford}
  \city{Oxford}
  \country{United Kingdom}
}
\begin{abstract}
Open-world query answering  is the problem of deciding, given a set of facts,
conjunction of constraints, and query, whether the 
facts and constraints imply the  query.
This amounts to 
reasoning over all instances that include the facts and satisfy the constraints.
We study \emph{finite open-world query answering} (FQA), which assumes that the underlying world
is finite and thus only considers the \emph{finite}
completions of the instance.
The major known decidable cases of FQA
  derive from the following: the guarded fragment of first-order logic, which can express referential constraints (data in one
place points to data in another) but cannot express number restrictions such
as functional dependencies; and the guarded fragment with number restrictions but
on a signature of arity only two.
  In this paper, we give the first decidability results for FQA that
 combine both referential constraints and number
restrictions
 for arbitrary
signatures: we show that, for unary inclusion dependencies and functional
dependencies, the finiteness assumption of FQA can be lifted up to taking the finite implication closure of
the dependencies.
Our result relies on new techniques to construct finite universal models of such
constraints, for any bound on the maximal query size.
\end{abstract}
\mathchardef\breakingcomma\mathcode`\,
\begin{document}

\maketitle

\section{Introduction}
\label{sec:intro}
A longstanding goal in computational logic is to design logical languages that are both decidable and expressive.
One approach is to distinguish integrity constraints and queries, and have
separate languages for them.
We would then seek  
decidability of the \emph{query answering with constraints} problem:
given a query $q$,  a conjunction of constraints $\con$, and a finite
instance~$I_0$,
determine which answers
to~$q$ are certain to hold over any instance~$I$ that extends $I_0$ and
satisfies $\con$.
This problem is often called \emph{open-world query answering}. It is 
fundamental for deciding
query
containment under constraints, querying in the presence of ontologies, or
reformulating queries with
constraints. Thus
 it has been the subject
of intense study within several communities for decades
(e.g. \cite{johnson1984testing,cali2003decidability,barany2014querying,pratt2009data,ibanezgarcia2014finite}).

In many cases (e.g., in databases)
 the instances $I$ of interest  are the finite ones, and hence we can define
 \emph{finite open-world query answering} (denoted here as FQA),
which restricts the quantification
to \emph{finite} extensions  $I$ of~$I_0$. In contrast, by \emph{unrestricted open-world query answering} (UQA) we refer to
the problem where $I$ can be either finite or infinite.
Generally the class of queries is taken to be the conjunctive queries ($\cq$s) --- queries built up
from relational atoms via existential quantification and conjunction. We will restrict to $\cq$s here, and
thus omit explicit mention of the query language, focusing on the constraint language.

A first constraint class known to have tractable open-world query answering problems
are \emph{inclusion dependencies} ($\incd$s) --- constraints of the form, e.g.,
$\forall x y z ~ R(x, y, z) \rightarrow \exists v w ~ S(z, v, w, y)$.
The fundamental results of Johnson and Klug \cite{johnson1984testing} and  Rosati \cite{rosatifinitecontrol}
show that both FQA and UQA are decidable for~$\incd$ and that, in fact, they coincide. 
When FQA and UQA coincide,  the constraints
are said to be \emph{finitely controllable}.
These results have been generalized by \cite{barany2014querying}
to a much richer class of constraints,  the guarded fragment of first-order logic.

However, those results do not cover a second 
important kind of constraints, namely \emph{number restrictions}, which express, e.g.,
uniqueness. We represent them by the class of
\emph{functional dependencies} ($\fd$s) --- 
of the form
$\forall \mybf{x} \mybf{y} ~ (R(x_1, \ldots, x_n) \wedge R(y_1, \ldots, y_n) \wedge
\bigwedge_{i\in L} x_i = y_i) \rightarrow x_r = y_r$ for some set $L$ of indices
and some index~$r$.
The implication problem (does one $\fd$ follow from a set of $\fd$s) is
decidable for $\fd$s, and coincides with implication
restricted to finite instances \cite[Theorem~9.2.3]{abiteboul1995foundations}. Trivially, the FQA and UQA problems
are also decidable for $\fd$s alone, and coincide.

$\incd$s require the model to contain some elements, while $\fd$s restrict the ability to add elements.
The interaction is severe enough that trying to combine $\incd$s and $\fd$s makes both UQA and FQA undecidable in
general \cite{cali2003decidability}. Some progress has been made
on obtaining decidable cases for UQA.
UQA 
is known to be decidable when the $\fd$s and the $\incd$s are
\emph{non-conflicting}
\cite{johnson1984testing,cali2003decidability}. 
Intuitively, this condition guarantees that the $\fd$s can be ignored, as long as they
hold on the initial instance~$I_0$, and one can then solve the query answering
problem by considering the $\incd$s alone.
But the non-conflicting condition 
only 
applies to UQA and not to FQA.
In fact it is known that even for very simple classes of~$\incd$s and $\fd$s,  including
non-conflicting classes, 
FQA and UQA do not coincide.
Rosati \cite{rosatifinitecontrol} showed that FQA is  undecidable 
 for non-conflicting $\incd$s and $\fd$s: this is already the case for $\incd$s
 and keys, i.e., for the special case of $\fd$s that specify that some relations
 are determined by a subset of their attributes.

Thus a  broad question is to what extent these classes,
$\fd$s and $\incd$s,  can be combined while retaining decidable FQA.
The only  decidable cases impose very
severe requirements. For example,
for the specific case of key dependencies (KDs) and foreign keys (FKs),
the constraint class of ``single KDs and FKs'' introduced
in \cite{rosatifinitecontrol} has decidable FQA, but
such constraints cannot model, e.g., $\fd$s which are not keys.
Further, in contrast with the general case of $\fd$s and $\incd$s,
single KDs and FKs are always finitely controllable, which limits their
expressiveness. 
Indeed, \emph{we know of no tools to deal with FQA for non-finitely-controllable constraints  on relations of arbitrary arity.}

A second decidable case is  when all relations
and all subformulae of the constraints have arity at most two. In this context,
results of Pratt-Hartmann \cite{pratt2009data} imply the
decidability of both FQA and UQA
for a very rich non-finitely-controllable sublogic of first-order logic.
For some fragments of this arity-two logic, the complexity of FQA has recently
been isolated by Ib{\'a\~n}ez-Garc{\'i}a et al.
 \cite{ibanezgarcia2014finite}, 
 and some extensions have been proposed to description logics with transitive
 roles \cite{gogacz2018finite,gogacz2019finite}. Yet these results do not apply to arbitrary
 arity signatures.

 \emph{The contribution of this paper is to provide the first result
about
finite query
answering for non-finitely-controllable $\incd$s and $\fd$s over relations of
arbitrary arity.}
As the problem is undecidable in general, we must naturally make some restriction. Our choice is to limit to 
\emph{Unary $\incd$s} ($\uid$s), which export
only one variable: for instance, $\forall x y z ~ R(x, y, z) \rightarrow \exists w ~ S(w, x)$.
$\uid$s and $\fd$s are an interesting class to study because they are not
finitely controllable, and allow the
modeling, e.g., of single-attribute foreign keys, a common use case in database
systems. In contrast, Johnson and Klug \cite{johnson1984testing} showed that $\uid$s in isolation
are finitely controllable.
The decidability of UQA for $\uid$s and $\fd$s is known because they
are always non-conflicting. In this paper, we show that
finite query answering is decidable for $\uid$s and $\fd$s, and obtain tight bounds on
its complexity.

The idea is to \emph{reduce the finite
case to the unrestricted case}, but in a more complex way than by finite
controllability.
We make use of a technique
originating in Cosmadakis et al. \cite{cosm} to study finite implication on $\uid$s and $\fd$s:
the 
\emph{finite closure} operation
which takes
 a conjunction of~$\uid$s and $\fd$s and determines exactly which additional
 $\uid$s and $\fd$s are
implied
over finite instances. 
Rosati \cite{rosatieswc} and Ib\'a\~nez-Garc\'ia et al. \cite{ibanezgarcia2014finite}  make use of the closure
operation in their study of constraint classes over schemas of arity two. They
show that
finite query answering for a query $q$, instance~$I_0$, and constraints $\con$ reduces to 
unrestricted query answering for~$I_0$, $q$, and the finite
closure~$\closuref{\con}$ of
$\con$. In other words, the closure construction which is sound for implication is also sound for query answering.

We show that the same approach applies to arbitrary arity signatures,
with
constraints being $\uid$s and $\fd$s. Our main result thus reduces finite query answering to unrestricted query answering,
for $\uid$s and $\fd$s in arbitrary arity:

\begin{theorem}
  \label{thm:main2}
  For any finite instance~$I_0$,
  CQ~$q$,  
  and constraints $\con$ consisting of~$\uid$s and $\fd$s,
 the finite open-world query answering problem for $I_0, q$ under $\con$ has the same answer
as the unrestricted open-world query answering problem for $I_0, q$ under the
finite closure of~$\con$.
\end{theorem}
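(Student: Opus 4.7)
The $\Leftarrow$ direction (UQA on $\closuref{\con}$ implies FQA on $\con$) I would dispatch first as a warm-up: by soundness of the finite closure operation of \cite{cosm}, every finite instance extending $I_0$ and satisfying $\con$ also satisfies $\closuref{\con}$, so any CQ certain over every model (finite or infinite) of $\closuref{\con}$ extending $I_0$ is in particular certain over every finite model of $\con$ extending $I_0$.

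For the main $\Rightarrow$ direction, I would argue by contrapositive: starting from some (possibly infinite) model $M \supseteq I_0$ with $M \models \closuref{\con}$ and $M \not\models q$, I would produce a finite instance $I^* \supseteq I_0$ with $I^* \models \con$ and $I^* \not\models q$. The plan is to construct, for every bound $k \in \NN$, a finite instance $U_k \supseteq I_0$ satisfying $\con$ such that every CQ of size at most $k$ satisfied by $U_k$ also holds in the chase $\chase{I_0}{\closuref{\con}}$ (which is the infinite universal model for UQA under the closure). Setting $I^* \defeq U_{|q|}$ then yields the required witness: since $M \not\models q$, universality of the chase forces $\chase{I_0}{\closuref{\con}} \not\models q$, and the $k$-soundness of $U_k$ at $k = |q|$ then forces $U_{|q|} \not\models q$.

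To build $U_k$, the starting point is the chase $\chase{I_0}{\closuref{\con}}$, a tree-shaped infinite instance satisfying every constraint in $\closuref{\con}$. The idea is to fold this chase into a finite instance by identifying elements, taking care that (i)~the result still satisfies the original constraint set $\con$ --- meaning in particular that no $\fd$ in $\con$ is violated by identifications that create conflicting right-hand sides, and no $\uid$ in $\con$ is violated because a required witness has been lost to folding; and (ii)~the local neighbourhoods of bounded radius around every element of the result remain isomorphic to the corresponding neighbourhoods in the chase, so that CQs of size at most $k$ evaluate identically on the two structures.

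The hard part will be simultaneously achieving (i) and (ii): $\fd$s make folding delicate because identifying two elements can cascade equalities further through the instance (this is precisely why the finite closure is nontrivial), while $\uid$s demand that the folded instance retain all required witnesses. The role of $\closuref{\con}$ is exactly to record, already at the level of the chase, those cascading consequences that would otherwise become visible only once finiteness is imposed; this should enable a combinatorial recipe that, given $k$, prescribes how many layers of the chase to keep and how to identify deep-layer elements with elements closer to the roots so as to preserve all local $k$-neighbourhoods. Implementing this recipe, and verifying that it simultaneously satisfies $\con$ and preserves the absence of small CQs, is where I expect the main technical weight of the proof to lie.
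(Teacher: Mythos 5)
Your high-level reduction is exactly the one the paper uses: the $\Leftarrow$ direction via soundness of the closure operation, the $\Rightarrow$ direction by building, for each $k$, a finite superinstance $U_k \supseteq I_0$ satisfying the constraints that is ``$k$-sound'' (i.e., witnesses the absence of entailed CQs of size $\leq k$), and then instantiating $k := |q|$. This is precisely Proposition~\ref{prp:univsup2fc} and Theorem~\ref{thm:univexists} in the paper, so there is no disagreement about the architecture of the argument.

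The gap is in your description of how to build $U_k$. You propose to fold the chase $\chase{I_0}{\closuref{\con}}$ so that ``the local neighbourhoods of bounded radius around every element of the result remain isomorphic to the corresponding neighbourhoods in the chase.'' This is unachievable. The chase is (outside $I_0$) a tree, and any finite identification of its elements necessarily creates cycles; some element of $U_k$ will then lie on a short cycle whose $k$-neighbourhood cannot be isomorphic to any tree neighbourhood in the chase. The correct invariant is strictly weaker and asymmetric: a $k$-bounded \emph{simulation} from $U_k$ to the chase (Definition~\ref{def:bbsim}), which is a directional, homomorphism-like condition. But this weaker invariant only preserves \emph{acyclic} CQs (Lemma~\ref{lem:ksimacq}), so your folding story is missing a whole second ingredient: something that guarantees cyclic CQs of size $\leq k$ are not accidentally matched. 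The paper supplies this via a product with a group of girth $> 2k$ (Section~\ref{sec:cycles}), and this product interacts delicately with the FDs, which is why the ``cautious model'' and ``mixed product'' machinery is needed. You are also underestimating what the FDs do to the folding step: with higher-arity FDs the problem is not just avoiding ``cascading equalities,'' but that reusing the same tuple of elements at several facts -- which folding inherently does -- produces FD violations that cannot be repaired locally; the paper's fix (``envelopes'' of many interchangeable tuples, backed by the combinatorial Theorem~\ref{thm:combinatorial}) is a genuinely new idea, not a refinement of the folding recipe. So the plan as stated (fold-to-preserve-neighbourhoods) would fail; the correct plan replaces neighbourhood isomorphism with bounded simulation, accepts that this only handles acyclic queries, and adds two further components (cycle blow-up and the envelope/dense-interpretation construction for higher-arity FDs) that do not appear in your sketch.
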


Using the known results about the complexity of UQA for $\uid$s,
we isolate the precise complexity
of finite query answering  with respect to $\uid$s and $\fd$s, showing that it matches
that of UQA:

\begin{corollary} \label{cor:comp} The combined complexity of the finite open-world query answering problem for
  $\uid$s and $\fd$s is $\np$-complete, and it is PTIME in data complexity (that
  is, when the constraints and query are fixed).
\end{corollary}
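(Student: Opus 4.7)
The plan is to derive Corollary~\ref{cor:comp} by chaining Theorem~\ref{thm:main2} with two well-established facts about $\uid$s and $\fd$s: the structural behaviour of the finite closure, and the complexity of the unrestricted open-world query answering problem (UQA) for $\uid$s.

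First, I would recall that the finite implication closure $\closuref{\con}$ of a set $\con$ of $\uid$s and $\fd$s is itself a set of $\uid$s and $\fd$s, and can be computed in polynomial time in $|\con|$; this is the classical result of Cosmadakis, Kanellakis, and Vardi invoked earlier in the excerpt. Thus Theorem~\ref{thm:main2} reduces the finite query answering problem for $(I_0, q, \con)$ to the unrestricted one for $(I_0, q, \closuref{\con})$, where $\closuref{\con}$ is a polynomially-sized set of $\uid$s and $\fd$s; in the data complexity regime, $\con$ and hence $\closuref{\con}$ are constant, so no blow-up occurs at all.

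Second, I would apply the non-conflicting property: since $\uid$s and $\fd$s together are non-conflicting (as recalled in the introduction and originally observed by Johnson and Klug), once the $\fd$s of $\closuref{\con}$ are verified on $I_0$ (a $\ptime$ check), UQA under $\closuref{\con}$ becomes equivalent to UQA under the $\uid$ fragment of $\closuref{\con}$ alone. The known complexity of UQA for $\uid$s, which is $\np$-complete in combined complexity and in $\ptime$ in data complexity (via a polynomially-bounded chase-witness argument), then supplies both upper bounds. The matching $\np$ lower bound in combined complexity is inherited from plain $\cq$ evaluation on a single instance, already without any constraints.

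The only real obstacle is bookkeeping: I must make sure (i) that the polynomial size of $\closuref{\con}$ is invoked rather than its potentially exponential implicational content, so that the outer \np{} guess remains polynomial, and (ii) that the non-conflicting argument is applied to the closure rather than to $\con$, so that the extra $\fd$s arising from finite implication are also respected by $I_0$. Both points are routine consequences of the standard $\uid/\fd$ machinery, so Corollary~\ref{cor:comp} should follow essentially by composing Theorem~\ref{thm:main2} with these classical facts, with no additional technical work beyond the reduction it already provides.
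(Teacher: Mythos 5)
Your overall route matches the paper's: reduce FQA to UQA over the finite closure via Theorem~\ref{thm:main2}, then exploit separability (the non-conflicting property) to reduce UQA for $\uid$s and $\fd$s to UQA for $\uid$s alone, plus an initial check that the base instance satisfies the relevant $\fd$s. The data-complexity argument is also fine, since the closure is constant-size when the constraints are fixed.

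However, there is a genuine gap in the combined-complexity step. You assert that $\closuref{\con}$ ``can be computed in polynomial time in $|\con|$'' and is ``a polynomially-sized set of $\uid$s and $\fd$s.'' That is false, and the paper explicitly flags it: while the $\uid$ fragment of the closure has only $O(|\positions(\sigma)|^2)$ members, the $\fd$ fragment is indexed by arbitrary subsets $L$ of attribute positions of a relation and is therefore potentially exponential in $\arity{\sigma}$, which is part of the input. Corollary~\ref{cor:cosmcplx} of Cosmadakis et al.\ only gives a PTIME \emph{membership test} for a single candidate dependency; it does not bound the size of the closure itself. Your remark~(i) about invoking ``the polynomial size of $\closuref{\con}$ rather than its potentially exponential implicational content'' does not resolve this, because $\closuref{\con}$ \emph{is} the deductively closed set --- there is no smaller object with that name to fall back on. The paper works around this by materializing only $\ids^*$ (polynomially many) and the \emph{unary} $\fd$s $\ufds^*$ of the closure, and then invoking a structural fact from \cite{cosm} --- that every non-unary $\fd$ in $\fds^*$ is already a consequence of $\ufds^* \fragcup \fds$ under ordinary $\fd$ implication --- to reduce the check $I_0 \models \fds^*$ to two PTIME checks $I_0 \models \ufds^*$ and $I_0 \models \fds$. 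That observation is the missing ingredient in your proposal; without it, your ``bookkeeping'' step does not close.
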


Our proof of Theorem~\ref{thm:main2} is quite involved, since building finite models
that satisfy number restrictions and inclusion dependencies in a signature with
arbitrary arity 
introduces a multitude of new difficulties
that do not arise in
the arity-two case or in the case of $\incd$s in isolation.

We borrow and adapt a variety of techniques from prior work: \begin{itemize}
\item using $k$-bounded
simulations to preserve small acyclic $\cq$s \cite{ibanezgarcia2014finite}, 
\item partitioning
$\uid$s into components that have limited interaction, and satisfying the $\uid$s component-by-component   \cite{cosm,ibanezgarcia2014finite},
\item performing a chase that reuses sufficiently similar
elements \cite{rosatifinitecontrol}, 
\item  taking the product with groups of large
girth to blow up cycles \cite{otto2002modal}.
\end{itemize}
However, we must also develop some  new infrastructure to deal with number restrictions
in an arbitrary arity setting: distinguishing between so-called \defo{dangerous}
and \defo{non-dangerous} positions when creating a new element to satisfy some
$\incd$s, constructing realizations for
relations in a \emph{piecewise} manner following the $\fd$s, reusing elements
in a \emph{combinatorial} way that shuffles them to avoid violating the
higher-arity $\fd$s,
and a new notion of \defo{mixed product} to blow cycles
up while preserving fact overlaps to avoid violating the higher-arity $\fd$s.

\myparagraph{Paper structure}
The overall structure of the proof, presented in \secref{sec:overall}, is to extend
a given instance $I_0$ to a finite model of~$\uid$ and $\fd$ constraints such that for every conjunctive query of size at most $k$,
the model satisfies it only when it is implied by $I_0$ and the constraints.
We call these \emph{$k$-universal superinstances}. It is easy to show that if a
$k$-universal superinstance exists for an instance
and set of constraints, then finite implication and implication of CQs coincide.

We start with only \emph{unary} $\fd$s
($\ufd$s) and \emph{acyclic} $\cq$s ($\acq$s), and by assuming that the $\uid$s
and $\ufd$s are \emph{reversible}, a condition inspired by the 
finite closure construction.

As a warm-up, \secrefs{sec:binary}{sec:weakarb}
approximate even further by replacing $k$-universality by a weaker notion,
proving the corresponding result starting with binary signatures and generalizing to arbitrary arity.
We extend the result to $k$-universality in \secref{sec:ksound}, 
maintaining a $k$-bounded simulation to the chase, and performing \emph{thrifty} chase steps that reuse
sufficiently similar elements without violating $\ufd$s. We also
rely on a structural observation about the chase under $\uid$s
(Theorem~\ref{thm:similarityb}).
\secref{sec:manyscc} eliminates the assumption that dependencies are
reversible, by
partitioning the $\uid$s into classes that are either reversible or trivial,
and satisfying successively each class following a certain ordering.

We then generalize our result to higher-arity (non-unary) $\fd$s in
\secref{sec:hfds}.
This requires us to define a new notion of thrifty chase steps
that apply to instances with many ways to reuse elements; the
existence of these instances relies on a combinatorial construction of models
of $\fd$s with a high number of facts but a small domain (Theorem~\ref{thm:combinatorial}).
Last, in \secref{sec:cycles}, we apply a cycle blowup process to the result
of the previous constructions, to go from acyclic to arbitrary $\cq$s through a
product with acyclic groups.
The technique is inspired by Otto \cite{otto2002modal} but
must be adapted to respect $\fd$s. 

The paper is an extended version of the conference paper \cite{amarilli2015finite}.

\section{Background} 
\label{sec:prelim}
\subsection{Instances and Constraints} \label{subsec:instanceconstraint}

\myparagraph{Instances}
We assume an infinite countable set of \defp{elements} (or \defp{values}) $a,
b, c, \ldots$ and \defp{variable names} $x, y, z, \ldots$. 
We will often write tuples of elements with boldface as $\mybf{a}$ and denote
the $i$-th element of the tuple by~$a_i$, and likewise for tuples of variables.
A \defp{schema}~$\sigma$ consists of \defp{relation names} (e.g., $R$) with an 
\defp{arity} (e.g., $\arity{R}$)
which we assume is $\geq 1$: we write $\arity{\sigma} \defeq \max_{R \in \sigma}
\arity{R}$. Each relation~$R$ defines a set of $\arity{R}$ \defp{positions} that we write
$\positions(R) \defeq \{R^1, \ldots, R^{\arity{R}}\}$. 
For convenience, given a set $L \subseteq \{1, \ldots, \arity{R}\}$, we will
write $R^L$ to mean $\{R^l \mid l \in L\}$.
We also define
$\positions(\sigma) \defeq \bigsqcup_{R \in \sigma} \pos(R)$, where $\sqcup$
denotes disjoint union.
We will identify $R^i$ and $i$ when no confusion can result.

A relational \defp{instance}
$I$ of~$\sigma$ 
is a set of \defp{facts} of the form $R(\mybf{a})$ where $R$ is a
relation name and $\mybf{a}$ an $\arity{R}$-tuple of values.
The \defp{size} $\card{I}$ of a finite instance $I$ is its number of facts.
The
\defp{active domain} $\dom(I)$ of~$I$ is
the set of the elements which appear in some fact of~$I$. For any position $R^i \in
\positions(\sigma)$, we define the \defp{projection} $\pi_{R^i}(I)$ of~$I$ to
$R^i$ as the
set of the elements of~$\dom(I)$ that occur at position $R^i$ in some 
fact of~$I$.
For $L \subseteq \{1, \ldots, \arity{R}\}$,
the \defp{projection}~$\pi_{R^L}(I)$ is a set of~$\card{L}$-tuples defined analogously;
we will often index those tuples by the positions in~$L$ rather than by $\{1, \ldots,
\card{L}\}$.
A \defp{superinstance} 
 of~$I$ is a (not necessarily finite) instance $I'$ such
that $I \subseteq I'$. 

A \defp{homomorphism} from an instance $I$ to an instance $I'$ is a mapping
$h : \dom(I) \rightarrow \dom(I')$ such that, for every fact $F = R(\mybf{a})$
of $I$, the fact $h(F) \defeq R(h(a_1), \ldots, h(a_{\card{R}}))$ is in~$I'$.

\myparagraph{Constraints}
We consider \defp{integrity constraints} (or \defp{dependencies}) which are special sentences of first-order
logic without function symbols or constants.
We write $I \models \con$
when instance $I$ satisfies constraints~$\con$, and we then call $I$ a \defp{model}
of~$\con$.

An \defp{inclusion dependency} $\incd$ is a sentence of the form
$\tau: \forall \mybf{x} \, (R(x_1, \ldots, x_n) \rightarrow \exists
\mybf{y} \, S(z_1, \ldots, z_m))$,
where $\mybf{z} \subseteq \mybf{x} \sqcup \mybf{y}$ and no variable occurs
at two different positions of the same fact.
The left-hand side of the implication is called the \defp{body} and the
right-hand side is called the \defp{head}.
The \defp{exported variables} are the variables of $\mybf{x}$ that occur in~$\mybf{z}$.
This work only studies 
\defp{unary inclusion dependencies} ($\uid$s) which are the $\incd$s
with exactly one exported variable.
We write a $\uid$ $\tau$ as $\ui{R^p}{S^q}$, where $R^p$ and $S^q$ are the
positions of~$R(\mybf{x})$ and $S(\mybf{z})$ where the exported variable occurs.
For instance, the $\uid$ $\forall x y \, R(x, y) \rightarrow \exists z \, S(y, z)$ is
written $\ui{R^2}{S^1}$.
We assume without loss of generality that there are no \defp{trivial} $\uid$s of the form $\ui{R^p}{R^p}$.

 A \defp{functional dependency} $\fd$
is a  sentence $\phi$ of the form 
$\forall \mybf{x} \mybf{y} \, (R(x_1, \ldots, x_n) \wedge
R(y_1, \ldots, y_n) \wedge \bigwedge_{R^l \in L} x_l = y_l) \rightarrow x_r = y_r$,
where $L \subseteq \{1, \ldots, \arity{R}\}$ and $R^r \in \positions(R)$.
Since such a sentence is determined by the subset $L$ and the position
$r$, for brevity, we abbreviate such a $\phi$ as $R^L \rightarrow R^r$.
We call $\phi$ a \defp{unary functional dependency} $\ufd$ if
$\card{L} = 1$; otherwise it is \defp{higher-arity}.
For instance, $\forall x x' y y' \, R(x, x') \wedge R(y, y') \wedge
x' = y' \rightarrow x = y$ is a $\ufd$, and we write it $R^2 \rightarrow R^1$.
We assume that
$\card{L} > 0$, i.e., we do not allow nonstandard or degenerate
$\fd$s. We also disallow \defp{trivial} FDs, i.e., those for which we have $R^r \in R^L$.
Two facts $R(\mybf{a})$ and $R(\mybf{b})$ \defp{violate}~$\phi$ if
$\pi_{L}(\mybf{a}) = \pi_{L}(\mybf{b})$
but $a_r \neq b_r$.

For $L, L' \subseteq \{1, \ldots, \arity{R}\}$, we write $R^L \rightarrow R^{L'}$ the conjunction
of $\fd$s $R^L \rightarrow R^l$ for $R^l \in L'$. In particular, conjunctions of
the form $\kappa: R^L \rightarrow R$ (i.e., $L' = \{1, \ldots, \arity{R}\}$) are called \defp{key
dependencies}. The key $\kappa$ is \defp{unary} if $\card{L} = 1$. If $\kappa$
holds on a relation~$R$, we call $L$ a \defp{key} (or \defp{unary key}) of~$R$.

\subsection{Implication and Finite Implication}
\label{sec:implication}

We say that a conjunction of constraints $\con$ in a class \textsf{CL}
\defp{finitely implies} a constraint~$\phi$
if any \emph{finite} instance that satisfies $\con$ also satisfies $\phi$. We say that $\con$
\defp{implies} $\phi$ if any instance (finite or infinite) that satisfies $\con$
also satisfies~$\phi$. The
\defp{closure} $\closure{\con}$ of $\con$ is the set of constraints of
\textsf{CL} which are implied by $\con$, and the \defp{finite closure}
$\closuref{\con}$ is the set of those which are finitely implied.

A \defp{deduction rule} for \textsf{CL} is a rule which, given dependencies in~\textsc{CL}, deduces new dependencies in~\textsf{CL}.
An \defp{axiomatization of
implication} for \textsf{CL} is a set of deduction rules such that the following
holds 
for any conjunction $\con$ of dependencies in \textsf{CL}:
letting $\con'$ be the result of defining $\con' \defeq \con$ and applying
iteratively the deduction rules while possible to inflate $\con'$, then the
resulting $\con'$ is
\emph{exactly} $\closure{\con}$.
An \defp{axiomatization of finite implication} is defined
similarly but for~$\closuref{\con}$.

\myparagraph{Implication for $\incd$s}
Given a set $\con$ of $\incd$s, it is known \cite{casanova} that an $\incd$ $\tau$ is implied by
$\con$ iff it is finitely implied. Further, when $\con$ are $\uid$s, we can
easily compute in PTIME the set of implied $\uid$s (from which we exclude the trivial ones),
by closing $\con$ under the
\defp{$\uid$ transitivity rule} \cite{casanova}: if $\ui{R^p}{S^q}$ and $\ui{S^q}{T^r}$
are in $\con$, then so is $\ui{R^p}{T^r}$ unless it is trivial. We call $\con$
\defp{transitively closed} if it is thus closed.

\myparagraph{Implication for $\fd$s}
Again, a set $\fds$ of $\fd$s implies an $\fd$ $\phi$ iff it finitely implies
it: see, e.g., \cite{cosm}.
The standard axiomatization of $\fd$ implication is given in
\cite{armstrong1974dependency}, and includes the \defp{$\ufd$ transitivity
rule}: for any $R \in \sigma$ and $L, L', L'' \subseteq \{1, \ldots, \arity{R}\}$, if $R^L
\rightarrow R^{L'}$ and $R^{L'} \rightarrow R^{L''}$ are in $\fds$, then so
does $R^L \rightarrow R^{L''}$.

\myparagraph{Implication for $\uid$s and $\fd$s}
It was shown in \cite{cosm} that given constraints formed of a conjunction $\ids$ of $\uid$s
and of a conjunction 
$\fds$ of $\fd$s, the implication problem for these constraints can be axiomatized by the above $\uid$ and $\fd$ rules
in isolation. However, for \emph{finite} implication, we must add a
\defp{cycle rule}, which we now define.

Let $\con$ be a conjunction of dependencies formed of $\uid$s $\ids$ and $\fd$s $\fds$.
Define the \defp{reverse} of an $\ufd$ $\phi : R^p \rightarrow R^q$ as
$\phi^{-1} \defeq R^q \rightarrow R^p$, and the \defp{reverse} of a $\uid$ $\tau
: \ui{R^p}{S^q}$ as $\tau^{-1} \defeq \ui{S^q}{R^p}$. 
A \defp{cycle} in $\con$ is a sequence of $\uid$s and $\ufd$s of $\ids$ and
$\fds$ of the following form: $\ui{R_1^{p_1}}{R_2^{q_2}}$, $R_2^{p_2}
\rightarrow R_2^{q_2}$, $\ui{R_2^{p_2}}{R_3^{q_3}}$, $R_3^{p_3} \rightarrow
R_3^{q_3}$, $\ldots$, $\ui{R_{n-1}^{p_{n-1}}}{R_n^{q_n}}$, $R_n^{p_n}
\rightarrow R_n^{q_n}$, $\ui{R_{n}^{p_{n}}}{R_1^{q_1}}$, $R_1^{p_1} \rightarrow
R_1^{q_1}$. The \defp{cycle rule}, out of such a cycle, deduces the
reverse of each $\uid$ and of each $\ufd$ in the cycle. We then have:

\begin{theorem}[\cite{cosm}, Theorem~4.1]
  \label{thm:maincosm}
  The $\uid$ and $\fd$ deduction rules and the cycle rule are an axiomatization
  of finite implication for $\uid$s and $\fd$s.
\end{theorem}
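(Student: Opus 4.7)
\emph{Plan.} The proof splits into soundness and completeness of the axiomatization. The $\uid$ transitivity rule and Armstrong's $\fd$ axioms are sound over arbitrary models, so the only interesting part of soundness is the cycle rule. Fix a finite model $I$ of $\con$ and a cycle
\begin{equation*}
\ui{R_1^{p_1}}{R_2^{q_2}},\ R_2^{p_2}\to R_2^{q_2},\ \ldots,\ \ui{R_n^{p_n}}{R_1^{q_1}},\ R_1^{p_1}\to R_1^{q_1}.
\end{equation*}
Setting $A_i = \pi_{R_i^{p_i}}(I)$ and $B_i = \pi_{R_i^{q_i}}(I)$, each $\ufd$ $R_i^{p_i}\to R_i^{q_i}$ induces a well-defined function $f_i\colon A_i\to B_i$, and this is surjective because every element of $B_i$ is the $q_i$-value of some $R_i$-fact whose $p_i$-value lies in $A_i$. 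The $i$-th $\uid$ gives $A_i\subseteq B_{i+1}$ (indices taken modulo $n$), so $|A_i|\leq |B_{i+1}|=|f_{i+1}(A_{i+1})|\leq |A_{i+1}|$; traversing the whole cycle forces equality throughout. In the finite regime each inclusion $A_i\subseteq B_{i+1}$ then upgrades to equality, which is the reverse $\uid$, and each surjection $f_i$ to a bijection, i.e.\ to injectivity, which is the reverse $\ufd$.

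\emph{Completeness.} Given $\phi \notin \closuref{\con}$, I must produce a finite model of $\con$ violating $\phi$. I would first saturate $\con$ into $\closuref{\con}$ by exhaustive application of the three rule groups. When $\phi$ is an $\fd$, a standard Armstrong-style two-tuple construction provides a finite countermodel. When $\phi$ is a $\uid$, I would partition the positions appearing in the $\uid$s of $\closuref{\con}$ into strongly connected components of the cycle graph: within each SCC the dependencies are reversible (both $\tau$ and $\tau^{-1}$ belong to $\closuref{\con}$ by the cycle rule), so they can be realised on a finite cyclic orbit satisfying the relevant $\ufd$s; across SCCs the $\uid$s go one way and can be satisfied by adding fresh witnesses. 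Processing the SCCs in topological order and tuning orbit sizes yields a finite model of $\closuref{\con}$, hence of $\con$, in which $\phi$ is made to fail by arranging a distinguishing pair of elements or witnesses in the component targeted by $\phi$.

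The principal obstacle is this completeness direction: the naive $\uid$-chase yields an infinite tree of witnesses, and the cycle rule encodes precisely which identifications are forced upon folding it back into a finite model. The delicate point is to tune the cycle lengths (e.g.\ choosing them large or coprime as needed) so that folding does not trigger accidental $\ufd$ collisions forbidden by $\fds$, while keeping enough freshness in non-cyclic parts so that any single $\phi$ outside $\closuref{\con}$ remains refutable.
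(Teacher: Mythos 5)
This theorem is cited from Cosmadakis, Kanellakis, and Vardi; the present paper does not prove it, so there is no in-paper proof to compare against. Evaluated on its own terms, your soundness argument is correct and is the standard one: the cycle rule is the only non-obvious rule, and your counting argument (each $\ufd$ in the cycle induces a well-defined surjection $f_i\colon A_i\to B_i$ onto the projection $B_i$, each $\uid$ gives an inclusion $A_i\subseteq B_{i+1}$, and chaining cardinalities around the finite cycle forces all inclusions to be equalities and all surjections to be injective, yielding the reverse $\uid$s and $\ufd$s) is a complete and correct proof of soundness.

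The completeness half is where all the difficulty of the cited Theorem~4.1 lives, and as written your sketch does not close it. Two specific gaps. When $\phi$ is an $\fd$, the Armstrong two-tuple construction refutes $\phi$ while satisfying $\fds$, but says nothing about satisfying the $\uid$s of $\ids$; once you add $\uid$ witnesses, you must fill in the other positions of the new facts, and doing so without accidentally re-identifying the distinguishing tuple or violating a higher-arity $\fd$ is precisely the subtle part, not a routine afterthought. When $\phi$ is a $\uid$, "finite cyclic orbits within each reversible SCC, fresh witnesses across SCCs" does not by itself establish that the resulting instance satisfies \emph{all} of $\fds$---in particular the non-unary $\fd$s, which constrain how orbits can be glued together along shared positions---nor that $\phi$ is still falsified after the gluing. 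Your final paragraph correctly identifies tuning cycle lengths and avoiding $\ufd$ collisions under folding as the delicate point; but that is the substance of the completeness proof, and flagging it as delicate is an acknowledgement of the gap rather than a closure of it. Notably, building finite models of $\uid$s plus $\fd$s while controlling what holds in them is essentially what the rest of the present paper is devoted to (in a harder, $k$-universal form), which is a good indication that this direction cannot be dispatched in a sketch.
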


In terms of complexity, this implies:

\begin{corollary}[\cite{cosm}, Corollary~4.4]
  \label{cor:cosmcplx}
  Given $\uid$s $\ids$ and $\fd$s $\fds$, and a $\uid$ or $\fd$ $\tau$, we can
  check in PTIME whether $\tau$ is finitely implied by $\ids$ and $\fds$.
\end{corollary}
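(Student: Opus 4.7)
The plan is to invoke the axiomatization given by Theorem~\ref{thm:maincosm} and to argue that exhaustively applying the rules runs in polynomial time. The crucial observation is that the only objects ever produced by the rules in the axiomatization — the UID transitivity rule, the FD rules, and the cycle rule — are UIDs, UFDs, and FDs. Since both UIDs and UFDs are determined by an ordered pair of positions in $\positions(\sigma)$, the set of candidate UIDs and UFDs has size $O(\arity{\sigma}^2 \cdot \card{\sigma}^2)$, which is polynomial in the size of the input.

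Concretely, I would maintain a pair $(\ids^\star, \fds^\star)$ initialized to $(\ids,\fds)$ and iterate the following three steps until a fixpoint is reached. First, close $\ids^\star$ under the UID transitivity rule; this is a classical reachability computation in the directed graph on $\positions(\sigma)$ whose edges are the UIDs in $\ids^\star$, and takes polynomial time. Second, compute the UFDs implied by $\fds^\star$ under Armstrong's axioms: we only need to enumerate the polynomially many candidate UFDs $R^p \to R^q$ and test each using the textbook polynomial-time algorithm for FD membership. Third, search for a cycle of the form appearing in the definition preceding Theorem~\ref{thm:maincosm}; this reduces to a reachability problem in an auxiliary graph whose nodes are positions and whose edges come from the current UIDs and UFDs, so it can be performed in polynomial time, and whenever a cycle is detected we add the corresponding reverse UIDs and UFDs to $(\ids^\star, \fds^\star)$. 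Each iteration strictly enlarges a set of polynomial size, so the process terminates after polynomially many rounds, and the total cost is polynomial.

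To answer the actual query about a single $\tau$, I then distinguish cases. If $\tau$ is a UID or UFD, it suffices by Theorem~\ref{thm:maincosm} to test whether $\tau \in \ids^\star \cup \fds^\star$. If $\tau$ is a higher-arity FD, I run the textbook FD-implication algorithm on $\fds^\star$; this is sound and complete because Theorem~\ref{thm:maincosm} tells us that the cycle rule is the only additional rule needed beyond the standard FD axiomatization, and all reverses produced by that rule have already been absorbed into $\fds^\star$ during saturation.

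The only delicate point is justifying that no higher-arity FDs need to be added to $\fds^\star$ during the saturation loop. This follows from the shape of the cycle rule, whose conclusion only ever reverses UFDs and UIDs: higher-arity FDs participate only through their consequences on UFDs, which are detected in the Armstrong-closure step. Granting this observation — which is the content of the proof in~\cite{cosm} — the overall procedure clearly runs in polynomial time, establishing the corollary.
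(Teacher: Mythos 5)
The paper states this corollary as a citation to~\cite{cosm} without giving a local proof, so there is no ``paper's own proof'' to compare against; what you have written is a reconstruction of the algorithm underlying the cited result. Your reconstruction is sound: it correctly identifies the finite-closure axiomatization of Theorem~\ref{thm:maincosm} as the basis, observes that the cycle rule's conclusions are always $\uid$s and $\ufd$s (hence the objects to saturate live in a polynomial-size space indexed by ordered pairs of positions), reduces cycle detection to reachability in a position graph where $\uid$s are forward edges and $\ufd$s are reversed edges (essentially the constraint graph $G(\uconb)$ of Definition~\ref{def:constraint}), and handles higher-arity $\fd$ queries at the end by ordinary Armstrong implication over the $\ufd$-saturated set. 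The ``delicate point'' you flag --- that non-unary $\fd$s in the finite closure are already Armstrong-implied by $\ufds^* \fragcup \fds$ --- is precisely the remark above Corollary~4.4 in~\cite{cosm} that the paper itself invokes in the proof of Theorem~\ref{thm:complexityfqa}, so you are on solid ground there as well. One small presentational point: when you say ``compute the UFDs implied by $\fds^\star$'' you should make explicit that those derived $\ufd$s are \emph{added} to $\fds^\star$ before the cycle-detection step, so that cycles routed through Armstrong-derived $\ufd$s are not missed; this is clearly your intent (the fixpoint loop is vacuous otherwise), but stating it removes any ambiguity.
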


\subsection{Queries and QA}

\myparagraph{Queries}
An \defp{atom} $A = R(\mybf{t})$ consists of a relation
name $R$ and an $\arity{R}$-tuple $\mybf{t}$ of variables or constants.
This work studies the \defp{conjunctive queries} $\cq$, which are
existentially quantified conjunctions of atoms,
such that each variable in the quantification occurs in some atom.
The \defp{size} $\card{q}$ of a $\cq$ $q$ is its number of atoms.
A $\cq$ is \defp{Boolean} if it has no free variables.

A Boolean $\cq$ $q$ \defp{holds}
in an instance $I$, written $I \models q$, exactly when there is a homomorphism~$h$ from 
the atoms of~$q$ to
$I$ such that $h$ is the identity on the constants of~$q$
(we call this a \defp{homomorphism from~$q$ to~$I$}).
We call such an~$h$ a \defp{match} of~$q$ in $I$, and by a slight abuse of terminology we
also call the image of $h$ a \defp{match} of~$q$ in~$I$. For any atom $A =
R(\mybf{t})$
of~$q$, we denote by $h(A)$ the fact $R(h(t_1), \ldots, h(t_{\card{R}}))$ of~$I$ to which $h$ maps~$A$.

\myparagraph{QA problems}
We define the \defp{unrestricted open-world query answering} problem (UQA) as
follows: given a finite instance~$I$, a conjunction of constraints~$\con$, and
a Boolean $\cq$~$q$,
decide whether there is a superinstance of~$I$ that satisfies $\con$ and
violates~$q$. If there is none, we say that $I$ and $\con$ \defp{entail} $q$
and write $(I, \con) \entunr q$.
In other words, UQA asks whether the first-order formula $I \wedge \con \wedge
\neg q$ has some (possibly infinite) model.

This work focuses on the
\defp{finite query answering problem} (FQA), which is the variant of open-world query answering where we require the
counterexample superinstance to be \emph{finite}; if no such counterexample exists, we write $(I,
\con) \entfin q$. Of course $(I, \con) \entunr q$ implies $(I, \con)
\entfin q$.

The \defp{combined complexity} of the UQA and FQA problems, for a fixed class
$\cl$ of
constraints, is the complexity of
deciding one of these problems when all of~$I$, $\con$ (in~$\cl$) and $q$ are given
as input. The \defp{data complexity} is defined by assuming 
that $\con$ and~$q$ are fixed, and only $I$ is given as input.

\myparagraph{Assumptions on queries}
Throughout this work, we will make three assumptions about $\cq$s, without loss
of generality for UQA and FQA. First, \emph{we assume that $\cq$s are constant-free}. Indeed,
for each constant $c \in \dom(I_0)$, we could otherwise do the following: add a fresh relation $P_c$
to the signature, add a fact $P_c(c)$ to $I_0$, replace $c$ in $q$
by an existentially quantified variable~$x_c$, and add the atom $P_c(x_c)$
to~$q$. It is then clear that UQA with the rewritten instance and query is
equivalent to UQA with the original instance and query under any constraints
(remember that our constraints are constant-free); the same is true for FQA.

Second, we \emph{assume all $\cq$s to be Boolean}, unless otherwise specified. Indeed,
to perform UQA for non-Boolean queries (where the domain of the free variables
is that of the base instance $I_0$), we can always enumerate all 
possible assignments, and solve our problem by solving polynomially many
instances of the UQA problem with Boolean queries. Again, the same is true of
FQA.

Third, \emph{we assume all $\cq$s to be connected}. 
A $\cq$ $q$ is \emph{disconnected} if there is a
partition of its atoms in two non-empty sets $\calA$ and $\calA'$, such that
no variable occurs both in an atom of $\calA$ and in one atom of $\calA'$.
In
this case, the query $q : \exists \mybf{x} \mybf{y} \, \calA(\mybf{x}) \wedge
\calA(\mybf{y})$ can be rewritten to $q_2 \wedge q_2'$, for two $\cq$s $q_2$
and $q_2'$ of strictly smaller size. In \thispaper, we will show that, on
finitely closed dependencies, FQA and UQA coincide for \emph{connected} queries. 
This clearly implies the same for disconnected queries, by considering all their
connected subqueries. Hence, we can assume that queries are connected.

\myparagraph{Chase}
We say that a superinstance $I'$ of an instance $I$ is \defp{universal} for
constraints~$\con$ if $I' \models \con$ and if for any Boolean $\cq$ $q$, $I'
\models q$ iff $(I, \con) \entunr q$. We now recall the definition of the
\defp{chase} (see \cite{onet2013chase} or \cite[Section~8.4]{abiteboul1995foundations}), a standard
construction of (generally infinite) universal superinstances. We assume that we
have fixed an infinite set~$\calN$ of \defp{nulls} which 
is disjoint from $\dom(I)$.
We only define the chase for transitively closed $\uid$s, which
we call the \defp{\uid chase}.

We say that a fact $F_{\factive} = R(\mybf{a})$ of an instance $I$ is an \defp{active fact}
for a $\uid$ $\tau: \ui{R^p}{S^q}$ if, intuitively, $F_{\factive}$ matches the
body of~$\tau$ but there is no fact matching the head of~$\tau$.
Formally, writing $\tau: \forall\mybf{x} \,
R(\mybf{x}) \rightarrow \exists\mybf{y} \, S(\mybf{z})$,
we call $F_{\factive}$ an \defp{active fact} for~$\tau$ if there is a homomorphism from~$R(\mybf{x})$ to~$F_{\factive}$ but no such
homomorphism can be extended to a homomorphism from
$\{R(\mybf{x}), S(\mybf{z})\}$ to~$I$.
In this case we say that
we \defp{want} to apply the $\uid$~$\tau$ to $a_p$,
written $a_p \in \appelem{I}{\tau}$. Note that
$\appelem{I}{\tau} = \pi_{R^p}(I)
\backslash \pi_{S^q}(I)$.
For a conjunction $\ids$ of $\uid$s, we may also write $a \in
\appelemb{\ids}{I}{S^q}$ if there is $\tau \in \ids$ 
of the form $\tau: \ui{U^v}{S^q}$  such that 
$a \in \appelem{I}{\tau}$; we drop the subscript when there is no
ambiguity.

The result of a \defp{chase step} on the active fact $F_{\factive} = R(\mybf{a})$
for~$\tau: \ui{R^p}{S^q}$
in~$I$ (we call this \defp{applying} $\tau$ to $F_{\factive}$) is the superinstance~$I'$
of~$I$ obtained by adding a new fact $F_{\fnew} = S(\mybf{b})$ defined as follows:
we set $b_q \defeq a_p$, which we call the \defp{exported element} (and $S^q$
the \defp{exported position} of~$F_{\fnew}$), and use fresh nulls from~$\calN$ to instantiate the
existentially quantified variables of~$\tau$ and complete $F_{\fnew}$, using a
different null at each position; we say
the corresponding elements are \defp{introduced} at $F_{\fnew}$.
This ensures that $F_{\factive}$ is no longer an active fact in~$I'$ for~$\tau$.

A \defp{chase round} of a conjunction $\ids$ of~$\uid$s on $I$
is the result of applying simultaneous chase steps on
all active facts for all $\uid$s of~$\ids$, using distinct fresh
nulls.
The \defp{$\uid$ chase} $\chase{I}{\ids}$ of~$I$ by $\ids$ is the (generally infinite)
fixpoint of applying chase rounds. It is a universal superinstance
for~$\ids$ \cite{fagin2003data}. We will sometimes use the natural forest
structure on the facts of $\chase{I}{\ids}$, where the roots are the facts
of~$I$, and every fact $F$ of $\chase{I}{\ids} \setminus I$ has a \defp{parent}
which is some arbitrary choice $F'$ of an active fact used to create~$F$. The
\defp{children} of a fact $F'$ of $\chase{I}{\ids}$ are all facts $F$ such that
$F'$ is the parent of~$F$.

As we are chasing by transitively closed $\uid$s,  
if we perform the \defp{core chase} \cite{deutsch2008chase,onet2013chase} rather than the
$\uid$ chase that we just defined,
we can ensure the following
\defp{Unique Witness Property}: for any element $a \in \dom(\chase{I}{\ids})$ and position $R^p$ of~$\sigma$,
if two different facts of~$\chase{I}{\ids}$ contain~$a$ at position~$R^p$,
then they are both facts of~$I$.
In our context, however, the core chase matches the $\uid$ chase defined above, except at the
first round. Thus, modulo the first round, by $\chase{I}{\ids}$ we refer
to the $\uid$ chase, which has the Unique Witness Property.
See \apxref{apx:chase} for details.

\myparagraph{Finite controllability}
We say a conjunction of constraints $\con$ is \defp{finitely controllable} for
$\cq$ if FQA and UQA   coincide:
for every finite instance~$I$ and every Boolean
$\cq$~$q$,
 $(I, \con) \entunr q$ iff $(I, \con)
\entfin q$.

It was shown in \cite{rosatifinitecontrolpods,rosatifinitecontrol} that, while conjunctions of~$\incd$s are
finitely controllable, even conjunctions of~$\uid$s and $\fd$s may not be.
It was later shown in \cite{rosatieswc} that
the finite closure process could be used to reduce FQA to UQA for some constraints
on relations of arity at most two. Following the same idea,
we say that a conjunction of constraints $\con$ 
is \defp{finitely controllable up to finite closure} if for every finite
instance $I$, and Boolean $\cq$~$q$,
 $(I, \con) \entfin q$
 iff $(I, \closuref{\con}) \entunr q$,
 where $\closuref{\con}$ is the finite closure defined by
 Theorem~\ref{thm:maincosm}.
If $\con$ is finitely controllable up to finite closure, then we can reduce FQA to UQA,
even if finite controllability does not hold, by computing the finite closure
$\closuref{\con}$ of~$\con$ and solving UQA on~$\closuref{\con}$.

\section{Main Result and Overall Approach}
\label{sec:overall}
We study open-world query answering for $\fd$s and $\uid$s.
For UQA, the following is already known:

\begin{mainproposition}
  \label{prp:complexity}
  UQA for $\fd$s and $\uid$s has \ACZ data complexity and \NP-complete 
  combined complexity.
\end{mainproposition}

\begin{proof}
UQA for $\uid$s in isolation is \NP-complete in
combined complexity. The lower bound is immediate because query evaluation for
  conjunctive queries is NP-complete already without $\uid$s
  \cite[Theorem~6.4.2]{abiteboul1995foundations},
and \cite{johnson1984testing} showed an \NP upper bound for
$\incd$s with any fixed bound on the number of exported variables (which they call ``width'': in their
terminology, $\uid$s are $\incd$s of width~$1$).
For data complexity, the upper bound is from the first-order rewritability of
certain answers for arbitrary $\incd$s~\cite{cali2003query}.

For $\uid$s and $\fd$s,
clearly the lower bound on combined complexity also applies.
The upper bounds are proved by observing that 
$\uid$s and $\fd$s are
 \emph{separable}, namely,
for any $\fd$s $\fds$ and $\uid$s
$\ids$, for any instance~$I_0$ and $\cq$~$q$, if $I_0 \models \fds$
then we have $(I_0, \fds \wedge \ids) \entunr q$ iff $(I_0, \ids) \entunr q$.
Assuming separability, to decide UQA for $\ids$ and $\fds$, we first check whether $I_0
\models \fds$, in PTIME combined complexity, and \ACZ data complexity as
  $\fds$ is expressible in first-order logic which can be evaluated in \ACZ \cite[Theorem~17.1.2]{abiteboul1995foundations}.
If $I_0 \not\models \fds$, then we
  vacuously have $(I_0, \ids \cup \fds) \entunr q$ so UQA is trivial.
Otherwise, we then determine whether $(I_0, \ids) \entunr q$,
using the upper bound for UQA for $\uid$s. By separability, the answer to UQA
under $\ids$ is the same as the answer to UQA under $\fds \wedge \ids$.

Hence, all that remains to show is that $\uid$s and $\fd$s are always separable.
This follows from the \emph{non-conflicting condition}
of \cite{cali2003decidability,cali2012towards} but we give a simpler self-contained
argument.
Assume that $I_0$ satisfies  $\fds$.
It is obvious that $(I_0, \ids) \entunr q$ implies $(I_0, \fds \wedge \ids) \entunr q$, so
let us prove the converse implication. We do it by noticing that 
the chase $\chase{I_0}{\ids}$ satisfies $\fds$. Indeed, 
assuming to the contrary the existence of~$F$ and $F'$ in
$\chase{I_0}{\ids}$ violating an $\fd$ of~$\fds$, there must exist a position
$R^p \in \pos(\sigma)$ such that $\pi_{R^p}(F) = \pi_{R^p}(F')$. Yet, by the 
Unique Witness Property, this implies that $F$ and $F'$ are facts of~$I_0$,
but we assumed that $I_0 \models \fds$, a contradiction.

Hence, $\chase{I_0}{\ids}$ satisfies $\fds$, so it is a superinstance
of~$I_0$ that satisfies $\fds \wedge \ids$. Hence, $(I_0, \fds \wedge \ids) \entunr q$ implies that we must
have $\chase{I_0}{\ids} \models q$. By universality of the chase, this implies
$(I_0, \ids) \entunr q$. Hence, the converse implication is proven, so the two UQA
problems are equivalent, which implies that $\ids$ and $\fds$ are separable.
\end{proof}

In the \emph{finite case}, however, even the decidability of FQA for $\fd$s and
$\uid$s was not known. \Thispaper shows that it is decidable, and that the
complexity matches that of UQA:

\begin{maintheorem}
  \label{thm:complexityfqa}
 FQA for $\fd$s and $\uid$s has \ACZ data complexity and \NP-complete
  combined complexity.
\end{maintheorem}

This result follows from our Main Theorem, which is proven in the rest of
\thispaper:

\begin{maintheorem}[Main theorem]
  \label{thm:mymaintheorem}
Conjunctions of~$\fd$s and $\uid$s are finitely controllable up to finite closure.
\end{maintheorem}

From the Main Theorem, we can prove
Theorem~\ref{thm:complexityfqa}, using the closure process of \cite{cosm}:

\begin{proof}[\proofof Theorem~\ref{thm:complexityfqa}]
  Again, the \NP-hardness lower bound is immediate from query
  evaluation \cite[Theorem~6.4.2]{abiteboul1995foundations}, so we only show the upper bounds.
  Consider an input to the FQA problem for $\fd$s and $\uid$s, consisting of an instance
  $I_0$, a conjunction $\con$ of~$\incd$s $\ids$ and $\fd$s $\fds$, and a $\cq$
  $q$. Let $\fds^*$ be the $\fd$s 
  and $\ids^*$ the $\uid$s of the finite closure $\closuref{\con}$. By our Main
  Theorem, we have $(I_0, \con) \entfin q$ iff $(I_0, \closuref{\con}) \entunr q$.
  As the
  computation of $\closuref{\con}$ from $\con$ is data-independent, the data complexity
  upper bounds follow from Proposition~\ref{prp:complexity}, so we
  need only show the combined complexity upper bound.

  Materializing $\closuref{\con}$ from the input may take exponential time, which we
  cannot afford, so we need a
  more clever approach. Remember from the proof of Proposition~\ref{prp:complexity} that, as
  $\closuref{\con}$ consists of $\uid$s and $\fd$s, it is separable. Hence, to solve UQA
  for $I_0$, $\closuref{\con}$ and $q$, as $\closuref{\con}$ is separable, we need to perform two steps: \begin{inparaenum}
  \item check whether $I_0 \models \fds^*$
  \item if yes, solve UQA for $I_0$, $\ids^*$ and $q$.
  \end{inparaenum}

  To perform step~1, compute in PTIME the set $\ufds^*$ of the $\ufd$s of
  $\closuref{\con}$,
  using Corollary~\ref{cor:cosmcplx}. By \cite{cosm} (remark above
  Corollary~4.4), all non-unary $\fd$s in $\fds^*$ are implied by $\ufds^*
  \fragcup
  \fds$ under the axiomatization of $\fd$ implication; hence, to check whether
  $I_0
  \models \fds^*$, it suffices to check whether $I_0 \models \ufds^*$ and $I_0 \models
  \fds$, which we do in PTIME.

  To perform step~2, compute $\ids^*$ in PTIME by considering each possible
  $\uid$ (there are polynomially many) and determining in PTIME from $\con$
  whether it is in $\closuref{\con}$, using Corollary~\ref{cor:cosmcplx}. Then, solve UQA
  in \NP combined complexity
  by Proposition~\ref{prp:complexity}. The entire process takes \NP combined
  complexity, and the answer matches that of FQA by our Main
  Theorem, which proves the \NP upper bound.
\end{proof}

In this \secname, we first explain
how we can prove Theorem~\ref{thm:mymaintheorem} from a different statement, namely:
we can construct \defo{finite universal superinstances} for finitely closed $\uid$s and
$\fd$s, which we will equivalently call \defo{finite universal models}. We conclude this \secname with the outline of the proof of this result
(Theorem~\ref{thm:univexists}) which will be developed in the rest of \thispaper.

\subsection{Finite Universal Superinstances}

Our Main Theorem claims that a certain class of constraints, namely finitely
closed $\uid$s and $\fd$s, are finitely controllable for the class of
conjunctive queries ($\cq$). To prove this, it will be easier to work
with a notion of \defo{$k$-sound} and \defo{$k$-universal instances}.

\begin{definition}
  \label{def:finuniv}
  For $k \in \mathbb{N}$, we say that a superinstance $I$ of an instance~$I_0$ is
  \deft{$\bm{k}$-sound} for constraints $\con$, for~$I_0$, and for $\cq$s if,
  for every 
  $\cq$~$q$ of size $\leq k$ such that $I \models q$,
  we have $(I_0,
  \con) \entunr q$. We say it is \deft{$\bm{k}$-universal} if the converse also
  holds: $I \models
  q$ whenever $(I_0, \con) \entunr q$.
  For a subclass $\textsf{Q}$ of $\cq$s,
  we call $I$ \deft{$\bm{k}$-sound} or \deft{$\bm{k}$-universal}
  for~$\con$, for~$I_0$,  and for~$\textsf{Q}$ if the same holds for
  all queries~$q$ of size $\leq k$
  that are in~$\textsf{Q}$.

  We say that a class $\textsf{CL}$ of constraints
  \emph{has finite universal superinstances} (or, for brevity, \emph{finite universal
  models})
  for a class $\textsf{Q}$ of $\cq$s,
  if for any constraints $\con$ of
  $\textsf{CL}$, for any $k \in \NN$, for any instance $I_0$, if $I_0$ has some
  superinstance that satisfies $\con$, then it has a 
  \emph{finite} 
  superinstance 
  that satisfies $\con$ 
  and is $k$-sound for~$\con$ and~$\textsf{Q}$
  (and hence is also $k$-universal for~$\con$ and~$\textsf{Q}$).
\end{definition}

We will thus show that the class of finitely closed $\uid$s and $\fd$s have finite universal
superinstances for~$\cq$s. We explain why this implies our Main Theorem:

\begin{proposition}
  \label{prp:univsup2fc}
  If constraint class $\textsf{CL}$ has
  finite universal superinstances
  for query class $\textsf{Q}$, then $\textsf{CL}$ is finitely controllable
  for $\textsf{Q}$.
\end{proposition}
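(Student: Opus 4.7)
The plan is to observe that one direction of finite controllability is free, and the other direction is essentially just an unpacking of the definition of $k$-universality (or, equivalently, the contrapositive of $k$-soundness). The trivial direction is that $(I_0, \con) \entunr q$ implies $(I_0, \con) \entfin q$, since any finite counterexample is also a general counterexample; this holds without any hypothesis on $\textsf{CL}$.

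For the nontrivial direction I would argue by contraposition. Fix a finite instance~$I_0$ and a query $q \in \textsf{Q}$, and suppose $(I_0, \con) \not\entunr q$. Then by definition there exists some (possibly infinite) superinstance $I'$ of~$I_0$ with $I' \models \con$ and $I' \not\models q$. In particular $I_0$ admits at least one superinstance satisfying~$\con$, so the hypothesis that $\textsf{CL}$ has finite universal superinstances for~$\textsf{Q}$ applies: choosing $k \defeq \card{q}$, we obtain a finite superinstance $J$ of~$I_0$ with $J \models \con$ that is $k$-sound (equivalently, $k$-universal) for~$\con$, $I_0$, and~$\textsf{Q}$.

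Now I would apply the contrapositive of $k$-soundness to the query~$q$, which is in~$\textsf{Q}$ and has size $\leq k$: since $(I_0, \con) \not\entunr q$, $k$-soundness forces $J \not\models q$. Thus $J$ is a \emph{finite} superinstance of~$I_0$ satisfying~$\con$ but not~$q$, which witnesses $(I_0, \con) \not\entfin q$. This proves the contrapositive of the nontrivial direction, and finite controllability of~$\textsf{CL}$ for~$\textsf{Q}$ follows.

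There is essentially no obstacle in this proof: all the work is hidden in the hypothesis that finite universal superinstances exist, which is precisely the content of Theorem~\ref{thm:univexists} and the main technical goal of the remainder of the paper. The only subtlety worth a one-line remark is the degenerate case in which $I_0$ has no superinstance satisfying~$\con$ at all, but then both $(I_0, \con) \entunr q$ and $(I_0, \con) \entfin q$ hold vacuously for every~$q$, so finite controllability is immediate.
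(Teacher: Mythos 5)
Your proof is correct and matches the paper's argument essentially line for line: the forward direction is trivial, and the converse is by contraposition, obtaining a finite $k$-sound superinstance with $k = \card{q}$ after noting that $(I_0, \con) \not\entunr q$ guarantees the existence of at least one superinstance satisfying $\con$. Your closing remark about the degenerate case is subsumed by this observation and is not a separate case to handle, but it does no harm.
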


\begin{proof}
  Let $\con$ be constraints in $\textsf{CL}$, $I_0$ be a finite instance and
  $q$ be a query in~$\textsf{Q}$. We show that $(I_0, \con) \entunr q$ iff $(I_0, \con)
  \entfin q$. The forward implication is immediate: if all superinstances of
  $I_0$
  that satisfy $\con$ must satisfy $q$, then so do the finite ones.

  For the converse implication, assume that $(I_0, \con) \not\entunr q$. In
  particular, this implies that $I_0$ has some superinstance that satisfies
  $\con$, as otherwise the entailment would be vacuously true. As $\textsf{CL}$
  has finite universal superinstances for $\textsf{Q}$, let $I$ be a finite
  $k$-sound superinstance of $I_0$ that satisfies $\con$, where $k \defeq \card{q}$. As $I$ is
  $k$-sound, we have $I \not\models q$, and as $I \models \con$, $I$ witnesses
  that $(I_0, \con) \not\entfin q$. This proves the converse direction, so we
  have established finite controllability.
\end{proof}

So, in \thispaper, we will actually show the following restatement of the Main
Theorem:

\mydefthm{univexists}{theorem}{Universal models}{
  The class of finitely closed $\uid$s and $\fd$s has finite
  universal models for~$\cq$: for every conjunction $\con$ of~$\fd$s $\fds$ and $\uid$s $\ids$ closed under
finite implication,
for any $k \in \mathbb{N}$,
for every finite instance $I_0$ that satisfies $\fds$,
there exists a finite $k$-sound superinstance $I$ of~$I_0$ that satisfies $\con$.
}

Indeed, once we have shown this, we can easily deduce the Main Theorem, namely,
that any 
conjunction $\con$ of~$\fd$s and~$\uid$s is finitely controllable up to finite
closure. Indeed, for any such $\con$, for any instance $I_0$ and $\cq$ $q$, we have $(I_0, \con) \entfin q$ iff
$(I_0, \closuref{\con}) \entfin q$: the forward statement is because any finite
model of~$\con$ is a model of~$\closuref{\con}$, and the backward statement is
tautological. Now, from the Universal Models Theorem and
Proposition~\ref{prp:univsup2fc}, we know that $\closuref{\con}$ is finitely
controllable, so that $(I_0, \closuref{\con}) \entfin
q$ iff $(I_0, \closuref{\con}) \entunr q$. We have thus shown that $(I_0, \con) \entfin
q$ iff $(I_0, \closuref{\con}) \entunr q$, which concludes the proof of the Main
Theorem.

Hence, we will show the Universal Models Theorem in the rest of \thispaper. We
proceed in incremental steps, following the plan that we outline next.

\subsection{Proof Structure}

We first make a simplifying assumption on the signature, without loss of
generality, to remove \emph{useless} relations.
Given an instance $I_0$, $\uid$s $\ids$ and $\fd$s $\fds$, it may the be case
that the signature $\sigma$ contains a relation $R$ that does not occur in
$\chase{I_0}{\ids}$, namely, it does not occur in $I_0$ and the existence of an
$R$-fact is not implied by~$\ids$. In this case, relation $R$ is 
\emph{useless}: a
$\cq$ $q$ involving $R$ will never be entailed under~$\con$, neither on
unrestricted nor on finite models, unless $I_0$ has no completion at all
satisfying the constraints. In any case, the query $q$ can be replaced by the
trivial $\cq$ $\text{False}$, which is only (vacuously) entailed if there are no
completions.

Hence, we can always remove useless relations from the signature, up to
rewriting the query to the false query. Thus, without loss of generality,
\emph{we always assume that the signature contains no useless relations}
in this sense: all relations of the signature occur in the chase.

\bigskip

We now present several assumptions that we use to prove weakenings of the
Universal Models Theorem. The first is on queries, which we require
to be \emph{acyclic}. The second is on $\fd$s, which we require to be
\emph{unary}, i.e., $\ufd$s. The
third is to replace $k$-soundness by the simpler notion of
\emph{weak-soundness}. Then we present two additional assumptions: the first one,
the reversibility assumption, is on the constraints, and requires that they have a certain
special form; the second one, the arity-two assumption, is on the constraints and signature,
which we require to be binary. In the next \secname, we show the Universal
Models Theorem under all these assumptions, and then we lift the assumptions one
by one, in each \secname. See
Table~\ref{tab:roadmap} for a synopsis.

Hence, let us present the assumptions that we will make (and later lift).

\begin{table}
  \caption{Roadmap of intermediate results.}
  \label{tab:roadmap}
  \begin{minipage}{\linewidth}
    \begin{tabularx}{\linewidth}{XXXrl@{\qquad\qquad}c}
    \toprule
     & {\bf Signature}
    & {\bf Universality} & 
    \multicolumn{2}{c}{\bf Constraints} & {\bf Query}  
    \\
    \midrule
  \bf \secshortname~\ref{sec:binary}: & binary & weakly-sound &
    reversible & $\uid$s, $\ufd$s & $\acq$ 
    \\ 
  \bf \secshortname~\ref{sec:weakarb}: &
    \bf arbitrary
    & weakly-sound &
    reversible & $\uid$s, $\ufd$s & $\acq$
    \\ 
  \bf \secshortname~\ref{sec:ksound}: &
    arbitrary
    & \bf $\mybf{k}$-sound &
    reversible & $\uid$s, $\ufd$s & $\acq$
    \\ 
  \bf \secshortname~\ref{sec:manyscc}: &
    arbitrary
    & $k$-sound &
    \textbf{finitely closed} & $\uid$s, $\ufd$s & $\acq$
    \\ 
  \bf \secshortname~\ref{sec:hfds}: &  
    arbitrary
    & $k$-sound & 
    finitely closed & $\uid$s, {\bf $\fd$s} & $\acq$
    \\ 
  \bf \secshortname~\ref{sec:cycles}: &
    arbitrary
    & $k$-sound &
    finitely closed & $\uid$s, $\fd$s & \bf $\cq$ 
    \\ 
    \bottomrule
  \end{tabularx}
\end{minipage}
\end{table}

\myparagraph{Acyclic queries}
It will be helpful to focus first on the subset of \emph{acyclic} $\cq$s,
denoted $\acq$, which are the queries that contain no \defo{Berge cycle}.
Formally:
\begin{definition}
  \label{def:acq}
  A \defp{Berge cycle} in a $\cq$~$q$ is a sequence 
$A_1, \allowbreak x_1, \allowbreak A_2, \allowbreak x_2, \allowbreak \ldots,
\allowbreak A_n, \allowbreak x_n$ with $n \geq 2$,
where the $A_i$ are pairwise distinct atoms of~$q$, the
$x_i$ are pairwise distinct variables of $q$,
the variable  $x_i$ occurs in $A_i$ and $A_{i+1}$
for $1 \leq i < n$, and 
$x_n$ occurs in~$A_n$ and~$A_1$.
A query~$q$ is in $\acq$ if $q$ has no Berge cycle and if no variable of~$q$
occurs more than once in the same atom.

Equivalently, consider the \defp{incidence multigraph} of $q$, namely, the bipartite
undirected multigraph on variables and atoms obtained by putting one edge
between variable $x$ and atom $A$ for every time where $x$ occurs in $A$
(possibly multiple times). Then $q$ is in~$\acq$ iff its incidence
multigraph is acyclic in the standard sense.
\end{definition}

\begin{example}
  The queries $\exists x ~ R(x, x)$, $\exists x y ~ R(x, y) \wedge S(x, y)$, and
  $\exists x y z ~ R(x, y) \wedge R(y, z) \wedge R(z, x)$ are not in~$\acq$: the
  first has an atom with two occurrences of the same variable, the other two
  have a Berge cycle. The following query is
  in~$\acq$: $\exists x y z w ~ R(x, y, z) \wedge S(x) \wedge T(y, w) \wedge U(w)$.
\end{example}

Intuitively, in the chase, all query matches are acyclic unless they involve
some cycle in the initial instance $I_0$. Hence, only acyclic $\cq$s have
matches, except those that match on $I_0$ or those whose cycles have
self-homomorphic matches, so, in a $k$-sound model, the $\cq$s of
size $\leq k$ which hold are usually acyclic. For this reason, we focus only
on $\acq$ queries first. We will ensure in \secref{sec:cycles} that cyclic
queries of size $\leq k$ have no matches.

\myparagraph{Unary $\fd$s}
We will first show our result for \emph{unary} $\fd$s ($\ufd$s); recall
from \secref{sec:prelim} that they are the $\fd$s with exactly one
determining attribute.
We do this because the finite closure construction of \cite{cosm} is not concerned
with higher-arity $\fd$s, except for the $\ufd$s that they imply. Hence, while
the $\ufd$s of the finite closure have a special structure that we can rely on, the
higher-arity $\fd$s are essentially arbitrary. This is why we deal with them
only in \secref{sec:hfds}, using a different approach.

\myparagraph{$\mybf{k}$-soundness and weak-soundness}
Rather than proving that $\uid$s and $\ufd$s have finite universal
models for $\acq$, it will be easier to prove first that they have \emph{finite
weakly-universal models}. This is defined relative to a notion of \emph{weak
soundness}, a weakening of $k$-soundness that we use as an intermediate step in
the proof:

\begin{definition}
  \label{def:wssinstance}
  A superinstance $I$ of an instance $I_0$ is
  \deft{weakly-sound} for a set of $\uid$s $\ids$ and for~$I_0$ if the following holds:
  \begin{itemize}
  \item Elements of~$I_0$ only appear in new facts at positions where they want to
    appear. Formally, for any $a \in \dom(I_0)$ and $R^p \in \pos(\sigma)$, if $a \in
    \pi_{R^p}(I)$, then either $a \in \pi_{R^p}(I_0)$ or $a \in
    \appelem{I_0}{R^p}$;
  \item Each new element only occurs at positions that are related by UIDs. Formally, for any $a \in \dom(I) \backslash \dom(I_0)$ and $R^p, S^q \in \pos(\sigma)$,
    if $a \in \pi_{R^p}(I)$ and $a \in \pi_{S^q}(I)$ then either we have $R^p = S^q$ or $\ui{R^p}{S^q}$
    and $\ui{S^q}{R^p}$
    are in~$\ids$. \qedef
  \end{itemize}
\end{definition}

Thus, we first show that $\ufd$s and $\uid$s have
\emph{finite weakly-universal
superinstances} for $\acq$,
defined analogously to Definition~\ref{def:finuniv}:
for any constraints $\ucon$ of~$\ufd$s $\ufds$ and $\uid$s $\ids$, for any query $q$ in $\acq$, for any instance $I_0$,
if $I_0$ has a superinstance that satisfies~$\ucon$, then it has a \emph{finite}
superinstance that satisfies~$\ucon$ and is weakly-sound for~$\ids$ and~$I_0$.
We will then generalize from weak soundness to the general case of $k$-soundness
in \secref{sec:ksound}.

\myparagraph{Reversibility assumption}
We will initially make a simplifying assumption on the structure of the $\uid$s
and $\ufd$s, which we call the \emph{reversibility assumption}. This assumption is motivated by
the finite closure rules of Theorem~\ref{thm:maincosm}; intuitively, it amounts
to assuming that a certain \emph{constraint graph} defined from the dependencies has a single
connected component:

\begin{definition}
  \label{def:reversible}
  Let $\idsr$ be a set of $\uid$s and $\ufds$ be a set of $\ufd$s.
  We call $\idsr$ and $\ufds$ \deft{reversible} if:
    \begin{itemize}
      \item The set $\idsr$ is closed under implication, and so is $\ufds$;
    \item All $\uid$s in $\idsr$ are reversible, i.e., their reverses
  are also in~$\idsr$;
\item For any $\ufd$ $\phi: R^p \rightarrow R^q$ in $\ufds$,
  if $R^p$ occurs in some $\uid$ of~$\idsr$ and $R^q$ also occurs in some $\uid$
  of~$\idsr$, then $\phi$ is reversible,
  i.e., $\phi^{-1}$ is also in~$\ufds$.\qedef
  \end{itemize}
\end{definition}

We can now state the assumption that we make:

\medskip

\begin{compactitem}
\item \emph{Reversibility assumption:} The $\uid$s $\ids$ and $\ufd$s $\ufds$ are reversible.
\end{compactitem}

\medskip

When making the reversibility assumption, we will write the $\uid$s $\idsr$
rather than $\ids$, as in the definition above.
Observe that $\idsr$ and $\ufds$ are then
finitely closed: they are closed under $\uid$ and
$\ufd$ implication, and the $\uid$s and $\ufd$s of any cycle must be reversible.
To lift the reversibility assumption and generalize to the general case, we will follow an SCC
decomposition of the constraint graph to manage each SCC separately. See
\secref{sec:manyscc} for details.

\myparagraph{Arity-two assumption}
We will start our proof in \secref{sec:binary} by introducing important notions in the much simpler
case of a binary signature. For this, we will initially make the following
\emph{arity-two assumption} on the signature and on~$\con$:

\medskip
\begin{compactitem}
\item \emph{Arity-two assumption:} 
  Each relation $R$ has arity~$2$ and the $\ufd$s $R^1
  \rightarrow R^2$ and $R^2 \rightarrow R^1$ are in~$\con$.
\end{compactitem}
\medskip

We will lift this assumption in \secref{sec:weakarb}.

\myparagraph{Roadmap}
Each of the next \secnames will prove that a certain constraint class
has finite universal models for a certain query class in a certain sense, under certain
assumptions. Table~\ref{tab:roadmap} summarizes the results that are proved in
each \secname.

The rest of \thepaper follows this roadmap: each \secname starts by stating
the result that it proves.

\section{Weak Soundness on Binary Signatures}
\label{sec:binary}
\begin{maintheorem}
  \label{thm:binary}
  Reversible $\uid$s and $\ufd$s have finite weakly-universal
  superinstances
  for $\acq$s under the arity-two assumption.
\end{maintheorem}

We prove this result in this \secname. Fix an instance $I_0$ and reversible
constraints $\uconr$ formed of $\uid$s $\idsr$ and $\ufd$s $\ufds$. Assume
that $I_0 \models \ufds$ as the question is vacuous otherwise, and make
the arity-two assumption.

Our goal is to construct a weakly-sound superinstance $I$ of~$I_0$ that satisfies
$\uconr$. We do so by a \emph{completion process}
that adds new (binary) facts  to connect elements together.
Remember that the arity-two assumption implies that all 
possible $\ufd$s hold, so if we extend $I_0$ to $I$ by adding 
a new fact $R(a_1, a_2)$, we must have $a_i \notin \pi_{R^i}(I_0)$ for $i \in
\{1, 2\}$. In order to achieve weak soundness we require in particular
that if $a_i \in \dom(I_0)$  we must have $a_i \in
\appelem{I_0}{R^i}$. Our task in this section is thus to complete $I_0$ to $I$ by adding
$R$-facts, for each relation $R$, that connect together elements of
$\appelem{I_0}{R^1}$ and $\appelem{I_0}{R^2}$.

\subsection{Completing Balanced Instances}

One situation where completion is easy is when the instance $I_0$ is \defo{balanced}: for
every relation~$R$, we can construct a bijection between the elements that want
to be in~$R^1$ and those that want to be in~$R^2$:

\begin{definition}
  \label{def:balanced}
  We call $I_0$ \deft{balanced} (for $\uid$s $\idsr$) if, for every two
  positions $R^p$ and $R^q$ such that $R^p \rightarrow R^q$ and $R^q \rightarrow
  R^p$ are in~$\ufds$, we have $\card{\appelemb{\idsr}{I_0}{R^p}} =
  \card{\appelemb{\idsr}{I_0}{R^q}}$.
\end{definition}

Note that, as we make the arity-two assumption in this section, the notion of
balancedness always applies to the two positions $R^1$ and $R^2$ of each relation $R$.
However, the definition is phrased in a more general
way because we will reuse it in later sections without making the arity-two
assumption.

If $I_0$ is balanced, we can show Theorem~\ref{thm:binary} by constructing $I$
with $\dom(I) = \dom(I_0)$, adding new facts that pair together the existing
elements:

\begin{proposition}
  \label{prp:balwsnd}
  Under the arity-two and reversibility assumptions,
  any balanced finite instance $I_0$ satisfying $\ufds$ has a finite weakly-sound
  superinstance $I$ that satisfies $\uconr$, with $\dom(I) = \dom(I_0)$.
\end{proposition}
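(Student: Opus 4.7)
The plan is to construct $I$ by matching wanting elements position by position: for each relation $R \in \sigma$, I would fix an arbitrary bijection $f_R : \appelemb{\idsr}{I_0}{R^1} \to \appelemb{\idsr}{I_0}{R^2}$, which exists because $I_0$ is balanced, and set
\[I \defeq I_0 \cup \bigcup_{R \in \sigma} \{R(a, f_R(a)) : a \in \appelemb{\idsr}{I_0}{R^1}\}.\]
By construction $\dom(I) = \dom(I_0)$ and $I$ is finite, so it remains to verify that $I$ satisfies $\uconr$ and is weakly-sound.

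For $\ufds$: under the arity-two assumption, $\ufds$ consists exactly of the dependencies $R^1 \rightarrow R^2$ and $R^2 \rightarrow R^1$ for each relation $R$. The new $R$-facts have pairwise distinct first coordinates, which all lie in $\appelemb{\idsr}{I_0}{R^1}$ and hence outside $\pi_{R^1}(I_0)$ by definition of $\appelem$; symmetrically, their second coordinates are pairwise distinct in $\appelemb{\idsr}{I_0}{R^2}$, disjoint from $\pi_{R^2}(I_0)$. Since $I_0 \models \ufds$ already, no violations arise within $I_0$, among the new facts, or between the two.

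For $\idsr$: I would show that for every $\ui{P_1}{P_2} \in \idsr$ and every $a \in \pi_{P_1}(I)$, we have $a \in \pi_{P_2}(I)$. If $a \in \pi_{P_1}(I_0)$, then either $a \in \pi_{P_2}(I_0)$ or, by definition of $\appelem$, $a \in \appelemb{\idsr}{I_0}{P_2}$; in either case $a \in \pi_{P_2}(I)$, the latter because $a$ is then paired by $f_{P_2}$ into a new fact at $P_2$. Otherwise $a \in \appelemb{\idsr}{I_0}{P_1}$, witnessed by some $\ui{T^s}{P_1} \in \idsr$ with $a \in \pi_{T^s}(I_0)$; closure of $\idsr$ under implication yields $\ui{T^s}{P_2} \in \idsr$, and the same dichotomy applied at $P_2$ gives $a \in \pi_{P_2}(I)$.

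Finally, weak-soundness is immediate: the second bullet of Definition~\ref{def:wssinstance} is vacuous since $\dom(I) = \dom(I_0)$, and the first holds because every element of $I_0$ that appears at position $R^p$ in a new fact must be of the form $a \in \appelemb{\idsr}{I_0}{R^p}$ by construction. There is no single hard step here; the main point requiring care is the $\uid$ satisfaction argument, which relies essentially on closure of $\idsr$ under implication (transitivity) to propagate the wanting property from $P_1$ to $P_2$, and the key conceptual ingredient is the observation that balancedness is precisely what lets us cancel the wants of both positions of $R$ in lockstep via a single bijection.
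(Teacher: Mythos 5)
Your proof is correct and follows essentially the same approach as the paper: the same bijections $f_R$ pairing $\appelem{I_0}{R^1}$ with $\appelem{I_0}{R^2}$, the same construction of $I$, and the same use of transitive closure of $\idsr$ to propagate the want from $P_1$ to $P_2$. One small point worth spelling out in the $\uid$ case: since $\idsr$ excludes trivial dependencies, transitivity only gives you $\ui{T^s}{P_2} \in \idsr$ when $T^s \neq P_2$, so you should note that when $T^s = P_2$ you conclude directly from $a \in \pi_{T^s}(I_0) = \pi_{P_2}(I_0)$ --- but this is exactly the degenerate case the paper also handles explicitly, and it does not affect the soundness of your argument.
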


We first exemplify this process:

\begin{figure}
  {
  \noindent\null\hfill\begin{minipage}[b]{.62\linewidth}
    \centering
    \begin{tikzpicture}[
  text height=1ex,text depth=0ex,xscale=1.5,
  mispos/.append style={red, font=\scriptsize},
  newfact/.append style={red, dashed, -{>[scale=1.2]}}
]
  \node (a) at (0.5, 0) {$a$};
  \node (b) at (1.5, 0) {$b$};
  \node (c) at (2.5, 0) {$c$};
  \node (d) at (3.5, 0) {$d$};
  \node (e) at (4.5, 0) {$e$};

  \node (f) at (0, -2) {$f$};
  \node (g) at (1, -2) {$g$};
  \node (h) at (2, -2) {$h$};
  
  \node (a2) [below = -.5ex of a,mispos] {$T^2$};
  \node (b2) [below = -.5ex of b,mispos] {$S^1$};
  \node (c2) [below = -.5ex of c,mispos] {$R^2$};
  \node (e2) [below = -.5ex of e,mispos] {$R^1$};
  
  \node (f2) [below = -.5ex of f,mispos] {$T^1$};
  \node (h2) [below = -.5ex of h,mispos] {$S^2$};

  \draw[->] (a) -- (b) node[above,midway] {$R$};
  \draw[->] (b) -- (c) node[above,midway] {$U$};
  \draw[->] (c) -- (d) node[above,midway] {$S$};
  \draw[->] (d) -- (e) node[above,midway] {$T$};
  
  \draw[->] (g) -- (f) node[above,midway] {$S$};
  \path[->] (g) edge [loop above,in=110,out=70,min distance=7mm] node {$R$} (g);
  \draw[->] (h) -- (g) node[above,midway] {$T$};

  \path (e) edge[newfact,bend left] node[below,midway,yshift=-.5ex] {$R$} (c);
  \path (b) edge[newfact,bend left]
  node[above,midway,xshift=1.5ex,yshift=-1ex] {$S$} (h);
  \path (f) edge[newfact,bend left,shorten >=2pt]
  node[above,midway,xshift=-1.5ex,yshift=-1ex] {$T$} (a);
\end{tikzpicture}
    \caption{Connecting balanced instances (see Example~\ref{exa:balance})}
    \label{fig:complete_balanced}
  \end{minipage}\hfill\begin{minipage}[b]{.32\linewidth}
    \centering
    \begin{tikzpicture}[
  text height=1ex,text depth=0ex,xscale=1.5,yscale=.8,
  mispos/.append style={red, font=\scriptsize},
  newfact/.append style={red, dashed, -{>[scale=1.2]}}
]
  \node (a) at (0, 0) {$a$};
  \node (b) at (1, 0) {$b$};

  \node[red] (h) at (.5, -2) {$h$};

  \node (a2) [below = -.5ex of a,mispos] {$T^2$};
  \node (b2) [below = -.5ex of b,mispos] {$S^1$};
  
  \node (h2) [below = -.5ex of h,mispos,align=center] {$S^2, T^1$};

  \draw[->] (a) -- (b) node[above,midway] {$R$};
  
  \path (b) edge[newfact,bend left=60] node[below,midway,xshift=1ex] {$S$} (h);
  \path (h) edge[newfact,bend left=60,shorten >=2pt] node[below,midway,xshift=-1ex] {$T$} (a);
\end{tikzpicture}
    \caption{Using helper elements to balance (see Example~\ref{exa:helper})}
    \label{fig:complete_helper}
  \end{minipage}\hfill\null
}
\end{figure}

\begin{example}
  \label{exa:balance}
  Consider four binary relations $R$, $S$, $T$, and $U$, with the $\uid$s
  $\ui{R^2}{S^1}$, $\ui{S^2}{T^1}$, $\ui{T^2}{R^1}$
  and their reverses, and the
  $\fd$s prescribed by the arity-two assumption. Consider $I_0 \defeq \{R(a,
  b), U(b, c), S(c, d), T(d, e), \allowbreak S(g, f), \allowbreak R(g, g), \allowbreak T(h, g)\}$, as depicted by
  the black elements and solid black arrows in Figure~\ref{fig:complete_balanced}.

  We compute, for each element, the set of positions where it wants to be,
  and write it in red under each element in Figure~\ref{fig:complete_balanced}
  (in this example, it is a set of size at most one for each element).
  For instance, we have $\appelem{I_0}{T^1} = \{f\}$. We observe that the
  instance is balanced: we have $\card{\appelem{I_0}{R^1}} =
  \card{\appelem{I_0}{R^2}}$, and likewise for~$S$, $T$, and~$U$.

  We can construct a weakly-sound superinstance $I$ of $I_0$ as $I \defeq I_0
  \sqcup \{R(e, c), \allowbreak S(b, h), \allowbreak T(f, a)\}$: the additional facts are represented as
  dashed red arrows in Figure~\ref{fig:complete_balanced}. Intuitively, we just
  create new facts that connect together elements which want to occur at the right
  positions.
\end{example}

We now give the formal proof of the result:

\begin{proof}[\proofof Proposition~\ref{prp:balwsnd}]
Define a bijection $f_R$ from
$\appelem{I_0}{R^1}$ to $\appelem{I_0}{R^2}$
for every relation~$R$ of~$\sigma$;
this is possible because $I_0$ is
balanced.

Consider the superinstance $I$ of~$I_0$, with $\dom(I) = \dom(I_0)$, obtained by
adding, for every $R$ of~$\sigma$, the fact $R(a, f_R(a))$ for every $a \in
\appelem{I_0}{R^1}$. $I$ is clearly a finite weakly-sound superinstance
of~$I_0$,
because for every $a \in \dom(I)$, if $a$ occurs at some position $R^p$ in some
fact $F$ of~$I$, then either $F$ is a fact of~$I_0$ and $a \in \pi_{R^p}(I_0)$, or
$F$ is a new fact in $I \backslash I_0$ and by definition $a \in \appelem{I_0}{R^p}$.

Let us show that $I \models \ufds$. Assume to the contrary that two
facts $F = R(a_1, a_2)$ and $F' = R(a_1', a_2')$ in $I$ witness a violation of a $\ufd$ $\phi: R^1
\rightarrow R^2$ of~$\ufds$. As $I_0 \models \ufds$, one of~$F$ and $F'$, say $F$,
must be a new fact.
By definition of the new facts, we have $a_1 \in \appelem{I_0}{R^1}$, so that $a_1
\notin \pi_{R^1}(I_0)$. Now, 
as $\{F, F'\}$ is a violation, we must have $\pi_{R^1}(F) = \pi_{R^1}(F')$, so
as $a_1 \notin \pi_{R^1}(I_0)$, $F'$ must also be a new fact. Hence, by definition
of the new facts, we have $a_2 =
a'_2 = f_R(a_1)$, so $F = F'$, which contradicts the fact
that $F$ and $F'$ violate $\phi$. For $\ufd$s $\phi$ of the form $R^2
\rightarrow R^1$, the proof is similar, but we have $a_1 = a'_1 =
f_R^{-1}(a_2)$.

Let us now show that $I \models \idsr$. Assume to the contrary that there is
an active fact $F = R(a_1, a_2)$ that witnesses the violation of a $\uid$ $\tau : \ui{R^p}{S^q}$. If $F$ is a fact of
$I_0$, we had $a_p \in \appelem{I_0}{S^q}$, so $F$
cannot be an active fact in~$I$ as this violation was solved in~$I$.
So we must have $F \in I \backslash I_0$. Hence,
by definition of the new facts,
we had $a_p \in \appelem{I_0}{R^p}$; so there must be $\tau' : \ui{T^r}{R^p}$ in
$\idsr$ such that
$a_p \in \pi_{T^r}(I_0)$. 
Hence, because $\idsr$ is transitively closed, either $T^r = S^q$ or
the $\uid$
$\ui{T^r}{S^q}$ is in $\idsr$. In the first case, as $a_p \in \pi_{T^r}(I_0)$, $F$
cannot be an active fact for~$\tau$, a contradiction.
In the second case, we had $a_p
\in \appelem{I_0}{S^q}$, so $a_p \in \pi_{S^q}(I)$ by definition of~$I$, so
again $F$ cannot be an active fact for~$\tau$.

Hence, $I$ is a finite weakly-sound superinstance of~$I_0$ that satisfies
$\uconr$
and with $\dom(I) = \dom(I_0)$, the desired claim.
\end{proof}

\subsection{Adding Helper Elements}

If our instance $I_0$ is not balanced, we cannot use the construction that we
just presented. The idea
is then to \emph{make $I_0$ balanced},
which we do by adding ``helper'' elements that we assign to positions.
The following example illustrates this:

\begin{example}
  \label{exa:helper}
  We use the same signature and dependencies as in Example~\ref{exa:balance}.
  Consider $I_0 \defeq \{R(a, b)\}$, as depicted in
  Figure~\ref{fig:complete_helper}.
  We have $a \in \appelem{I_0}{T^2}$ and $b \in \appelem{I_0}{S^1}$;
  however $\appelem{I_0}{S^2} = \appelem{I_0}{T^1} = \emptyset$, so $I_0$ is not balanced.

  Still, we can construct the weakly-sound superinstance $I \defeq I_0 \sqcup \{S(b, h), \allowbreak
  T(h, a)\}$ that satisfies the constraints. Intuitively, we have added a
  ``helper'' element~$h$ and ``assigned'' it to the positions $\{S^2, T^1\}$,
  so we could connect~$b$ to~$h$ with~$S$ and $h$ to~$a$ with~$T$.
\end{example}

We will formalize this idea of 
augmenting the domain with \defo{helper elements}, as a
\defo{partially-specified superinstance}, namely, an instance that is augmented with helpers assigned to positions.
However, we first need to
understand at which positions the helpers can appear, without violating
weak-soundness:

\begin{definition}
  \label{def:eqids}
  For any two positions $R^p$ and $S^q$, we write $R^p \eqids S^q$ when
  $R^p = S^q$ or when
  $\ui{R^p}{S^q}$ is in ~$\idsr$ 
(and hence $\ui{S^q}{R^p}$ is in ~$\idsr$ by the reversibility assumption).
  We write $\eqidsclass{R^p}$ for the $\eqids$-class of~$R^p$.
\end{definition}

As $\idsr$ is transitively closed, $\eqids$ is indeed an equivalence relation.
Our choice of where to assign the helper elements will be represented as a mapping
to $\eqids$-classes. We call the result a \defo{partially-specified superinstance}, or
\defo{pssinstance}:

\begin{definition}
  \label{def:completion}
  A \deft{pssinstance}
  of an instance $I$ is a triple $P = (I, \calH,
  \mapb)$ where $\calH$ is a finite set of \deft{helpers} and $\mapb$ maps
  each $h \in \calH$ to an
  $\eqids$-class $\mapb(h)$.

  We define $\appelem{P}{R^p} \defeq \appelem{I}{R^p}
  \sqcup \{h \in \calH \mid R^p \in \mapb(h)\}$.
\end{definition}

In other words, in the pssinstance, elements of~$I$ want to appear at the same
positions as before, and helper elements want to occur at their $\eqids$-class
according to~$\mapb$.
A \defo{realization} of a pssinstance~$P$ is then a superinstance of its underlying
instance~$I$ which adds the helper elements, and whose additional facts respect $\appelem{P}{R^p}$:

\begin{definition}
  \label{def:realization}
  A \deft{realization} of~$P = (I, \calH, \mapb)$ is
  a superinstance $I'$ of~$I$ such that $\dom(I') =
  \dom(I) \sqcup \calH$, and, for any fact $R(\mybf{a})$ of~$I' \backslash I$ and
  $R^p \in \pos(R)$, we have $a_p \in \appelem{P}{R^p}$.
\end{definition}

\begin{example}
  \label{exa:helper2}
  In Example~\ref{exa:helper}, a pssinstance of $I_0$ is $P \defeq (I_0, \{h\}, \mapb)$
  where $\mapb(h) \defeq \{S^2, T^1\}$. Further, it is balanced. For instance,
  $\appelem{P}{S^1} = \{b\}$ and $\appelem{P}{S^2} = \{h\}$. The instance $I$ in
  Example~\ref{exa:helper} is a realization of~$P$.
\end{example}

It is easy to see that realizations of pssinstances are
weakly-sound:

\begin{lemma}[Binary realizations are completions]
  \label{lem:wsrwss}
  If $I'$ is a realization of a pssinstance of~$I_0$ then it is a
  weakly-sound superinstance of~$I_0$.
\end{lemma}

\begin{proof}
Consider $a \in \dom(I')$ and $R^p \in \pos(\sigma)$ such that $a \in
\pi_{R^p}(I')$. As $I'$ is a realization, we know that either $a
\in \pi_{R^p}(I)$ or $a \in \appelem{P}{R^p}$. By definition of
$\appelem{P}{R^p}$, and because $\calH = \dom(I') \backslash \dom(I)$, this means
that either $a \in \dom(I)$ and $a \in \pi_{R^p}(I) \sqcup \appelem{I}{R^p}$, or
$a \in \dom(I') \backslash \dom(I)$ and $R^p \in \mapb(a)$.
Hence, let us check from the definition that $I'$ is weakly-sound:
\begin{itemize}
  \item For any $a \in \dom(I)$ and $R^p \in \pos(\sigma)$, we have shown
that $a \in \pi_{R^p}(I')$ implied that either $a \in \pi_{R^p}(I)$ or $a \in
\appelem{I}{R^p}$.
  \item For any $a \in \dom(I') \backslash \dom(I)$ and for any $R^p, S^q \in
    \pos(\sigma)$, we have shown that $a \in \pi_{R^p}(I')$ and $a \in
    \pi_{S^q}(I')$ implies that $R^p, S^q \in \mapb(a)$, so that $R^p
\eqids S^q$, hence $R^p = S^q$ or $\ui{R^p}{S^q}$ is in~$\idsr$.\qedhere
\end{itemize}
\end{proof}

In the next subsection, we will show that we can construct pssinstances that
are \emph{balanced} in a sense that we will define, and show that we can
construct realizations for these pssinstances.

\subsection{Putting it Together}

What remains to show to conclude the proof of Theorem~\ref{thm:binary} is that
we can construct a \emph{balanced} pssinstance of~$I_0$,
even when $I_0$ itself is not balanced. By a \defo{balanced}
pssinstance, we mean the exact analogue of Definition~\ref{def:balanced} for
pssinstances.

\begin{definition}
  \label{def:balancedpss}
  A pssinstance $P = (I, \calH, \mapb)$ is \deft{balanced} if for every two
  positions $R^p$ and $R^q$ such that $R^p \rightarrow R^q$ and $R^q \rightarrow
  R^p$ are in~$\ufds$, we have $\card{\appelem{P}{R^p}} = \card{\appelem{P}{R^q}}$.
\end{definition}

Again, the definition is phrased in a general way so as to be
usable later without making the arity-two assumption.
If $I_0$ is balanced, the empty pssinstance $(I, \emptyset, \mapb)$, with
$\mapb$ the empty function, is a balanced pssinstance of~$I_0$, and we could
just complete $I_0$ as we presented before. We now show that, even if $I_0$ is not
balanced, we can always construct a balanced pssinstance, thanks to
the helpers:

\begin{lemma}[Balancing]
  \label{lem:hascompletion}
  Any finite instance~$I$ satisfying~$\ufds$ has a
  balanced pssinstance.
\end{lemma}

In fact, this lemma does not use the arity-two assumption. We will 
reuse it in the next \secname.

\begin{proof}
  Let $I$ be a finite instance.
  For any position $R^p$,
  define $o(R^p) \defeq \appelem{I}{R^p} \sqcup
\pi_{R^p}(I)$, i.e., the elements that either appear at $R^p$
or want to appear there. We show that $o(R^p) = o(S^q)$ whenever $R^p \eqids
S^q$, which is obvious if $R^p = S^q$, so assume $R^p \neq S^q$.
First, we have $\pi_{R^p}(I) \subseteq o(S^q)$: elements in $\pi_{R^p}(I)$
want to appear at $S^q$ unless they already do, and in both cases they are in
$o(S^q)$. Second, elements of
$\appelem{I}{R^p}$ either occur at $S^q$, or at some other position $T^r$
such that $\ui{T^r}{R^p}$ is a $\uid$ of~$\idsr$, so that by transitivity $T^r =
S^q$ or
$\ui{T^r}{S^q}$
also holds, and so they want to be at $S^q$ or they already are. Hence
$o(R^p) \subseteq o(S^q)$; and symmetrically $o(S^q) \subseteq o(R^p)$.
Thus, the set $o(R^p)$ only depends on the $\eqids$-class of~$R^p$.

Let $N \defeq \max_{R^p \in \pos(\sigma)} \card{o(R^p)}$, which is finite.
We define for each $\eqids$-class
$\eqidsclass{R^p}$ a set $p(\eqidsclass{R^p})$ of~$N - \card{o(R^p)}$ fresh
helpers. We let $\calH$ be the disjoint union of the $p(\eqidsclass{R^p})$ for all
classes $\eqidsclass{R^p}$, and set $\mapb$ to map the elements of
$p(\eqidsclass{R^p})$ to $\eqidsclass{R^p}$. We have thus defined
a pssinstance $P = (I, \calH, \mapb)$.

Let us now show that $P$ is balanced. Consider now two positions $R^p$ and
$R^q$ such that $\phi: R^p \rightarrow R^q$ and $\phi^{-1}: R^q \rightarrow R^p$
are in
$\ufds$, and show that $\card{\appelem{P}{R^p}} = \card{\appelem{P}{R^q}}$. We
have $\card{\appelem{P}{R^p}} = \card{\appelem{I}{R^p}} +
\card{p(\eqidsclass{R^p})} = \card{o(R^p)} -
\card{\pi_{R^p}(I)} + N - \card{o(R^p)}$, which simplifies to
$N - \card{\pi_{R^p}(I)}$.
Similarly $\card{\appelem{P}{R^q}} = N - \card{\pi_{R^q}(I)}$.
Since  $I \models \ufds$ and $\phi$ and $\phi^{-1}$ are in $\ufds$
we know that
$\card{\pi_{R^p}(I)} = \card{\pi_{R^q}(I)}$.
Hence, $P$ is balanced, as we claimed.
\end{proof}

We had seen in Proposition~\ref{prp:balwsnd} that we could construct a
weakly-sound superinstance of a balanced $I_0$ by pairing together elements. We
now generalize this claim to the balanced pssinstances that we constructed,
showing that we can build realizations of balanced pssinstances that satisfy
$\uconr$:

\begin{lemma}[Binary realizations]
  \label{lem:balwsndcb}
  For any balanced pssinstance $P$ of an instance~$I$ which satisfies
  $\ufds$, we can construct a realization of
  $P$ that satisfies $\uconr$.
\end{lemma}

\begin{proof}
As in Proposition~\ref{prp:balwsnd},
for every relation $R$,
construct a bijection $f_R$ between $\appelem{P}{R^1}$ and $\appelem{P}{R^2}$:
this is possible, as $P$ is balanced.
We then construct our realization $I'$ as in Proposition~\ref{prp:balwsnd}:
 we add to $I$ the fact $R(a, f_R(a))$ for every $R$
of~$\sigma$ and every $a \in \appelem{P}{R^1}$.

We prove that $I'$ is a realization as in Proposition~\ref{prp:balwsnd} by observing that whenever we create a
fact $R(a, f_R(a))$, then we have $a \in \appelem{P}{R^1}$ and $f_R(a) \in
\appelem{P}{R^2}$.
Similarly, we show that $I' \models \ufds$ as in Proposition~\ref{prp:balwsnd}.

We now show that $I'$ satisfies $\idsr$. Assume to the contrary that there is an
active fact $F = R(a_1, a_2)$ that witnesses the violation of a $\uid$ $\tau: \ui{R^p}{S^q}$, so that $a_p
\in \appelem{I'}{S^q}$. If $a_p \in
\dom(I)$, then the proof is exactly as for Proposition~\ref{prp:balwsnd}. Otherwise, if $a_p \in \calH$,
clearly by construction of~$f_R$ and $I'$ we have $a_p \in \pi_{T^r}(I')$ iff
$T^r \in \mapb(a_p)$. Hence, as $a_p \in \pi_{R^p}(I')$ and as $\tau$ witnesses
by the reversibility assumption that $R^p \eqids S^q$ , we
have $a_p \in \pi_{S^q}(I')$, contradicting the fact that $a_p \in
\appelem{I'}{S^q}$.
\end{proof}

We now conclude the proof of Theorem~\ref{thm:binary}. Given the instance $I_0$,
construct a balanced pssinstance $P$
with the Balancing Lemma (Lemma~\ref{lem:hascompletion}),
construct a realization $I'$ of~$P$ that satisfies~$\uconr$ 
with the Binary Realizations Lemma (Lemma~\ref{lem:balwsndcb}),
and conclude 
by the ``Binary Realizations are Completions'' Lemma (Lemma~\ref{lem:wsrwss})
that $I'$ is a weakly-sound superinstance of~$I_0$.

\section{Weak Soundness on Arbitrary Arity Signatures}
\label{sec:weakarb}
We now lift the arity-two assumption and extend the results to arbitrary arity
signatures:
\begin{maintheorem}
  \label{thm:piecewise}
  Reversible $\uid$s and $\ufd$s have finite weakly-universal models
  for $\acq$s.
\end{maintheorem}

A first complication when lifting the arity-two assumption is that
realizations cannot be created just by pairing two elements. To
satisfy the $\uid$s
 we may have to create facts that connect elements on more
than two positions, so we may need more than the bijections between two
positions that we used before.
A much more serious problem is that the positions where we connect together
elements
may still be only a subset of the positions of the
relation, which means that the other positions must be filled somehow.

We address these difficulties by defining first \emph{piecewise realizations},
which create partial facts on positions connected by $\ufd$s, similarly to the
previous section.
We show that we can get piecewise realizations by generalizing the
Binary Realizations Lemma (Lemma~\ref{lem:balwsndcb}).
Second, to find elements to reuse at other positions,
we define a notion of \emph{saturation}. We show
that, by an initial \emph{saturation
process}, we can ensure that there are existing elements that we can reuse at
positions where this will not violate $\ufd$s (the \emph{non-dangerous
positions}). Third, we define a notion of \emph{thrifty chase step} to solve
$\uid$ violations one by one. We last explain how to use 
thrifty chase steps to solve all
$\uid$ violations on saturated instances, using a piecewise realization as a template;
this is how we construct our
weakly-sound completion.

As in the previous section, we fix the instance $I_0$, reversible
constraints $\uconr$ formed of $\uid$s $\idsr$ and $\ufd$s $\ufds$, and assume 
that $I_0 \models \ufds$.

\subsection{Piecewise Realizations}

Without the arity-two assumption, we must define a new equivalence relation to
reflect the $\ufd$s, in addition to $\eqids$ which reflects the $\uid$s:

\begin{definition}
  \label{def:eqfun}
  For any two positions $R^p$ and $R^q$, we write $R^p \eqfun R^q$ whenever
  $R^p = R^q$ or $R^p
  \rightarrow R^q$ and $R^q \rightarrow R^p$ are both in~$\ufds$.
\end{definition}

By transitivity of~$\ufds$, $\eqfun$ is indeed an equivalence relation.

The definition of \defo{balanced instances} (Definition~\ref{def:balanced})
generalizes as-is to arbitrary arity.
We do not change the definition of \defo{pssinstance}
(Definition~\ref{def:completion}), and talk of them being balanced 
(Definition~\ref{def:balancedpss})
in the same
way.
Further, we know that the Balancing Lemma (Lemma~\ref{lem:hascompletion})
holds even without the arity-two assumption.

Our general scheme is the same: construct a balanced pssinstance of~$I_0$, and
use it to construct the completion~$I$.
What we need is to change the notion of \defo{realization}. We replace it
by \defo{piecewise realizations}, which are defined on $\eqfun$-classes.
We number the $\eqfun$-classes of
$\pos(\sigma)$ as $\Pi_1, \ldots, \Pi_{\neqfunc}$
and define \defp{piecewise instances} by their projections to the~$\Pi_i$:

\begin{definition}
  \label{def:piecewise}
  A \deft{piecewise instance}
  is an $\neqfunc$-tuple $\pire = (K_1, \ldots, K_{\neqfunc})$,
  where each $K_i$ is a set of $\card{\Pi_i}$-tuples, indexed by~$\Pi_i$ for
  convenience.
  The \deft{domain} of~$\pire$ is $\dom(\pire) \defeq \bigcup_i \dom(K_i)$.
  For $1 \leq i \leq \neqfunc$ and $R^p \in \Pi_i$, we define
  $\pi_{R^p}(\pire) \defeq \pi_{R^p}(K_i)$.
\end{definition}

We will realize a pssinstance~$P$, not as an instance as in the previous \secname, but
as a piecewise instance.
The tuples in each $K_i$ will be defined from~$P$, and will connect elements
that want to occur at the corresponding position in $\Pi_i$, generalizing the 
ordered pairs constructed with bijections in the proof of 
the Binary Realizations Lemma (Lemma~\ref{lem:balwsndcb}).
Let us define accordingly the notion of a \defo{piecewise realization} of a pssinstance
as a piecewise instance:

\begin{definition}
  \label{def:superbvpw}
  A piecewise instance $\pire = (K_1, \ldots, K_{\neqfunc})$ is a
  \deft{piecewise realization} of the pssinstance $P = (I, \calH, \mapb)$ if:
  \begin{itemize}
    \item $\pi_{\Pi_i}(I) \subseteq K_i$ for all $1 \leq i \leq \neqfunc$,
    \item $\dom(\pire) = \dom(I) \sqcup \calH$, 
    \item for all $1 \leq i \leq \neqfunc$, for all $R^p
  \in \Pi_i$, for every tuple $\mybf{a} \in K_i \backslash \pi_{\Pi_i}(I)$, 
  we have $a_p \in \appelem{P}{R^p}$. \qedef
\end{itemize}
\end{definition}
Notice that the definition is similar to the conditions imposed on realizations
(Definition~\ref{def:realization}), although piecewise realizations are
piecewise instances, not actual instances; so we will need one extra step to make
real instances out of them: this is done in Section~\ref{sec:completion}.

We must now
generalize the Binary Realizations Lemma (Lemma~\ref{lem:balwsndcb}) to
construct these piecewise realizations out of balanced pssinstances. For this,
we need to define what it means for a piecewise
instance $\pire$ to ``satisfy'' $\uconr$. For $\ufds$, we require that $\pire$ respects the
$\ufd$s within each $\eqfun$-class. For
$\idsr$, we define it directly from the projections of~$\pire$.

\begin{definition}
  A piecewise instance $\pire$ is \deft{$\bm{\ufds}$-compliant} if, for all $1 \leq i \leq n$, there
  are no two tuples $\mybf{a} \neq \mybf{b}$ in~$K_i$
  such that $a_p = b_p$ for some
  $R^p \in \Pi_i$.

  $\pire$ is \deft{$\bm{\idsr}$-compliant} if $\appelem{\pire}{\tau} \defeq \pi_{R^p}(\pire) \backslash
  \pi_{S^q}(\pire)$ is empty for all $\tau : \ui{R^p}{S^q}$ in~$\idsr$.

  $\pire$ is \deft{$\bm{\uconr}$-compliant} if it is $\ufds$- and $\idsr$-compliant.
\end{definition}

We can then state and prove the generalization of the Binary Realizations Lemma:

\begin{lemma}[Realizations]
  \label{lem:balwsndc}
  For any balanced pssinstance~$P$ of an instance $I$ that satisfies
  $\ufds$, we can construct a piecewise
  realization of $P$
  which is $\uconr$-compliant.
\end{lemma}

Before we prove the Realizations Lemma, we show a simple example:

\begin{example}
  Consider a $4$-ary relation $R$ and the $\uid$s $\tau: \ui{R^1}{R^2}$,
  $\tau': \ui{R^3}{R^4}$ and their reverses, and the $\ufd$s $\phi: R^1 \rightarrow
  R^2$, $\phi' : R^3 \rightarrow R^4$ and their reverses. We have
  $\Pi_1 = \{R^1, R^2\}$ and $\Pi_2 = \{R^3, R^4\}$.
  Consider $I_0 \defeq \{R(a,
  b, c, d)\}$, which is balanced, and the trivial balanced pssinstance
  $P \defeq (I_0, \emptyset, \mapb)$,
  where $\mapb$ is the empty function. A $\uconr$-compliant piecewise realization of~$P$
  is $\pire \defeq (\{(a, b), (b, a)\}, \{(c, d), (d, c)\})$.
\end{example}

We conclude the \subsecname with the proof of the Realizations Lemma:
\begin{proof}[\proofof Lemma~\ref{lem:balwsndc}]
Let $P = (I, \calH, \mapb)$ be the balanced pssinstance.
Recall
that the $\eqfun$-classes of~$\sigma$ are numbered $\Pi_1, \ldots,
\Pi_{\neqfunc}$.
By definition of~$P$ being balanced (Definition~\ref{def:balanced} applied
to arbitrary arity),
for any $\eqfun$-class $\Pi_i$,
for any two positions $R^p, R^q \in \Pi_i$,
we have $\card{\appelem{P}{R^p}} = \card{\appelem{P}{R^q}}$.
Hence, for all $1 \leq i \leq \neqfunc$,
let us write $s_i$ to denote the value of~$\card{\appelem{P}{R^p}}$
for some $R^p \in \Pi_i$.

For $1 \leq i \leq \neqfunc$, we let $m_i$ be~$\card{\Pi_i}$, and
number the positions of $\Pi_i$ as $R^{p^i_1}, \ldots, R^{p^i_{m_i}}$.
We choose for each $1 \leq i \leq \neqfunc$ and $1 \leq j \leq m_i$
an arbitrary bijection $\phi^i_j$
from $\{1, \ldots, s_i\}$ to $\appelem{P}{R^{p^i_j}}$.
We construct the piecewise realization $\pire = (K_1, \ldots, K_n)$
by setting each $K_i$ for $1 \leq i \leq \neqfunc$
to be $\pi_{\Pi_i}(I)$ plus the tuples
$(\phi^i_1(l), \ldots, \phi^i_{m_i}(l))$ for $1 \leq l \leq s_i$.

\medskip

It is clear that $\pire$
is a piecewise realization. Indeed, the first two conditions are
immediate. Further, whenever we create a tuple
$\mybf{a} \in K_i$ for any $1 \leq i \leq \neqfunc$,
then, for any $R^p \in \Pi_i$, we have
$a_p \in \appelem{P}{R^p}$.

\medskip

Let us then show that $\pire$ is $\ufds$-compliant.
Assume by contradiction that
there is $1 \leq i \leq \neqfunc$
and $\mybf{a}, \mybf{b} \in K_i$ such
that $a_l = b_l$ but $a_r \neq b_r$ for some $R^l, R^r \in \Pi_i$.
As $I$ satisfies $\ufds$, we assume without loss of generality that
$\mybf{a} \in K_i \backslash \pi_{\Pi_i}(I)$. Now either $\mybf{b} \in
\pi_{\Pi_i}(I)$ or $\mybf{b} \in K_i \backslash \pi_{\Pi_i}(I)$.
\begin{itemize}
  \item If $\mybf{b} \in \pi_{\Pi_i}(I)$,
then $b_l \in \pi_{R^l}(I)$. Yet, we know by
construction that, as $\mybf{a} \in K_i \backslash \pi_{\Pi_i}(I)$, we 
have $a_l \in \appelem{P}{R^l}$, and as $a_l = b_l$ we have $a_l \in
    \appelem{P}{R^l}$, which contradicts the fact that $\mybf{b} \in
    \pi_{\Pi_i}(I)$.
  \item If $\mybf{b} \in K_i \backslash \pi_{\Pi_i}(I)$, then,
writing $R^l = R^{p^i_j}$ and $R^r = R^{p^i_{j'}}$,
the fact that $a_l = b_l$ but $a_r \neq b_r$
contradicts the fact that $\phi^i_{j} \circ (\phi^i_{j'})^{-1}$ is injective.
\end{itemize}
Hence, $\pire$ is $\ufds$-compliant.

\medskip

Let us now show that $\pire$ is $\idsr$-compliant.
We must show
that, for every $\uid$
$\tau: \ui{R^p}{S^q}$ of~$\idsr$, we have $\appelem{\pire}{\tau} = \emptyset$, which means
that we have $\pi_{R^p}(\pire) \subseteq \pi_{S^q}(\pire)$. Let
$\Pi_i$ be the $\eqfun$-class of~$R^p$, and assume to the contrary the existence of a
tuple $\mybf{a}$ of~$K_i$ such that $a_p \notin \pi_{S^q}(\pire)$.
Either we have $a_p \in \dom(I)$ or we have $a_p \in \calH$:

\begin{itemize}
  \item 
    If $a_p \in \dom(I)$,
    then we have $a_p \notin \pi_{S^q}(\pire)$,
in particular $a_p \notin \pi_{S^q}(I)$.
    Now there are two subcases: either $\mybf{a} \in \pi_{\Pi_i}(I)$, and
    then we have $a_p \in \pi_{R^p}(I)$; or $\mybf{a} \in K_i \setminus
    \pi_{\Pi_i}(I)$, in which case we know that $a_p \in \appelem{P}{R^p}$, so
    that as $a_p \in \dom(I)$ we have $a_p \in \appelem{I}{R^p}$. In both subcases, 
$\tau$ witnesses that
$a_p \in \appelem{I}{S^q}$.
By construction of~$\pire$, then,
letting $\Pi_{i'}$ be the $\eqfun$-class of~$S^q$ and letting $S^q = S^{p^{i'}_j}$,
as $\phi^{i'}_j$ is surjective,
we must have $a_p \in \pi_{S^q}(K_{i'})$,
that is, $a_p \in \pi_{S^q}(\pire)$, a contradiction.

    \item If $a_p \in \calH$, then clearly by construction we have $a_p \in \pi_{T^r}(\pire)$
iff $T^r \in \mapb(a_p)$,
so that, given that $\tau$ witnesses $R^p \eqids S^q$,
if $a_p \in \pi_{R^p}(\pire)$ then $a_p \in \pi_{S^q}(\pire)$,
a contradiction.
\end{itemize}
We conclude that $\pire$ is indeed a $\uconr$-compliant piecewise realization
of~$P$.
\end{proof}

\subsection{Relation-Saturation}

The Realizations Lemma (Lemma~\ref{lem:balwsndc})
gives us a $\uconr$-compliant piecewise
realization which is a piecewise instance. To construct an
actual superinstance from it, we will have to expand each tuple $\mybf{t}$
of each~$K_i$, defined on the $\eqfun$-class $\Pi_i$, to an entire fact
$F_{\mybf{t}}$ of the corresponding relation.

However, to fill the other positions of $F_{\mybf{t}}$, we will need to
reuse existing elements of~$I_0$.
To do this, it is easier to assume that $I_0$ contains some $R$-fact
for every relation $R$ of the signature.

\begin{definition}
  \label{def:wsaturated}
  A superinstance $I$ of $I_0$ is \deft{relation-saturated}
  if for every $R \in \sigma$ there is an $R$-fact in $I$.
\end{definition}

We illustrate why it is easier to work with
relation-saturated instances:

\begin{example}
  Suppose our schema has two binary relations $R$ and $T$ and a unary relation $S$, the
  $\uid$s $\tau: \ui{S^1}{R^1}$, $\tau': \ui{R^2}{T^1}$ and their reverses, and no $\ufd$s. Consider the
  non-relation-saturated instance $I_0 \defeq \{S(a)\}$. It is
  balanced, so
  $P \defeq (I_0, \emptyset, \mapb)$, with $\mapb$ the empty function, is a
  pssinstance of~$I$.

  Now, a $\uconr$-compliant piecewise realization of~$P$ is $\pire = (K_1,
  \ldots, K_5)$ with $K_2 = K_4 = K_5 = \emptyset$ and $K_1 = K_3 = \{a\}$,
  where $\Pi_1$ and $\Pi_3$ are the $\eqfun$-classes of~$R^1$ and~$S^1$.
  However, we cannot easily complete $\pire$ to an actual superinstance of $I_0$
  satisfying~$\tau$ and~$\tau'$. Indeed,
  to create the fact $R(a, \bullet)$, as indicated by~$K_1$, we need to fill
  position $R^2$. Using an existing element would violate weak-soundness, and
  using a fresh element would introduce a violation
  of~$\tau'$, which $P$ and $\pire$ would not tell us how to solve.

  Consider instead the relation-saturated instance $I_1 \defeq I_0 \sqcup
  \{S(c), R(c, d), T(d ,e)\}$.
  We can complete~$I_1$ to a weakly-sound superinstance that satisfies $\tau$
  and $\tau'$, by adding the
  fact $R(a, d)$. Observe how we reused $d$ to fill position $R^2$: this
  does not violate weak-soundness or introduce new $\uid$ violations.
\end{example}

Relation-saturation can clearly be ensured by initial chasing, which
does not violate weak-soundness. We call this a \emph{saturation process} to
ensure relation-saturation:

\begin{lemma}[Relation-saturated solutions]
  \label{lem:wtsol}
  For any reversible $\uid$s $\idsr$, $\ufd$s $\ufds$, and instance $I_0$ satisfying
  $\ufds$, there exists a finite number $n \in \NN$ such that the result of
  performing $n$ chase rounds on~$I_0$ by $\idsr$ is
  a weakly-sound relation-saturated superinstance of~$I_0$ that satisfies
  $\ufds$.
\end{lemma}

This allows us to assume that $I_0$ was preprocessed with initial chasing if
needed, so we can assume it to be relation-saturated.
To show the lemma, and also for further use, we make a simple observation on
weak-soundness:

\begin{lemma}[Weak-soundness transitivity]
  \label{lem:wsndtrans}
  If $I'$ is a weakly-sound superinstance of~$I$, and $I$ is a weakly-sound
  superinstance of~$I_0$, then $I'$ is a weakly-sound superinstance of~$I_0$.
\end{lemma}

\begin{proof}
  Let $a \in \dom(I')$, and let us show that it does not witness a violation of
  the weak-soundness of~$I'$ for~$I_0$. We distinguish three cases:
  \begin{itemize}
    \item If $a \in \dom(I_0)$, then in particular $a \in \dom(I)$. Hence,
      letting $S^q$ be any position such that $a \in \pi_{S^q}(I')$, as
      $I'$ is a weakly-sound superinstance of~$I$, either $a \in
      \pi_{S^q}(I)$ or we have $a \in
      \appelem{I}{S^q}$. Let $R^p$ be a position such that $a \in
      \pi_{R^p}(I)$, and such that $R^p = S^q$ (in the first case) or $\ui{R^p}{S^q}$
      is in~$\idsr$ (in the second case). As $I$ is a weakly-sound
      superinstance of~$I_0$, either $a \in \pi_{R^p}(I_0)$ or $a \in
      \appelem{I_0}{R^p}$. As $\idsr$ is transitively closed, we conclude that
      $a \in \appelem{I_0}{S^q}$ or $a \in \pi_{S^q}(I_0)$.
    Hence, the fact that $a$ occurs at position $S^q$ in~$I'$
    does not cause a violation of weak-soundness in~$I'$ for~$I_0$.
  \item If $a \in \dom(I) \backslash \dom(I_0)$, we must show that for any two
    positions $R^p$, $S^q$ where $a$ occurs in $I'$, we have $R^p
    \eqids S^q$. Let us fix two such positions, i.e., we have $a \in
    \pi_{R^p}(I')$ and $a \in \pi_{S^q}(I')$. As $I'$ is a weakly-sound
    superinstance of~$I$, 
    we have either $a \in \pi_{R^p}(I)$ or $a \in \appelem{I}{R^p}$, and
    we have either $a \in \pi_{S^q}(I)$ or $a \in \appelem{I}{S^q}$.
    As in the previous case, let $T^v$ and $U^w$ be positions such that $a \in
    \pi_{T^v}(I)$
    and $a \in \pi_{U^w}(I)$, and $T^v = R^p$ or the $\uid$ $\tau: \ui{T^v}{R^p}$
    is in~$\idsr$, and $U^w = S^q$ or the $\uid$ $\tau': \ui{U^w}{S^q}$ is
    in~$\idsr$. As $I$ is a weakly-sound superinstance of~$I_0$, and $a \notin
    \dom(I_0)$, we know that $T^v \eqids U^w$. By the reversibility assumption
    and as $\idsr$ is transitively closed, we deduce (using $\tau$ and $\tau'$
    if necessary) that $R^p \eqids S^q$, which is what we wanted to show. Hence,
    the fact that $a$ occurs at positions $R^p$ and $S^q$ in~$I'$ does not cause
    a violation of weak-soundness in~$I'$ for~$I_0$.
  \item If $a \in \dom(I') \backslash \dom(I)$, then from the fact that $I'$ is
    a weakly-sound superinstance of~$I$, we deduce immediately about $a$ what is
    needed to show that it does not witness a violation of the weak-soundness of
    $I'$ for $I_0$.
  \end{itemize}
  So we conclude that $I'$ is a weakly-sound instance of $I_0$, as desired.
\end{proof}

We conclude the \subsecname
by proving the Relation-Saturated Solutions Lemma (Lemma~\ref{lem:wtsol}):

\begin{proof}[\proofof Lemma~\ref{lem:wtsol}]
  Remember that the signature $\sigma$ was assumed without loss of generality
  not to contain any useless relation. Hence, for every relation $R \in \sigma$,
  there is an $R$-fact in $\chase{I_0}{\idsr}$,
  which was generated at the $n_R$-th round of the chase, for some $n_R \in \NN$.
  Let $n \defeq \max_{R \in \sigma} n_R$, which is finite because the number of
  relations in $\sigma$ is finite.
  We take $I$ to be the result of applying
  $n$ chase rounds
  to~$I_0$.
  
  It is clear that $I$ is relation-saturated. The fact that $I$ is
  weakly-sound is
  by the Weak-Soundness Transitivity Lemma (Lemma~\ref{lem:wsndtrans}), 
  because each chase step
  clearly preserves weak-soundness: the exported element occurs at a position
  where it wants to occur, so we can use the reversibility assumption
  and new elements only occur at one position.
\end{proof}

\subsection{Thrifty Chase Steps}

We have explained why $I_0$ can be assumed to be relation-saturated, and we know
we can build a $\uconr$-compliant piecewise realization $\pire$ of a balanced
pssinstance. Our goal is now to satisfy the $\uid$s using $\pire$. We
will do so by a \emph{completion process} that fixes each violation one by
one, following $\pire$. This \subsecname presents the tool that we use
for this, and the next \subsecname describes the actual process.

Our tool is a form of chase step, a \defo{thrifty chase step}, which adds a new
fact~$F_{\fnew}$ to satisfy a $\uid$
violation. For some of the positions,
the elements of~$F_{\fnew}$ will be defined from the
realization $\pire$, using one of its tuples.
For each of these elements, either $F_{\fnew}$ makes them occur at a position that they
want to be (thus satisfying another violation) or
these elements are helpers that did not occur already in the domain.
At any other position $S^r$ of $F_{\fnew}$, we may
either reuse an existing element (by relation saturation, one can always
reuse an element that already occurs in that position) 
or create a
fresh element (arguing that no $\uid$ will be violated on that element).
This depends on whether $S^r$ is \defo{dangerous}
or \defo{non-dangerous}:

\begin{definition}
  \label{def:dangerdef}
  We say a position $S^r \in \pos(\sigma)$ is \deft{dangerous} for the position
  $S^q \neq S^r$ if $S^r \rightarrow S^q$ is in~$\ufds$, and write $S^r \in \danger(S^q)$.
  Otherwise, still assuming $S^q \neq S^r$ $S^r$ is
  \deft{non-dangerous} for~$S^q$, written $S^r \in \nondanger(S^q)$. Note that
  $\{S^q\} \sqcup \danger(S^q) \sqcup \nondanger(S^q) = \pos(S)$.
\end{definition}

We can now define \defo{thrifty chase steps}. The details of the definition are
designed for the completion process defined in the next \subsecname
(Proposition~\ref{prp:rtcompr}), and for the
specialized notions that we will introduce later in this \subsecname as well as
in the following \secnames.

\begin{definition}
  \label{def:thrifty}
  Let $I$ be a superinstance of~$I_0$, let $\tau : \ui{R^p}{S^q}$
  be a $\uid$ of~$\idsr$, and let $F_{\factive} = R(\mybf{a})$ be an active fact for
  $\tau$ in $I$. We call $S^q$ the \deft{exported position}, and write $\Pi_i$
  for its $\eqfun$-class.
  
  Applying a \deft{thrifty chase step} to~$F_{\factive}$ (or~$a$) in~$I$ by~$\tau$ yields a 
  superinstance $I'$ of~$I_0$ which is $I$ plus a single new fact $F_{\fnew} =
  S(\mybf{b})$. We require the following on $b_r$ for all $S^r \in \pos(S)$:
  
  \begin{itemize}
  \item For $S^r = S^q$, we require $b_q = a_p$ and $b_q \in \appelem{I}{\tau}$;
  \item For $S^r \in \Pi_i \backslash \{S^q\}$, we require that one of the following holds:
    \begin{itemize}
      \item $b_r \in \appelem{I}{S^r}$;
      \item $b_r \notin \dom(I)$ and for all $S^s \in \Pi_i$,
    such that $b_r = b_s$, we have $S^r \eqids S^s$;
    \end{itemize}
  \item For $S^r \in \danger(S^q) \backslash \Pi_i$, we require $b_r$ to be
    fresh and occur only at that position;
  \item For $S^r \in \nondanger(S^q)$, we require that $b_r \in \pi_{S^r}(I)$.
    \qedef
  \end{itemize}
\end{definition}

Thrifty chase steps eliminate $\uid$ violations on the element at the exported
position $S^q$ of the new fact (which
is why we call them ``chase steps''),
and also eliminate violations on positions in the same $\eqfun$-class as
$S^q$, unless a fresh element is used there.
The completion process that we will define in the next \subsecname will
\emph{only} apply thrifty chase steps (namely, \emph{relation-thrifty steps},
which we will define shortly), and indeed
this will be true of \emph{all} completion processes used in \thispaper.

For now, we can observe that thrifty chase steps cannot break weak-soundness:

\begin{lemma}[Thrifty preserves weak-soundness]
  \label{lem:thriftyp}
  For any weakly-sound superinstance $I$ of an instance $I_0$, letting $I'$ be
  the result of applying a thrifty chase step on~$I$, we have that $I'$ is a weakly-sound
  superinstance of~$I_0$.
\end{lemma}

\begin{proof}
  By the Weak-Soundness Transitivity Lemma (Lemma~\ref{lem:wsndtrans}),
  it suffices to show that $I'$ is a
  weakly-sound superinstance of~$I$. It suffices to check this
  for the elements occurring in the one fact $F_{\fnew}
  = S(\mybf{b})$ of $I' \backslash I$, as the other elements occur at the same
  positions as before. Let us show for each $b_r$ for $S^r \in \pos(S)$ that
  $b_r$ does not cause a violation of weak-soundness:
  \begin{itemize}
    \item For $S^r = S^q$, we have $b_r \in \appelem{I}{S^r}$, so $b_r$ does not
      violate weak-soundness;
    \item For $S^r \in \Pi_i \backslash \{S^q\}$, there are two
      possible cases:
      \begin{itemize}
        \item $b_r \in \appelem{I}{S^r}$, so $b_r$ does not violate
          weak-soundness;
        \item $b_r \notin \dom(I)$ and $b_r$ occurs only at positions related by
          $\eqids$, so $b_r$ does not violate weak-soundness;
      \end{itemize}
    \item For $S^r \in \danger(S^q) \backslash \Pi_i$, $b_r$ is fresh and occurs
      at a single position in~$I'$, so $b_r$ does not violate weak-soundness;
    \item For $S^r \in \nondanger(S^q)$, as $b_r \in \pi_{S^r}(I)$, 
      $b_r$ does not violate weak-soundness. \qedhere
  \end{itemize}
\end{proof}

Thrifty chase steps may introduce $\ufd$ violations. For this reason, we
introduce the special case of
\defo{relation-thrifty} chase steps, which can not introduce such
violations. (Relation-thrifty chase steps may still introduce
$\fd$ violations; we will deal with this in \secref{sec:hfds}.)

\begin{definition}[Relation-thrifty]
  A \deft{relation-thrifty chase step} is a thrifty chase step where, reusing
  the notation of Definition~\ref{def:thrifty}, we choose one fact
  $F_\rr = S(\mybf{c})$ of $I$,
  and use $F_\rr$ to define $b_r \defeq c_r$ for all
  $S^r \in \nondanger(S^q)$.
\end{definition}

Remember that relation-saturation ensures that such a fact $S(\mybf{c})$ can
always be found, so clearly any $\uid$ violation can be solved
on a relation-saturated instance by applying some relation-thrifty chase step. 
Further, we can show that relation-thrifty chase steps, unlike thrifty chase
steps, preserve $\ufd$s:

\begin{lemma}[Relation-thrifty preservation]
  \label{lem:rthriftyp}
  For any superinstance $I$ of an instance~$I_0$ such that $I$ satisfies $\ufds$, letting $I'$ be
  the result of applying a relation-thrifty chase step on~$I$, then $I'$ satisfies
  $\ufds$. Further, if $I$ is relation-saturated, then $I'$ is
  relation-saturated.
\end{lemma}

\begin{proof}
  Assume to the contrary the existence of two facts $F = S(\mybf{a})$ and $F'
  = S(\mybf{b})$ in $I'$ that
  witness a violation of some $\ufd$ $\phi: S^r \rightarrow S^p$ of~$\ufds$. As $I
  \models \ufds$, we may assume without loss of generality that $F'$ is
  $F_{\fnew} =
  S(\mybf{b})$, the unique fact of~$I' \backslash I$.
  Write $\tau: \ui{R^p}{S^q}$ the $\uid$ of~$\idsr$ applied in the
  relation-thrifty chase step.
  
  We first note that we must
  have $S^r$ in $\nondanger(S^q)$. Indeed, assuming to the contrary that $S^r
  = S^q$ or $S^r \in
  \danger(S^q)$, the definition of thrifty chase steps requires that either $b_r \notin \dom(I)$ or $b_r \in
  \appelem{I}{S^r}$, so that in either case $b_r \notin \pi_{S^r}(I)$.
  Yet, as $a_r = b_r$,
  $F$ witnesses that $b_r \in \pi_{S^r}(I)$, a contradiction. Thus, $S^r \in
  \nondanger(S^q)$.

  Now, because $\phi$ is
  in $\ufds$ and $\ufds$ is closed under the $\ufd$ transitivity rule,
  unwinding the definitions we can see that $S^p
  \in \nondanger(S^q)$ as well. Now, let $F_\rr = S(\mybf{c})$ be the chosen fact
  for the relation-thrifty chase step. Observe that we must have $F \neq F_\rr$: this follows because we have
  $\pi_{S^r}(F_\rr) = c_q = b_q$ but $\pi_{S^r}(F) = a_q$ and $a_q \neq b_q$ by
  definition of a $\ufd$ violation. Remember now that the definition
  of $F_{\fnew}$ from $F_\rr$ ensures that
  $b_q = c_q$ and $b_r = c_r$. As we also showed that $F \neq F_\rr$, we know
  that $F$ and $F_\rr$ are also a violation of
  $\phi$. But as $F$ and $F_\rr$ are in $I$, this contradicts the fact that $I \models \ufds$.

  The second part of the claim is immediate.
\end{proof}

To summarize: we have defined the general tool used in our completion process, \emph{thrifty
chase steps}, along with a special case that preserves $\ufd$s,
\emph{relation-thrifty chase steps}, which applies to
relation-saturated instances. 
We now
move to the last part of this \secname, where we use this tool to satisfy $\uid$
violations, also using the tools previously defined in this \secname.

\subsection{Relation-Thrifty Completions}
\label{sec:completion}

To prove Theorem~\ref{thm:piecewise},
let us start by taking our initial finite instance~$I_0$, which satisfies~$\ufds$,
and use the Relation-Saturated Solutions Lemma (Lemma~\ref{lem:wtsol})
to obtain a finite weakly-sound
superinstance $I_0'$ which is relation-saturated and still satisfies~$\ufds$.
We now obtain our weakly-sound superinstance from~$I_0'$
by performing a \defo{completion
process} by relation-thrifty chase steps, which we phrase as follows:

\begin{proposition}[Reversible relation-thrifty completion]
  \label{prp:rtcompr}
  For any reversible $\ufds$ and $\idsr$,
  for any finite relation-saturated instance $I_0'$ that satisfies $\ufds$,
  we can use relation-thrifty chase steps
  to construct 
  a finite weakly-sound superinstance $I_\f$ of~$I_0'$ that satisfies $\uconr =
  \idsr \formcup \ufds$.
\end{proposition}

Indeed, once this result is proven, we can immediately conclude the proof of
Theorem~\ref{thm:piecewise} with it, by applying it to~$I_0'$ and obtaining
$I_\f$ which is a weakly-sound superinstance of~$I_0'$, hence of~$I_0$
by the Weak-Soundness Transitivity Lemma (Lemma~\ref{lem:wsndtrans}). So we conclude
the \secname with the proof
of this proposition.

Recall that we number $\Pi_1, \ldots, \Pi_n$ the $\eqfun$-classes of
$\pos(\sigma)$. For two classes $\Pi_i, \Pi_j$  over  a relation $R$, we write $\Pi_i \rightarrow \Pi_j$ to
mean 
that for all $R^p \in \Pi_i$ and $R^q \in \Pi_j$ the $\ufd$ $R^p \rightarrow R^q$ is
in~$\ufds$. Note that this is true iff it is true for some pair of positions (by definition of a $\eqfun$-class and the fact that
$\ufds$ is transitively closed).
We first define the \defo{inner} classes, where creating elements may cause
$\uid$ violations, and the \defo{outer} classes, where this cannot happen
because no position of the class occurs in any $\uid$:

\begin{definition}
  \label{def:inout}
  We say that $\Pi_j$ is an \deft{inner} $\eqfun$-class if it contains a
  position occurring in~$\idsr$; otherwise, it is an \deft{outer} $\eqfun$-class.
\end{definition}

The fundamental property is:

\begin{lemma}
  \label{lem:inout}
  For any $1 \leq i, j \leq n$ with $i \neq j$, if $\Pi_i$ is inner and $\Pi_j \rightarrow \Pi_i$
  then $\Pi_j$ is outer.
\end{lemma}

\begin{proof}
  Assume to the contrary that $\Pi_j$ is inner.
  This means that it contains a position $R^q$
  that occurs in $\idsr$.
  As $\Pi_i$ is inner, pick any $R^p \in \Pi_i$ that occurs in~$\idsr$.
  As $\Pi_j \rightarrow \Pi_i$,
  $\phi: R^q \rightarrow R^p$ is in $\ufds$. Hence, by
  the reversibility assumption, $\phi^{-1}$ also is in $\ufds$. But then we have $R^p
  \eqfun R^q$, contradicting the maximality of $\eqfun$-classes $\Pi_i$ and~$\Pi_j$.
\end{proof}

Let us now start the actual proof
of Proposition~\ref{prp:rtcompr},
and fix the finite relation-saturated instance~$I_0'$ that satisfies $\ufds$.
We start by constructing a balanced pssinstance~$P$
of~$I_0'$ using the Balancing Lemma (Lemma~\ref{lem:hascompletion}),
and a finite $\uconr$-compliant piecewise realization
$\pire = (K_1, \ldots, K_{\neqfunc})$ of~$P$
by the Realizations Lemma (Lemma~\ref{lem:balwsndc}).
Let $\calF$ be an infinite set of fresh elements (not in $\dom(P)$)
from which we will take the (finitely many) fresh elements 
that we will introduce (only at dangerous positions, in outer classes)
during the relation-thrifty chase steps.

We will use $\pire$ to construct a weakly-sound superinstance $I_\f$ by
relation-thrifty chase steps. We
maintain the following invariant when doing so:

\begin{definition}
  \label{def:follows}
  A superinstance $I$ of the instance $I_0'$
  \deft{follows} the piecewise realization $\pire = (K_1, \ldots, K_{\neqfunc})$
  if for every inner $\eqfun$-class $\Pi_i$, we have
  $\pi_{\Pi_i}(I) \subseteq K_i$.
\end{definition}

We prove the Reversible Relation-Thrifty Completion Proposition (Proposition~\ref{prp:rtcompr})
by satisfying $\uid$ violations in~$I_0'$
with relation-thrifty chase steps
using the piecewise realization~$\pire$.
We call $I$ the current state of our superinstance, starting at $I \defeq I_0'$,
and we perform relation-thrifty chase steps on $I$ to satisfy $\uid$ violations,
until we reach a finite weakly-sound superinstance $I_\f$ of~$I_0'$ such that
$I_\f$ satisfies $\idsr$ and
$I_\f$ follows $\pire$. This $I_\f$ will be the final result of the
Reversible Relation-Thrifty Completion Proposition.

Chasing by relation-thrifty chase steps preserves the following invariants:
\begin{description}
  \item[\descindent \invar{sub}:] $I_0' \subseteq I$ (this is clearly monotone);
  \item[\descindent \invar{wsnd}:] $I$ is weakly-sound (by Lemma~\ref{lem:thriftyp});
  \item[\descindent \invar{fun}:] $I \models \ufds$ (by Lemma~\ref{lem:rthriftyp});
  \item[\descindent \invar{rsat}:] $I$ is relation-saturated (by Lemma~\ref{lem:rthriftyp}).
\end{description}
Further, we maintain the following invariants:
\begin{description}
  \item[\descindent \invar{fw}:] $I$ follows $\pire$;
  \item[\descindent \invar{help}:] For a position $R^p$ of an outer class, $\pi_{R^p}(I)$
    and $\calH$ (the set of helper elements), are disjoint.
\end{description}
Let us show that any $\uid$ violation in~$I$ at any stage of the
construction can be solved by applying a relation-thrifty chase step that
preserves these invariants. To show this, let
$a \in
\appelem{I}{\tau}$ be an element to which some $\uid$ $\tau:
\ui{R^p}{S^q}$ of~$\idsr$ is applicable. Let $F_{\factive} = R(\mybf{a})$ be the active fact,
with $a = a_p$.
Let $\Pi_{i}, \Pi_j$ denote the $\eqfun$-classes
of~$R^p$ and~$S^q$ respectively.
The $\uid$ $\tau$ witnesses that $\Pi_{i}$
is inner,
so by invariant \invar{fw} we have $a \in \pi_{R^p}(\pire)$.
As $\pire$ is $\idsr$-compliant, we must have $a \in \pi_{S^q}(\pire)$,
and there is a $\card{\Pi_{j}}$-tuple $\mybf{t} \in K_{j}$
such that $t_q = a$; in fact, by $\ufds$-compliance, there is exactly one such
tuple. What is more, this tuple must be in $K_j \setminus \pi_{\Pi_j}(I)$, as otherwise
we would have $a \in \Pi_{S^q}(I)$, contradicting the applicability of~$\tau$
to~$F_{\factive}$.

Let $F_\rr = S(\mybf{c})$ be an $S$-fact of $I_0'$, which is
possible by invariant \invar{rsat}.
We create a new fact $F_{\fnew} = S(\mybf{b})$ with the relation-thrifty chase step
defined as follows:
\begin{itemize}
  \item For the exported position $S^q$, we set $b_q \defeq a_p$.
  \item For any $S^r \in \Pi_j$, we set $b_r \defeq t_r$.
  \item For any position $S^r \in \danger(S^q) \backslash \Pi_j$,
    we take $b_r$ to be a fresh element $f_r$ from~$\calF$.
  \item For any position $S^r \in \nondanger(S^q)$, we set $b_r \defeq c_r$.
\end{itemize}
We first verify that this satisfies the conditions of thrifty chase steps.
We have set $b_q = a$, and by definition of $F_\rr$ it is immediate that $b_r \in
\pi_{S^r}(I)$ for $S^r \in \nondanger(S^q)$.
For $S^r \in \danger(S^q) \backslash \Pi_j$, we use a fresh element $f_r$ from~$\calF$ which occurs only at
position~$S^r$,
as we should.

The last case to check is for $S^r \in \Pi_j \backslash \{S^q\}$. The first
case is if
$b_r \notin \dom(I)$, in which case we must show that all positions at which
$b_r$ occurs are
$\eqids$-equivalent. Assume that $b_r$ occurs at some other position $S^s \in
\Pi_j$. Now as $b_r$ is in~$\pi_{S^s}(\pire)$, by definition of $\pire$
being a piecewise realization of $P$, we have $b_r \in \appelem{P}{S^s}$. Now,
as $b_r \notin \dom(I)$, by invariant \invar{sub} we also have $b_r \notin
\dom(I_0')$.
But as $b_r \in \dom(\pire)$, we must have $b_r \in \calH$. So by
definition of a pssinstance we have $S^s \in \mapb(b_r)$. Now, observe that we
have $b_r \in
\appelem{P}{S^r}$ because $b_r = t_r$ and $\mybf{t} \in K_j \setminus
\pi_{\Pi_j}(I)$ implies by definition of a piecewise realization that $t_r \in
\appelem{P}{S^r}$.
Hence, we have $S^r \in \mapb(b_r)$. By definition of
$\mapb(b_r)$ being an $\eqids$-class, this means that $S^r
\eqids S^s$, as required.

The second case is $b_r \in \dom(I)$. We will show that we have $b_r \in
\appelem{I}{S^r}$.
Observe first that $b_r \notin \pi_{S^r}(I)$. Indeed, assuming to the contrary that $b_r \in
\pi_{S^r}(I)$, 
let $F = S(\mybf{d})$ be a
witnessing fact in~$I$.
Now, $\tau$ witnesses that $\Pi_j$ is inner, so
by invariant
\invar{fw}, we deduce that
$\pi_{\Pi_j}(\mybf{d}) \in \pi_{\Pi_j}(\pire)$.
Now, as $d_r = t_r$ and $\pire$ is $\ufds$-compliant, we deduce that
$\mybf{d} = \mybf{t}$, so that $F$ witnesses that
$d_q$ is in $\pi_{S^q}(I)$.
As we have $d_q = t_q = a$,
this contradicts the applicability of~$\tau$ to~$a$.
Hence, we have $b_r \notin \pi_{S^r}(I)$.

Second, observe that we have $t_r \in \appelem{P}{S^r}$. Indeed, we have $b_r =
t_r$ which is in $\pi_{S^r}(\pire)$, and we cannot have $\mybf{t} \in
\pi_{\Pi_j}(I)$, as otherwise this would contradict the applicability of
$\tau$ to $a$, as we showed; so in particular, by invariant \invar{sub}, we cannot have
$\mybf{t} \in \pi_{\Pi_j}(I_0')$. Thus, by definition of a piecewise
realization, we have $t_r \in \appelem{P}{S^r}$.

Now, as $t_r \in \appelem{P}{S^r}$, by definition of $\appelem{P}{S^r}$,
there are two cases:
\begin{itemize}
  \item We have $t_r \in \dom(I_0')$ and $t_r \in
\appelem{I_0'}{S^r}$. In this case, as we have shown that $t_r \notin
\pi_{S^r}(I)$, we conclude immediately that $\t_r \in \appelem{I}{S^r}$.

\item We have $t_r \in \calH$ and $S^r \in \mapb(t_r)$. In this case,
consider a fact $F'$ of~$I$ witnessing $t_r \in \dom(I)$, where $t_r$ occurs at a
position $T^l$; let $\Pi_{i'}$ be the $\eqfun$-class of~$T^l$.
As $t_r \in \calH$, by invariant \invar{help}, $\Pi_{i'}$ is inner,
so by invariant \invar{fw} there is a tuple $\mybf{t}'$ of~$K_{i'}$
such that $t'_l = t_r$.
Now, as $t_r \in \calH$,
by definition of piecewise realizations,
we have $T^l \in \mapb(t_r)$.
Hence, either the $\uid$ $\tau' : \ui{T^l}{S^r}$ is in~$\idsr$
or we have $T^l = S^r$.
As $t_r \in \pi_{T^l}(I)$ and we have shown earlier that
$t_r \notin \pi_{S^r}(I)$, we know that $T^l \neq S^r$, so $\tau'$ is in
$\idsr$.
Hence, as $F'$ witnesses that $t_r \in \pi_{T^l}(I)$,
and as $t_r \notin \pi_{S^r}(I)$,
we conclude that $t_r \in \appelem{I}{S^r}$.
\end{itemize}
Hence, in either case we have $t_r \in \appelem{I}{S^r}$, as claimed.
This concludes the proof of the fact that we have indeed defined a
thrifty chase step. Further, the step is clearly relation-thrifty by construction.
The last thing to do is to check that invariants \invar{fw} and \invar{help} are
preserved by the relation-thrifty chase step:
\begin{itemize}
  \item For invariant \invar{fw},
$\tau$ witnesses that the class $\Pi_j$ of~$S^q$ is inner. Hence, for any
$S^r \in \danger(S^q) \backslash \Pi_j$, by Lemma~\ref{lem:inout}, the
$\eqfun$-class of $S^r$ is outer. Thus, to show that \invar{fw} is preserved,
it suffices to show it for the $\eqfun$-class $\Pi_j$ and on the $\eqfun$-classes
included in $\nondanger(S^q)$ (clearly no $\eqfun$-class includes both a
position of $\danger(S^q)$ and a position of $\nondanger(S^q)$). For $\Pi_j$,
    the new fact~$F_{\fnew}$ is defined following $\mybf{t}$; for the classes in
$\nondanger(S^q)$, it is defined
following an existing fact of~$I$. Hence, invariant \invar{fw} is preserved.

  \item Invariant \invar{help}
    is preserved because the only new elements of~$F_{\fnew}$ that may be in $\calH$ are 
those used at positions of $\Pi_j$, which is inner. 
\end{itemize}
Let $I_\f$ be the result of the process that we have described. It satisfies
$\ids$ by definition, and it is a
weakly-sound superinstance of~$I_0'$ that satisfies $\ids$, by invariants
\invar{wsnd}, \invar{sub}, and \invar{fun}. Further, it follows $\pire$ by
invariant \invar{fw}, and $\pire$ is finite. This implies that $I_\f$ is finite,
because we apply chase steps by $\idsr$, so each chase step makes an element
of~$\dom(\pire)$ occur at a new position, so we only applied finitely many chase
steps. 
This concludes the proof of the Reversible Relation-Thrifty Completion Proposition (Proposition~\ref{prp:rtcompr}), and
concludes the \secname.

\section{Ensuring \lowercase{$k$}-Universality}
\label{sec:ksound}
We build on the constructions of the previous \secname to replace weak-soundness
by $k$-soundness for acyclic queries in $\acq$, for some $k > 0$
fixed in this \secname. That is, we aim to prove:

\begin{maintheorem}
  \label{thm:ksound}
  Reversible $\uid$s and $\ufd$s have finite $k$-universal models
  for $\acq$s.
\end{maintheorem}

We first introduce the concept of \emph{aligned superinstances}, which
give us an invariant 
that ensures $k$-soundness. We then give the \emph{fact-saturation} process that
generalizes relation-saturation, and a related notion of
\emph{fact-thrifty chase step}.
We then define \emph{essentiality}, which must additionally be ensured for us
to be able to reuse the weakly-sound completions of the previous \secname. We conclude
by the construction of a generalized completion process that uses these chase
steps to repair $\uid$ violations in the instance while preserving
$k$-soundness.

In this \secname, we still make the reversibility assumption on $\idsr$ and $\ufds$.
However, we will also be considering a superset $\idsb$ of
$\idsr$, which we assume to be transitively closed, but which may not satisfy
the reversibility assumption.
To prove Theorem~\ref{thm:ksound}, it suffices to define $\idsb \defeq \idsr$, so
\emph{the distinction can be safely ignored on first reading}. The reason for
the distinction will become apparent in the next \secname.

\subsection{Aligned Superinstances}

In this \subsecname,
we only work with the superset $\idsb$, and we do not use the reversibility
assumption.
We ensure $k$-soundness relative to~$\idsb$ by maintaining a \defo{$k$-bounded simulation} from our
superinstance of $I_0$ to the chase $\chase{I_0}{\idsb}$.

\begin{definition}
  \label{def:bbsim}
  For $I$, $I'$ two instances, $a \in \dom(I)$, $b \in \dom(I')$, and $n \in
  \NN$,
  we write $(I, a) \bsim_n (I', b)$ if,
  for any fact $R(\mybf{a})$ of~$I$ with $a_p = a$ for some~$R^p \in \pos(R)$,
  there exists a fact $R(\mybf{b})$ of~$I'$ such that $b_p = b$,
  and $(I, a_q) \bsim_{n-1} (I', b_q)$ for all~$R^q \in \pos(R)$ (note that this
  is tautological for $R^q = R^p$).
  The base case $(I, a) \bsim_0 (I', b)$ always holds.

  An \deft{$\bm{n}$-bounded simulation} from $I$ to $I'$ is a mapping $\cov$ 
  such that for all $a \in \dom(I)$, we have $(I, a) \bsim_n (I', \cov(a))$.

  We write $a \bbsim_n b$ for $a, b \in \dom(I)$ if both $(I, a) \bsim_n (I, b)$ and
  $(I, b) \bsim_n (I, a)$; this is an equivalence relation on $\dom(I)$.
\end{definition}

\begin{example}
  \label{exa:bsim}
  We illustrate in Figure~\ref{fig:bsim} some examples of 2-bounded simulations from
  one instance to another, on a binary signature. For any element $a$ in a left
  instance $I$ and image $a'$ of~$a$ in the right instance $I'$ by the
  2-bounded simulation (represented by the dashed red arrows), we have $(I, a)
  \bsim_2 (I', a')$. This means that, for any
  element $b$ in $I$ which is adjacent to~$a$ by some relation $R$, there must be an element
  $b'$ in~$I'$ which is adjacent to $a'$ by $R$ and satisfies $(I, b) \bsim_1
  (I', b')$; however, note that $b'$ need not be the image of~$b$ by the bounded
  simulation.

  Figure~\ref{fig:bsim_homom} illustrates how a homomorphism is a special case
  of a $2$-bounded simulation (indeed, it is an $n$-bounded simulation for any
  $n \in \mathbb{N}$).

  Figure~\ref{fig:bsim_acq} illustrates how 
  a 2-bounded simulation from $I$ to
  $I'$ does not guarantee that any $\acq$ satisfied by $I$ is also true
  in~$I'$: for this example, consider the query $\exists x y z u v w ~ R(x, y) \wedge S(y, z)
  \wedge T(z, u) \wedge U(u, v) \wedge V(v, w)$. However, we will soon see that
  $n$-bounded simulations preserve $\acq$ of size $\leq n$
  (Lemma~\ref{lem:ksimacq}).

  Figure~\ref{fig:bsim_cq} shows that a 2-bounded simulation does not preserve
  $\cq$s that are not $\acq$s, as witnessed by $\exists x y z ~ R(x, y) \wedge
  S(y, z) \wedge T(z, x)$. More generally, $n$-bounded simulations for all~$n$
  do not generally preserve this $\cq$, or other such $\cq$s.
\end{example}

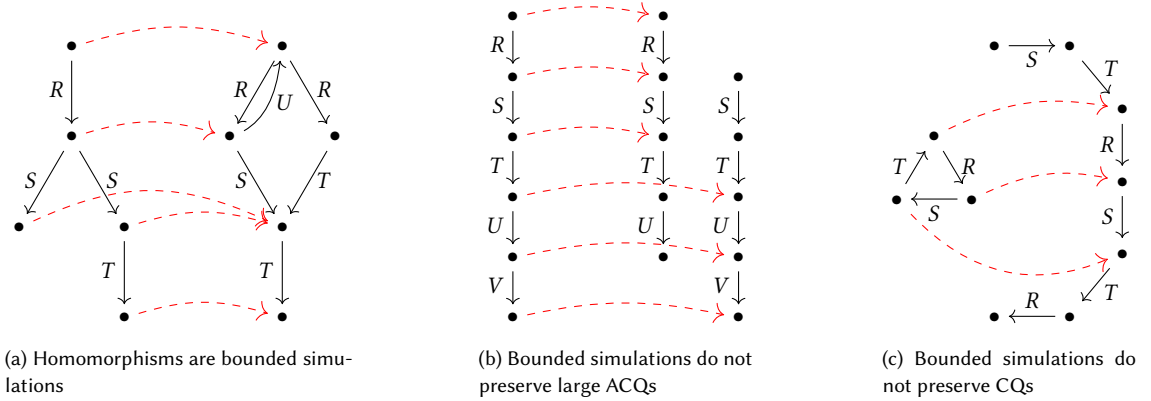
\begin{figure}
  {
  \settowidth\limage{\begin{tikzpicture}[
  text height=1ex,text depth=0ex,xscale=.7,yscale=1.2,
  mispos/.append style={red, font=\scriptsize},
  newfact/.append style={red, dashed, -{>[scale=1.5]}}
]
  \node (a) at (1, 0) {$\bullet$};
  \node (b) at (1, -1) {$\bullet$};
  \node (c2) at (0, -2) {$\bullet$};
  \node (c) at (2, -2) {$\bullet$};
  \node (d) at (2, -3) {$\bullet$};

  \draw[->] (a) -- (b) node[left,midway] {$R$};
  \draw[->] (b) -- (c2) node[left,midway] {$S$};
  \draw[->] (b) -- (c) node[right,midway] {$S$};
  \draw[->] (c) -- (d) node[left,midway] {$T$};

  \node(aa) at (5, 0) {$\bullet$};
  \node(bb) at (4, -1) {$\bullet$};
  \node(bb2) at (6, -1) {$\bullet$};
  \node(cc) at (5, -2) {$\bullet$};
  \node(dd) at (5, -3) {$\bullet$};
  
  \draw[->] (aa) -- (bb) node[left,midway] {$R$};
  \draw[->] (aa) -- (bb2) node[right,midway] {$R$};
  \draw[->] (bb2) -- (cc) node[right,midway] {$T$};
  \draw[->] (bb) -- (cc) node[left,midway] {$S$};
  \draw[->] (cc) -- (dd) node[left,midway] {$T$};
  \path (bb) edge[->,bend right] node[right,midway] {$U$} (aa);

  \path (a) edge[newfact,bend left=10] (aa);
  \path (b) edge[newfact,bend left=10] (bb);
  \path (c) edge[newfact,bend left=10] (cc);
  \path (c2) edge[newfact,bend left=15] (cc);
  \path (d) edge[newfact,bend left=10] (dd);
\end{tikzpicture}}
  \settowidth\rimage{\begin{tikzpicture}[
  text height=1ex,text depth=0ex,xscale=1,yscale=.8,
  mispos/.append style={red, font=\scriptsize},
  newfact/.append style={red, dashed, -{>[scale=1.5]}}
]
  \node (a) at (0, 0) {$\bullet$};
  \node (b) at (0, -1) {$\bullet$};
  \node (c) at (0, -2) {$\bullet$};
  \node (d) at (0, -3) {$\bullet$};
  \node (e) at (0, -4) {$\bullet$};
  \node (f) at (0, -5) {$\bullet$};

  \draw[->] (a) -- (b) node[left,midway] {$R$};
  \draw[->] (b) -- (c) node[left,midway] {$S$};
  \draw[->] (c) -- (d) node[left,midway] {$T$};
  \draw[->] (d) -- (e) node[left,midway] {$U$};
  \draw[->] (e) -- (f) node[left,midway] {$V$};

  \node (a2) at (2, 0) {$\bullet$};
  \node (b2) at (2, -1) {$\bullet$};
  \node (c2) at (2, -2) {$\bullet$};
  \node (d2) at (2, -3) {$\bullet$};
  \node (e2) at (2, -4) {$\bullet$};

  \draw[->] (a2) -- (b2) node[left,midway] {$R$};
  \draw[->] (b2) -- (c2) node[left,midway] {$S$};
  \draw[->] (c2) -- (d2) node[left,midway] {$T$};
  \draw[->] (d2) -- (e2) node[left,midway] {$U$};

  \node (b3) at (3, -1) {$\bullet$};
  \node (c3) at (3, -2) {$\bullet$};
  \node (d3) at (3, -3) {$\bullet$};
  \node (e3) at (3, -4) {$\bullet$};
  \node (f3) at (3, -5) {$\bullet$};

  \draw[->] (b3) -- (c3) node[left,midway] {$S$};
  \draw[->] (c3) -- (d3) node[left,midway] {$T$};
  \draw[->] (d3) -- (e3) node[left,midway] {$U$};
  \draw[->] (e3) -- (f3) node[left,midway] {$V$};

  \path (a) edge[newfact,bend left=15] (a2);
  \path (b) edge[newfact,bend left=15] (b2);
  \path (c) edge[newfact,bend left=15] (c2);
  
  \path (d) edge[newfact,bend left=15] (d3);
  \path (e) edge[newfact,bend left=15] (e3);
  \path (f) edge[newfact,bend left=15] (f3);
\end{tikzpicture}}
  \settowidth\rrimage{\begin{tikzpicture}[
  text height=1.3ex,text depth=0ex,xscale=1,yscale=1.2,
  mispos/.append style={red, font=\scriptsize},
  newfact/.append style={red, dashed, -{>[scale=1.5]}}
]
  \node (a) at (.5, -2.5) {$\bullet$};
  \node (b) at (1, -3.2) {$\bullet$};
  \node (c) at (0, -3.2) {$\bullet$};

  \draw[->] (a) -- (b) node[right,midway] {$R$};
  \draw[->] (b) -- (c) node[below,midway] {$S$};
  \draw[->] (c) -- (a) node[left,midway] {$T$};

  \node (y2) at (1.3, -1.5) {$\bullet$};
  \node (z2) at (2.3, -1.5) {$\bullet$};
  \node (a2) at (3, -2.2) {$\bullet$};
  \node (b2) at (3, -3) {$\bullet$};
  \node (c2) at (3, -3.8) {$\bullet$};
  \node (d2) at (2.3, -4.5) {$\bullet$};
  \node (e2) at (1.3, -4.5) {$\bullet$};

  \draw[->] (y2) -- (z2) node[below,midway] {$S$};
  \draw[->] (z2) -- (a2) node[right,midway,yshift=.8ex] {$T$};
  \draw[->] (a2) -- (b2) node[left,midway] {$R$};
  \draw[->] (b2) -- (c2) node[left,midway] {$S$};
  \draw[->] (c2) -- (d2) node[right,midway,yshift=-.8ex] {$T$};
  \draw[->] (d2) -- (e2) node[above,midway] {$R$};

  \path (a) edge[newfact,bend left=15] (a2);
  \path (b) edge[newfact,bend left=15] (b2);
  \path (c) edge[newfact,bend right=30] (c2);
\end{tikzpicture}}
  \noindent\begin{subfigure}[b]{\limage}
    \centering
    \begin{tikzpicture}[
  text height=1ex,text depth=0ex,xscale=.7,yscale=1.2,
  mispos/.append style={red, font=\scriptsize},
  newfact/.append style={red, dashed, -{>[scale=1.5]}}
]
  \node (a) at (1, 0) {$\bullet$};
  \node (b) at (1, -1) {$\bullet$};
  \node (c2) at (0, -2) {$\bullet$};
  \node (c) at (2, -2) {$\bullet$};
  \node (d) at (2, -3) {$\bullet$};

  \draw[->] (a) -- (b) node[left,midway] {$R$};
  \draw[->] (b) -- (c2) node[left,midway] {$S$};
  \draw[->] (b) -- (c) node[right,midway] {$S$};
  \draw[->] (c) -- (d) node[left,midway] {$T$};

  \node(aa) at (5, 0) {$\bullet$};
  \node(bb) at (4, -1) {$\bullet$};
  \node(bb2) at (6, -1) {$\bullet$};
  \node(cc) at (5, -2) {$\bullet$};
  \node(dd) at (5, -3) {$\bullet$};
  
  \draw[->] (aa) -- (bb) node[left,midway] {$R$};
  \draw[->] (aa) -- (bb2) node[right,midway] {$R$};
  \draw[->] (bb2) -- (cc) node[right,midway] {$T$};
  \draw[->] (bb) -- (cc) node[left,midway] {$S$};
  \draw[->] (cc) -- (dd) node[left,midway] {$T$};
  \path (bb) edge[->,bend right] node[right,midway] {$U$} (aa);

  \path (a) edge[newfact,bend left=10] (aa);
  \path (b) edge[newfact,bend left=10] (bb);
  \path (c) edge[newfact,bend left=10] (cc);
  \path (c2) edge[newfact,bend left=15] (cc);
  \path (d) edge[newfact,bend left=10] (dd);
\end{tikzpicture}
    \caption{Homomorphisms are bounded simulations}
    \label{fig:bsim_homom}
  \end{subfigure}\hfill\begin{subfigure}[b]{\rimage}
    \centering
    \begin{tikzpicture}[
  text height=1ex,text depth=0ex,xscale=1,yscale=.8,
  mispos/.append style={red, font=\scriptsize},
  newfact/.append style={red, dashed, -{>[scale=1.5]}}
]
  \node (a) at (0, 0) {$\bullet$};
  \node (b) at (0, -1) {$\bullet$};
  \node (c) at (0, -2) {$\bullet$};
  \node (d) at (0, -3) {$\bullet$};
  \node (e) at (0, -4) {$\bullet$};
  \node (f) at (0, -5) {$\bullet$};

  \draw[->] (a) -- (b) node[left,midway] {$R$};
  \draw[->] (b) -- (c) node[left,midway] {$S$};
  \draw[->] (c) -- (d) node[left,midway] {$T$};
  \draw[->] (d) -- (e) node[left,midway] {$U$};
  \draw[->] (e) -- (f) node[left,midway] {$V$};

  \node (a2) at (2, 0) {$\bullet$};
  \node (b2) at (2, -1) {$\bullet$};
  \node (c2) at (2, -2) {$\bullet$};
  \node (d2) at (2, -3) {$\bullet$};
  \node (e2) at (2, -4) {$\bullet$};

  \draw[->] (a2) -- (b2) node[left,midway] {$R$};
  \draw[->] (b2) -- (c2) node[left,midway] {$S$};
  \draw[->] (c2) -- (d2) node[left,midway] {$T$};
  \draw[->] (d2) -- (e2) node[left,midway] {$U$};

  \node (b3) at (3, -1) {$\bullet$};
  \node (c3) at (3, -2) {$\bullet$};
  \node (d3) at (3, -3) {$\bullet$};
  \node (e3) at (3, -4) {$\bullet$};
  \node (f3) at (3, -5) {$\bullet$};

  \draw[->] (b3) -- (c3) node[left,midway] {$S$};
  \draw[->] (c3) -- (d3) node[left,midway] {$T$};
  \draw[->] (d3) -- (e3) node[left,midway] {$U$};
  \draw[->] (e3) -- (f3) node[left,midway] {$V$};

  \path (a) edge[newfact,bend left=15] (a2);
  \path (b) edge[newfact,bend left=15] (b2);
  \path (c) edge[newfact,bend left=15] (c2);
  
  \path (d) edge[newfact,bend left=15] (d3);
  \path (e) edge[newfact,bend left=15] (e3);
  \path (f) edge[newfact,bend left=15] (f3);
\end{tikzpicture}
    \caption{Bounded simulations do not preserve large $\acq$s}
    \label{fig:bsim_acq}
  \end{subfigure}\hfill\begin{subfigure}[b]{\rrimage}
    \centering
    \begin{tikzpicture}[
  text height=1.3ex,text depth=0ex,xscale=1,yscale=1.2,
  mispos/.append style={red, font=\scriptsize},
  newfact/.append style={red, dashed, -{>[scale=1.5]}}
]
  \node (a) at (.5, -2.5) {$\bullet$};
  \node (b) at (1, -3.2) {$\bullet$};
  \node (c) at (0, -3.2) {$\bullet$};

  \draw[->] (a) -- (b) node[right,midway] {$R$};
  \draw[->] (b) -- (c) node[below,midway] {$S$};
  \draw[->] (c) -- (a) node[left,midway] {$T$};

  \node (y2) at (1.3, -1.5) {$\bullet$};
  \node (z2) at (2.3, -1.5) {$\bullet$};
  \node (a2) at (3, -2.2) {$\bullet$};
  \node (b2) at (3, -3) {$\bullet$};
  \node (c2) at (3, -3.8) {$\bullet$};
  \node (d2) at (2.3, -4.5) {$\bullet$};
  \node (e2) at (1.3, -4.5) {$\bullet$};

  \draw[->] (y2) -- (z2) node[below,midway] {$S$};
  \draw[->] (z2) -- (a2) node[right,midway,yshift=.8ex] {$T$};
  \draw[->] (a2) -- (b2) node[left,midway] {$R$};
  \draw[->] (b2) -- (c2) node[left,midway] {$S$};
  \draw[->] (c2) -- (d2) node[right,midway,yshift=-.8ex] {$T$};
  \draw[->] (d2) -- (e2) node[above,midway] {$R$};

  \path (a) edge[newfact,bend left=15] (a2);
  \path (b) edge[newfact,bend left=15] (b2);
  \path (c) edge[newfact,bend right=30] (c2);
\end{tikzpicture}
    \caption{Bounded simulations do not preserve $\cq$s}
    \label{fig:bsim_cq}
  \end{subfigure}
}
\caption{Examples of 2-bounded simulations (represented as dashed red lines): see Example~\ref{exa:bsim}}
\label{fig:bsim}
\end{figure}

The point of bounded simulations is that they preserve acyclic queries of size
smaller than the bound:

\begin{lemma}[$\acq$ preservation]
  \label{lem:ksimacq}
  For any instance $I$ and $\acq$ $q$ of size $\leq n$ such that $I \models
q$, if there is an $n$-bounded simulation from $I$ to $I'$, then $I' \models q$.
\end{lemma}

To show this lemma, we introduce a different way to write queries in $\acq$.
Consider the following alternate query language:

\begin{definition}
  \label{def:pointed}
  We inductively define a special kind of query with at most one free variable,
a \deft{pointed query}.
The base case is that of a tautological query with no atoms.
Inductively, pointed queries include all queries of the 
  form:
  \[
    q(x): \bigwedge_i \bigg(\exists \mybf{y}^{\mybf{i}} ~
      \Big(
        A^i(x, \mybf{y}^{\mybf{i}}) \wedge \bigwedge_{\is{y^i_j \in
  \mybf{y}^{\mybf{i}}}} q^i_j(y^i_j)\Big)\bigg)\]
  where the $\mybf{y}^{\mybf{i}}$ are vectors of pairwise distinct variables
  (also distinct from $x$),
  $A^i$ are atoms with free variables as indicated and with no repeated
  variables (each free variable occurs at exactly one position), and the $q^i_j$ are
  pointed queries.  

  The \deft{size} $\card{q}$ of a pointed query $q$ is the total number of atoms in
  $q$, including its subqueries.
\end{definition}

It is easily seen that, for any pointed query $q'$, the query $q: \exists x ~ q'(x)$ is an
$\acq$. Conversely, we can show:

\begin{lemma}
  \label{lem:acqstruct}
  For any (Boolean) $\acq$ $q$ and variable $x$ of~$q$, we can rewrite $q$ as
  $\exists x ~ q'(x)$ with $q'$ a pointed query such that $\card{q} = \card{q'}$.
\end{lemma}

\begin{proof}
  We show the claim by induction on the size of $q$. It is clearly true for the
  empty query.

  Otherwise, let $\calA = A_1, \ldots, A_m$ be the atoms of~$q$ where $x$ occurs. Because $q$ is an
  $\acq$, $x$ occurs exactly once in each of them, and each variable $y$
  occurring in one of the $A_i$ occurs exactly once in them overall: $y$ cannot occur
  twice in the same atom, nor can 
  occur in two different atoms $A_p$ and $A_q$ (as in this case $A_p$, $y$, $A_q$,
  $x$ would be a Berge cycle of~$q$). Let $\calY$ be the set
  of the variables occurring in~$\calA$, not including $x$.
  
  Consider the
  incidence multigraph $G$ of $q$ (Definition~\ref{def:acq}).
  Remember that we assume queries to be connected, so $q$ is connected, and $G$ is connected.
  Let $\calZ$ be the variables of $q$ which are not in 
  $\calY \cup \{x\}$. For each $z \in \calZ$, there must be a path $p_z$ from
  $x$ to $z$ in~$G$, written $x = w^z_1,
  \ldots, w^z_{n_z} = z$. Observe that, by definition of
  $\calY$, we must have $w^z_2 \in \calY$ for any such path. Further, for each
  $z \in \calZ$, we claim that there is a \emph{single} $y_z \in \calY$ such
  that $w^z_2 = y_z$ for any such path. Indeed, assuming to the contrary that
  there are $y_z \neq y_z'$ in $\calY$, a path $p_z$ whose second element is
  $y_z$, and a path $p_z'$ whose second element is $y_z'$, we deduce from $p_z$
  and $p_z'$ a Berge cycle in~$q$.
  
  Thus we can partition $\calZ$ into sets of variables $\calZ_y$ for $y \in
  \calY$, where $\calZ_y$ contains all variables $z$ of~$\calZ$ such that $y$ is the
  variable used to reach $z$ from~$x$. Let $\calA_y$ for $y \in \calY$ be the
  atoms of~$q$ whose variables are a subset of $\calZ_y \cup \{y\}$. It is clear that
  $\calA$ and the $\calA_y$ are a partition of the atoms of~$q$: no atom $A$ can
  include a variable $z$ from $\calZ_y$ and a variable $z'$ from $\calZ_{y'}$ for $y \neq
  y'$ in~$\calY$, as otherwise a path from $x$ to $z$ and a path from $x$ to
  $z'$, together with $A$, imply that $q$ has a Berge cycle.
  
  Now, we form for each $y \in \calY$ a query $q_y$ as the set of
  atoms $\calZ_y$, with all variables existentially quantified except for~$y$. As the queries
  $\exists y \, q_y(y)$ are connected queries in $\acq$ which are strictly
  smaller than $q$, by induction we can
  rewrite $q_y$ to a pointed query of the same size. Hence, we have shown
  that $q$ can be rewritten as a pointed query built from the $A_i$ and, for each $i$,
  the $q_y$ for $y \in \calY$.
\end{proof}

We use this normal form to prove 
the $\acq$ Preservation Lemma (Lemma~\ref{lem:ksimacq}):

\begin{proof}[\proofof Lemma~\ref{lem:ksimacq}]
  Fix the instances $I$ and $I'$, and the $\acq$ $q$.
  We show, by induction on $n \in \NN$, the following claim:
  for any $n \in
  \mathbb{N}$,
  for any \emph{pointed query} $q$ such that $\card{q} \leq n$,
  for any $a \in \dom(I)$,
  if $I \models q(a)$,
  then for any $a' \in \dom(I')$ such that $(I, a) \bsim_n (I', a')$,
  we have $I' \models q(a')$.
  Clearly this claim implies the statement of the Lemma, as by
  Lemma~\ref{lem:acqstruct} any $\acq$ query can be written as $\exists x ~
  q(x)$ with $q$ a pointed query.
  The case of the trivial query is immediate.
 
  For the induction step, consider a pointed query $q(x)$ of size $n \defeq \card{q}$, $n>0$,
  written in the form of Definition~\ref{def:pointed}, and fix $a \in \dom(I)$.
  Consider a match $h$ of $q(a)$ on $I$, which must map~$x$ to~$a$. 
  Let $a' \in \dom(I')$ be such that $(I, a) \bsim_{n} (I', a')$. We show that
  $I' \models q'(a)$.

  Using notation from Definition~\ref{def:pointed},
  write $\mybf{y}$ the (disjoint) union of the~$\mybf{y}^{\mybf{i}}$, write
  $\calA = A^1, \ldots, A^n$, and write $q^i_j(y^i_j)$ the subqueries.
  Let 
  $b^i_j \defeq h(y^i_j)$ for all $y^i_j \in \mybf{y}$.
  We show that there is a match $h_{\calA}$ of~$\calA$ on $I'$ that maps $x$ to
  $a'$ and such that every $y^i_j \in \mybf{y}$ is mapped to some element $(b^i_j)'$ of
  $I'$ such that $(I, b^i_j) \bsim_{n-1} (I', (b^i_j)')$. Indeed, start by
  fixing $h_{\calA}(x) \defeq
  a'$. Now, for each atom $A^i = R(x, \mybf{y}^{\mybf{i}})$ of $\calA$, the
  variable $x$ occurs at some
  position, say $R^p$, and $h(A^i) = R(\mybf{b}^{\mybf{i}})$ is a fact of~$I$ where $h(x) = a$ occurs at
  position $R^p$. As each variable in $\mybf{y}^{\mybf{i}}$ occurs at precisely
  one position of~$A^i$, we index each of these variables by the one position in~$A^i$ where it
  occurs.
  Now, as $(I, a) \bsim_{n} (I', a')$, there is a fact $(A^i)'
  = R((\mybf{b}^{\mybf{i}})')$ of~$I'$ such that $(b^i_p)' = a'$ and, for all $1 \leq j \leq
  \arity{R}$ with $p \neq j$, we have $(I, b^i_j) \bsim_{n-1} (I', (b^i_j)')$. We
  define $h_{\calA}(y^i_j) \defeq
  (b^i_j)'$ for all~$i$ and~$j$. As each variable of $\mybf{y}$ occurs exactly once in~$\calA$ overall,
  the definitions cannot conflict, so this correctly defines a function
  $h_{\calA}$
  which is clearly a match of $\calA$ on $I'$ with the claimed
  properties.
  
  Now, each of the $q^i_j$ is a pointed query which is strictly smaller than
  $q$. Further,
  the restriction of~$h$ to the variables of~$q^i_j$
  is a match of $q^i_j$ on~$I$ that maps each $y^i_j$ (indexing the variables
  of~$y^i$ in
  the same way as before) to $b^i_j \in \dom(I)$.
  As we have $(I, b^i_j) \bsim_{n-1} (I', (b^i_j)')$, then we can apply the induction
  hypothesis to show that each of the $q^i_j$ has a match $h_{i,j}$ in $I'$
  that maps $y^i_j$ to
  $(b^i_j)'$. As these queries have disjoint sets of variables, the range of the
  $h_{i,j}$ is disjoint, and the range of each $h_{i,j}$ overlaps with $h_{\calA}$ 
  only on $\{y^i_j\}$, where we have $h_{\calA}(y^i_j) = h_{i,j}(y^i_j) = (b^i_j)'$. Thus,
  we can combine the $h_{i,j}$
  and the previously defined $h_{\calA}$ to obtain an overall
  match of $q$ in $I'$ that matches $x$ to $a'$. This concludes the proof of the
  induction step, and proves our claim on pointed queries.
\end{proof}

This implies that any superinstance of~$I_0$ that has a $k$-bounded simulation to
$\chase{I_0}{\idsb}$ must be $k$-sound for~$\uconb$ (no matter whether it
satisfies $\uconb$ or not). Indeed, the chase is a universal model for $\idsb$, and it
satisfies $\ufds$ (by the Unique Witness Property, and because $I_0$ does).
Hence, the chase is in particular $k$-universal for~$\uconb$. Hence,
by the $\acq$ Preservation Lemma (Lemma~\ref{lem:ksimacq}),
any superinstance with a $k$-bounded simulation to the chase is $k$-sound.

We give a name to such superinstances.
For convenience, we also require them to be finite and satisfy $\ufds$.
For technical reasons we require that the simulation is the identity on $I_0$,
that it does not map other elements to $I_0$, and that elements occur in the
superinstance
at least at the position where their $\cov$-image was introduced in the chase
(the \defo{directionality condition}):

\begin{definition}
  An \deft{aligned superinstance} $J = (I, \cov)$ of~$I_0$
  (for $\ufds$ and $\idsb$)
  consists of a finite superinstance
  $I$ of~$I_0$
  that satisfies $\ufds$,
  and a $k$-bounded simulation
  $\cov$ from $I$ to $\chase{I_0}{\idsb}$ such that $\restr{\cov}{\dom(I_0)}$ is the
  identity and $\restr{\cov}{\dom(I\backslash I_0)}$ maps to $\chase{I_0}{\idsb}
  \backslash I_0$.
  
  Further, for any $a \in \dom(I) \backslash \dom(I_0)$,
  letting $R^p$ be the position where $\cov(a)$ was introduced in
  $\chase{I_0}{\idsb}$, we require that $a \in \pi_{R^p}(I)$. We call this the
  \deft{directionality condition}.

  We write $\dom(J)$ to mean $\dom(I)$, and extend other existing notation in the
  same manner when relevant, e.g., $\appelem{J}{\tau}$ means $\appelem{I}{\tau}$.
\end{definition}

\subsection{Fact-Saturation}

Before we perform the \emph{completion process} that allows us to satisfy
the $\uid$s $\idsr$, we need to perform a \emph{saturation process}.
Like aligned superinstances,
this process is defined with respect to the
superset $\idsb$, and does not depend on the reversibility assumption.
The process generalizes relation-saturation from the previous \secname:
instead of achieving all relations, we want the
aligned superinstance to achieve all \defo{fact classes}:

\begin{definition}
  \label{def:factclass}
  A \deft{fact class} is a pair $(R^p, \mybf{C})$ of a position $R^p
  \in \pos(\sigma)$
  and a $\arity{R}$-tuple of 
  $\bbsim_k$-classes of elements of
  $\chase{I_0}{\idsb}$, with $\bbsim_k$ as in Definition~\ref{def:bbsim}.

  The \deft{fact class} of a fact $F = R(\mybf{a})$ of~$\chase{I_0}{\idsb}
  \backslash I_0$ is $(R^p, \mybf{C})$, where $a_p$ is the exported element
  of~$F$ and $C_i$ is the $\bbsim_k$-class of~$a_i$ in $\chase{I_0}{\idsb}$ for
  all $R^i \in \pos(R)$.
  
  A fact class $(R^p, \mybf{C})$ is
  \deft{achieved} 
  in $\chase{I_0}{\idsb}$ if
  $\nondanger(R^p) \neq \emptyset$ and
  if it is the fact class of some fact of
  $\chase{I_0}{\idsb} \backslash I_0$.
Such a fact is an \emph{achiever} of the fact class.
  We write $\rfcl$ for the set of all achieved 
  fact classes.

  For brevity, the dependence on~$I_0$, $\idsb$, and $k$ is omitted from 
  this notation.
\end{definition}

The requirement that $\nondanger(R^p)$ is non-empty is a technicality that will
prove useful in \secref{sec:hfds}. The following is easy to see:

\begin{lemma}
  \label{lem:clasfino}
  For any initial instance $I_0$, set $\idsb$ of~$\uid$s, and $k \in \mathbb{N}$, $\rfcl$ is finite.
\end{lemma}

\begin{proof}
We first show that $\bbsim_k$ has only a finite number
of equivalence classes on $\chase{I_0}{\idsb}$. Indeed, for any element $a \in
\dom(\chase{I_0}{\idsb})$, by the Unique Witness Property,
the number of facts in which $a$ occurs is bounded by a constant depending only
  on $I_0$ and $\idsb$.  We use the standard notion of the \defo{Gaifman graph}
  of $\chase{I_0}{\idsb}$, which is the infinite undirected graph having vertices the elements of
  $\chase{I_0}{\idsb}$, with an edge between any pair of elements that
  occur in the same fact.
The elements of~$\dom(\chase{I_0}{\idsb})$ which are relevant to
determine the $\bbsim_k$-class of~$a$  are only those whose  distance to~$a$ in
the Gaifman graph is~$\leq k$.
From the bound above, we see that there is a constant $M$ depending only on $I_0$,
$\idsb$, and
$k$, bounding the number of such elements. Since there are only finitely many isomorphism
types of structures on $M$ elements, we get a finite bound on the number of equivalence classes.

This reasoning clearly implies that $\rfcl$ is finite, because the number of~$m$-tuples of
equivalence classes of~$\bbsim_k$ that occur in $\chase{I_0}{\idsb}$ is then finite
for any $m \leq \max_{R \in \sigma} \arity{R}$, and $\positions(\sigma)$ is finite.
\end{proof}

We define \defo{fact-saturated} superinstances, which achieve all fact
classes in~$\rfcl$:

\begin{definition}
  \label{def:factsat}
  An aligned superinstance $J = (I, \cov)$ of~$I_0$ is \deft{fact-saturated} if, for any achieved fact
  class $D = (R^p, \mybf{C})$ in $\rfcl$, there is a fact $F_D =
  R(\mybf{a})$ of~$I \backslash I_0$
  such that $\cov(a_i) \in C_i$ for all $R^i \in \pos(R)$.
  We say that $F_D$ \deft{achieves}~$D$ in~$J$.

  Note that this definition does not depend on the position~$R^p$ of the fact
  class.
\end{definition}

The point of fact-saturation is that, when we perform thrifty chase steps, we
can reuse elements from a suitable achiever at the non-dangerous positions. With
relation-saturation, the facts were of the right relation; with fact-saturation,
they further achieve the right 
\emph{fact class}, which will be important to maintain the bounded simulation
$\cov$.

The fact-saturation completion process, which replaces the relation-saturation
process of the previous \secname, works in the same way.

\begin{lemma}[Fact-saturated solutions]
  \label{lem:nondetexhaust}
  For any $\uid$s $\idsb$, $\ufd$s $\ufds$, and instance~$I_0$,
  there exists a finite number $n \in \NN$ such that
  the result $I$ of performing $n$ chase
  rounds on $I_0$ is such that $J_0 = (I, \id)$ is a fact-saturated aligned
  superinstance of~$I_0$.
\end{lemma}

\begin{proof}
For every $D \in \rfcl$, let $n_{D} \in \mathbb{N}$ be such
that $D$ is achieved
by a fact of~$\chase{I_0}{\idsb}$ created at round $n_{D}$.
As $\rfcl$ is
finite, $n \defeq \max_{D \in \rfcl} n_D$ is finite. Hence, all classes
of $\rfcl$ are achieved after $n$ chase rounds on $I_0$.

Consider now $I_0'$ obtained from the aligned superinstance $I_0$ by
$n$ rounds
of the $\uid$ chase, and $J_0 = (I_0', \id)$.
It is clear that for any $D \in \rfcl$,
considering an achiever of $D$ in $\chase{I_0}{\idsb}$, the corresponding fact
in~$J_0$ is an achiever of~$D$ in~$J_0$. Hence, $J_0$ is indeed fact-saturated.
\end{proof}

We thus obtain a fact-saturated aligned superinstance $J_0$ of our initial
instance~$I_0$, which we now want to complete to one that satisfies the $\uid$s
we are interested in, namely $\idsr$.

\subsection{Fact-Thrifty Steps}

In the previous \secname, we defined
relation-thrifty chase steps, which reused non-dangerous elements from any
fact of the correct relation, assuming relation-saturation. We now define \defo{fact-thrifty} steps, which are thrifty
steps that reuse elements from a fact achieving the right fact class, thanks to
fact-saturation.
To do so, however, we must first refine the notion
of \defo{thrifty} chase step, to make them
apply to aligned superinstances.
We will \emph{always} apply them to aligned superinstances
for~$\idsb$ and~$\ufds$; however, we will always chase by the $\uid$s of~$\idsr$.

\begin{definition}[Applying thrifty chase steps to aligned superinstances]
  \label{def:ft2}
  Let $J = (I, \cov)$ be an aligned superinstance of~$I_0$ for $\idsb$ and
  $\ufds$, let $\tau:\ui{R^p}{S^q}$ be a $\uid$ of~$\idsr$,
  and let $a \in \appelem{J}{\tau}$.
  The result of applying a \deft{thrifty} chase step to~$a$ in~$J$ by~$\tau$ is
  a pair $(I', \cov')$ where:
  \begin{itemize}
    \item The instance $I'$ is the result of applying some thrifty step to~$a$
      in~$I$ by~$\tau$, as in Definition~\ref{def:thrifty} (note that this
      \emph{only} depends on~$\idsr$ and~$\ufds$, \emph{not} on~$\idsb$).
    \item The mapping $\cov'$ extends $\cov$ to elements of $\dom(I') \backslash
  \dom(I)$ as follows.
  Because $\cov$ is a $k$-bounded simulation and $k > 0$, it is in particular a
  $1$-bounded simulation, so we have $\cov(a) \in
  \pi_{R^p}(\chase{I_0}{\idsb})$. Hence, because $\tau \in \idsr \subseteq \idsb$, there
  is a fact $F_\w = S(\mybf{b}')$ in $\chase{I_0}{\idsb}$
  with $b'_q = \cov(a)$. We call $F_\w$ the \deft{chase witness}. 
      For any $b \in \dom(I') \backslash \dom(I)$, let $S(\vec b)$ be the
      unique fact of~$I' \backslash I$ and let $b_r$ be the only element of that
      fact such that $b = b_r$;
we define
  $\cov'(b_r) \defeq b'_r$.\qedef
  \end{itemize}
\end{definition}

We do not know yet whether the result $(I', \cov')$ of a thrifty chase step on
an aligned superinstance $(I, \cov)$ is still an aligned superinstance; we will
investigate this later.

Now that we have defined thrifty chase steps on aligned superinstances, we can
clarify the role of the directionality condition. Its goal is to ensure, intuitively, that
as chase steps go ``downwards'' in the original chase,
thrifty chase steps on aligned superinstances makes the $\cov$ mapping
go ``downwards'' in the chase as well. Formally:

\begin{lemma}[Directionality]
  \label{lem:wadef}
  Let $J$ be an aligned superinstance of~$I_0$ for $\idsb$ and $\ufds$, and consider
  the application of a thrifty chase step for a $\uid$ $\tau: \ui{R^p}{S^q}$. 
  Consider the chase witness $F_\w = S(\mybf{b}')$. Then $b'_q$ is the
  exported element of~$F_\w$.
\end{lemma}

\begin{proof}
  Let $F_{\active} = R(\mybf{a})$ be the active fact in $J$,
  let $F_{\fnew} = S(\mybf{b})$ be the new fact of~$J'$,
  and let $\tau: \ui{R^p}{S^q}$ be the $\uid$,
  so $a_p = b_q$ is the exported element of this chase step.
  Let $F_\w = S(\mybf{b}')$ be the chase witness in $\chase{I_0}{\idsb}$.
  Assume by way of contradiction 
  that $b'_q$ was not the exported element in~$F_\w$,
  so that it was introduced in $F_\w$.
  In this case, as $\cov(a_p) = \cov(b_q) = b'_q$,
  by the directionality condition in the definition of aligned superinstances,
  we have $a_p \in \pi_{S^q}(J)$,
  which contradicts the fact that $a_p \in \appelem{J}{\tau}$.
  Hence, we have proved by contradiction
  that $b'_q$ was the exported element in~$F_\w$.
\end{proof}

This observation will be important to connect fact-saturation to the
\defo{fact-thrifty chase steps} that we now define:

\begin{definition}
  \label{def:factthrifty}
  We define a \deft{fact-thrifty chase step},
  using the notation of Definition~\ref{def:thrifty}, as follows:
  if $\nondanger(S^q)$ is non-empty, choose one fact
  $F_\rr = S(\mybf{c})$ of $I \backslash I_0$ that achieves the fact class of
  $F_\w = S(\mybf{b}')$ (that is, $\cov(c_i) \bbsim_k b'_i$ for all
  $i$), and use $F_\rr$ to define $b_r \defeq c_r$ for all $S^r \in \nondanger(S^q)$.

  We also call a 
  fact-thrifty chase step \deft{fresh} if for all $S^r \in \danger(S^q)$,
  we take $b_r$ to be a fresh element only occurring at that position (and
  extend $\cov'$ accordingly).
\end{definition}

We first show that, on \emph{fact-saturated} instances, any $\uid$ violation can
be repaired by a fact-thrifty chase step; this uses Lemma~\ref{lem:wadef}. More
specifically, we show that, for
any relation-thrifty chase step that we could want to apply, we could apply a
fact-thrifty chase step instead.

\begin{lemma}[Fact-thrifty applicability]
  \label{lem:ftappl}
  For any fact-saturated superinstance $J$ of an instance $I_0$,
  for any $\uid$ $\tau: \ui{R^p}{S^q}$ of~$\idsr$,
  for any element $a \in \appelem{J}{\tau}$, we can apply a fact-thrifty
  chase step on $a$ with $\tau$ to satisfy this violation. Further, for any new fact $S(\mybf{e})$
  that we can create by chasing on~$a$ with $\tau$ with a relation-thrifty chase
  step, we can instead apply a fact-thrifty chase step on $a$ with $\tau$ to create a fact
  $S(\mybf{b})$ with $b_r = e_r$ for all $S^r \in \pos(S) \backslash
  \nondanger(S^r)$.
\end{lemma}

\begin{proof}
  We prove the first part of the statement by justifying the existence of the
  fact~$F_\rr$, which only needs to be done if $\nondanger(S^q)$ is non-empty. In
  this case, considering the fact $F_\w = S(\mybf{b}')$ in $\chase{I_0}{\idsb}$, we know by
  Lemma~\ref{lem:wadef} that $b'_q$ is the exported element in $F_\w$. Hence,
  letting $D$ be the fact class of $F_\w$, we have $D = (S^q, \mybf{C})$ for
  some $\mybf{C}$, and $D$ is in $\rfcl$ because
  $\nondanger(S^q)$ is non-empty. Hence, by definition of fact-saturation, there
  is a fact $F_\rr = S(\mybf{c})$ in $J$ such that, for all $S^r \in \pos(R)$, we
  have $\cov(c_i) \in C_i$, i.e., $\cov(c_i) \bbsim_k b'_i$ in
  $\chase{I_0}{\idsb}$. This proves the first part of the claim.

  For the second part of the claim, observe that the definition of fact-thrifty
  chase steps only imposes conditions on the non-dangerous positions, so
  considering any new fact $S(\mybf{e})$ created by a relation-thrifty chase
  step, changing its non-dangerous positions to follow the definition of
  fact-thrifty chase steps, we can create it with a fact-thrifty chase step.
\end{proof}

We now look at which properties are preserved on the result $(I', \cov')$ of fact-thrifty chase steps.
First note that fact-thrifty chase steps are in particular relation-thrifty, so
$I'$ is still weakly-sound and still satisfies $\ufds$
(by Lemmas~\ref{lem:thriftyp} and~\ref{lem:rthriftyp}). However, we do not
know yet whether $(I', \cov')$ is an aligned superinstance for $\ufds$ and
$\idsb$.

For now, we show that it is the case for \emph{fresh} fact-thrifty chase steps:

\begin{lemma}[Fresh fact-thrifty preservation]
  \label{lem:fftp}
  For any fact-saturated aligned superinstance $J$ of~$I_0$ (for $\ufds$ and
  $\idsb$), the result $J'$ of
  a fresh fact-thrifty chase step on $J$ is still a fact-saturated aligned
  superinstance of~$I_0$.
\end{lemma}

We prove this result in the rest of the \subsecname. For \emph{non-fresh} fact-thrifty
chase steps, the analogous claim is not true in general: it requires us to
introduce \emph{essentiality}, the focus of the next \subsecname, and relies on
the reversibility assumption that we made on $\idsr$ and $\ufds$.

To prove the Fresh Fact-Thrifty Preservation Lemma, we first make a general
claim about how we can extend a superinstance by adding a fact, and
preserve bounded simulations.

\begin{lemma}
  \label{lem:addfact}
  Let $n \in \mathbb{N}$.
  Let $I_1$ and $I$ be instances and $\cov$ be a $n$-bounded simulation from
  $I_1$ to $I$. Let $I_2$ be a superinstance of~$I_1$ 
  defined by adding one fact $F_{\fnew} = R(\mybf{a})$ to $I_1$,
  and let $\cov'$ be a mapping from $\dom(I_2)$
  to $\dom(I)$ such that $\restr{\cov'}{\dom(I_1)} = \cov$. Assume there is a fact $F_\w =
  R(\mybf{b})$ in $I$ such that, for all $R^i \in \pos(R)$, $\cov'(a_i) \bbsim_n b_i$. Then
  $\cov'$ is an $n$-bounded simulation from $I_2$ to $I$.
\end{lemma}

\begin{proof}
  We prove the claim by induction on $n$. The base case of $n = 0$ is immediate.

  Let $n > 0$, assume that the claim holds for $n-1$, and show that it holds for
  $n$. As $\cov$ is an $n$-bounded simulation, it is an $(n-1)$-bounded
  simulation, so we know by the induction hypothesis that $\cov'$ is an
  $(n-1)$-bounded simulation.

  Let us now show that it is an $n$-bounded simulation.
  Let $a \in \dom(I_2)$ be an element and
  show that $(I_2, a) \bsim_{n} (I, \cov'(a))$. 
  Hence, for any $F = S(\mybf{a})$ a fact of~$I_2$ with $a_p = a$ for some $p$, 
  we must show that there exists a fact $F' =
  S(\mybf{a}')$ of~$I$ with $a'_p = \cov'(a_p)$ and 
  $(I_2, a_q) \bsim_{n-1} (I, a'_q)$ for all $S^q \in \pos(S)$.
  
  The first possibility is that $F$ is the new fact $F_{\fnew} = R(\mybf{a})$.
  In this case, as we have $(I, b_p) \bsim_n (I, \cov'(a_p))$,
  considering $F_\w$, we deduce the existence of a fact
  $F_\w' = R(\mybf{c})$ in~$I$ such that $c_p = \cov'(a_p)$
  and $(I, b_q) \bsim_{n-1} (I, c_q)$ for all $1 \leq q \leq \arity{R}$.
  We take $F' = F_\w'$ as our witness fact for~$F$.
  By construction we have $c_p = \cov'(a_p)$.
  Fixing $1 \leq q \leq \arity{R}$,
  to show that $(I_2, a_q) \bsim_{n-1} (I, c_q)$,
  we use the fact that $\cov'$ is an $(n-1)$-bounded simulation
  to deduce that $(I_2, a_q) \bsim_{n-1} (I, \cov'(a_q))$.
  Now, we have $(I, \cov'(a_q)) \bsim_{n-1} (I, b_q)$,
  and as we explained we have $(I, b_q) \bsim_{n-1} (I, c_q)$,
  so we conclude by transitivity.
  
  If $F$ is another fact, then it is a fact of~$I_1$,
  so its elements are in $\dom(I_1)$,
  and as $\cov'$ coincides with $\cov$ on such elements,
  we conclude because $\cov$ is a $n$-bounded simulation.
\end{proof}

We now prove 
the Fresh Fact-Thrifty Preservation Lemma (Lemma~\ref{lem:fftp}), which concludes the
\subsecname:

\begin{proof}[\proofof Lemma~\ref{lem:fftp}]

It is immediate that, letting $J' = (I', \cov')$ be the result of the
fact-thrifty chase step,
$I'$ is still a finite superinstance of~$I_0$, and it still satisfies $\ufds$,
because fact-thrifty chase steps are relation-thrifty
chase steps, so we can still apply
Lemma~\ref{lem:rthriftyp}.

To show that $\cov'$ is still a $k$-bounded simulation, we apply Lemma~\ref{lem:addfact}
  with $F_{\fnew} = S(\mybf{b})$ and $F_\w = S(\mybf{b}')$. Indeed, letting $\tau : \ui{R^p}{S^q}$
be the applied $\uid$ in~$\idsr$, we have $\cov'(b_q) = b'_q$ by definition, and have set
$\cov'(b_r) \defeq b'_r$ for all $S^r \in \danger(S^q)$ (note that each such
$b_r$ occurs at only one position). For $S^r \in \nondanger(S^q)$, we have
$\cov'(b_r) \bbsim_k b'_r$ in $\chase{I_0}{\idsb}$ by definition of a
fact-thrifty chase step. Hence, by Lemma~\ref{lem:addfact}, $\cov'$ is still a
$k$-bounded simulation from $I'$ to $\chase{I_0}{\idsb}$.

We now check the directionality condition on elements of 
$\dom(I') \backslash \dom(I)$, namely, we show:
for $S^r \neq S^q$, if $b_r \in \dom(I') \backslash \dom(I)$,
then $b_r$ occurs in $J'$ at the position where
$\cov'(b_r)$ was introduced in $\chase{I_0}{\idsb}$.
By the Directionality Lemma (Lemma~\ref{lem:wadef}),
we know that $b'_q$ was the exported element of $F_\w$. Hence, as $\cov'(b_r)
\defeq b'_r$, we know that $b'_r$ was introduced at position $S^r$
in~$F_\w$ in~$\chase{I_0}{\idsb}$, so the condition is respected.

Last, the preservation of fact-saturation is immediate, and
the fact that $\cov'$ is the identity on~$I_0$
is immediate because $\restr{\cov'}{\dom(I_0)} = \restr{\cov}{\dom(I_0)}$.
We show that $\restr{\cov'}{\dom(I' \backslash I_0)}$ maps
to $\chase{I_0}{\idsb} \backslash I_0$, using the directionality condition.
Indeed, for all elements $b_r \in \dom(I') \backslash \dom(I)$
(with $S^r \neq S^q$),
which are clearly not in~$I_0$,
we have fixed $\cov'(b_r) \defeq b_r'$,
and as we explained $b'_r$ is introduced in $F_\w$ in $\chase{I_0}{\idsb}$
so it cannot be an element of~$I_0$; hence $b'_r$ is indeed
an element of $\chase{I_0}{\idsb} \backslash I_0$. This is the last point we had
to verify.
\end{proof}

\subsection{Essentiality}

The problem of non-fresh fact-thrifty chase steps is that, while they try to preserve
$k$-soundness on the non-dangerous positions, they may not preserve it overall:

\begin{figure}
  \noindent\null\begin{minipage}[b]{.25\linewidth}
    \centering
    %\begin{tikzpicture}[
%  text height=1.3ex,text depth=0ex,xscale=1.5,yscale=1.7,
%  mispos/.append style={red, font=\scriptsize},
%  newfact/.append style={red, dashed, -{>[scale=1.5]}}
%]
%  \node (bound1) at (-.25, -.5) {};
%  \node (bound2) at (1.25, -.5) {};
%  \node (y) at (0, -.3) {$y$};
%  \node (z) at (1, -.3) {$z$};
%  \node (a) at (0, -1) {$a$};
%  \node (b) at (1, -1) {$b$};
%
%  \draw[->] (a) -- (y) node[right,midway] {$U$};
%  \draw[->] (z) -- (b) node[right,midway] {$V$};
%  \draw[->] (a) -- (b) node[above,midway] {$R$};
%
%  \path (b) edge[newfact,bend left=25] node[below, midway] {$R$} (a);
%\end{tikzpicture}
%
\begin{tikzpicture}[
  text height=1.3ex,text depth=0ex,xscale=1.5,yscale=.85,
  mispos/.append style={red, font=\scriptsize},
  newfact/.append style={red, dashed, -{>[scale=1.5]}}
]
  \node (bound1) at (-.25, -.5) {};
  \node (bound2) at (1.25, -.5) {};
  \node (y) at (0, 0) {$u$};
  \node (z) at (1, 0) {$v$};
  \node (a) at (0, -1) {$a$};
  \node (b) at (1, -1) {$b$};

  \node[red] (a2) at (0, -2) {$a_1$};
  \node[red] (a3) at (0, -3) {$a_2$};

  \node[red] (b2) at (1, -2) {$b_1$};
  \node[red] (b3) at (1, -3) {$b_2$};

  \draw[->] (a) -- (y) node[right,midway] {$U$};
  \draw[->] (z) -- (b) node[right,midway] {$V$};
  \draw[->] (a) -- (b) node[above,midway] {$R$};

  \path[newfact,bend left=25] (b) edge node[right, midway] {$R$} (b2)
    (b2) edge node[right, midway] {$R$} (b3)
    (b3) edge node[below, midway] {$R$} (a3)
    (a3) edge node[left, midway] {$R$} (a2)
    (a2) edge node[left, midway] {$R$} (a);
\end{tikzpicture}
    \caption{Soundness and fact-thrifty chase steps (see
    Example~\ref{exa:essentiality})}
    \label{fig:essentiality}
  \end{minipage}\hspace{.02\linewidth}\begin{minipage}[b]{.73\linewidth}
    \centering
    \null\hfill\begin{tikzpicture}[
  text height=1.3ex,text depth=0ex,xscale=.7,yscale=.9,
  mispos/.append style={red, font=\scriptsize},
  newfact/.append style={red, dashed, -{>[scale=1.5]}},
  sccp/.append style={pattern=north west lines, pattern
  color=darkgreen,opacity=.2},
  scc/.append style={darkgreen}
]
  \node (a1) at (-1, 1) {$v$};
  \node (a2) at (0, 1) {$x$};
  \draw (-1.2, 1.2) rectangle (.2, .8);
  \node (A) at (-1.5,1) {$A$};

  \node (s1) at (0, 0) {$x$};
  \node (s2) at (1, 0) {$y$};
  \draw (-.2, -.2) rectangle (1.2, .2);
  \node (R) at (-.5,0) {$S$};

  \path (a2) edge[bend left=45,->] (s1);

  \node (r1) at (1, -1) {$y$};
  \node (r2) at (2, -1) {$z$};
  \node (r3) at (3, -1) {$a$};
  \draw (.8, -1.2) rectangle (3.2, -.8);
  \node (R) at (.5,-1) {$R$};
  
  \path (s2) edge[very thick,bend left=45,->] (r1);

  \node (u1) at (1, -2) {$b$};
  \node (u2) at (2, -2) {$c$};
  \node (u3) at (3, -2) {$a$};
  \draw (.8, -2.2) rectangle (3.2, -1.8);
  \node (U) at (.5,-2) {$U$};

  \path (r3) edge[very thick,bend left=45,->] (u3);
  
  \node (t1) at (4.5, -2) {$a$};
  \node (t2) at (5.5, -2) {$d$};
  \draw (4.3, -2.2) rectangle (5.7, -1.8);
  \node (T) at (4,-2) {$T$};

  \path (r3) edge[bend left=15,->] (t1);
  
  \node (v1) at (1, -3) {$b$};
  \node (v2) at (2, -3) {$e$};
  \draw (.8, -3.2) rectangle (2.2, -2.8);
  \node (V) at (.5,-3) {$V$};

  \path (u1) edge[very thick,bend left=45,->] (v1);
\end{tikzpicture}\hfill
\begin{tikzpicture}[
  text height=1.3ex,text depth=0ex,xscale=.7,yscale=.9,
  mispos/.append style={red, font=\scriptsize},
  newfact/.append style={red, dashed, -{>[scale=1.5]}},
  sccp/.append style={pattern=north west lines, pattern
  color=darkgreen,opacity=.2},
  scc/.append style={darkgreen}
]
  \node (b1) at (0, 1) {$w'$};
  \node (b2) at (1, 1) {$e'$};
  \draw (-.3, 1.2) rectangle (1.2, .8);
  \node (B) at (-.6,1) {$B$};

  \node (v1) at (0, 0) {$b'$};
  \node (v2) at (1, 0) {$e'$};
  \draw (-.3, -.2) rectangle (1.2, .2);
  \node (V) at (-.6,0) {$V$};

  \path (b2) edge[bend left=45,->] (v2);

  \node (u1) at (0, -1) {$b'$};
  \node (u2) at (1, -1) {$c'$};
  \node (u3) at (2, -1) {$a'$};
  \draw (-.3, -1.2) rectangle (2.2, -0.8);
  \node (U) at (-.6,-1) {$U$};

  \path (v1) edge[very thick,bend left=45,->] (u1);

  \node (r1) at (0, -2) {$y'$};
  \node (r2) at (1, -2) {$z'$};
  \node (r3) at (2, -2) {$a'$};
  \draw (-.3, -2.2) rectangle (2.2, -1.8);
  \node (R) at (-.6,-2) {$R$};

  \path (u3) edge[very thick,bend left=45,->] (r3);
  
  \node (t1) at (3.5, -2) {$a'$};
  \node (t2) at (4.5, -2) {$d'$};
  \draw (3.2, -2.2) rectangle (4.7, -1.8);
  \node (T) at (2.9,-2) {$T$};

  \path (u3) edge[bend left=15,->] (t1);

  \node (s1) at (-1, -3) {$x'$};
  \node (s2) at (0, -3) {$y'$};
  \draw (-1.3, -3.2) rectangle (.2, -2.8);
  \node (R) at (-1.6,-3) {$S$};

  \path (r1) edge[very thick,bend left=45,->] (s2);
\end{tikzpicture}\hfill\null
    \caption{$\uid$ Chase Similarity Theorem (Theorem~\ref{thm:similarityb}); see Example~\ref{exa:similarity}.}
    \label{fig:chase}
  \end{minipage}
\end{figure}

\begin{example}
  \label{exa:essentiality}
  Consider the instance $I_0 = \{U(a, u), R(a, b), V(v, b)\}$ depicted as the
  solid black elements and edges in
  Figure~\ref{fig:essentiality}. Consider the $\uid$
  $\tau: \ui{R^1}{R^2}$, and the $\ufd$ $\phi: R^1 \rightarrow R^2$.
  We define $\idsr = \idsb = \{\tau, \tau^{-1}\}$ and $\ufds = \{\phi,
  \phi^{-1}\}$, so that $\ufds$ and $\idsr$ are reversible. We have $a \in
  \appelem{I}{\tau}$ and $b \in \appelem{I}{\tau^{-1}}$.
  To satisfy these violations, we can apply a fact-thrifty chase step by
  $\tau$
  on $a$ and create $F = R(b, a)$, noting that there are no non-dangerous positions.
  However, the superinstance $I_0 \sqcup \{F\}$
  is not a $k$-sound superinstance of~$I_0$ for $k \geq 3$. For instance, it makes the
  following $\acq$ true, which is not true in $\chase{I_0}{\idsb}$:
  $\exists x y z w ~ V(x, y), R(y, z), U(z, w)$.

  Instead, for any value of $k$, this problem can be avoided as follows.
  First, apply $k$ fresh fact-thrifty chase steps by~$\tau^{-1}$ to create the chain $R(b,
  b_1), \allowbreak R(b_1, b_2), \ldots, \allowbreak R(b_{k-1}, b_k)$. Then apply $k$ fresh fact-thrifty
chase steps by~$\tau$ to create $R(a_1, a), \allowbreak R(a_2, a_1),
\ldots, \allowbreak R(a_k,
  a_{k-1})$.
  Now we can apply a non-fresh fact-thrifty chase step by~$\tau$ on $a_k$ and
  create $R(b_k, a_k)$, and this does not make any new $\acq$ of size $\leq k$
  true. This process is illustrated with red elements and red dashed
  edges in Figure~\ref{fig:essentiality} for $k=2$.
\end{example}

The intuition behind this example is that non-fresh fact-thrifty chase steps may connect together elements at the
dangerous positions, but their image by~$\cov$ may be too dissimilar, so the
bounded simulation does not extend.
This implies that, in general, the result of a fact-thrifty chase step may not
be an aligned superinstance.
As the example shows, however, we can avoid that
problem if we chase for sufficiently long, so that the ``history'' of elements
no longer contains anything specific about them.

We first formalize this notion for elements of the chase $\chase{I_0}{\idsb}$, which we call
\defo{essentiality}. We will then define it for aligned superinstances using
the $\cov$ mapping.

\begin{definition}
  \label{def:chaserev}
  We define a forest structure on the facts of
$\chase{I_0}{\idsb}$: the facts of~$I_0$ are the roots, and the parent of a fact
$F$ not in $I_0$
is the fact~$F'$ that was the active fact for which $F$ was created,
so that $F'$ and~$F$ share the exported element of~$F$.

For $a \in \dom(\chase{I_0}{\idsb})$, if $a$ was introduced at position $S^r$ of an
$S$-fact $F = S(\mybf{a})$ created by applying the $\uid$ $\tau: \ui{R^p}{S^q}$
(with $S^q \neq S^r$) to its parent fact~$F'$,
we call $\tau$ the \deft{last $\uid$} of~$a$. The \deft{last two
$\uid$s} of~$a$ are $(\tau, \tau')$ where $\tau'$ is the last $\uid$ of the
exported element~$a_q$ of~$F$ (which was introduced in~$F'$).
For $n \in \mathbb{N}$,
we define the \deft{last $n$ $\uid$s} in the same way, for elements of
$\chase{I_0}{\idsb}$ introduced after sufficiently many rounds.

We say that $a$ is
  \deft{$\bm{n}$-essential} if its last $n$ $\uid$s are reversible in~$\idsb$. This is in
particular the case if these last $\uid$s are in $\idsr$: indeed, $\idsr$
satisfies the reversibility assumption so for any $\tau \in \idsr$, we have
$\tau^{-1} \in \idsr$, so that $\tau^{-1} \in \idsb$.
\end{definition}

The point of this definition is the following result, which we state without
proof for now. We will prove it in Section~\ref{sec:similarity}: 

\mydefthm{similarityb}{theorem}{$\uid$ chase similarity theorem}{
  For any instance~$I_0$, transitively closed set of $\uid$s $\idsb$,
  and $n \in \mathbb{N}$,
  for any two
  elements $a$ and $b$
  respectively introduced at positions~$R^p$ and~$S^q$ in $\chase{I_0}{\idsb}$,
  if $a$ and $b$ are $n$-essential,
  and if $\ui{R^p}{S^q}$ and $\ui{S^q}{R^p}$ are in $\idsb$,
  then $a \bbsim_n b$.
}

In other words, in the chase, when the last $n$ $\uid$s of an element were
reversible, then the
$\bbsim_n$-class of that element only depends on the 
position where it was introduced.

We use this to define a corresponding notion on aligned
superinstances: an aligned superinstance is \defo{$n$-essential} if, for all elements
that witness a violation of the $\uid$s~$\idsr$ that we wish to solve, their
$\cov$ image is an $n$-essential element of the chase, introduced at a suitable
position. In fact, we introduce a more general definition, which does not
require the superinstance to be aligned, i.e., does not require that~$\cov$ is a $k$-bounded
simulation.

\begin{definition}
  \label{def:nreverse}
  Let $J = (I, \cov)$ be a pair of a superinstance $I$ of~$I_0$
  and a mapping $\cov$ from $I$ to $\chase{I_0}{\idsb}$. Let $k \in \NN$.
  We call $a \in \dom(I)$ \mbox{\deft{$\bm{n}$-essential}} in~$J$ for $\idsr$ if the following are true:
  \begin{itemize}
    \item $\cov(a)$ is
  an $n$-essential element of~$\chase{I_0}{\idsb}$;
\item for any
  position $S^q \in \pos(\sigma)$
  such that $a \in \appelemb{\idsr}{I}{S^q}$, the position $T^v$ where
  $\cov(a)$ was introduced in $\chase{I_0}{\idsb}$ is such that $\ui{T^v}{S^q}$
  and $\ui{S^q}{T^v}$ are in~$\idsr$, which we write $T^v \eqids S^q$ as in the
  previous \secnames.
  \end{itemize}
  Note that the second point is vacuous if there is no $\uid$ of~$\idsr$ applicable to~$a$.
  We call $J$ \deft{$\bm{n}$-essential} for~$\idsr$ if, for all $a \in \dom(J)$, $a$ is $n$-essential in~$J$
  for~$\idsr$.
\end{definition}

We now show that, as we assumed the $\uid$s of $\idsr$ to be reversible
(by the reversibility assumption), fresh fact-thrifty chase steps by~$\idsr$ never make
essentiality decrease, and even make it \emph{increase} on new elements:

\begin{lemma}[Thrifty steps and essentiality]
  \label{lem:ftok2}
  For any $n$-essential aligned superinstance $J$, letting $J' = (I',
  \cov')$ be the result of a thrifty chase step on $J$ by a $\uid$ of
  $\idsr$, $J'$ is still $n$-essential. Further,
  all elements of $\dom(J') \backslash \dom(J)$ are $(n+1)$-essential in~$J'$.
\end{lemma}

\begin{proof}
  Fix $J$ and $J'$; note that $J'$ may not be an aligned superinstance.
  Consider first the elements of~$\dom(J)$ in~$J'$. For any $a \in \dom(J)$, by definition of thrifty chase
  steps, we know that for any $T^v \in \pos(\sigma)$ such that $a \in
  \pi_{T^v}(J')$, we have either $a \in \pi_{T^v}(J)$ or $a \in
  \appelemb{\idsr}{J}{T^v}$. Hence, as $\idsr$ is transitively closed, for any
  $U^w \in \pos(\sigma)$ such that $a \in \appelemb{\idsr}{J'}{U^w}$, we have
  also $a \in \appelemb{\idsr}{J}{U^w}$, and as $J$ is $n$-essential, we
  conclude that $a$ is $n$-essential in~$J'$.
  Hence, it suffices to show that any element in
  $\dom(J') \backslash \dom(J)$ is $(n+1)$-essential in~$J'$.

  To do this, write $\tau : \ui{R^p}{S^q}$ the $\uid$ applied in the chase step,
  and let $F_{\fnew} = S(\mybf{b})$ be the new fact.
  By definition of thrifty chase steps, and as $\tau \in \idsr$, we had $b_q \in
  \appelemb{\idsr}{J}{S^q}$, so  $b_q$ was
  $n$-essential because $J$ was. Hence, $\cov(b_q)$ is $n$-essential in
  $\chase{I_0}{\idsb}$.
  By the Directionality Lemma (Lemma~\ref{lem:wadef}), 
  $b'_q \defeq \cov(b_q)$ is also the exported element of the chase
  witness $F_{\w} = S(\mybf{b}')$, and as $b_q$ is $n$-essential, we know by the
  second part of Definition~\ref{def:nreverse} that $b'_q$ was
  introduced in $\chase{I_0}{\idsb}$ at some position~$T^v$ such that $T^v
  \eqids S^q$. This means that $F_{\w}$ was created in $\chase{I_0}{\idsb}$ by
  applying the $\uid$ $\ui{T^v}{S^q}$ which is in~$\idsr$. This implies that, for all $S^r \in
  \pos(S) \backslash \{S^q\}$, the element~$b'_r$ is $(n+1)$-essential
  in $\chase{I_0}{\idsb}$, and is introduced at position~$S^r$.

  Now, let $a \in \dom(J') \backslash \dom(J)$,
  and let $T^v$ such that $a \in \appelemb{\idsr}{J'}{T^v}$. Let $U^w \in \pos(\sigma)$
  and $\tau' : \ui{U^w}{T^v}$ that witness this, i.e., $\tau' \in \idsr$ and $a
  \in \pi_{U^w}(J')$. By the reversibility assumption, we have $U^w \eqids T^v$
  in~$\idsr$.
  Now, by definition of thrifty chase
  steps on aligned superinstances, we know that we defined $\cov'(a) \defeq
  b'_r$ for some $S^r$ where $a$ occurred in~$F_{\fnew}$. Further,
  by definition of thrifty chase
  steps, we know that all positions in which $a$ occurs in~$F_{\fnew}$, and thus all positions
  where it occurs in~$J'$, are $\eqids$-equivalent in~$\idsr$; in particular $S^r
  \eqids U^w$, hence by transitivity $S^r \eqids T^v$.
  By the previous
  paragraph, $\cov(a) = b'_r$ is an $(n+1)$-reversible element introduced
  in $\chase{I_0}{\idsb}$
  at position~$S^r$, and we have $S^r \eqids T^v$. This
  shows that $a$ is $(n+1)$-reversible in~$J'$.
  
  Hence, $J'$ is indeed
  $n$-reversible, and the elements of $\dom(J') \backslash \dom(J)$ are indeed
  $(n+1)$-reversible, which concludes the proof.
\end{proof}

In conjunction with the Fresh Fact-Thrifty Preservation Lemma (Lemma~\ref{lem:fftp}),
this implies that
applying sufficiently many fresh fact-thrifty chase rounds yields an
$n$-essential aligned superinstance:

\begin{lemma}[Ensuring essentiality]
  \label{lem:achkrev}
  For any $n \in \mathbb{N}$, applying $n+1$ fresh fact-thrifty chase rounds on
  a fact-saturated aligned superinstance $J$ by the
  $\uid$s of~$\idsr$ yields an $n$-essential aligned superinstance $J'$.
\end{lemma}

\begin{proof}
  Fix the aligned superinstance $J = (I, \cov)$.
  We use the Fresh Fact-Thrifty Preservation Lemma (Lemma~\ref{lem:fftp})
  to show that the property of being aligned is
  preserved, so we only show that the result is $n$-essential.
  We prove this claim by induction on~$n$.

  For the base case, we must show that the result $J' = (I', \cov')$ of a fresh fact-thrifty
  chase round on~$J$ by~$\idsr$ is $0$-essential. Let $S^q \in \pos(\sigma)$ and $a \in
  \appelemb{\idsr}{J'}{S^q}$. As $\idsr$ is transitively closed, by definition of a chase round, we have $a \in
  \dom(J') \backslash \dom(J)$, because $\uid$ violations on elements
  of~$\dom(J)$ must have been solved in~$J'$; hence, $a$ was created by a fact-thrifty chase step
  on~$J$. By similar reasoning as in the proof of Lemma~\ref{lem:ftok2},
  considering the chase witness $F_\w$ for this chase step, we conclude that
  $\cov(a)$ was introduced at a position $T^v$ in $\chase{I_0}{\idsb}$ such that
  $T^v \eqids S^q$ in~$\idsr$. Further, $\cov(a)$ is vacuously $0$-essential.
  Hence, $J'$ is indeed $0$-essential.

  For the induction step, let $J'$ be the result of $n+1$ fresh fact-thrifty
  chase rounds on $J$, and show that it is $n$-essential. By induction
  hypothesis, the result $J'' = (I'', \cov'')$ of $n$ fresh fact-thrifty chase rounds is
  $(n-1)$-essential.
  Now, again by definition of a chase round, for any position $S^q \in
  \pos(\sigma)$ and $a \in \appelemb{\idsr}{J''}{S^q}$, we must have $a \in
  \dom(J') \backslash \dom(J'')$, so that $a$ was created by applying a
  fact-thrifty chase step on an element~$a''$ in~$J''$ which witnessed a
  violation of a $\uid$ of~$\idsr$. As $J''$ is $(n-1)$-essential, $a'$ was
  $(n-1)$-essential in~$J''$, so we conclude by Lemma~\ref{lem:ftok2} that $a$ is
  $n$-essential in~$J'$. Hence, we conclude that $J'$ is indeed $n$-essential.
\end{proof}

Hence, we can ensure $k$-essentiality. The point of essentiality is to
guarantee that the result of \emph{non-fresh} fact-thrifty chase steps on a
$k$-essential aligned superinstance is also
an aligned superinstance.

\begin{lemma}[Fact-thrifty preservation]
  \label{lem:usekrev}
  For any fact-saturated $k$-essential aligned superinstance $J$ for $\idsb$ and
  $\ufds$, the result $J'$ 
  of \emph{any} fact-thrifty chase step on $J$ by a $\uid$ of $\idsr$ is still a fact-saturated and $k$-essential 
  aligned superinstance.
\end{lemma}

\begin{proof}
  Fix $J' = (I', \cov')$, the $\uid$ $\tau : \ui{R^p}{S^q}$ of~$\idsr$, which is reversible by
  the reversibility assumption, and the element $a \in \dom(J)$ to which it is
  applied.

  The fact that $k$-essentiality is preserved is by Lemma~\ref{lem:ftok2}, and
  fact-saturation is clearly preserved, so we must
  only show that~$J'$ is still an aligned superinstance. The fact that $J'$ is
  a finite superinstance of~$I_0$ is immediate, and it still satisfies $\ufds$
  by Lemma~\ref{lem:rthriftyp} because fact-thrifty chase steps are
  relation-thrifty chase steps. 
  The directionality condition is clearly respected because any new element in $\dom(J')
  \backslash \dom(J)$ occurs at least at the position at which its $\cov'$-image
  was introduced in the chase (namely, the position where it occurs in~$F_\w$),
  and the additional conditions on $\restr{\cov'}{\dom(I_0)}$ and $\restr{\cov'}{\dom(J'
  \backslash I_0)}$ still hold.
  
  The only thing to show is that $\cov'$ is still a
  $k$-bounded simulation.
  Let $F_{\fnew} = S(\mybf{b})$ be the new fact and $F_\w = S(\mybf{b}')$ be the
  chase witness. Now, 
  as in the proof of the Thrifty Steps And Essentiality Lemma (Lemma~\ref{lem:ftok2}), 
  and using the Directionality Lemma (Lemma~\ref{lem:wadef}),
  all elements of $F_\w$ are
  $n$-essential (and, except for $b'_q$, they were introduced at their position of
  $F_\w$).

  Now, to show that $\cov'$ is a $k$-bounded simulation, we use
  Lemma~\ref{lem:addfact}, so it suffices to show that
  we have $\cov(b_r) \bbsim_k b'_r$ for all $r$. This is the case whenever we
  have $\cov(b_r) = b'_r$, which is guaranteed by definition for $S^r = S^q$ and
  for elements in $\danger(S^q)$ such that $S^r \eqfun S^q$ does not hold. For
  non-dangerous elements, the fact that $\cov(b_r) \bbsim_k b'_r$ is by
  definition of fact-thrifty chase steps. For the other positions, there are two
  cases:
  \begin{itemize}
    \item $b_r \in \dom(I)$, in which case $b_r \in \appelemb{\idsr}{I}{S^r}$. As $J$
      is $n$-essential, $\cov(b_r)$ is an $n$-essential element of
      $\chase{I_0}{\idsb}$ introduced at a position $T^v$ such that $T^v \eqids
      S^r$ holds in~$\idsr$. Now, $b'_r$ is an $n$-essential element of
      $\chase{I_0}{\idsb}$ introduced at position $S^r$. By the
      $\uid$ Chase Similarity Theorem (Theorem~\ref{thm:similarityb}), we then have $\cov'(b_r) \bbsim_k b'_r$ in
      $\chase{I_0}{\idsb}$
    \item $b_r \notin \dom(I)$, in which case the claim is immediate unless it occurs
      at multiple positions. However, by definition of thrifty chase steps, all
      positions at which it occurs are related by $\eqids$ in~$\idsr$, so the corresponding
      elements of $F_\w$ are also $\bbsim_k$-equivalent by the $\uid$ Chase
      Similarity Theorem: hence we have
      $\cov'(b_r) \bbsim_k b'_r$.
  \end{itemize}
  We conclude by Lemma~\ref{lem:addfact} that $J'$ is indeed an aligned
  superinstance, which concludes the proof.
\end{proof}

We can now conclude the proof of Theorem~\ref{thm:ksound}.
Let $I_0$ be the initial instance, and consider $J_0 = (I_0, \id)$
which is trivially an aligned superinstance of~$I_0$.
Apply the Fact-Saturated Solutions Lemma (Lemma~\ref{lem:nondetexhaust})
to obtain a fact-saturated aligned
superinstance $J_0' = (I_0', \cov')$. We must now show that we can
complete $J_0'$ to a superinstance that satisfies $\idsr$ as well, which we do with
the following variant of 
the Reversible Relation-Thrifty Completion Proposition (Proposition~\ref{prp:rtcompr}):

\begin{proposition}[Reversible Fact-Thrifty Completion]
  \label{prp:ftcompr}
  For any reversible $\ufds$ and $\idsr$,
  for any transitively closed $\uid$s $\idsb \supseteq \idsr$,
  for any fact-saturated aligned superinstance $J_0'$ of~$I_0$ 
  (for~$\ufds$ and~$\idsb$),
  we can use fact-thrifty chase steps by $\uid$s of $\idsr$
  to construct 
  an aligned fact-saturated superinstance $J_\f$ of~$I_0$ (for~$\ufds$ and
  $\idsb$)
  that satisfies $\idsr$.
\end{proposition}

We conclude this section by proving this proposition. To do so,
we first apply 
the Ensuring Essentiality Lemma (Lemma~\ref{lem:achkrev})
with the $\uid$s of $\idsr$ to make $J_0'$ $k$-essential.
By the Fresh Fact-Thrifty Preservation Lemma (Lemma~\ref{lem:fftp}),
the result $J_1 = (I_1, \cov_1)$ is then a fact-saturated $k$-essential aligned
superinstance of~$I_0$ (for $\idsb$ and $\ufds$). 

We will then use the
Reversible Relation-Thrifty Completion Proposition (Proposition~\ref{prp:rtcompr}) on~$J_1$;
but we must refine it to a stronger claim. We
do so using
the following definition:

\begin{definition}
  \label{def:tsec}
  A \deft{thrifty sequence} on an instance $I$ for $\uid$s $\ids$ and $\ufd$s
  $\ufds$ is a sequence $L$ defined inductively as follows, with an
  \deft{output} $L(I)$ which is a superinstance of~$I$ 
  that we also define inductively:
  \begin{itemize}
    \item The empty sequence $L = ()$ is a thrifty sequence, with $L(I) = I$
    \item Let $L'$ be a thrifty sequence, let $I' = L'(I)$ be the output of
      $L'$, and let $t = (a, \tau, \mybf{b})$ be a triple
      formed of an element $a \in \dom(I')$, a $\uid$ $\tau:
      \ui{R^p}{S^q}$ of $\ids$, and an $\arity{S}$-tuple $\mybf{b}$.
      We require that the fact $S(\mybf{b})$ can be created in $I'$ by applying a thrifty chase step
      to $a$ in $L'(I)$ by $\tau$ (Definition~\ref{def:thrifty}). Then the
      concatenation $L$ of $L'$ and $t$ is a thrifty sequence, and its output
      $L(I)$ is the result of performing this chase step on $L'(I)$, namely,
      $L(I) \defeq L'(I) \sqcup \{S(\mybf{b})\}$. 
  \end{itemize}
  The \deft{length} of~$L$ is written $\card{L}$ and the elements of~$L$ are indexed by
  $L_1, \ldots, L_{\card{L}}$.
  We define a \deft{relation-thrifty sequence} in the same way with
  relation-thrifty steps, and likewise define a
  \deft{fact-thrifty sequence}.
\end{definition}

With this definition, the 
Reversible Relation-Thrifty Completion Proposition (Proposition~\ref{prp:rtcompr}) implies that
there is a relation-thrifty sequence $L$ such that $L(I'_0)$ is
a finite weakly-sound superinstance $I_\f$ of~$I_0$ that satisfies $\ufds$
and~$\idsr$. 
Our goal to prove 
the Reversible Fact-Thrifty Completion Proposition (Proposition~\ref{prp:ftcompr})
is to rewrite $L$ to a fact-thrifty sequence.
To do this, we first need to show that
any two thrifty sequences that
coincide on non-dangerous positions have the same effect in terms of $\uid$
violations:

\begin{definition}
  Let $\ids$ be $\uid$s and $\ufds$ be
  $\ufd$s, let $I_0$ be an instance, and $L$ and $L'$ be thrifty sequences
  on~$I_0$.
  We say that $L$ and $L'$ \deft{non-dangerously match} if $\card{L} = \card{L'}$ and that
  for all $1 \leq i \leq \card{L}$, writing $L_i = (a, \tau, \mybf{b})$ and
  $L_i' = (a', \tau', \mybf{b}')$, we have $a = a'$, $\tau = \tau'$, and,
  writing $\tau: \ui{R^p}{S^q}$, we have $b_r = b'_r$ for all $S^r \in
  \pos(S) \backslash \nondanger(S^q)$.
\end{definition}

\begin{lemma}[Thrifty sequence rewriting]
  \label{lem:thsrw}
  Let $\ids$ be $\uid$s and $\ufds$ be
  $\ufd$s, let $I_0$ be an instance, and let $L$ and $L'$ be thrifty
  sequences on~$I_0$ that non-dangerously match.
  Then $L(I_0)$ satisfies $\ids$ iff $L'(I_0)$ satisfies $\ids$.
\end{lemma}

\begin{proof}
  We prove by induction on the common length of $L$ and $L'$ that,
  if $L$ and $L'$ non-dangerously match,
  then, for all $U^v \in \pos(\sigma)$, we have $\pi_{U^v}(L(I_0)) =
  \pi_{U^v}(L'(I_0))$. If both $L$ and $L'$ have
  length~$0$, the claim is trivial. For the induction step, 
  write $I \defeq L(I_0)$ and $I' \defeq L'(I_0)$.
  Write $L$ as the
  concatenation of $L_2$ and its last tuple $t = (a, \tau, \mybf{b})$, and write
  similarly $L'$ as the concatenation of $L_2'$ and the last tuple $t' = (a', \tau',
  \mybf{b}')$. Let $U^v \in \pos(\sigma)$ and show that $\pi_{U^v}(L(I_0)) =
  \pi_{U^v}(L'(I_0))$.
  Clearly $L_2$ and $L_2'$ non-dangerously match and are strictly shorter than $L$ and $L'$, respectively, so by
  the induction hypothesis, writing $I_2 \defeq L_2(I_0)$ and $I_2' \defeq
  L_2'(I_0)$, we have $\pi_{U^v}(I_2) = \pi_{U^v}(I_2')$. Further, we have $\tau
  = \tau'$; write them  as $\ui{R^p}{S^q}$.
  We then have $I = I_2 \sqcup S(\mybf{b})$, and $I' = I_2' \sqcup S(\mybf{b}')$.
  As we must have $b_r = b'_r$ if $U^v \notin \nondanger(S^q)$, there is nothing
  to show unless we have $U^v \in \nondanger(S^q)$. However, in this case,
  writing $U^v$ as $S^r$, then, by
  definition of thrifty chase steps, we have $b_r \in \pi_{S^r}(I_2)$, so that
  $\pi_{S^r}(I) = \pi_{S^r}(I_2)$. Likewise, $\pi_{S^r}(I') = \pi_{S^r}(I_2')$,
  hence $\pi_{S^r}(I) = \pi_{S^r}(I')$. This concludes the induction proof.

  We now prove the lemma.
  Fix $\tau: \ui{R^p}{S^q}$ in $\ids$. We have $L(I_0) \models \tau$ iff
  $\pi_{R^p}(L(I_0)) \backslash \pi_{S^q}(L(I_0)) = \emptyset$, and likewise for
  $L'(I_0)$. By the result proved in the paragraph above, 
  these conditions are equivalent, and thus we have $L(I_0) \models \tau$ iff $L'(I_0) \models \tau$.
\end{proof}

Hence, consider our fact-saturated aligned superinstance $J_1 = (I_1, \cov_1)$ (for~$\idsb$
and~$\ufds$). As we explained,
the Reversible Relation-Thrifty Completion Proposition (Proposition~\ref{prp:rtcompr})
implies that 
there is a relation-thrifty sequence $L$ such that $L(I_1)$ satisfies $\ids$. We modify
$L$ 
inductively to obtain a \emph{fact-thrifty sequence} $L'$ that non-dangerously
matches~$L$,
in the following manner. Whenever $L$ applies a relation-thrifty step
$t = (a, \tau, \mybf{b})$ to the previous instance $L_2(I_1)$, then observe
that $L_2(I_1)$ is fact-saturated, because $I_1$ was fact-saturated and fact-thrifty chase
steps preserve fact-saturation,
by the Fact-Thrifty Preservation Lemma (Lemma~\ref{lem:usekrev}).
Hence, by that lemma,
instead of applying the relation-thrifty step described by~$t$, 
we can 
choose to apply a fact-thrifty step
on $a$ with $\tau$, defining the new fact using~$\mybf{b}$
except on the non-dangerous positions. By Lemma~\ref{lem:thsrw}, the
resulting $L'$ also ensures that $L'(I_1)$ satisfies $\ids$.

Considering now the fact-thrifty sequence $L'$, as $J_1$ is a fact-saturated $k$-essential
aligned superinstance of~$I_0$ (for $\idsb$ and $\ufds$), 
letting $I_\f \defeq L'(I_1)$,
we can use the Fact-Thrifty Preservation Lemma (Lemma~\ref{lem:usekrev})
to define an aligned fact-saturated superinstance $J_\f = (I_\f, \cov_\f)$ (for
$\idsb$ and $\ufds$),
following each fact-thrifty step, and we have shown that
$I_\f$ satisfies~$\ids$. Hence, we have proven the
Reversible Fact-Thrifty Completion Proposition (Proposition~\ref{prp:ftcompr}).

To prove Theorem~\ref{thm:ksound},
we can simply apply the proposition with $\idsb = \idsr$, and the resulting aligned
superinstance $J_\f = (I_\f, \cov_\f)$ of $I_0$ satisfies $\ids$ and is $k$-sound for $\ucon$ and $\acq$. Further,
it satisfies $\ufds$ and is finite, by definition of being an aligned
superinstance. Hence, $I_\f$ is the desired $k$-universal model, which proves
Theorem~\ref{thm:ksound}.

\subsection{$\uid$ Chase Similarity Theorem}
\label{sec:similarity}

We conclude the \secname by proving the $\uid$ Chase Similarity Theorem:

\similarityb*

Note that this result does not involve $\fd$s, and applies to any arbitrary transitively closed set of~$\uid$s,  not relying
on any finite closure properties, or on the reversibility assumption.
It only
assumes that the last $n$ dependencies used to create~$a$ and~$b$ were
reversible.

\begin{example}
  \label{exa:similarity}
  Consider Figure~\ref{fig:chase} on page~\pageref{fig:chase}, which illustrates
  the neighborhood of two elements, $a$ and $a'$, in the $\uid$ chase by some
  $\uid$s. Each rectangle represents a higher-arity fact, and edges
  represent the $\uid$s used in the chase, with thick edges representing reversible
  $\uid$s.

  The last $\uid$ applied to create $a$ was $\ui{S^2}{R^1}$, and the last $\uid$
  for $a'$ is $\ui{V^1}{U^1}$; they are reversible. Further, $a$ is introduced
  at position $R^3$ and $a'$ at position $U^3$, and $\ui{R^3}{U^3}$ holds and is
  reversible. The theorem claims that $a$ that $a'$ are $1$-bounded-bisimilar,
  which is easily verified; in fact, they are $2$-bounded-bisimilar. This is
  intuitively because all child facts of the $R$-fact at the left must occur at
  the right by definition of the $\uid$ chase, and the parent fact must occur as
  well because of the reverse of the last $\uid$ for~$a$; a similar argument
  ensures that the facts at the right must be reflected at the left.
  
  However, note that $a$ and $a'$ are not $3$-bounded-bisimilar: the $A$-fact at
  the left is not reflected at the right, and vice-versa for the $B$-fact,
  because these $\uid$s are not reversible,
\end{example}

\medskip

To prove the theorem,
fix the instance $I_0$ and the set $\idsb$ of~$\uid$s. 
We first show the following easy lemma:

\begin{lemma}
  \label{lem:samepos}
For any $n > 0$ and position $R^p$, for any two elements $a, b$ of
$\chase{I_0}{\idsb}$ introduced at position $R^p$ in two facts $F_a$ and $F_b$,
letting $a'$ and $b'$ be the exported elements of~$F_a$ and $F_b$, if $a'
  \bbsim_{n-1} b'$, then $a \bbsim_n b$.
\end{lemma}

\begin{proof}
  We proceed by induction on~$n$.
  By symmetry, it suffices to show that $(\chase{I_0}{\idsb}, a) \bsim_n
  (\chase{I_0}{\idsb}, b)$.

  For the base case  $n=1$, 
  observe that, for every fact $F$ of
  $\chase{I_0}{\idsb}$ where $a$ occurs at some position $S^q$,
  there are two cases. Either $F = F_a$, so
  we can pick $F_b$ as the representative fact, or the $\uid$ $\ui{R^p}{S^q}$
  is in $\idsb$ so we can pick a corresponding fact for $b$ by definition of
  the chase. Hence, the claim is shown for $n =1$.

  For the induction step, we proceed in the same way. If $F = F_a$, we pick
  $F_b$ as representative fact, and use either the hypothesis on $a'$ and $b'$ or the induction
  hypothesis (for other elements of~$F_a$ and $F_b$) to justify that $F_b$ is
  suitable. Otherwise, we pick the corresponding fact for $b$ which must
  exist by definition of the chase, and apply the induction hypothesis to the
  other elements of the fact to conclude.
\end{proof}

We now prove the $\uid$ Chase Similarity Theorem (Theorem~\ref{thm:similarityb}). Throughout the proof, we write
$R^p \eqids S^q$ as shorthand to mean that $\ui{R^p}{S^q}$ and $\ui{S^q}{R^p}$
are in~$\idsb$: it is still the case that $\eqids$ is an equivalence relation,
even without the reversibility assumption.

We prove the main claim by induction on~$n$:
for any positions $R^p$ and $S^q$ such that $R^p \eqrev
S^q$, for any two $n$-essential elements $a$ and $b$ respectively introduced at
positions $R^p$ and $S^q$, we have $a \bbsim_n b$.
By symmetry it suffices to show that
$a \bsim_n b$ in $\chase{I_0}{\idsb}$, formally, $(\chase{I_0}{\idsb}, a)
\bsim_n (\chase{I_0}{\idsb}, b)$. 

The base case of~$n = 0$ is immediate.

For the induction step, fix $n > 0$, and assume that the result holds
for $n-1$. Fix $R^p$ and $S^q$ such that $R^p \eqids S^q$, and let $a,
b$ be two $n$-essential elements introduced respectively at $R^p$ and $S^q$ in
facts $F_a$ and $F_b$.
Note that by the induction hypothesis we already
know that $a \bsim_{n-1} b$ in $\chase{I_0}{\idsb}$; we must
show that this holds for~$n$.

First, observe that, as~$a$ and~$b$ are $n$-essential with $n > 0$, they are not
elements of~$I_0$. Hence, by definition of the chase, for each one of them, the
following is true: for each fact of the chase where the element occurs, it only
occurs at one position, and all other elements co-occurring with it in a fact of the chase
occur only at one position and in exactly one of these facts.
Thus, to prove the claim,
it suffices to construct a mapping $\phi$ from the set $N_1(a)$ of the
facts of~$\chase{I_0}{\idsb}$ where $a$ occurs, to the set $N_1(b)$ of the facts
where $b$ occurs, such that the following holds:
for every fact $F = T(\mybf{a})$ of~$N_1(a)$, letting $T^c$ be the one position of
$F$ such that $a_c = a$, the element $b$ occurs at position $T^c$ in $\phi(F) = T(\mybf{b})$, and for every
$i$, we have $a_i \bsim_{n-1} b_i$.

By construction of the chase (using the
Unique Witness Property), $N_1(a)$ consists of exactly the following facts:

\begin{itemize}
  \item The fact $F_a = R(\mybf{a})$, where $a_d = a'$ is the
    exported element (for a certain $R^d \neq R^p$) and
    $a_p = a$ was introduced at $R^p$ in $F_a$. Further, 
    for $i \notin \{p, d\}$, the element $a_i$ was
    introduced at $R^i$ in $F_a$.
  \item For every $\uid$ $\tau : \ui{R^p}{V^g}$ of~$\idsb$, a $V$-fact $F^{\tau}_a$
    where the element at position $V^g$ is~$a$. Further, for $i \neq g$, the
    element at position $V^i$ in $F^{\tau}_a$ was introduced at this position in that fact.
\end{itemize}
A similar characterization holds for $b$: we write the corresponding facts $F_b$
and~$F^{\tau}_b$.
We construct the mapping $\phi$ as follows:
\begin{itemize}
  \item If $R^p = S^q$ then set $\phi(F_a) \defeq F_b$; otherwise, as $\tau : \ui{S^q}{R^p}$
    is in $\idsb$, set $\phi(F_a) \defeq F^{\tau}_b$.
  \item For every $\uid$ $\tau : \ui{R^p}{V^g}$ of~$\idsb$, as $R^p \eqids S^q$,
    by transitivity,
    either $S^q = V^g$ or the $\uid$ $\tau': \ui{S^q}{V^g}$ is in $\idsb$. 
    In the first case, set $\phi(F^{\tau}_a) \defeq F_b$.
    In the second case, set $\phi(F^{\tau}_a) \defeq F^{\tau'}_b$.
\end{itemize}
We must now show that this mapping $\phi$ from $N_1(a)$ to $N_1(b)$ 
satisfies the required conditions. 
Verify that indeed, by construction, whenever $a$ occurs at position $T^c$ in
$F$, then $\phi(F)$ is a $T$-fact where $b$ occurs
at position $T^c$. So we must show that for any $F \in N_1(a)$, writing $F =
T(\mybf{a})$ and $\phi(F) = T(\mybf{b})$, with $a_c = a$ and $b_c = b$ for
some $c$, we have indeed $a_i \bsim_{n-1} b_i$ for all $T^i \in \pos(T)$.
If $n = 1$ there is nothing to show and we are done, so we assume $n \geq 2$.
If $i = c$ then the claim is immediate by the induction hypothesis; otherwise, we distinguish two cases:
\begin{enumerate}
  \item  $F = F_a$ (so that $T = R$ and $c = p$), or $F = F_a^{\tau}$ such that the $\uid$
    $\tau : \ui{R^p}{T^c}$ is
    reversible, meaning that $\tau^{-1} \in \idsb$. In this case, by construction,
    either $\phi(F) = F_b$ or $\phi(F) = F^{\tau'}_b$ for $\tau' :
    \ui{S^q}{T^c}$; $\tau'$ is then reversible,
    because $R^p \eqids S^q$ and $R^p \eqids T^c$.

    We show that for all $1 \leq i \leq \arity{T}$ such that $i \neq c$, the
    element $a_i$ is
    $(n-1)$-essential and was introduced in~$\chase{I_0}{\idsb}$ at a position
    in the $\eqids$-class of~$T^i$. Once we have proved this, we can
    show the same for all~$b_i$ in a symmetric way, so that we can conclude that $a_i \bsim_{n-1}
    b_i$ by induction hypothesis.
    To see why the claim holds, we distinguish two subcases.
    Either $a_i$ was introduced in~$F$, or we have $F = F_a$, $i = d$
    and $a_i$ is the exported element for~$a$.

    In the first subcase, $a_i$ was created by applying the reversible $\uid$
    $\tau$ and the exported element was~$a$, which is $n$-essential, so $a_i$ is
    $(n-1)$-essential (in fact it is even $(n+1)$-essential), and $a_i$ is introduced
    at position~$T^i$.
    
    In the second subcase, $a_i$ is the exported element used to
    create~$a$, which is $n$-essential, so $a_i$ is $(n-1)$-essential; and as
    $n \geq 2$, the last dependency applied to create $a_i$ is reversible, so
    that $a_i$ was introduced at a position in the same $\eqids$-class as~$T^i$.

    Hence, we have proved the desired claim for the first case. 

  \item $F = F_a^{\tau}$ such that $\tau : \ui{R^p}{T^c}$ is not reversible. In this
    case, we cannot have $T^c = S^q$ (because we have $R^p \eqids S^q$),
    so we must have $\phi(F) = F_b^{\tau'}$ with $\tau' : \ui{S^q}{T^c}$.
    Now, all $a_i$ for $i \neq c$ were introduced in~$F$ at position $T^i$,
    and likewise for the~$b_i$ in~$\phi(F)$. 
    Using Lemma~\ref{lem:samepos}, as $a \bbsim_{n-1} b$, we conclude
    that $a_i \bbsim_{n} b_i$, hence $a_i \bsim_{n-1} b$.
\end{enumerate}
This concludes the proof of 
the $\uid$ Chase Similarity Theorem (Theorem~\ref{thm:similarityb}), 
thus completing the proof of Theorem~\ref{thm:ksound}.

\section{Decomposing the Constraints}
\label{sec:manyscc}
In this \secname, we lift the reversibility assumption, proving:

\begin{maintheorem}
  \label{thm:manyscc}
  Finitely-closed $\uid$s and $\ufd$s have finite $k$-universal models
  for $\acq$s.
\end{maintheorem}

\subsection{Partitioning the $\uid$s}

We write the $\ufd$s as $\ufds$ and the $\uid$s as $\idsb$.
We will proceed by partitioning $\idsb$
into subsets of $\uid$s which are either reversible or
are much simpler to deal with.

Our desired notion of partition respects an order on $\uid$, which we now
define. As we will show (Lemma~\ref{lem:sccbyscc}),
the order is also respected by thrifty chase steps.

\begin{definition}
  \label{def:opartition}
  For any $\tau, \tau' \in \idsb$,
  we write $\tau \idprec \tau'$ when we can write $\tau: \ui{R^p}{S^q}$ and
  $\tau':
  \ui{S^r}{T^v}$ with $S^q \neq S^r$, and the $\ufd$ $S^r \rightarrow S^q$ is in
  $\ufds$.
  An \deft{ordered partition} $(P_1, \ldots, P_{\neqidsc})$ of~$\idsb$ is a
  partition of $\idsb$ (i.e., $\idsb = \bigsqcup_i P_i$)
  such that for any $\tau \in P_i$, $\tau' \in P_j$, if
  $\tau \idprec \tau'$ then $i \leq j$.
\end{definition}

The point of partitioning $\idsb$ is to be able to control the
structure of the $\uid$s in each class:

\begin{definition}
  \label{def:manageable}
  We call $P \subseteq \idsb$ \deft{reversible} if $P$ and $\ufds$ are
  reversible (Definition~\ref{def:reversible}).
  We say $P
  \subseteq \idsb$ is \deft{trivial} if we have $P = \{\tau\}$ for some $\tau
  \in \idsb$ such that $\tau \not\idprec \tau$. A partition is
  \deft{manageable} if it is ordered and all of its classes are either reversible or trivial.
\end{definition}

As we will show in Section~\ref{sec:manageable},
we can always construct a manageable partition of~$\idsb$:

\begin{proposition}
  \label{prp:nontrivconv}
  Any conjunction $\idsb$ of~$\uid$s closed under finite implication has a
  manageable partition.
\end{proposition}

\begin{figure}
  \centering
  \begin{tikzpicture}[
  text height=1.3ex,text depth=0ex,xscale=1.5,yscale=1.75,
  mispos/.append style={red, font=\scriptsize},
  newfact/.append style={red, dashed, -{>[scale=1.5]}},
  sccp/.append style={pattern=north west lines, pattern
  color=darkgreen,opacity=.2},
  scc/.append style={darkgreen}
]
  \node (r1) at (0, 0) {1};
  \node (r2) at (1, 0) {2};
  \node (r3) at (2, 0) {3};

  \node (s1) at (4, 0) {1};
  \node (s2) at (5, 0) {2};
  \node (s3) at (6, 0) {3};

  \path (r1) edge[<->,bend left=25] (r2);
  \path (s2) edge[<->,bend left=25] (s3);
  \path (r3) edge[->,bend left=12.5] (s1);
  \path (r1) edge[newfact,<->,bend right=45] (r2);
  \path (s2) edge[newfact,<->,bend right=45] (s3);
  \path (r1) edge[newfact,bend left=40] (r3);
  \path (r2) edge[newfact,bend left=40] (r3);
  \path (s1) edge[newfact,bend left=40] (s2);
  \path (s1) edge[newfact,bend left=40] (s3);

  \node (R) at (-.5,0) {$R$};
  \draw (-.3, -.3) rectangle (2.3, .3);

  \node (S) at (6.5,0) {$S$};
  \draw (3.7, -.3) rectangle (6.3, .3);

  \draw[scc] (-.2, -.2) rectangle (1.2, .2);
  \draw[sccp] (-.2, -.2) rectangle (1.2, .2);

  \draw[scc] (4.8, -.2) rectangle (6.2, .2);
  \draw[sccp] (4.8, -.2) rectangle (6.2, .2);

  \draw[scc] (1.8, -.2) rectangle (4.2, .2);
  \draw[sccp] (1.8, -.2) rectangle (4.2, .2);
\end{tikzpicture}
  \caption{Manageable partition (see Example~\ref{exa:scc})}
  \label{fig:scc}
\end{figure}
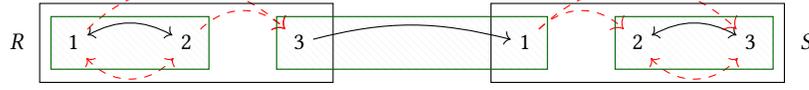

\begin{example}
  \label{exa:scc}
  Consider two ternary relations $R$ and $S$.
  Consider the $\uid$s $\tau_R: \ui{R^1}{R^2}$,
  $\tau_S: \ui{S^2}{S^3}$, $\tau_{RS}: \ui{R^3}{S^1}$, and the $\ufd$s
  $\phi_R: R^1 \rightarrow R^2$,
  $\phi_S: S^2 \rightarrow S^3$,
  $\phi_R': R^3 \rightarrow R^1$, and
  $\phi_S' : S^3 \rightarrow S^1$.
  The $\uid$s $\tau_R^{-1}$ and $\tau_S^{-1}$, and the $\ufd$s
  $\phi_R^{-1}$, $\phi_S^{-1}$, and $R^3 \rightarrow R^2$, $S^2 \rightarrow
  S^1$, are finitely implied. The two relations $R$ and~$S$ are illustrated in
  Figure~\ref{fig:scc}, where $\uid$s are drawn as solid black edges, and
  $\ufd$s as dashed red edges that are \emph{reversed} (this follows
  Definition~\ref{def:constraint}).

  A manageable partition of the $\uid$s of the finite closure is $(\{\tau_R,
  \tau_R^{-1}\}, \{\tau_{RS}\}, \{\tau_S, \tau_S^{-1}\})$, where the first and
  third classes are reversible and the second is trivial. The classes of the
  partition are drawn as green hatched rectangles in Figure~\ref{fig:scc}; they are intuitively related to
  a topological sort of the graph of the black and red edges (see
  Definition~\ref{def:partfromsort}).
\end{example}

\subsection{Using Manageable Partitions}

Fix the instance $I_0$ and the finitely closed constraints $\uconb$ formed of
$\uid$s $\idsb$ and $\ufd$s $\ufds$.
To prove Theorem~\ref{thm:manyscc}, starting with the initial aligned
superinstance $J_0 = (I_0, \id)$ of $I_0$ (for $\idsb$ and $\ufds$), we first note that 
the Fact-Saturated Solutions Lemma (Lemma~\ref{lem:nondetexhaust})
does not use the reversibility assumption. Hence, we apply it (with~$\idsb$) to obtain from $I_0$ an
aligned fact-saturated superinstance $J_1$ of~$I_0$ (for~$\ufds$ and~$\idsb$).
This is the \deft{saturation process}.

The goal is now to apply a \deft{completion process} to satisfy $\idsb$, which
we formalize as the following proposition. Recall the definition of thrifty
sequences (Definition~\ref{def:tsec}). We refine the definition below.

\begin{definition}
  \label{def:preserving}
  We define a  \deft{preserving
  fact-thrifty sequence}  $L$
  (for $\uid$s $\idsb$ and $\ufd$s $\ufds$)
  on any
  \emph{fact-saturated aligned superinstance} $J$ of~$I_0$
  in the following inductive way,
  with its \deft{output} $L(J)$ also being a fact-saturated aligned
  superinstance:
  \begin{itemize}
    \item The empty list $L = ()$ is a preserving fact-thrifty sequence, with
      output $L(J) \defeq J$.
    \item Let $L$ be the concatenation of a preserving fact-thrifty sequence $L'$ and a
      triple $t = (a, \tau, \mybf{b})$. Let $J' \defeq L'(J)$ be the output of $L'$.
      We call $L$ \deft{preserving}, 
      if one of the following holds:
      \begin{itemize}
        \item $t$ is a fresh fact-thrifty chase step. 
          In this case,
          by the Fresh Fact-Thrifty Preservation Lemma (Lemma~\ref{lem:fftp}),
          $J'$ is indeed a fact-saturated aligned
          superinstance of~$I_0$. 
        \item $J'$ is $k$-essential for some subset $\idsr$ of
          $\idsb$ such that $\tau \in \idsr$ and $\idsr$ and $\ufds$ are
          reversible.
          In this case, 
          by the Fact-Thrifty Preservation Lemma (Lemma~\ref{lem:usekrev}),
          $J$ is a fact-saturated aligned superinstance
          of~$I_0$ (which is also $k$-essential for the same subset).
      \end{itemize}
      In either case, the output of $L$ is the aligned superinstance obtained
      as the result of applying the fact-thrifty chase step 
      represented by~$t$ on~$J'$. \qedef
  \end{itemize}
\end{definition}

We can now state our intended result, which implies Theorem~\ref{thm:manyscc}:
  
\begin{proposition}[Fact-thrifty completion]
  \label{prp:ftcomp}
  Let $\uconb = \ufds \sqcup \idsb$ be finitely closed $\ufd$s and $\uid$s,
  and  $I_0$ be an instance  satisfying $\ufd$s.
  For any fact-saturated aligned superinstance $J$ of $I_0$ for $\uconb$,
  there is a preserving fact-thrifty sequence $L$ such that $L(J)$ satisfies~$\idsb$.
\end{proposition}

We prove Proposition~\ref{prp:ftcomp}, and from it Theorem~\ref{thm:manyscc}, in
the rest of the \subsecname.
We construct a manageable partition $\mybf{P} = (P_1, \ldots, P_{\neqidsc})$ of~$\idsb$
using Proposition~\ref{prp:nontrivconv}.
Now, for $1 \leq i \leq \neqidsc$, we use fact-thrifty chase steps by $\uid$s
of~$P_i$ to
extend the
fact-saturated aligned superinstance $J_i$ to a larger one~$J_{i+1}$
that satisfies~$P_i$.

The crucial point is that we can apply fact-thrifty chase steps to satisfy~$P_i$
without creating any new violations of~$P_j$ for $j < i$, and hence 
we can make progress following the partition. The reason for this is the following
easy fact about thrifty chase steps:

\begin{lemma}
  \label{lem:sccbyscc}
  Let $J$ be an aligned superinstance of~$I_0$ and $J'$ be the result of
  applying a thrifty chase step on $J$ for a $\uid$ $\tau$ of\/~$\idsb$.
  Assume that a $\uid$ $\tau'$ of\/~$\idsb$ was satisfied by $J$
  but is not satisfied by $J'$. Then $\tau \idprec \tau'$.
\end{lemma}

\begin{proof}
Fix $J$, $J'$, $\tau : \ui{R^p}{S^q}$ and $\tau'$. As chase steps add a
single fact, the only new $\uid$ violations in $J'$ relative to $I$ are on
  elements in the newly created fact $F_{\fnew} = S(\mybf{b})$, As $\idsb$ is
  transitively closed, $F_{\fnew}$ can introduce no new violation on the exported element
$b_q$. Now, as thrifty chase steps
always reuse existing elements at
non-dangerous positions, we know that if $S^r \in \nondanger(S^q)$ then
no new $\uid$ can be applicable to $b_r$. Hence, if a new $\uid$ is applicable
to $b_r$ for $S^r \in \pos(S)$, then necessarily $S^r \in \danger(S^q)$. By
definition of dangerous positions, the $\ufd$ $S^r \rightarrow S^q$ is then in
$\ufds$, and we have $S^r \neq S^q$.
Hence, writing $\tau' : \ui{S^r}{T^r}$, we see that $\tau \idprec \tau'$.
\end{proof}

The lemma justifies our definition of ordered partition, since it will allow us to 
prove Proposition~\ref{prp:ftcomp} inductively.
Using
 the fact that $\mybf{P}$ is
ordered ensures that we can indeed apply fact-thrifty chase steps to satisfy
each $P_i$ individually, dealing with them in the order of the partition.

Thus, to prove Proposition~\ref{prp:ftcomp}, consider each
class $P_i$ in order.
As $\mybf{P}$ is manageable, there are two cases: either $P_i$ is
trivial or it is reversible.

First, if $P_i$ is trivial, it can simply be satisfied by a preserving fact-thrifty sequence $L_i$ of
fresh fact-thrifty chase steps using the one $\uid$ of~$P_i$. This follows from
Lemma~\ref{lem:sccbyscc}.

\begin{lemma}
  \label{lem:trivscc}
  For any trivial class $\{\tau\}$, 
  performing one chase round on an aligned fact-saturated superinstance $J$ of~$I_0$ by
  fresh fact-thrifty chase steps
  for $\tau$ yields an aligned superinstance $J'$ of~$I_0$ that satisfies~$\tau$.
\end{lemma}

\begin{proof}
Fix $J$, $J'$ and $\tau$. All violations of~$\tau$ in $J$ have been satisfied in
$J'$ by definition of~$J'$, so we only have to show that no new violations of
$\tau$ were introduced in $J'$. But by
Lemma~\ref{lem:sccbyscc}, as $\tau \not\idprec \tau$, each fresh fact-thrifty chase
step cannot introduce such a violation, hence there is no new violation of
$\tau$ in $J'$. Hence, $J' \models \tau$.
\end{proof}

Second,
returning to the proof of Proposition~\ref{prp:ftcomp},
the interesting case is that of a \emph{reversible} $P_i$,
for which we have done the work of the last three \secnames. We satisfy a reversible
$P_i$ by a preserving fact-thrifty sequence~$L_i$
obtained using the
Reversible Fact-Thrifty Completion Proposition (Proposition~\ref{prp:ftcompr}).
Indeed, $J_i$ is a
fact-saturated aligned superinstance of $I_0$ for $\ufds$ and $\idsb$, and by
definition of $P_i$ being reversible, letting $\idsr \defeq P_i$, the constraints
$\ufds$ and $\idsr$ are reversible. By the
Reversible Fact-Thrifty Completion Proposition, we can thus construct a
fact-thrifty sequence~$L_i$ (by $\uid$s of~$\idsr$)
such that $J_{i+1} \defeq L_i(J_i)$ is a
fact-saturated 
aligned
superinstance of~$I_0$ for $\ufds$ and $\idsb$ that
satisfies $P_i$. Further, from the proof, it is clear that $L_i$ is preserving.

Hence, in either of the two cases, we construct a preserving fact-thrifty sequence~$L_i$
and $J_{i+1} \defeq L_i(J_i)$ satisfies
$P_i$. Further, as $L_i$ only performs fact-thrifty chase steps by $\uid$s of~$P_i$,
$J_{i+1}$ actually satisfies $\bigcup_{j \leq i} P_j$, thanks
to Lemma~\ref{lem:sccbyscc}.

The concatenation of the preserving fact-thrifty sequences $L_i$ for each $P_i$
is
thus a preserving fact-thrifty sequence $L$ whose final result $L(J) = J_{n+1}$ is thus an aligned superinstance of
$I_0$ that satisfies $\idsb$, which proves 
the Fact-Thrifty Completion Proposition (Proposition~\ref{prp:ftcomp}).
As an aligned superinstance, $J_{n+1}$
is also finite, satisfies $\ufds$, and is $k$-sound for $\acq$; so it is
$k$-universal for $\uconb$ and $\acq$. This concludes the proof of
Theorem~\ref{thm:manyscc}.

\subsection{Building Manageable Partitions}
\label{sec:manageable}

The only missing part is to show how manageable partitions are constructed
(Proposition~\ref{prp:nontrivconv}),
which we show  in this \subsecname.
We will construct the
manageable partition using a \defo{constraint graph} defined
from the dependencies, inspired by the multigraph used in the proof of 
Theorem~\ref{thm:maincosm} in \cite{cosm}.

\begin{definition}
  \label{def:constraint}
  Given a set $\uconb$ of finitely closed $\uid$s and $\ufd$s
  on signature~$\sigma$,
  the \deft{constraint graph} $G(\uconb)$ is the directed graph
  with vertex set $\positions(\sigma)$ and with the following edges:
  \begin{itemize}
    \item For each $\uid$ $\ui{R^p}{S^q}$ in $\uconb$, an edge from $R^p$ to $S^q$
    \item For each $\ufd$ $R^a \rightarrow R^b$ in $\uconb$,
      an edge from $R^b$ to $R^a$.
  \end{itemize}
  As we forbid trivial $\uid$s and $\ufd$s, $G(\uconb)$ has no self-loop, but it
  may contain both the edge $(R^p, S^q)$ and $(S^q, R^p)$. However, we do not
  represent multiple edges in $G(\uconb)$: for instance, if the $\uid$
  $\ui{R^a}{R^b}$ and the $\ufd$ $R^b \rightarrow R^a$ are in $G(\uconb)$, we
  only create a single copy of the edge $(R^a, R^b)$.
\end{definition}

Hence, fix the finitely closed $\uid$s and $\ufd$s $\uconb \defeq \idsb \wedge
\ufds$, and construct
the graph $G(\uconb)$. As observed by \cite{cosm}, the graph
$G(\uconb)$ has the following property, which will be needed to show that classes
are reversible:

\begin{lemma}
  \label{lem:cyclereverse}
  For any edge $e$ occurring in a cycle in $G(\uconb)$, for any dependency $\tau$
  which caused the creation of $e$, the reverse $\tau^{-1}$ of $\tau$ is
  in~$\uconb$.
\end{lemma}

\begin{proof}
  Let $e_1$ be the edge, and $e_1, \ldots, e_n$ be the cycle (the first vertex
  of $e_1$ is the second vertex of $e_n$), and let $\tau$ be the dependency.
  Consider a cycle of dependencies $\tau_1, \ldots, \tau_n$, with $\tau_1 =
  \tau$, such that each $\tau_i$ caused the creation of edge $e_i$ in
  $G(\uconb)$. We must show that the reverse
  $\tau^{-1}$ of $\tau$ is in $\uconb$.

  If all the $\tau_i$ are $\uid$s, then, as $\idsb$ is closed under the
  $\uid$ transitivity rule, we apply it to $\tau_2, \ldots, \tau_n$
  and deduce that $\tau_1^{-1}$ is in $\idsb$. Likewise, if all the
  $\tau_i$ are $\ufd$s, then we proceed in the same way because $\ufds$ is
  closed under the $\ufd$ transitivity rule.

  If the $\tau_i$ are of alternating types (alternatively $\uid$s and $\ufd$s),
  then, recalling that $\uconb$ is closed under the \emph{cycle rule} (see
  Section~\ref{sec:implication})
  we deduce that $\tau_i^{-1}$ is in $\uconb$ for all $i$.

  In the general case, 
  consider the maximal subsequence $\tau_j, \ldots, \tau_n, \tau_1, \ldots,
  \tau_i$ ($i < j$) of consecutive dependencies in the cycle
  that includes $\tau$ and where all dependencies are of the same type. Let
  $\tau_{\mathrm{m}}$ be
  the result of combining these dependencies by the $\uid$ or $\ufd$
  transitivity rule (depending on whether they are $\uid$s or $\ufd$s),
  and consider the cycle $\tau_{\mathrm{m}}, \tau_{i+1}, \ldots, \tau_n, \tau_1,
  \ldots, \tau_{j-1}$. Collapsing all other consecutive sequences of
  dependencies to a single dependency using the $\uid$ and $\ufd$ transitivity
  rules, and applying the
  cycle rule as in the previous case,
  we deduce
  that $\tau_{\mathrm{m}}^{-1}$ is in~$\uconb$. Hence, the cycle $\tau_j,
  \ldots, \tau_n, \tau_1, \ldots, \tau_i, \tau_{\mathrm{m}}^{-1}$ is a cycle of
  dependencies of the same type as~$\tau$, and it includes~$\tau$, so we
  conclude as in the first two cases that $\tau^{-1}$ is in $\uconb$.

  Hence, in all cases $\tau^{-1}$ is in $\uconb$. This concludes the proof.
\end{proof}

Compute the strongly connected components of~$G(\uconb)$, ordered following a
topological sort: we label them $V_1, \ldots, V_n$.
The order of the $V_i$ guarantees that there are no edges in $G(\uconb)$
from $V_i$ to $V_j$ unless $i \leq j$.

We will build each class of the manageable partition, either as the set of
$\uid$s within the positions of an SCC (a \emph{reversible} class), or as a
singleton $\uid$ going from a class $V_i$ to a class $V_j$ with $j > i$ (a
\emph{trivial} class). Formally:

\begin{definition}
  \label{def:partfromsort}
  The topological sort of the SCCs of $G(\uconb)$, written $V_1, \ldots, V_n$, defines a
  partition $\mybf{P}$ of the $\uid$s of $\idsb$, in the following manner. For each $V_i$,
  if there are any non-trivial $\uid$s of the form $\ui{R^p}{S^q}$ with $R^p, S^q \in V_i$,
  create a class of $\uid$s (the \emph{main} class) containing all of
  them. Then, for each $\uid$ of the form $\ui{R^p}{S^q}$ with $R^p \in V_i$ and
  $S^q \in V_j$ with $j > i$, create a singleton class of $\uid$s containing
  exactly that $\uid$ (a \emph{satellite} class). The partition $\mybf{P}$ is obtained by
  taking the concatenation, for $i$ from $1$ to $n$, of the main class of~$V_i$ (if it exists)
  and then all satellite classes of~$V_i$ (if any) in an arbitrary order.
\end{definition}

Remember that, while the constraint graph reflects both the $\uid$s and the
$\ufd$s, the partition $\mybf{P}$ that we define is a partition of~$\idsb$, that
is, a partition of $\uid$s, and does not contain $\ufd$s. We first show that $\mybf{P}$
is indeed a partition, and then that it is an ordered partition.

\begin{lemma}
  $\mybf{P}$ is indeed a partition of $\idsb$.
\end{lemma}

\begin{proof}
  As the SCCs of $G(\uconb)$ partition the vertex set of $G(\uconb)$,
it is clear by construction that any $\uid$ occurs in at most a single class of
the partition, which must be a class for the SCC of its first position, and
either a satellite class or the main class depending on the SCC of its second
position.

Conversely, each $\uid$ $\tau$ is reflected in some class of the partition, for the SCC
$V_i$ of its first position: either the second position of $\tau$ is also
in~$V_i$, so $\tau$
is in the main class for~$V_i$; or 
the second position of~$\tau$ is in an SCC $V_j$ with $i \neq j$, in
which case $i < j$ by definition of a topological sort, and $\tau$ is in some
satellite class for~$V_i$.
Hence, $\mybf{P}$ is indeed a partition of~$\ids$.
\end{proof}

\begin{lemma}
  $\mybf{P}$ is an ordered partition.
\end{lemma}

\begin{proof}
Assume by way of contradiction that there are two classes $P_i$ and $P_j$ and
$\tau \in P_i$, $\tau' \in P_j$, such that $\tau \idprec \tau'$ but $i > j$. Let
$V_p$ and $V_q$ be the SCCs in which $P_i$ and $P_j$ were created. We must have
$p \geq q$, so there are two
possibilities.

First, if $p =
q$, then the first positions of $\tau$ and $\tau'$ must both be in $V_p = V_q$,
and as $P_i$ is not the first class created for $V_p = V_q$, it must be a
satellite class.
Hence, the second position of $\tau$ is in another
SCC, say $V_r$, with $r > p$. Now, as $\tau \idprec \tau'$,
there is a $\ufd$ from the first position of $\tau'$ to the
second position of $\tau$, which implies that there is an edge from $V_r$ to
$V_p$ in $G(\uconb)$. As $r > p$, this contradicts the fact that the SCCs are ordered
following a topological sort.

Second, if we have $p > q$, then again the first position of $\tau$ must be in
$V_p$, and the first position of $\tau'$ is in $V_q$. Let $V_r$ be the SCC of
the second position of $\tau$. As $\tau \idprec \tau'$, the $\ufd$ from the first
position of $\tau'$ to the second position of $\tau$ witnesses that there is an
edge in $G(\uconb)$ from $V_r$ to $V_q$. Hence, we must have $r \leq q$. But $\tau$
witnesses that there is an edge from $p$ to $r$ in $G(\uconb)$, so that we must have $p
\leq r$. Hence, $p \leq q$, but we had assumed $p > q$, a contradiction.
\end{proof}

We now show that $\mybf{P}$ is manageable, by considering each class and showing that it
is either trivial or that it is reversible:

\begin{lemma}
  Each satellite class in $\mybf{P}$ is trivial.
\end{lemma}

\begin{proof}
  Each satellite class consists by construction of a singleton dependency
  $\tau: \ui{R^p}{S^q}$, implying the existence of an edge in the
  constraint graph $G(\uconb)$ from $R^p$ to $S^q$. Assume by way of contradiction that
  $\tau \idprec \tau$. This implies that $R^p \rightarrow S^q$ is in $\ufds$,
  so there is an edge in $G(\uconb)$ from $S^q$ to $R^p$. Hence, $\{R^p, S^q\}$ is
  strongly connected, so $R^p$ and $S^q$ belong to the same SCC, which
  contradicts the definition of a satellite class.
\end{proof}

\begin{lemma}
  Each main class in the partition is reversible.
\end{lemma}

\begin{proof}
  Let $P_i$ be the class and $V_i$ be the corresponding SCC.
  We first show that $P_i$ is transitively closed. Consider two $\uid$s
  $\tau$ and $\tau'$ of $P_i$ that would be combined by the transitivity rule to
  the $\uid$ $\tau''$. As $\idsb$ is transitively closed, we have $\tau'' \in
  \idsb$. Now, if both $\tau$ and $\tau'$  have both positions in $V_i$,
  then so does $\tau''$, so we also have $\tau'' \in P_i$.

  Second, to see that every $\uid$ $\tau$ in $P_i$ is reversible, consider a
  $\uid$ $\tau:
  \ui{R^p}{S^q}$ of~$P_i$, with $R^p, S^q \in V_i$. We forbid trivial $\uid$s, so $R^p
  \neq S^q$. As $V_i$ is strongly connected, consider a directed path $\pi$ of
  edges of $G(\uconb)$ from $S^q$ to $R^p$. Combining $\pi$ with the edge created in
  $G(\uconb)$ for the $\uid$ $\tau$, we deduce the existence of a cycle in
  $G(\uconb)$. Hence, by Lemma~\ref{lem:cyclereverse}, the $\uid$ $\tau^{-1}$
  is in $\idsb$, and it also has both positions in~$V_i$, so $\tau^{-1}$ is
  in~$P_i$.

  Third, we prove the claim about $\ufd$s. Consider a $\ufd$ $\phi: R^p \rightarrow
  R^q$ of $\ufds$, with $R^p \neq R^q$. Assume that $R^p$ and $R^q$ occur in a
  $\uid$ of $P_i$; thus  $R^p, R^q \in V_i$. Reasoning
  as before, we find a cycle in $G(\uconb)$ that includes the edge that
  corresponds to~$\phi$,
  and deduce that $\phi^{-1}$ is in $\ufds$.
\end{proof}

Hence, $\mybf{P}$ is an ordered partition
of~$\idsb$ where each class is either reversible or trivial, i.e., it is a
manageable partition. This concludes the proof of
Proposition~\ref{prp:nontrivconv}.

\section{Higher-Arity FD\lowercase{s}}
\label{sec:hfds}
The goal of this \secname is to generalize our results to functional dependencies
of arbitrary arity:

\begin{maintheorem}
  \label{thm:hfds}
  Finitely-closed $\uid$s and $\fd$s have finite universal models
  for $\acq$s.
\end{maintheorem}

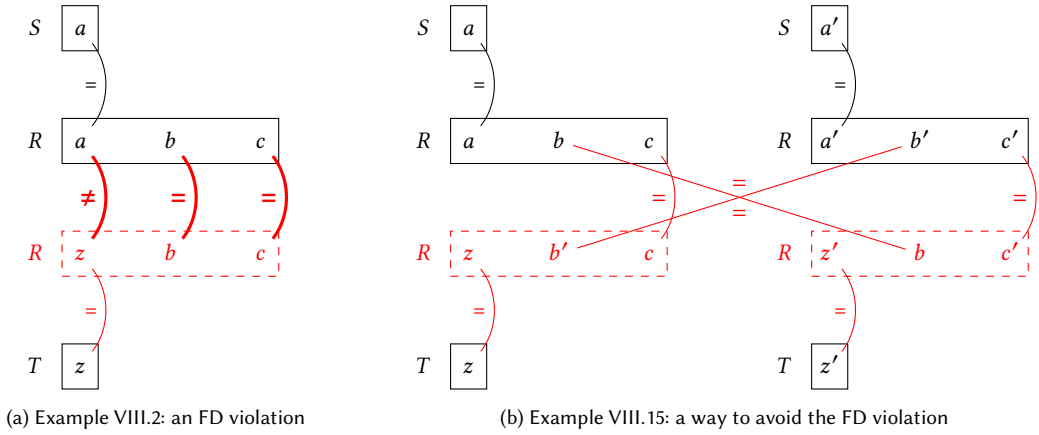
\begin{figure}
\noindent\begin{subfigure}[b]{.4\linewidth}
  \centering
  \begin{tikzpicture}[
  text height=1.3ex,text depth=0ex,xscale=1.2,yscale=1.5,
  mispos/.append style={red, font=\scriptsize},
  newfact/.append style={red, dashed, -{>[scale=1.5]}},
  sccp/.append style={pattern=north west lines, pattern
  color=darkgreen,opacity=.2},
  scc/.append style={darkgreen}
]
  \node (s1) at (0, 0) {$a$};
  \draw (-.2, -.2) rectangle (0.2, .2);
  \node (S) at (-.5,0) {$S$};

  \node (r1) at (0, -1) {$a$};
  \node (r2) at (1, -1) {$b$};
  \node (r3) at (2, -1) {$c$};
  \draw (-.2, -1.2) rectangle (2.2, -0.8);
  \node (R) at (-.5,-1) {$R$};

  \path (s1) edge[bend left=45] node[left] {=} (r1);

  \node (t1) at (0, -3) {$z$};
  \draw (-.2, -3.2) rectangle (0.2, -2.8);
  \node (T) at (-.5,-3) {$T$};

  \node[red] (rr1) at (0, -2) {$z$};
  \node[red] (rr2) at (1, -2) {$b$};
  \node[red] (rr3) at (2, -2) {$c$};
  \draw[red,dashed] (-.2, -2.2) rectangle (2.2, -1.8);
  \node[red] (RR) at (-.5,-2) {$R$};

  \path[red] (rr1) edge[bend left=45] node[left] {=} (t1);
  
  \path[red,very thick] (r1) edge[bend left=45] node[left] {$\bm{\neq}$} (rr1);
  \path[red,very thick] (r2) edge[bend left=45] node[left] {$\bm{=}$} (rr2);
  \path[red,very thick] (r3) edge[bend left=45] node[left] {$\bm{=}$} (rr3);
\end{tikzpicture}
  \caption{Example~\ref{exa:hfdviol}: an $\fd$ violation}
  \label{fig:hfds}
\end{subfigure}\hfill\begin{subfigure}[b]{.6\linewidth}
  \centering
  \begin{tikzpicture}[
  text height=1.3ex,text depth=0ex,xscale=1.2,yscale=1.5,
  mispos/.append style={red, font=\scriptsize},
  newfact/.append style={red, dashed, -{>[scale=1.5]}},
  sccp/.append style={pattern=north west lines, pattern
  color=darkgreen,opacity=.2},
  scc/.append style={darkgreen}
]
  \node (s1) at (0, 0) {$a$};
  \draw (-.2, -.2) rectangle (0.2, .2);
  \node (S) at (-.5,0) {$S$};

  \node (r1) at (0, -1) {$a$};
  \node (r2) at (1, -1) {$b$};
  \node (r3) at (2, -1) {$c$};
  \draw (-.2, -1.2) rectangle (2.2, -0.8);
  \node (R) at (-.5,-1) {$R$};

  \path (s1) edge[bend left=45] node[left] {=} (r1);

  \node (t1) at (0, -3) {$z$};
  \draw (-.2, -3.2) rectangle (0.2, -2.8);
  \node (T) at (-.5,-3) {$T$};

  \node[red] (rr1) at (0, -2) {$z$};
  \node[red] (rr2) at (1, -2) {$b'$};
  \node[red] (rr3) at (2, -2) {$c$};
  \draw[red,dashed] (-.2, -2.2) rectangle (2.2, -1.8);
  \node[red] (RR) at (-.5,-2) {$R$};

  \path[red] (rr1) edge[bend left=45] node[left] {=} (t1);
  
  \node (bs1) at (4, 0) {$a'$};
  \draw (3.8, -.2) rectangle (4.2, .2);
  \node (bS) at (3.5,0) {$S$};

  \node (br1) at (4, -1) {$a'$};
  \node (br2) at (5, -1) {$b'$};
  \node (br3) at (6, -1) {$c'$};
  \draw (3.8, -1.2) rectangle (6.2, -0.8);
  \node (bR) at (3.5,-1) {$R$};

  \path (bs1) edge[bend left=45] node[left] {=} (br1);

  \node (bt1) at (4, -3) {$z'$};
  \draw (3.8, -3.2) rectangle (4.2, -2.8);
  \node (bT) at (3.5,-3) {$T$};

  \node[red] (brr1) at (4, -2) {$z'$};
  \node[red] (brr2) at (5, -2) {$b$};
  \node[red] (brr3) at (6, -2) {$c'$};
  \draw[red,dashed] (3.8, -2.2) rectangle (6.2, -1.8);
  \node[red] (bRR) at (3.5,-2) {$R$};

  \path[red] (brr1) edge[bend left=45] node[left] {=} (bt1);
  
  \path[red] (r3) edge[bend left=45] node[left] {$=$} (rr3);
  \path[red] (br3) edge[bend left=45] node[left] {$=$} (brr3);
  
  \path[red] (r2) edge[] node[above] {$=$} (brr2);
  \path[red] (br2) edge[] node[below] {$=$} (rr2);
\end{tikzpicture}
  \caption{Example~\ref{exa:hfdsol}: a way to avoid the $\fd$ violation}
  \label{fig:hfds2}
\end{subfigure}
\caption{Example of a higher-arity FD violation in our process, and the proposed solution}
\label{fig:hfdsall}
\end{figure}

We fix the finitely-closed constraints $\con \defeq \fds
\fragcup \idsb$, consisting of arbitrary-arity $\fd$s $\fds$ and $\uid$s $\idsb$. We
denote by $\ufds$ the \emph{unary} $\fd$s among $\fds$, and write $\uconb \defeq
\ufds \fragcup \idsb$.
From the
definition of the finite closure (Section~\ref{sec:implication}), it is clear
that $\uconb$ is finitely closed as well, so the construction of the previous
\secnames applies to $\uconb$.

The problem to address in this \secname is that our completion process to satisfy $\idsb$ was defined with fact-thrifty
chase steps. These chase steps may reuse elements from the same facts at the same positions
multiple times. This may violate $\fds$, and it is in fact the only point
where we do so in the construction.

\begin{mainexample}
  \label{exa:hfdviol}
  For simplicity, we work with instances rather than aligned superinstances.
  Consider $I_0 \defeq \{S(a), T(z)\}$, the $\uid$s $\tau: \ui{S^1}{R^1}$ and $\tau':
  \ui{T^1}{R^1}$ for a $3$-ary relation $R$,
  and the $\fd$ $\phi: R^{2,3} \rightarrow R^1$.
  Consider $I \defeq I_0 \sqcup \{R(a, b, c)\}$ obtained by one chase step of~$\tau$
  on $S(a)$. Figure~\ref{fig:hfds} represents $I$ in solid black, using edges to
  highlight equalities between elements.

  We can perform a fact-thrifty chase step
  of $\tau'$ on $z$ to create $R(z, b, c)$, reusing $(b, c)$ at $\nondanger(R^1)
  = \{R^2, R^3\}$; this is illustrated in dashed red in Figure~\ref{fig:hfds}.
  However, the two $R$-facts would then be a violation of~$\phi$, as shown by the patterns
  of equalities and inequalities illustrated as thick red edges.
\end{mainexample}

The goal of this \secname is to define a new version of thrifty chase steps that
preserves $\fds$ rather than just $\ufds$; we call them \defo{envelope-thrifty
chase steps}. We first describe the new saturation process designed for
them, which is much more complex because we need to saturate \emph{sufficiently}
with respect to the completion process that we do next. To saturate, we use a separate combinatorial result, of possible independent
interest: Theorem~\ref{thm:combinatorial}, proved in Section~\ref{sec:combinatorial}. Second, we redefine the completion
process of the previous \secname for this new notion of chase step, and use this new completion process
to prove Theorem~\ref{thm:hfds}.

\subsection{Envelopes and Envelope-Saturation}

We start by defining our new notion of saturated instances.
Recall the notions of fact classes (Definition~\ref{def:factclass}) and thrifty 
chase steps (Definition~\ref{def:thrifty}).
When a fact-thrifty chase step creates a fact $F_{\fnew}$ whose chase witness
$F_\w$ has
fact class $(R^p, \mybf{C})$, we need elements to reuse in $F_{\fnew}$ at positions of
$\nondanger(R^p)$, which
need to already occur at the positions where we reuse them. Further,
the reused elements must have $\cov$-images of the right class.

Fact-thrifty chase steps reuse a tuple of elements from one fact $F_\rr$,
and thus apply to \emph{fact-saturated instances}, where each fact
class $D$ which is achieved in the chase is also achieved by some fact (recall
Definitions~\ref{def:factclass} and~\ref{def:factsat}).
Our new notion
of envelope-thrifty chase steps will consider \emph{multiple} tuples that
achieve each class $D$, that we call an \defo{envelope} for~$D$; with the
difference, however, that not all \emph{tuples} need to actually occur in an achiever
fact in the instance, though each \emph{individual} element needs to occur in some
achiever fact. Formally:

\begin{definition}
  \label{def:envelope}
  Consider $D = (R^p, \mybf{C})$ in $\afactcl$, and write $O \defeq
  \nondanger(R^p)$. Remember that $O$ is then non-empty.
  An \deft{envelope}~$E$ for~$D$ and for an aligned superinstance $J = (I, \cov)$ of~$I_0$
  is a non-empty set of $\card{O}$-tuples indexed by $O$, with domain
  $\dom(I)$, such that:
  \begin{enumerate}
  \item for every $\fd$ $\phi : R^L \rightarrow R^r$ of $\fds$ with $R^L \subseteq O$
    and $R^r \in O$,
    for any $\mybf{t}, \mybf{t}' \in E$, $\pi_{R^L}(\mybf{t}) =
    \pi_{R^L}(\mybf{t}')$ implies $t_r = t'_r$;
  \item for every $\fd$ $\phi : R^L \rightarrow R^r$ of $\fds$ with $R^L
    \subseteq O$ and $R^r \notin O$, for all $\mybf{t}, \mybf{t}' \in E$,
    $\pi_{R^L}(\mybf{t}) = \pi_{R^L}(\mybf{t}')$ implies $\mybf{t} = \mybf{t}'$;
    \item for every $a \in \dom(E)$, there is exactly one position $R^q \in
      O$ such
      that $a \in \pi_{R^q}(E)$, and then we also have $a \in \pi_{R^q}(J)$;
    \item for any fact $F = R(\mybf{a})$ of~$J$ and $R^q \in
      O$, if
      $a_q \in \pi_{R^q}(E)$, then $F$ achieves $D$ in~$J$
      and $\pi_{O}(\mybf{a}) \in E$. \qedef
  \end{enumerate}
\end{definition}

Intuitively, the tuples in the envelope~$E$ satisfy
the $\fd$s of~$\fds$ within~$\nondanger(R^p)$ (condition~1), and never overlap on positions
that determine a position out of $\nondanger(R^p)$ (condition~2). Further, their elements
already occur at the position where they will be reused, and we require for
simplicity that there is exactly one such position (condition~3).
Last, the elements have the
right $\cov$-image for the fact class~$D$, and for simplicity, whenever a fact reuses an
envelope element, we require that it reuses a whole envelope tuple (condition~4).

We then extend this definition across all achieved fact 
classes in the natural way:

\begin{definition}
  A \deft{global envelope} $\calE$ for an aligned superinstance $J = (I, \cov)$
  of~$I_0$ is a mapping from each $D \in \afactcl$
  to an envelope $\calE(D)$ for~$D$ and~$J$, such that the envelopes have
  pairwise disjoint domains.
\end{definition}

It is not difficult to see  that an aligned superinstance with a global envelope must
be fact-saturated, as for each $D \in \afactcl$, the envelope $\calE(D)$ is a non-empty
set of non-empty tuples, and any element of this tuple must occur in a fact that
achieves~$D$, by conditions~3 and~4.
However, the point of envelopes is that they can
contain more than a single tuple, so we have multiple choices of elements to
reuse.

For some fact classes $(R^p, \mybf{C})$ it is not useful for envelopes to contain more
than one tuple. This is the case if the position $R^p$
is \defo{safe}, meaning that no $\fd$ from positions in $O \defeq \nondanger(R^p)$
determines a position outside of~$O$. (Notice that by definition of
$\nondanger(R^p)$, such an $\fd$ could never be a $\ufd$.) Formally:

\begin{definition}  
  We call $R^p \in \pos(\sigma)$ \deft{safe} 
  if there is no $\fd$ $R^L
  \rightarrow R^r$ in~$\fds$ with $R^L \subseteq \nondanger(R^p)$ and $R^r
  \notin \nondanger(R^p)$. Otherwise, $R^p$ is \deft{unsafe}.
  
  We accordingly
  call a fact class $(R^p, \mybf{C}) \in \afactcl$ \deft{safe} or \deft{unsafe}
  depending on~$R^p$.
  Observe that the second condition of Definition~\ref{def:envelope} is trivial
  for envelopes on safe fact classes.
\end{definition}

It is not hard to see that when we apply a fact-thrifty (or even
relation-thrifty) chase step, and the exported position of the new fact is safe, 
then the problem illustrated by
Example~\ref{exa:hfdviol} cannot arise. In fact, one could show that
fact-thrifty or relation-thrifty chase steps cannot introduce $\fd$
violations in this case. Because of this, in envelopes
for \emph{safe} fact classes, we do not need more than one tuple, which we can reuse as
we did with fact-thrifty chase steps.

For unsafe fact classes, however, it will
be important to have more tuples, and to \emph{never reuse the same tuple
twice}. This motivates our definition of the \defo{remaining tuples} of an
envelope, depending on whether the fact class is safe or not; and the definition
of \defo{envelope-saturation}, which depends on the number of remaining tuples:

\begin{definition}
  Letting $E$ be an envelope for~$(R^p, \mybf{C}) \in \afactcl$ and $J$ be an aligned
  superinstance, the \deft{remaining tuples} of~$E$ are $E \backslash \pi_{\nondanger(R^p)}(J)$ if
  $(R^p, \mybf{C})$ is unsafe, and just $E$ if it is safe.

  We call $J$ \deft{$\bm{n}$-envelope-saturated} if it has a global envelope~$\calE$
  such that $\calE(D)$ has $\geq n$ remaining tuples for all unsafe $D \in
  \afactcl$. $J$ is
  \deft{envelope-saturated} if it is $n$-envelope-saturated for some $n > 0$.
\end{definition}

In the rest of the \subsecname, inspired 
by the Fact-Saturated Solutions Lemma (Lemma~\ref{lem:nondetexhaust}), we will
show that we can construct envelope-saturated solutions. However, there are some
complications when doing so. First, we must show that we can construct
\emph{sufficiently} envelope-saturated solutions, i.e., instances with
sufficiently many remaining tuples. To do this, we will need multiple
copies of the chase, which explains the technical switch from $I_0$ to $I_0'$ in
the statement of the next result. Second, 
for reasons that will become clear later in this \secname, we need to ensure that
the envelopes are large \emph{relative to the resulting instance size}. This
makes the result substantially harder to show.

\begin{proposition}[Sufficiently envelope-saturated solutions]
  \label{prp:preproc}
  For any $K \in \mathbb{N}$ and instance~$I_0$, we can construct an instance $I_0'$
  formed of disjoint copies of~$I_0$, and an aligned superinstance $J$ of
  $I_0'$ that satisfies~$\fds$ and is $(K\cdot \card{J})$-envelope-saturated.
\end{proposition}

We prove the proposition in the rest of the \subsecname. It is not hard to
see that $I_0'$ and $J$ can be constructed separately for each fact class in
$\afactcl$, and that this is difficult only for unsafe classes. In other words,
the crux of the matter is to prove the following:

\begin{lemma}[Single envelope]
  \label{lem:oneenv}
  For any unsafe class $D$ in $\afactcl$, instance $I_0$ and constant factor $K \in \mathbb{N}$, 
  there exists $N_0 \in \NN$ such that,
  for any $N \geq N_0$,
  we can construct an instance $I_0'$ formed of disjoint copies of~$I_0$,
  and an aligned superinstance
  $J = (I, \cov)$ of~$I_0'$ that satisfies $\fds$, with an envelope $E$ for $D$ of
  size $\geq K \cdot N$, such that $\card{J} \leq N$.
\end{lemma}

Indeed, let us prove Proposition~\ref{prp:preproc} with this lemma, and we will
prove the lemma afterwards:

\begin{proof}[\proofof Proposition~\ref{prp:preproc}]
  Fix the constant $K \in \mathbb{N}$ and the initial instance
$I_0$, and let us build $I_0'$ and the aligned superinstance $J = (I,
\cov)$ of~$I_0'$ that has a global envelope $\calE$.
As $\afactcl$ is finite, we build one $J_D$ per $D \in \afactcl$ with an
envelope~$E_D$ for the class $D$,
and we will define $J \defeq \bigsqcup_{D \in \afactcl} J_D$
and define $\calE$ by $\calE(D) \defeq E_D$ for all $D \in \afactcl$.
When $D = (R^p, \mybf{C})$ is safe, we proceed as in the
proof of the Fact-Saturated Solutions Lemma (Lemma~\ref{lem:nondetexhaust}): take a
single copy $J_D$ of the truncated chase to achieve the class $D$, and take as the
only fact of the envelope $E_D$ the projection to $\nondanger(R^p)$ of an
achiever of~$D$ in~$J_D$.
  When $D$ is unsafe, we use the Single Envelope Lemma (Lemma~\ref{lem:oneenv}) to obtain $J_D$ and the
envelope $E_D$. As $\afactcl$ is finite and its size does not depend on~$I_0$,
we can ensure
that that $\card{E_D} \geq (K+1) \cdot \card{J}$ for all unsafe $D
\in \afactcl$
by using the Single Envelope Lemma with $K' \defeq (K+1) \cdot \card{\afactcl}$, and
taking $N \in \NN$ which is larger than the largest $N_0$ of that lemma across
all $D \in \afactcl$. Indeed, the resulting model 
$J$ then ensures that $\card{J} \leq \card{\afactcl} \cdot N$ and 
$\card{E_D} \geq (K+1) \cdot \card{\afactcl} \cdot N$.

We now check that the resulting $J$ and $E$ satisfy the conditions.
Each $J_D$ is an aligned superinstance of an instance $(I_0')_D$ which is formed of
disjoint copies of~$I_0$ (for unsafe classes) or which is exactly $I_0$ (for
safe classes), so 
$J$ is an aligned superinstance of $I_0' \defeq \bigsqcup_{D\in \afactcl}
(I_0')_D$,
so $I_0'$ is also a union of disjoint copies of~$I_0$.
There are no violations of~$\fds$ in $J$ because there are none in
any of the~$J_D$. The disjointness of domains of
envelopes in the global envelope $\calE$ is because the $J_D$ are disjoint. It
is easy to see that $J$ is $(K\cdot
\card{I})$-envelope-saturated, because $\card{\calE(D)} \geq (K+1)\cdot\card{I}$ for
all unsafe $D \in \afactcl$, so the
number of remaining facts of each envelope for an unsafe class is $\geq K\cdot
\card{I}$ because every fact of $I$ eliminates at most one fact in each envelope.
Hence, the proposition is proven.
\end{proof}

So the only thing left to do is to prove 
the Single Envelope Lemma (Lemma~\ref{lem:oneenv}).
Let us accordingly fix the unsafe class $D = (R^p, \mybf{C})$ in $\afactcl$.
We will need to study more precisely the $\fd$s implied by the definition of an
envelope for~$D$
(Definition~\ref{def:envelope}). We first introduce notation for them:

\begin{definition}
  \label{def:fdproj}
  Given a set $\fds$ of $\fd$s on a relation~$R$ and $O \subseteq \pos(R)$,
  the \deft{$\fd$ projection} $\fdrestr{\fds}{O}$ of 
  $\fds$ to~$O$ consists of the following $\fd$s, which we close under
  implication:
  \begin{enumerate}
    \item the $\fd$s $R^L \rightarrow R^r$ of~$\fds$ such
  that $R^L \subseteq O$ and $R^r \in O$;
    \item for every $\fd$ $R^L \rightarrow
  R^r$ of~$\fds$ where $R^L \subseteq O$ and $R^r \notin O$, the key dependency
  $R^L \rightarrow O$.\qedef
  \end{enumerate}
\end{definition}

We will need to show that, as $R^p$ is unsafe,
$\fdrestr{\fds}{O}$ cannot have a \emph{unary key} in $O$, namely, there cannot
be $R^q \in O$
such that, for every $R^r \in O$, either $R^q = R^r$ or the $\ufd$ $R^q
\rightarrow R^r$ is in $\fdrestr{\fds}{O}$. We show the contrapositive of
this statement:

\begin{lemma}
  \label{lem:connectndg}
  For any $R^p \in \pos(\sigma)$, letting $O \defeq \nondanger(R^p)$, if
  $O$ has a unary key in $\fdrestr{\fds}{O}$, then $R^p$ is safe.
\end{lemma}

\begin{proof}
  Fix $R^p \in \pos(\sigma)$ and let $O \defeq \nondanger(R^p)$.
  We first show that if $O$ has a unary key $R^s \in O$ in
  the original $\fd$s $\fds$, then
  $R^p$ is safe. Indeed, assume the existence of such an $R^s \in O$.
  Assume by way of contradiction that $R^p$ is not safe, so there is an
  $\fd$ $R^L \rightarrow R^r$ in $\fds$
  with $R^L \subseteq O$ and $R^r \notin O$. Then, as $\fds$ is closed under the
  transitivity rule, the $\ufd$
  $\phi: R^s \rightarrow R^r$ is in $\ufds$. Now, as $R^r \notin O$, either
  $R^r = R^p$ or $R^r \in \danger(R^p)$; in both cases, $\phi$ witnesses that
  $R^s \in \danger(R^p)$, but we had $R^s \in O$, a contradiction.

  We must now show that if $O$ has a unary key in $O$ according to
  $\fdrestr{\fds}{O}$,
  then $O$ has a unary key in $O$ according to $\fds$. It suffices to show that
  for any two positions $R^q, R^s \in O$, if the $\ufd$ $\phi': R^q \rightarrow
  R^s$ is
  in~$\fdrestr{\fds}{O}$ then it also does in~$\fds$. Hence, fix $R^q$ in
  $O$, and consider the set $S$ of positions in $O$ that $R^q$ determines according to~$\fdrestr{\fds}{O}$.
  Let $\Phi$ be the $\fd$s in the list given in Definition~\ref{def:fdproj},
  so that $\fdrestr{\fds}{O}$ is the result of closing
  $\Phi$  under $\fd$ implication.
  We can compute $S$ using the well-known
  ``fd closure algorithm'' \cite[Algorithm~8.2.7]{abiteboul1995foundations}, which
  starts at $S = \{R^q\}$ and iterates the following operation: whenever
  there is $\phi: R^L \rightarrow R^r$ such that $R^L \subseteq S$, add $R^r$ to
  $S$.

  Assume now that there is a position $R^s$ in $S$ such that
  $\phi': R^q \rightarrow R^s$ is not in $\fds$. This implies that, when
  computing $S$, we must have used some $\fd$ $R^L \rightarrow R^t$ from a key
  dependency $\kappa$ in $\Phi$,
  as they are the only $\fd$s of $\Phi$ which are not in $\fds$. The first time
  we did this, we had derived, using only $\fd$s from $\fds$, that $R^L
  \subseteq S$, so that the key dependency $R^q \rightarrow R^L$ is in~$\fds$.
  Now, $\kappa$ witnesses that there is an $\fd$ $R^L \rightarrow R^r$ in $\fds$
  with $R^r \notin O$, so that, as $\fds$ is closed under implication, we deduce
  that $R^q \rightarrow R^r$ is in $\fds$ with $R^q$ in $O$ and $R^r \notin
  O$. As before, this contradicts the definition of $O \defeq \nondanger(R^p)$.
  So indeed, there is no such $R^s$ in
  $S$.

  Hence, if $O$ has a unary key in $O$ according to $\fdrestr{\fds}{O}$, then it
  also does according to $\fds$, and then, by the reasoning of the first
  paragraph, $R^p$ is safe, which is the desired claim.
\end{proof}

We now know that $O$ has no unary key in $\fdrestr{\fds}{O}$.
This allows us to introduce the crucial tool needed to prove the
Sufficiently Envelope-Saturated Solutions Proposition (Proposition~\ref{prp:preproc}).
It is the following independent result, which is proved separately in
Section~\ref{sec:combinatorial} using a combinatorial construction.

\mydefthm{combinatorial}{theorem}{Dense interpretations}{
  For any set $\fds$ of~$\fd$s over a relation~$R$ with no unary key, for all $K \in
  \mathbb{N}$, there exists $N_0 \in \NN$ such that
  for all $N \geq N_0$,
  we can construct a non-empty instance $I$ of~$R$ that satisfies $\fds$
  and such that $\card{\dom(I)} \leq N$ and $\card{I} \geq K \cdot N$.

  Further, we can impose a \textbf{disjointness condition}
  on the result~$I$: we can ensure that for all $a \in \dom(I)$, there exists exactly one
  $R^p \in \positions(R)$ such that $a \in \pi_{R^p}(I)$.
}

We can now prove the Single Envelope Lemma (Lemma~\ref{lem:oneenv}) and conclude the \subsecname.
Choose a fact $F_{\ach} = R(\mybf{b})$ of~$\chase{I_0}{\idsb} \backslash I_0$
that achieves the fact class $D$, and 
let $I_1$ be obtained from $I_0$ by
applying $\uid$ chase steps on $I_0$
to obtain a finite truncation of~$\chase{I_0}{\idsb}$ that
includes $F_{\ach}$ but no child fact of~$F_{\ach}$. Consider the aligned superinstance
$J_1 = (I_1, \cov_1)$ of $I_0$, where $\cov_1$ is the identity.

Remember that we wrote $D = (R^p, \mybf{C})$, and $O = \nondanger(R^p)$, which
is non-empty.
Define a $\card{O}$-ary relation
$\relrestr{R}{O}$ (with positions indexed by $O$ for convenience), 
define $\fdrestr{\fds}{O}$ as in Definition~\ref{def:fdproj}, 
and consider $\fdrestr{\fds}{O}$ 
as $\fd$s on $\relrestr{R}{O}$.
Because $D$ is unsafe, by
Lemma~\ref{lem:connectndg}, $\relrestr{R}{O}$ has no unary key in
$\fdrestr{\fds}{O}$.
Letting $K \in \NN$ be our target constant for the Single Envelope Lemma,
apply the Dense Interpretations Theorem
(Theorem~\ref{thm:combinatorial}) to $\relrestr{R}{O}$ and $\fdrestr{\fds}{O}$,
taking $K' \defeq 2 \cdot K \cdot \card{J_1}$ as the constant.
Define $N_0 \in \NN$ for the Single Envelope Lemma as $2 \cdot \max(\card{J_1}, 1) \cdot
\max(N_0', 1)$ where
$N_0'$ is obtained from
the Dense Interpretations Theorem for~$K'$.
Letting $N' \in \NN$ be our target size for the Single Envelope Lemma,
using $N \defeq \lfloor N' / \card{J_1} \rfloor$
as the target size for the Dense Interpretations Theorem (which is $\geq N_0'$),
we can build an instance $I_{\dense}$ of~$\relrestr{R}{O}$ that
satisfies~$\fdrestr{\fds}{O}$ and such that
$\card{I_{\dense}} \geq N \cdot K'$
and $\card{\dom(I_{\dense})} \leq N$.

Let $I'_{\dense} \subseteq I_{\dense}$ be a
subinstance of size exactly $N$ of~$I_{\dense}$ such that we have $\dom(I'_{\dense}) =
\dom(I_{\dense})$, that is,
such that each element of~$\dom(I_{\dense})$ occurs in some fact of~$I'_{\dense}$:
we can clearly construct $I'_{\dense}$
by picking, for each element of $\dom(I_{\dense})$, one fact of~$I_{\dense}$ where it occurs, removing duplicate
facts, and completing with other arbitrary facts of~$I_{\dense}$ so the number of facts
is exactly $N$. Number the facts of $I'_{\dense}$ as $F'_1, \ldots, F'_N$.

Let us now create $N-1$ disjoint 
copies of~$J_1$, numbered $J_2$ to $J_N$.
Let $I_{\pre}$ be the disjoint union of the underlying instances of the~$J_i$,
let $I_0'$ be formed of the $N$ disjoint copies of $I_0$ in~$I_{\pre}$,
and define a mapping $\cov_{\pre}$ from $\dom(I_{\pre})$ to
$\chase{I_0'}{\idsb}$ following the $\cov_i$ in the expected way.
It is clear that $J_{\pre}$ is an aligned superinstance of~$I_0'$.
For $1 \leq i \leq N$, we call $F_i = R(\mybf{a}^{\mybf{i}})$ \emph{the fact of~$I_i$ that
corresponds to the achiever $F_{\ach}$ in $\chase{I_0}{\idsb}$}. In particular, for all
$1 \leq i \leq N$, we have
that $\cov(a^i_j) = b_j$ for all $j$, and $a^i_p$ is the only element of~$F_i$
that also occurs in other facts of~$J_i$, as $J_i$ does not contain any
descendent fact of~$F_i$.

Intuitively, we will now identify elements in $J_{\pre}$ so that the restriction of
the~$F_i$ to~$O$ is exactly the $F'_i$, and this will allow us to use the
instance
$I_{\dense}$ to define the envelope. Formally,
as the $a^i_j$ are
pairwise distinct,
we can define the function $f$ that maps each $a^i_j$, for $1 \leq i \leq N$ and
$R^j \in O$, to $\pi_{R^j}(F_i')$. In other words, $f$ is a surjective (but
generally not injective) mapping, the domain of~$f$ is the 
projection to~$O$ of the~$F_i$ in~$I_i$, the range of~$f$ is $\dom(I_{\dense}')$,
and $f$ maps each element of the
projection to the
corresponding element in~$F_i'$.
Extend $f$ to a mapping $f'$ with domain $\dom(I_\pre)$
by setting $f'(a) \defeq f(a)$ when $a$ is in the domain of~$f$, and
$f'(a) \defeq a$ otherwise. Now, let $I \defeq f'(I_{\pre})$. In other words, $I$ is
$I_{\pre}$ except that
elements in the projection to~$O$ of the facts~$F_i$ are renamed, and some are
identified, 
so that the projection to~$O$ of 
$\{f'(F_i) \mid 1 \leq i \leq N\}$,
seen as an instance of~$\restr{R}{O}$-facts,
is exactly 
$I'_\dense$. 
Because $a^i_j$ occurs only in $F_i$ for all $R^j \neq R^p$,
and $R^p \notin O$, this means that the elements identified by $f'$ only occurred in
the $F_i$ in~$I_{\pre}$.

We now build $J = (I, \cov)$ obtained by defining $\cov$ from $\cov_{\pre}$ as
follows: if $a$ is in the domain of~$f$, then $\cov(a) \defeq
\cov_{\pre}(a')$ for any preimage of $a'$ by $f'$ (as we will see, the
choice of preimage does not matter), and
if $a$ is not in the domain of~$f$, then $\cov(a) \defeq \cov_{\pre}(a)$ because
$a$ is then the only preimage of~$a$ by~$f'$.
We have now defined the instance $I_0'$ formed of disjoint copies of $I_0$ and
the final $J$, and we define $E \defeq I_{\dense}$. We must now show that $J$ is indeed an
aligned superinstance of~$I_0'$, and that $E$ is an envelope for $I$ and $D$, and that
they
satisfy the required conditions.

\medskip

We note that it is immediate that $J = (I, \cov)$ is a superinstance of~$I_0'$. Indeed,
we have $I \defeq f(I_{\pre})$, and $I_{\pre}$ was a superinstance of~$I_0'$, so
it suffices to note that $\dom(I_0')$ is not in the domain of $f$: this is
because the achiever $F_{\ach}$ is not a fact of~$I_0$, so the domain of $f$,
namely, the projection of the $F_i$ on
$O$, does not intersect $\dom(I_0')$.
Further, it is clear that $J$ is finite and has $N \cdot \card{J_1}$ facts, because this is
the case of $J_{\pre}$ by definition, and $f'$ cannot have caused any facts of
$J_{\pre}$ to be identified in $J$, because we have $R^p \notin
O$, so the projection of each $F_i$ to $R^p$ is a different element which is
mapped to itself by~$f'$. Hence, we have $\card{J} = N \cdot \card{J_1} \leq
N'$.
Further, we have $\card{E} = \card{I_{\dense}} \geq N \cdot K' \geq \lfloor N' /
\card{J_1} \rfloor \cdot 2 \cdot K \cdot \card{J_1}$, and as $N' \geq N_0 \geq
2 \cdot \card{J_1}$ we
have $\lfloor N' / \card{J_1} \rfloor \geq (1/2) \cdot (N' / \card{J_1})$.
Hence, $\card{E} \geq K \cdot N'$,
so we have achieved the required size bound.

We now show that $J$ is indeed an aligned superinstance of~$I_0'$. The technical
conditions on $\cov$ are clearly respected, because they were respected on
$J_{\pre}$, because $f'$ only identifies elements in $I_{\pre} \backslash I_0$,
and because the
identified elements occur at the same positions as their preimages so the
directionality condition is respected.

We show that $\cov$ is a $k$-bounded simulation from $J$ to
$\chase{I_0'}{\idsb}$ by showing the stronger claim that it is actually a
$k'$-bounded simulation for all $k' \in \NN$, which we show by induction
on~$k'$.
The case of~$k'=0$ is trivial. The
induction case is trivial for all facts except for the $f'(F_i)$,
because the $a^i_j$ only occurred in~$I_{\pre}$ in the facts $F_i$, by our assumption
that the $F_i$ have no children in the $I_i$, and because the exported
position of~$F_{\ach}$ is $R^p \notin O$. Hence, consider a fact $F' =
R(\mybf{c})$ of~$I$ which is the image by $f'$ of some fact~$F_i$.
Choose $1 \leq p \leq \arity{R}$. We wish to show that there exists a fact $F'' =
R(\mybf{d})$ of
$\chase{I_0'}{\idsb}$ such that $\cov(c_p) = d_p$ and for all $1 \leq
q \leq \card{R}$ we have $(I, c_q) \bsim_{k'-1} (\chase{I_0'}{\idsb}, d_q)$.
Let $a^{i_0}_{j_0}$ be the preimage of~$c_p$ used to define $\cov(c_p)$; by the
disjointness condition of 
the Dense Interpretations Theorem (Theorem~\ref{thm:combinatorial}),
we must have $j_0 = p$.
Observe that $\chase{I_0'}{\idsb}$ is formed of disjoint copies of
$\chase{I_0}{\idsb}$, so, recalling the definition of $J'_{i_0}$, 
consider the fact
$F'' = R(\mybf{d})$ of~$\chase{I_0'}{\idsb}$ corresponding to $F_{i_0}$ in
$I$. By definition, $\cov(c_p) = \cov(a^{i_0}_{j_0}) = d_p$.

We now show that for all $1 \leq
q \leq \card{R}$ we have $(I, c_q) \bsim_{k'-1} (\chase{I_0'}{\idsb}, d_q)$.
Fix  $1 \leq q
\leq \arity{R}$.
It suffices to show that $\cov(c_q) \bbsim_{k'} d_q$, as we can 
then use the induction hypothesis to know that $(I, c_q)
\bsim_{k'-1} (\chase{I_0'}{\idsb}, \cov(c_q))$, so that by transitivity $(I,
c_q) \bsim_{k'-1} (\chase{I_0'}{\idsb}, d_q)$.
Hence, we show that $\cov(c_q) \bbsim_{k'} d_q$.
Let $a^{i_0'}_{j_0'}$ be
the preimage of $c_q$ used
to define $\cov(c_q)$.
Again we must have $j_0'
= q$ by the disjointness condition, and, considering the fact $F''' = R(\mybf{e})$ of
$\chase{I_0'}{\idsb}$ corresponding to $F_{i_0'}$ in $I$, we have 
$\cov(c_q) = e_q$. But as both $F'''$
and $F''$ are copies in $\chase{I_0'}{\idsb}$ of the same fact $F_{\ach}$ of
$\chase{I_0}{\idsb}$, it is indeed the case that
$d_q \bbsim_{k'} e_q$. Hence, $\cov(c_q) \bbsim_{k'} d_q$, from which we
conclude that $F''$ is a suitable witness fact for~$F'$.
By induction, we have shown that $\cov$ is indeed a $k'$-bounded simulation from~$J$
to $\chase{I_0'}{\idsb}$ for any $k' \in \NN$, so that it is in particular a
$k$-bounded simulation.

\medskip

We now show that $J$ satisfies $\fds$. For this, it will be convenient to define
the \defo{overlap} of two facts:

\begin{definition}
  \label{def:overlap}
  The \deft{overlap} $\ovl(F, F')$ between two facts $F = R(\mybf{a})$ and $F'
  = R(\mybf{b})$ of the same relation~$R$ in an instance~$I$ is the subset $O$ of~$\pos(R)$ such that $a_s = b_s$ iff $R^s
  \in O$. If $\card{O} > 0$, we say that $F$ and $F'$ \deft{overlap}.
\end{definition}

As $I_{\pre}$ satisfies $\fds$ by the Unique Witness Property of the $\uid$
chase,
any new violation of~$\fds$ in $I$ relative to $I_\pre$
must include some fact $F = f'(F'_{i_0})$, and some fact $F' \neq F$ that
overlaps with~$F$,
so necessarily $F' = f'(F'_{i_1})$ for some $i_1$ by construction of~$I$,
and $\ovl(F, F') \subseteq O$. If $\ovl(F, F') = O$, then, by our
definition of $f$ and of the $F'_i$, this implies that $F'_{i_0} = F'_{i_1}$, a
contradiction because $F \neq F'$. So the only case to consider is when $\ovl(F,
F') \subsetneq O$, but we can also exclude this case:

\begin{lemma}
  \label{lem:liftovl}
  Let $I$ be an instance, $\fds$ be a conjunction of $\fd$s, and $F \neq F'$ be two 
  facts of~$I$. Assume there is a position $R^p \in \pos(\sigma)$ such that,
  writing $O \defeq \nondanger(R^p)$, we have $\ovl(F, F')
  \subsetneq O$, and
  that $\{\pi_O(F), \pi_O(F')\}$  is not a violation of~$\fdrestr{\fds}{O}$.
  Then $\{F, F'\}$ is not a violation of~$\fds$.
\end{lemma}

\begin{proof}
  Assume by way of contradiction that $F$ and $F'$ violate an $\fd$ $\phi: R^L
  \rightarrow R^r$ of~$\fds$, which implies that
  $R^L \subseteq \ovl(F, F') \subseteq O$ and $R^r \notin \ovl(F, F')$. Now, if $R^r
  \in O$, then $\phi$ is in $\fdrestr{\fds}{O}$, so that $\pi_O(F)$ and
  $\pi_O(F')$ violate~$\fdrestr{\fds}{O}$, a contradiction. Hence, $R^r \in \pos(R) \backslash O$, and
  the key dependency $\kappa: R^L \rightarrow O$ is in $\fdrestr{\fds}{O}$, so that
  $\pi_O(F)$ and $\pi_O(F')$ must satisfy~$\kappa$.
  Thus, because $R^L \subseteq \ovl(F, F')$, we must have $\ovl(F, F') = O$,
  which is a contradiction because we assumed $\ovl(F, F') \subsetneq O$.
\end{proof}

Now,  by definition of~$I'_{\dense}$, we know that $I'_{\dense}$ satisfies
$\fdrestr{\fds}{O}$, so that $\{\pi_O(F), \pi_O(F')\}$ is not a violation of
$\fdrestr{\fds}{O}$. Thus, we can conclude with Lemma~\ref{lem:liftovl}
that $\{F, F'\}$ is not a violation of~$\fds$, so that $J$ satisfies $\fds$.
We have thus shown that $J$ is an aligned superinstance of~$I_0'$.

\medskip

Last, we check that $E$ is indeed an envelope for~$D$ and for~$J$.
Indeed, $E$ satisfies~$\fdrestr{\fds}{O}$ by construction, so conditions~1 and~2 are
respected. The first part of condition~3 is ensured by the disjointness condition,
and its second part follows from our definition of $I'_{\dense}$ that ensures
that any element in $\dom(E)$ occurs in a fact $F'_i$ of $I'_{\dense}$, hence occurs in
$f'(F_i)$ in~$J$.
Last, condition~4 is true because the elements of $\dom(E)$ are only used in the
$f'(F_i)$,
and the $\cov$-images of the $f'(F_i)$
are copies in $\chase{I_0'}{\idsb}$ of the same fact $F_{\ach}$ in
$\chase{I_0}{\idsb}$ that achieves~$D$, so the $F_i$ are all achievers of~$D$;
further, by definition, their projection to~$O$ is a tuple of $E$ because it is a
fact of~$I'_{\dense}$.

Hence, $J$ is indeed an aligned superinstance of a disjoint union $I_0'$ of
copies of~$I_0$, $J$ satisfies $\fds$, $\card{J} \leq N'$,
and $J$ has an envelope $E$ of size $K \cdot
N'$ for~$D$. This concludes the proof of the Single Envelope Lemma (Lemma~\ref{lem:oneenv}), and hence of
the Sufficiently Envelope-Thrifty Solutions Proposition (Proposition~\ref{prp:preproc}).

\subsection{Envelope-Thrifty Chase Steps}

We have shown that we can construct sufficiently envelope-saturated
superinstances of the input instance. The point of this notion is to
introduce \defo{envelope-thrifty chase steps}, namely, thrifty chase steps that
use remaining tuples from the envelope to fill the non-dangerous positions:

\begin{definition}
  \deft{Envelope-thrifty chase steps} are thrifty chase steps
  (Definition~\ref{def:thrifty}) which apply to
  envelope-saturated aligned superinstances.
  Following Definitions~\ref{def:thrifty} and~\ref{def:factthrifty},
  we write $S^q$ for the exported position of the new fact~$F_{\fnew}$,
  we write $F_\w = S(\mybf{b}')$ for the chase witness,
  and we let $D = (S^q, \mybf{C}) \in \afactcl$ be the fact class of~$F_\w$.
  Analogously to Definition~\ref{def:thrifty}, we define an
  \deft{envelope-thrifty} chase step as follows: if $\nondanger(S^q)$ is
  non-empty, choose one remaining tuple $\mybf{t}$ of $\calE(D)$, and
  set $b_r \defeq t_r$ for all $S^r \in \nondanger(S^q)$.

  We define a \deft{fresh envelope-thrifty step} in the same way as a fresh
  fact-thrifty step: all elements at dangerous positions are fresh elements
  only occurring at that position.
\end{definition}

\begin{example}
  \label{exa:hfdsol}
  Recall $I_0$, $\tau$, $\tau'$ and $\phi$ from Example~\ref{exa:hfdviol}.
  Now, consider $I_0' \defeq \{S(a), T(z), S(a'), S(z')\}$ formed of two copies
  of $I_0$, and $I' \defeq I_0' \sqcup
  \{R(a, b, c), \allowbreak R(a', b', c')\}$ obtained by two chase steps: this
  is illustrated in solid black in Figure~\ref{fig:hfds2} on page~\pageref{fig:hfds2}.
  The two facts $R(a, b, c)$ and $R(a', b', c')$ would achieve
  the same fact class~$D$, so we can define
  $E(D) \defeq \{(b, c), \allowbreak (b', c'), \allowbreak (b', c), \allowbreak (b, c')\}$.
  
  We can now satisfy $\idsb$
  on~$I'$ without violating~$\phi$,
  with two envelope-thrifty chase steps that reuse the remaining tuples $(b', c)$ and
  $(b, c')$ of~$E(D)$: the new facts and the pattern of equalities between them is
  illustrated in red in Figure~\ref{fig:hfds2}.
\end{example}

Recall 
that fact-thrifty chase steps apply to
fact-saturated aligned superinstances (Lemma~\ref{lem:ftappl}).
Similarly, envelope-thrifty chase
steps apply to envelope-saturated aligned superinstances:

\begin{lemma}[Envelope-thrifty applicability]
  \label{lem:etappl}
  For any envelope-saturated superinstance $I$ of an instance $I_0$, $\uid$
  $\tau: \ui{R^p}{S^q}$ and element $a \in \appelem{I}{\tau}$, we can apply an
  envelope-thrifty chase step on $a$ with $\tau$ to satisfy this violation.

  Further, for any new fact $S(\mybf{e})$
  that we can create by chasing on~$a$ with $\tau$ with a fact-thrifty chase
  step, we can instead apply an envelope-thrifty chase step on $a$ with~$\tau$
  to create a fact
  $S(\mybf{b})$ with $b_r = e_r$ for all $S^r \in \pos(S) \backslash
  \nondanger(S^r)$.
\end{lemma}

\begin{proof}
  For the first part of the claim, as in the proof of 
  the Fact-Thrifty Applicability Lemma (Lemma~\ref{lem:ftappl}), 
  there is nothing to show unless $\nondanger(S^q)$ is non-empty,
  and 
  the fact
  class $D = (S^q, \mybf{C})$ is then in~$\afactcl$, where $\mybf{C}$ is the tuple of the
  $\bbsim_k$-equivalence classes of the elements of the chase witness $F_\w$.
  Hence, as $J$ is envelope-saturated, it has some
  remaining tuple for the class $D$ that we can use to define the non-dangerous
  positions of the new fact.

  For the second part, again as in the proof of the Fact-Thrifty Applicability Lemma,
  observe
  that the definition of envelope-thrifty chase steps only poses additional
  conditions (relative to thrifty chase steps) on $\nondanger(S^q)$,
  so that, for any fact that we would create with a fact-thrifty chase step, we
  can change the elements at $\nondanger(S^q)$ to perform an envelope-thrifty chase
  step, using the fact that $I$ is envelope-saturated.
\end{proof}

Further, recall that we showed that relation-thrifty chase steps never violate~$\ufds$
(Lemma~\ref{lem:rthriftyp}). We now show that envelope-thrifty chase steps never
violate $\fds$, which is their intended purpose:

\begin{lemma}[Envelope-thrifty $\fd$ preservation]
  \label{lem:envfds}
  For an $n$-envelope-saturated aligned superinstance $J$  satisfying
  $\fds$,
  the result of an envelope-thrifty chase step on~$J$ satisfies $\fds$.
\end{lemma}

\begin{proof}
  Fix $J$ and its global envelope $\calE$. Let $F_{\fnew} = S(\mybf{b})$ be the fact created by the envelope-thrifty
  step, let $\tau: \ui{R^p}{S^q}$ be the $\uid$, let $J' = (I', \cov')$ be the
  result of the chase step,
  let $F_\w$ be the chase witness,
  and let $D$ be the fact class of~$F_\w$.
   Write $O \defeq \nondanger(S^q)$.
  Assume by contradiction that $I' \not\models \fds$; as $I
  \models \fds$, any violation of~$\fds$ in~$I'$ must be
  between the new fact $F_{\fnew}$ and an existing fact $F = S(\mybf{c})$ of $I$.
  Recalling the definition of overlaps (Definition~\ref{def:overlap}),
  note that we only have $b_r \in \pi_{S^r}(I)$ for $S^r \in O$ by definition of thrifty
  chase steps, so we must have $\ovl(F_{\fnew}, F) \subseteq O$.
  Now, as $\pi_O(F_{\fnew})$ was defined using elements of
  $\dom(\calE(D))$, taking any $S^r \in \ovl(F_{\fnew}, F) \subseteq O$ (which is
  non-empty by definition of an $\fd$ violation), we have $c_r = b_r
  \in \pi_{S^r}(\calE(D))$, so that, by condition~4 of the definition of the
  envelope~$\calE(D)$,  we know that
  $\pi_O(\mybf{c})$ is a tuple $\mybf{t}'$ of~$\calE(D)$.
  Now, either
  $\ovl(F_{\fnew}, F) \subsetneq O$ or $\ovl(F_{\fnew}, F) = O$.

In the first case, we observe that, by conditions~1
and~2 of the definition of the envelope~$\calE(D)$,
we know that $\{\pi_O(\mybf{c}),
\pi_O(\mybf{b})\}$ is not a violation of $\fdrestr{\fds}{O}$. 
  Together with the fact that $\ovl(F_{\fnew}, F) \subsetneq O$,
  this allows us to apply Lemma~\ref{lem:liftovl} and deduce that $\{F,
  F_{\fnew}\}$ actually does not violate $\fds$, a contradiction.

  In the second case, where $\ovl(F_{\fnew}, F) = O$, we have $\mybf{t} =
  \mybf{t}'$. Now, either $D$ is safe or $D$ is unsafe.
If $D$ is unsafe, we have a contradiction because 
$F$ witnesses that $\mybf{t}$ was not a remaining tuple of $\calE(D)$,
  so we cannot have used $\mybf{t}$ to define $F_{\fnew}$. If $D$ is safe,
then by definition there is no $\fd$ $R^L \rightarrow R^r$ of~$\fds$ with $R^L \subseteq O$ and
  $R^r \notin O$. Now, as $\ovl(F_{\fnew}, F) = O$, it is clear that $F$ and
  $F_{\fnew}$ cannot violate any $\fd$
of~$\fds$, a contradiction again.
\end{proof}

Last, recall that we showed that fresh fact-thrifty steps preserve
the property of being
aligned (Lemma~\ref{lem:fftp}) and that non-fresh fact-thrifty steps also do when we
additionally assume $k$-essentiality, which they also preserve (Lemma~\ref{lem:usekrev}). We now
prove the analogous claims for envelope-thrifty steps assuming
envelope-saturation. The only difference is that envelope-thrifty chase steps
make envelope-saturation \emph{decrease}, unlike fact-thrifty steps which always
preserved fact-saturation:

\begin{lemma}[Envelope-thrifty preservation]
  \label{lem:envpres}
  For any $n \in \NN$,
  for any $n$-envelope-saturated aligned superinstance $J$ of~$I_0$, the result $J'$ of
  a fresh envelope-thrifty chase step on $J$ is an $(n-1)$-envelope-saturated aligned
  superinstance of~$I_0$. Further, if $J$ is $k$-essential, the claim holds even
  for non-fresh envelope-thrifty chase steps, and the result $J'$ is
  additionally $k$-essential.
\end{lemma}

\begin{proof}
We reuse notation from Lemma~\ref{lem:envfds}: considering an application of an envelope-thrifty chase step:
let $J = (I, \cov)$ be the aligned superinstance of~$I_0$, 
let $\tau : \ui{R^p}{S^q}$ be the $\uid$, 
write $O \defeq \nondanger(S^q)$,
let $F_\w = S(\mybf{b}')$ the chase witness,
let $D = (S^q, \mybf{C})$ be the fact class,
  let $F_{\fnew} = S(\mybf{b})$ be the new fact to be created, and
  let $\mybf{t}$ be the remaining tuple of~$\calE(D)$ used to define $F_{\fnew}$,
and let $J' = (I', \cov')$ be the result.

\medskip

We now prove that~$J'$ is still an aligned superinstance. 
We first adapt the Fresh Fact-Thrifty Preservation Lemma (Lemma~\ref{lem:fftp})
to work with envelope-thrifty chase steps. We can no longer use
Lemma~\ref{lem:rthriftyp} to prove that $J' \models \ufds$, but we have shown
already that $J' \models \fds$ in Lemma~\ref{lem:envfds}, so this point is
already covered. The only other point specific to
fact-thriftiness is proving that $\cov'$ is still a $k$-bounded simulation, but
it actually only relies on the fact that $\cov'(b_r) \bbsim_k b'_r$ in
$\chase{I_0}{\idsb}$ for all $S^r \in \nondanger(S^q)$, which is still ensured
by envelope-thrifty chase steps: by conditions~3 and~4 of the definition of
envelopes, we know that, for any $S^r \in \nondanger(S^q)$, the element $t_r$ already occurs
at position $S^r$ in a fact of~$I$ that achieves $D$, so that $\cov(t_r)
\bbsim_k b'_r$.

Second, we adapt the Fact-Thrifty Preservation Lemma (Lemma~\ref{lem:usekrev})
to envelope-thrifty chase steps. Again, the only condition of fact-thrifty chase steps
used when proving that lemma is that $\cov'(b_r) \bbsim_k b'_r$ in
$\chase{I_0}{\idsb}$ for all $S^r \in \nondanger(S^q)$, which is still true.
Hence, having adapted these two lemmas, we conclude that $J'$ has the required
properties.

\medskip

We now prove that $\calE$ is still a global envelope of~$J'$ after performing an
envelope-thrifty chase step. The condition on the disjointness of the envelope
domains only concerns $\calE$ itself, which is unchanged. Hence, we need only show
that, for any $D' \in \afactcl$,
$\calE(D')$ is still an envelope.
All conditions of the definition of envelopes except condition~4 are clearly
true, because they were true in~$J$, and they only depend on $\calE(D')$ or
they are preserved
when creating more facts.
  We now check condition~4, which only needs to be verified on the new fact
  $F_{\fnew}$.

Consider $S^u \in \pos(S)$ and $S^t \in \nondanger(S^u)$, and assume that $b_t
\in \pi_{S^t}(\calE(D'))$. As $\calE(D)$ is an envelope for~$J$, by condition~3
of the definition, we have
$b_t \in \pi_{S^t}(I)$ as well, so that, by definition of thrifty chase steps,
we must have $S^t \in O$.
Now, as the envelopes of~$\calE$
are pairwise disjoint, and as the~$b_r$ for $S^r \in O$ are all in
$\dom(\calE(D))$, we must have $D = D'$,
and $\mybf{t}$ witnesses that $\pi_O(\mybf{b}) \in \calE(D)$.
Hence $\calE$ is still a global envelope of~$J'$.

\medskip

Last, to see that the resulting $J'$ is $(n-1)$-envelope-saturated, it suffices to
observe that, for each unsafe class $D \in
\afactcl$, the remaining tuples of $\calE(D)$ for $J'$ are those of $\calE(D)$ for
$J$ minus at most one tuple (namely, some projection of $F_{\fnew}$). This concludes the proof.
\end{proof}

Hence, we know that envelope-thrifty chase steps preserve being aligned and also preserve
$\fds$ (rather than $\ufds$ for fact-thrifty chase steps). Our goal is then to
modify the Fact-Thrifty Completion Proposition of the previous \secname
(Proposition~\ref{prp:ftcomp}) to use envelope-thrifty rather than
fact-thrifty chase steps, relying on the previous lemmas to preserve all
invariants. The problem is that unlike fact-saturation, envelope-saturation
``runs out''; whenever we use a remaining tuple $\mybf{t}$ in a chase step to
create~$F_{\fnew}$ and obtain a new aligned superinstance~$J'$, then we can no longer
use the same~$\mybf{t}$ in~$J'$. This is why the result of an envelope-thrifty chase step
is less saturated than its input, and it is why we made sure in the Sufficiently
Envelope-Saturated Solutions Proposition (Proposition~\ref{prp:preproc})
that we could construct arbitrarily saturated superinstances.

For this reason, before we modify the Fact-Thrifty Completion Proposition, we
need to account for the number of chase steps that the proposition performs. We show that it is
linear in the size of the input instance.

\begin{lemma}[Accounting]
  \label{lem:eblowup}
  There exists $B \in \mathbb{N}$ depending only on $\sigma$, $k$, and $\uconb$, such that,
  for any aligned superinstance $J = (I, \cov)$ of~$I_0$,
  letting $L$ be the preserving fact-thrifty sequence constructed in the Fact-Thrifty Completion
  Proposition (Proposition~\ref{prp:ftcomp}), we have $\card{L} < B \cdot \card{I}$.
\end{lemma}

\begin{proof}
  It suffices to show that $\card{L(J)} < B \cdot \card{I}$, because, as each chase
  step creates one fact, we have $\card{L} \leq \card{L(J)}$.

Remember that the fact-thrifty completion process starts by constructing an ordered
partition $\mybf{P} = (P_1, \ldots, P_{\neqidsc})$
of $\idsb$ (Definition~\ref{def:opartition}).
This $\mybf{P}$ does not depend on~$I$.
Hence, as we satisfy the
$\uid$s of each $P_i$ in turn, if we can show that the instance size only
increases by a multiplicative constant for each class, then the blow-up
for the entire process is by a multiplicative constant (obtained as the product
of the constants for each $P_i$).

For trivial classes, we apply one chase round by fresh fact-thrifty chase
steps (Lemma~\ref{lem:trivscc}),
It is easy to see that applying a chase round 
by any form of thrifty chase step
on an aligned superinstance $J_1 = (I_1, \cov_1)$
yield a result whose size has only increased relative to~$J_1$ by a
multiplicative constant.
This
is because $\card{\dom(I_1)} \leq \arity{\sigma} \cdot \card{I_1}$, and the number of
facts created per element of~$I_1$ in a chase round is at most
$\card{\positions(\sigma)}$.
Hence, for trivial classes, we only incur a blowup by a constant multiplicative
factor.

For non-trivial classes, we apply the
Reversible Fact-Thrifty Completion Proposition (Proposition~\ref{prp:ftcompr}).
Remember that
this lemma first ensures $k$-essentiality by applying $k+1$
fact-thrifty chase rounds (Lemma~\ref{lem:achkrev}) and then makes the
result satisfy $\idsb$ using the sequence constructed by the 
Reversible Relation-Thrifty Completion Proposition.
Ensuring $k$-essentiality only implies a blow-up by a multiplicative constant,
because it is performed by applying $k+1$ fact-thrifty chase rounds, so we can
use the same reasoning as for trivial classes.
Hence, we focus
on the Reversible Relation-Thrifty Completion Proposition,
and show that it also causes only a blow-up by a
multiplicative constant.

When we apply the Reversible Relation-Thrifty Completion Proposition
to an instance~$I$, we start by constructing a balanced
pssinstance $P$ using the Balancing Lemma (Lemma~\ref{lem:hascompletion}), and a
$\uconb$-compliant piecewise realization $\pire$ of $P$
by the Realizations Lemma (Lemma~\ref{lem:balwsndc}),
and we then apply fact-thrifty chase steps to
satisfy $\idsb$ following $\pire$.
We know that, whenever we apply a fact-thrifty chase step to an element $a$,
the element $a$ occurs after the chase step at a new
position where it did not occur before. Hence, it suffices to show that
$\card{\dom(P)}$ is within a constant factor of $\card{I}$, because then we know
that the final number of facts 
created by the sequence of the Reversible Relation-Thrifty Completion Proposition
will be $\leq
\card{\dom(P)} \cdot \card{\positions(\sigma)}$.

To show this,
remember that $\dom(P) = \dom(I) \sqcup \calH$, where $\calH$ is the helper set.
Hence, we only need to show that $\card{\calH}$ is within a
multiplicative constant factor of~$\card{I}$.
From the proof of the Balancing
Lemma, we know that $\calH$ is a disjoint union of $\leq \card{\positions(\sigma)}$ sets
whose size is linear in $\card{\dom(I)}$ which is itself $\leq \arity{\sigma}
\cdot \card{I}$. Hence, the
Reversible Relation-Thrifty Completion Proposition only causes a blowup by a
constant factor.
As we justified, this implies the same about the entire
completion process, and concludes the proof.
\end{proof}

This allows us to deduce the minimal level of envelope-saturation required to
adapt the Fact-Thrifty Completion Proposition (Proposition~\ref{prp:ftcomp}):

\begin{proposition}[Envelope-thrifty completion]
  \label{prp:etcomp}
  Let $\con = \fds \fragcup \ids$ be finitely closed $\fd$s and $\uid$s,
  let $B \in \NN$ be as in the Accounting Lemma (Lemma~\ref{lem:eblowup}), and
  let $I_0$ be an instance that satisfies $\fds$.
  For any $(B \cdot \card{J})$-envelope-saturated aligned superinstance $J$ of~$I_0$ that satisfies~$\fds$, we
  can obtain by envelope-thrifty chase steps an aligned superinstance $J_\f$ of
  $I_0$
  that satisfies $\con$.
\end{proposition}

\begin{proof}
  We define envelope-thrifty sequences, and preserving
  envelope-thrifty sequences, analogously to (preserving) fact-thrifty sequences
  (Definition~\ref{def:tsec} and Definition~\ref{def:preserving}) in the expected manner,
  but further requiring that all intermediate aligned superinstances remain
  envelope-saturated. This definition makes sense thanks to 
  the Envelope-Thrifty Preservation Lemma (Lemma~\ref{lem:envpres}).
 
  By the Fact-Thrifty Completion Proposition (Proposition~\ref{prp:ftcomp}), there exists a preserving
  fact-thrifty sequence $L$ such that $L(J)$ satisfies $\idsb$, and 
  $\card{L} < B \cdot \card{I}$. 
  Construct from $L$ an envelope-thrifty sequence $L'$ that non-dangerously
  matches~$L$, by changing each
  fact-thrifty chase step to an envelope-thrifty chase step, which we can do at each
  individual step thanks to 
  the Envelope-Thrifty Applicability Lemma (Lemma~\ref{lem:etappl}).
  It is clear that this is a preserving envelope-thrifty sequence, thanks to
  the Envelope-Thrifty Preservation Lemma, and thanks to the fact that 
  the Ensuring Essentiality Lemma (Lemma~\ref{lem:achkrev}) clearly adapts from
  fact-thrifty chase steps to envelope-thrifty chase steps: again, it only
  relies on the fact that, letting $F_{\fnew} = S(\mybf{b})$ be the new fact,
  $F_\w = S(\mybf{b}')$ the chase witness, and $\tau: \ui{R^p}{S^q}$ the $\uid$,
  we have $\cov'(b_r) \bbsim_k b'_r$ in
  $\chase{I_0}{\idsb}$ for all $S^r \in \nondanger(S^q)$. This also uses the
  fact that, by the Accounting Lemma (Lemma~\ref{lem:eblowup}), we have $\card{L} \leq B \cdot \card{I}$, so by
  the Envelope-Thrifty Preservation Lemma, all intermediate aligned
  superinstances remain envelope-saturated.

  Hence, $J_\f \defeq L'(J)$ is an aligned superinstance of $I_0$.
  Further, by the Thrifty Sequence Rewriting Lemma (Lemma~\ref{lem:thsrw}),
  as $L(J) \models \idsb$, so does $J_\f$. Last, as $J \models \fds$, 
  by the Envelope-Thrifty $\fd$ Preservation Lemma (Lemma~\ref{lem:envfds}),
  so does~$J_\f$. This concludes the
  proof.
\end{proof}

We can now conclude the proof of Theorem~\ref{thm:hfds}.
Start by applying the saturation process
of the Sufficiently Envelope-Saturated Solutions Proposition (Proposition~\ref{prp:preproc})
to obtain an aligned superinstance $J = (I, \cov)$ of a disjoint union $I_0'$ of
copies of~$I_0$,
such that $J$ satisfies $\fds$ and is $(B \cdot \card{I})$-envelope-saturated.
Now, apply the Envelope-Thrifty Completion Proposition (Proposition~\ref{prp:etcomp})
to obtain an aligned
superinstance $J_\f = (I_\f, \cov_\f)$ of $I_0'$ that satisfies~$\con$. We know
that $I_\f$ satisfies $\con$ and is a $k$-sound superinstance of~$I_0'$ for
$\acq$, but clearly
it is also a $k$-sound superinstance of $I_0$, as is observed by the $k$-bounded
simulation from $I'$ to $\chase{I_0}{\idsb}$ obtained by composing $\cov'$ with
the obvious homomorphism from $\chase{I_0'}{\idsb}$ to $\chase{I_0}{\idsb}$.
This concludes the proof.

\subsection{Constructing Dense Interpretations}
\label{sec:combinatorial}
All that remains is to show the Dense Interpretations Theorem:

\combinatorial*

Fix the relation $R$, and let $\fds$ be an arbitrary set of~$\fd$s which we
assume is closed under $\fd$ implication.
Let $\ufds$ be the $\ufd$s implied by $\fds$; it is also closed under $\fd$
implication. Recall
the definition of~$\ovl$ (Definition~\ref{def:overlap}).
We introduce a notion of \defo{tame overlaps} for $\ufds$,
which depends only on $\ufds$ but is
a sufficient condition to satisfy~$\fds$, as we will show.

\begin{definition}
  We say a subset $O \subseteq \pos(R)$ is \deft{tame} for $\ufds$ if $O$ is
  empty or for every
  $R^p \in \pos(R) \backslash O$, there exists $R^q \in \pos(R)$ such that:
  \begin{itemize}
    \item for all $R^s \in O$, the $\ufd$ $R^q \rightarrow R^s$ is in~$\ufds$,
    \item the $\ufd$ $R^q \rightarrow R^p$ is not in $\ufds$.
  \end{itemize}
  We say that an instance $I$ has the \deft{tame
  overlaps} property (for $\ufds$) if for every $F \neq F'$ of~$I$, $\ovl(F, F')$ is
  tame.
\end{definition}

\begin{example}
  \label{exa:tame}
  Consider a $5$-ary relation $R$ and $\ufds$ containing $R^1 \rightarrow R^5$,
  $R^2 \rightarrow R^4$, $R^2 \rightarrow R^5$, $R^3 \rightarrow R^4$, and $R^3
  \rightarrow R^5$. The subset $O = \{R^1, R^5\}$ is tame, because it is determined
  by $R^1$ and all other positions are not determined by~$R^1$, so we can always
  take $R^q = R^1$. In fact, more generally, when there exists a position that
  determines exactly $O$, then $O$ is tame; we will show a refinement of this as
  Lemma~\ref{lem:tamekey}. 

  However, this is not a characterization, because the subset $\{R^4, R^5\}$ is also
  tame: for $R^p = R^2$ we take $R^q = R^3$, for $R^p = R^3$ we take $R^q =
  R^2$, and for $R^p = R^1$ we take $R^q$ to be one of $R^2$ or $R^3$.

  The subsets $\{R^4\}$ and $\{R^5\}$ are also tame (always taking $R^q = R^4$
  or $R^q = R^5$ respectively).

  The subset $O = \{R^1, R^4\}$ is not tame, because $\pos(R)
  \backslash O$ is non-empty but there is no single position determining all
  positions of~$O$. The
  subset $\{R^2, R^4\}$ is not tame because for $R^p = R^5$ there is no choice
  for~$R^q$.
\end{example}

We now claim the following lemma, and its immediate corollary:

\begin{lemma}
  \label{lem:tameoverlaps}
  If $O \subseteq \pos(R)$ is tame for~$\ufds$ then there is no $\fd$ $\phi: R^L
  \rightarrow R^r$ in $\fds$ such that $R^L \subseteq O$ but $R^r \notin O$.
\end{lemma}

\begin{proof}
  If $O$ is empty the claim is immediate. Otherwise,
  assume to the contrary the existence of such an $\fd$~$\phi$. As $R^r \notin O$ and $O$ is
  tame, there is $R^q \in \pos(R)$ such that $R^q \rightarrow R^s$
  is in $\ufds$ for all $R^s \in O$, but $\phi' : R^q \rightarrow R^r$ is not in $\ufds$.
  Now, as $R^L \subseteq O$, we know that $R^q \rightarrow R^s$
  is in~$\ufds$ for all $R^s \in R^L$, so that, by transitivity with $\phi$,
  as $\fds$ is closed by implication,
  $\phi'$ is in $\fds$. As $\phi'$ is a $\ufd$, by
  definition of~$\ufds$, $\phi'$ is in~$\ufds$, a contradiction.
\end{proof}

\begin{corollary}
  \label{cor:safefds}
  For any instance $I$,
  if $I$ has the tame overlaps property for $\ufds$, then $I$ satisfies $\fds$.
\end{corollary}

\begin{proof}
  Considering any two facts $F$ and $F'$ in~$I$, as $O \defeq \ovl(F, F')$ is tame, we know
  by Lemma~\ref{lem:tameoverlaps}
  that for any $\fd$ $\phi: R^L \rightarrow R^r$ in $\fds$, we cannot have $R^L
  \subseteq O$ but $R^r \notin O$. Hence, $F$ and $F'$ cannot be a violation
  of~$\phi$.
\end{proof}

We forget for now the disjointness condition in 
the Dense Interpretations Theorem (Theorem~\ref{thm:combinatorial}),
which we will prove at the very end of the \subsecname
(Corollary~\ref{cor:combinatorial2}), and focus only on the first part.
We claim the following generalization of the result:

\begin{theorem}
  \label{thm:fds}
  Let $R$ be a relation and $\ufds$ be a set of~$\ufd$s over $R$.
  Let $D$ be the smallest possible cardinality of a \deft{key} $K$ of~$R$
  (i.e., $K \subseteq \pos(R)$ and for all $R^q \in \pos(R)$, there is $R^p \in
  K$ such that $R^p \rightarrow R^q$ is in~$\ufds$).
  Let $x$ be ${D \over D-1}$ if $D > 1$ and $1$ otherwise.
  
  For every
  $N \in \NN$, there exists a finite instance~$I$ of~$R$ such that
  $\card{\dom(I)}$ is $O(N)$, $\card{I}$ is $\Omega(N^x)$,
  and $I$ has the tame overlaps property for~$\ufds$.
\end{theorem}

Observe that, thanks to the use of the tame overlaps, the result does not
mention higher-arity $\fd$s, only $\ufd$s; intuitively, tame overlaps ensures
that the construction works for any $\fd$s that have the same consequences as
$\ufd$s.

It is clear that this theorem implies the first part of 
the Dense Interpretations Theorem (Theorem~\ref{thm:combinatorial}), because
if $R$ has no unary key for $\fds$ then $D > 1$ and thus $x > 1$, which implies that, for any~$K$, by taking
a sufficiently large $N_0$, we can obtain for all $N \geq N_0$ an instance~$I$
for $R$ with $\leq N$
elements and $\geq K\cdot N$ facts that has the tame overlaps property for $\ufds$; now, by
Lemma~\ref{cor:safefds}, this implies that $I$ satisfies $\fds$.

\medskip

In the rest of this \subsecname, we prove Theorem~\ref{thm:fds}, until the very
end where we additionally show that we can enforce the disjointness condition
for the Dense Interpretations Theorem. Fix the relation $R$ and set of~$\ufd$s
$\ufds$. The case of~$D = 1$ is vacuous and can be eliminated directly (consider
the instance $\{R(a_i, \ldots, a_i) \mid 1 \leq i \leq N\}$). Hence,
assume that $D > 1$, and let $x \defeq {D \over D-1}$.

\medskip

We first show the claim on a specific relation $R_{\full}$ and set $\ufds^{\full}$ of~$\ufd$s. We
will then generalize the construction to arbitrary relations and $\ufd$s. Let
$T \defeq \{1, \ldots, D\}$, and consider a bijection $\nu: \{1, \ldots, 2^D-1\}
\to \parts(T) \backslash \{\emptyset\}$,
where $\parts(T)$ is the powerset of~$T$. 
Let $R_{\full}$ be a $(2^D-1)$-ary relation,
and take $\ufds^{\full} \defeq \{R^i \rightarrow R^j \mid \nu(i)
\subseteq \nu(j)\}$. Note that $\ufds^{\full}$ is clearly closed under implication of
$\ufd$s.
Fix $N \in \NN$, and let us build an instance $I_{\full}$ with $O(N)$
elements and $\Omega(N^x)$ facts.

Fix $n \defeq \lfloor N^{1/(D-1)} \rfloor$.
Let $\calF$ be the set of partial functions from $T$ to $\{1, \ldots, n\}$,
and write $\calF = \calF_\t \sqcup \calF_\p$, where $\calF_\t$ and $\calF_\p$
are respectively the total and the strictly partial functions.
We take $I_{\full}$ to consist of one fact $F_f$ for each $f \in \calF_\t$, where $F_f
= R_{\full}(\mybf{a}^{\mybf{f}})$
is defined as follows: for $1 \leq i \leq 2^D-1$, $a^f_i \defeq \restr{f}{T \backslash \nu(i)}$. In particular:
\begin{itemize}
\item $a^{f}_{\nu^{-1}(T)}$, the element of~$F_{f}$ at the position mapped
  by~$\nu$ to
  $T \in \parts(T) \backslash \{\emptyset\}$, is the strictly partial function
  that is nowhere defined; note that $R_{\full}^{\nu^{-1}(T)}$ is determined by
  \emph{all} positions in~$\ufds^{\full}$.
\item $a^{f}_{\nu^{-1}(\{i\})}$, the element of
  $F_{f}$ at the position mapped by~$\nu$ to $\{i\} \in \parts(T) \backslash
  \{\emptyset\}$,
  is the strictly partial function equal to $f$ except that it is undefined on
  $i$;
  note that $R_{\full}^{\nu^{-1}(\{i\})}$ is determined by no other position of
  $R_{\full}$ in $\ufds^{\full}$.
\end{itemize}
Hence, $\dom(I_{\full}) = \calF_\p$ (because $\emptyset$ is not in the image of
$\nu$), so that $\card{\dom(I_{\full})} = \sum_{0 \leq i < D} {D \choose i} n^i$.
Remembering that $D$ is a constant, this implies that $\card{\dom(I_{\full})}$ is
$O(n^{D-1})$, so it is $O(N)$ by definition of $n$.
Further, we claim that $\card{I_{\full}} = \card{\calF_\t} = n^D = N^x$.
To show this, consider
two facts $F_f$ and $F_g$. We show that $F_f = F_g$ implies $f = g$,
so there are indeed $\card{\calF_\t}$ different facts in~$I_{\full}$. As
$\pi_{\nu^{-1}(\{1\})}(F_f) =
\pi_{\nu^{-1}(\{1\})}(F_{g})$, we have $f(t) = g(t)$ for all $t \in T \backslash \{1\}$,
and as $D \geq 2$, we can look at $\pi_{\nu^{-1}(\{2\})}(F_f)$ and
$\pi_{\nu^{-1}(\{2\})}(F_{g})$ to conclude that $f(1) = g(1)$, hence $f = g$ as claimed. 
Hence, the cardinalities of~$I_{\full}$ and of its domain are suitable.

We must now show that $I_{\full}$ has the tame overlaps property
for~$\ufds^{\full}$. For this we first make the
following general observation:

\begin{lemma}
  \label{lem:tamekey}
  Let $\ufds$ be any conjunction of $\ufd$s and $I$ be an instance such that $I
  \models \ufds$. Assume that, for any pair of facts $F \neq F'$ of~$I$ that overlap, there
  exists $R^p \in \ovl(F, F')$ which is a unary key for $\ovl(F, F')$. Then $I$
  has the tame overlaps property for $\ufds$.
\end{lemma}

\begin{proof}
  Consider $F, F' \in I$ and $O \defeq \ovl(F, F')$. If
  $F = F'$, then $O = \pos(R)$, and $O$ is vacuously tame.
  Otherwise, if $F \neq F'$, let $R^p \in \pos(R) \backslash O$. We take
  $R^q \in O$ to be the unary key of~$O$. We know that $R^q \rightarrow R^s$ is in
  $\ufds$ for all~$R^s \in O$, so to show that $O$ is tame it suffices to show that $\phi: R^q \rightarrow
  R^p$ is not in $\ufds$. However, if it were, then as $R^q \in O$ and
  $R^p \notin O$, $F$ and $F'$ would witness a violation of~$\phi$,
  contradicting the fact that $I$ satisfies $\ufds$.
\end{proof}

So we show that $I_{\full}$ satisfies $\ufds^{\full}$ and that every non-empty overlap
between facts of~$I_{\full}$ has a unary key, so we can conclude by
Lemma~\ref{lem:tamekey} that $I_{\full}$ has tame overlaps.

First, to show that $I_{\full}$ satisfies $\ufds^{\full}$, observe that (*) whenever $\phi: R_{\full}^i
\rightarrow R_{\full}^j$ is in $\ufds^{\full}$,
then $\nu(i) \subseteq \nu(j)$, so that, for any fact~$F$ of~$I_{\full}$, for any $1
\leq t \leq D$,
whenever $(\pi_j(F))(t)$ is defined,
so is $(\pi_i(F))(t)$, and we have $(\pi_j(F))(t) = (\pi_i(F))(t)$.
Further, by our construction, we easily see that (**) for any fact~$F$
of~$I_{\full}$, for any $1 \leq i \leq 2^D-1$ and
$1 \leq t \leq D$, the fact that $(\pi_i(F))(t)$ is defined or not only depends on $i$
and $t$, not on $F$.
Hence,
consider a $\ufd$ $\phi: R_{\full}^i \rightarrow R_{\full}^j$ in~$\ufds^{\full}$,
let $F$ and $F'$ be two facts of~$I_{\full}$ such that $\pi_i(F) = \pi_i(F')$,
and show that $\pi_j(F) = \pi_j(F')$. Take $1 \leq t \leq D$ and show that
either $(\pi_j(F))(t)$ and $(\pi_j(F'))(t)$ are both undefined, or they are both
defined and equal. By (**), either both are undefined or both are defined, so it
suffices to show that if they are defined then they are equal. But then, if both
are defined, by (*), we have $(\pi_j(F'))(t) = (\pi_i(F'))(t) = (\pi_i(F))(t) =
(\pi_j(F))(t)$. So we conclude indeed that $\pi_j(F) = \pi_j(F')$, so that 
$F$ and $F'$ cannot witness a violation of $\phi$.
Hence, $I_{\full} \models \ufds^{\full}$.

Second, to show that non-empty overlaps in~$I_{\full}$
have unary keys,
consider two facts 
$F_{f} = R_{\full}(\mybf{a}^{\mybf{f}})$ and $F_{g} = R_{\full}(\mybf{a}^{\mybf{g}})$, with $f \neq g$ so
that $F_{f} \neq F_{g}$. Assume that $\ovl(F_f, F_g)$ is non-empty,
and let us show that it has a unary key.
Let $O \defeq \{t \in T \mid f(t) = g(t)\}$, and let $X = T
\backslash O$; we have $X \neq \emptyset$, because otherwise $f =
g$, so we can define $p \defeq \nu^{-1}(X)$.
We will show that 
\[
\ovl(F_f, F_g) = \{R^i \in \positions(R_{\full}) \mid X \subseteq
\nu(i)\}
\]
This implies that $R^p \in \ovl(F_f, F_g)$ and that $R^p$ is a unary
key of~$\ovl(F_f, F_g)$, because,
for all $R^q \in \overlap(F_{f}, F_{g})$, $X \subseteq \nu(q)$, so that
$R^p \rightarrow R^q$ is in $\ufds^{\full}$.

To show the equality above, consider $R^i$ such that $X \subseteq \nu(i)$. Then $T \backslash
\nu(i) \subseteq T \backslash X$. 
Because $a^f_i = \restr{f}{T \backslash \nu(I)}$ and $a^g_i = \restr{g}{T \backslash
\nu(I)}$, we have $a^f_i = a^g_i$ by definition of~$O = T \backslash X$. Thus
$R^i \in \overlap(F_{f}, F_{g})$. Conversely, if $R^i \in
\overlap(F_{f}, F_{g})$, then we have $a^f_i =
a^g_i$, so by definition of~$O$ we must have $T
\backslash \nu(i) \subseteq O = T \backslash X$, which implies $X \subseteq
\nu(i)$.

Hence, $I_{\full}$ is a finite instance of~$\ufds^{\full}$ which satisfies the tame overlaps
property and contains $O(N)$ elements and
$\Omega(N^{x})$ facts. This concludes the proof of Theorem~\ref{thm:fds}
for the specific case of~$R_{\full}$ and $\ufds^{\full}$.

\medskip

Let us now show Theorem~\ref{thm:fds} for an arbitrary
relation $R$ and set $\ufds$ of $\ufd$s.
Let $K$ be 
a key of~$R$ of minimal cardinality, so that $\card{K} = D$.
Let $\lambda$ be a bijection from~$K$ to~$T$.
Extend $\lambda$ to a function $\mu$
such that, for all $R^p \in \pos(R)$, we set
$\mu(R^p) \defeq \{\lambda(R^k) \mid R^k \in K\text{~such~that~}R^k = R^p\text{~or~}R^k \rightarrow
R^p\text{~is~in~}\ufds\}$; note that this set is never empty.

Consider the instance $I_{\full}$ for relation $R_{\full}$ that we defined
previously,
and create an instance $I$ of~$R$ that contains, for every fact
$R_{\full}(\mybf{a})$ of~$I_{\full}$, a fact $F = R(\mybf{b})$ in $I$, with $b_i =
a_{\nu^{-1}(\mu(R^i))}$ for all $1 \leq i \leq \card{R}$.

We first show that $\card{\dom(I)} = O(N)$ and $\card{I} = \Omega(N^x)$. Indeed, for the
first point, we have $\dom(I) \subseteq \dom(I_{\full})$, and as we had
$\card{\dom(I_{\full})} = O(N)$, we deduce the same of $\dom(I)$. For the second
point, it suffices to show that we never create the same fact twice in~$I$ for
two different facts of~$I_{\full}$. Assume that there are two facts
$F_f = R_{\full}(\mybf{a})$ and $F_g = R_{\full}(\mybf{a}')$ in~$I_{\full}$ for which we created
the same fact $F = R(\mybf{b})$ in~$I$, and let
us show that we then have $f = g$ so that $F_f = F_g$. As
$\card{K} \geq 2$, consider $R^{k_1} \neq R^{k_2}$ in~$K$. 
We have $\mu(R^{k_1})
= \{\lambda(R^{k_1})\}$ and $\mu(R^{k_2}) = \{\lambda(R^{k_2})\}$. Hence, let $i_j
\defeq \lambda(R^{k_j})$ for $j \in \{1, 2\}$; as $\lambda$ 
is bijective, we deduce from $R^{k_1} \neq R^{k_2}$ that $i_1 \neq i_2$.
From the definition of
$b_{k_1}$ we deduce that $a_{\nu^{-1}(\{i_1\})} =
a'_{\nu^{-1}(\{i_1\})}$, and likewise $a_{\nu^{-1}(\{i_2\})} =
a'_{\nu^{-1}(\{i_2\})}$.
Similarly to the proof of why $I_{\full}$ has no duplicate facts, this implies that
$f(t) = g(t)$ for all $t \in T \backslash \{i_1\}$ and for all $t \in T
\backslash \{i_2\}$. As $i_1 \neq i_2$, we conclude
that $f = g$, so that $F_f = F_g$. Hence, we have
$\card{I} = \card{I_{\full}} = \Omega(N^x)$.

Let us now show that $I$ has tame overlaps for~$\ufds$. Consider two facts $F, F'$ of
$I$ that overlap, and let $O \defeq \ovl(F, F')$.
We first claim that there exists $\emptyset \subsetneq K' \subseteq K$,
such that, letting $X' \defeq \{\lambda(R^k) \mid R^k \in K'\}$, we have
$\ovl(F, F') = \{R^i \in \positions(R) \mid X' \subseteq \mu(R^i)\}$. Indeed,
letting $F_f$ and $F_g$ be the facts of~$I_{\full}$ used to create $F$ and $F'$, we
previously showed the existence of~$\emptyset \subsetneq X \subseteq T$ such that 
$\ovl(F_f, F_g) = \{R^i \in \positions(R_{\full}) \mid X \subseteq \nu(i)\}$.
Our definition of~$F$ and $F'$ from $F_f$ and $F_g$ makes it clear that we can
satisfy the condition by taking $K' \defeq \lambda^{-1}(X)$, so that $X' = X$.

Consider now $R^p \in \pos(R) \backslash O$. We
cannot have $X' \subseteq \mu(R^p)$, otherwise $R^p \in O$. Hence, there exists
$R^k \in K'$ such that $\lambda(R^k) \notin \mu(R^p)$. This implies that $R^k
\rightarrow R^p$ is not in $\ufds$. However, as $R^k \in K'$, we have
$\lambda(R^k) \in \mu(R^q)$ for all $R^q \in O$, so that $R^k \rightarrow O$
is in $\ufds$. This proves that~$O = \ovl(F, F')$ is tame. Hence, $I$ has the
tame overlaps property, which concludes the proof of Theorem~\ref{thm:fds}.

\medskip

The only thing left is to show that we can enforce the disjointness condition in
the Dense Interpretations Theorem (Theorem~\ref{thm:combinatorial}), namely:

\begin{corollary}
  \label{cor:combinatorial2}
  We can assume in the Dense Interpretations Theorem
  (Theorem~\ref{thm:combinatorial}) the following \textbf{disjointness
  condition} on the resulting instance $I$:
  each element occurs at exactly one position of the relation $R$. Formally, for
  all $a \in \dom(I)$, there exists exactly one $R^p \in \positions(R)$ such
  that $a \in \pi_{R^p}(I)$.
\end{corollary}

\begin{proof}
  Let $I$ be the instance constructed in the proof of the Dense Interpretations
  Theorem, 
  and consider the instance $I'$ whose domain is $\{(a, R^p) \mid a \in
  \dom(I), R^p \in \positions(\sigma)\}$ and which contains for every fact $F =
  R(\mybf{a})$
  of~$I$ a fact $F' = R(\mybf{b})$ such that $b_p = (a_p, R^p)$ for every
  $R^p \in \positions(\sigma)$. Clearly this defines a bijection $\phi$ from the
  facts of~$I$ to the facts of~$I'$, and for any facts $F, F'$ of~$I'$,
  $\ovl(F, F') = \ovl(\phi^{-1}(F), \phi^{-1}(F'))$. Thus any violation
  of the $\fd$s $\fds$ in $I'$ would witness one in $I$. Of course,
  $\card{\dom(I')} = \arity{\sigma} \cdot \card{\dom(I)}$, so to achieve a
  constant factor of~$K$ between the domain size and instance size with the
  disjointness condition, we need to use the proof of the Dense Interpretations
  Theorem (Theorem~\ref{thm:combinatorial}) with a constant factor of $K' \defeq \arity{\sigma} \cdot K$.
\end{proof}

\section{Blowing up Cycles}
\label{sec:cycles}
We are now ready to prove the Universal Models Theorem, which concludes the
proof of our Main Theorem (Theorem~\ref{thm:mymaintheorem}):

\univexists*

To do this, we must ensure $k$-soundness 
for arbitrary Boolean $\cq$s rather than just acyclic $\cq$s.

Intuitively, the only cyclic $\cq$s that hold in $\chase{I_0}{\idsb}$ either
have an acyclic self-homomorphic match (so they are implied by an acyclic $\cq$ that
also holds) or have all cycles matched to elements of~$I_0$.
Hence, in a $k$-sound instance for $\cq$, no other cyclic queries should be true.
Our way to ensure this is by a cycle blowup process:
starting with the superinstance constructed by Theorem~\ref{thm:hfds}, which
satisfies~$\con$ and is $k$-sound for $\acq$,
we build its product with a group of high girth.
The standard way to do so, inspired by \cite{otto2002modal}, is presented in
Section~\ref{sec:simpleprod}.

The problem is that this blowup process
may create $\fd$ violations. We work around this problem using some
additional properties ensured by our construction. In
Section~\ref{sec:cautiousness}, we accordingly show the Cautious Models Theorem, 
a variant of Theorem~\ref{thm:hfds} with additional properties.
Section~\ref{sec:cautiousness} is the only part of this \secname that
depends on the details of the previous \secnames.

We then apply a slightly different blowup construction to that model,
as described in Section~\ref{sec:mixedprod}, which ensures that no $\fd$
violations are created.
This blowup no longer depends on the specifics of the construction,
and does not depend on the specific $\uid$s and $\fd$s that hold; in
particular, the blowup constructions do not even require that the $\uid$s and $\fd$s are finitely closed.

\subsection{Simple Product}
\label{sec:simpleprod}

We first define a simple notion of product, which we can use to extend $k$-soundness
from $\acq$ to $\cq$, but which may introduce $\fd$ violations.
Let us first introduce preliminary notions:

\begin{definition}
  \label{def:group}
  A \deft{group} $G = (S, \cdot)$ over a finite set~$S$ consists of:
  \begin{itemize}
    \item an associative \deft{product law} $\cdot : S \times S \rightarrow S$;
    \item a \deft{neutral element} $e \in S$ such that $e \cdot x = x \cdot e =
      x$ for all $x \in S$;
    \item an \deft{inverse law} $\cdot^{-1} : S \rightarrow S$
  such that $x \cdot x^{-1} = x^{-1} \cdot x = e$ for all $x \in S$.
  \end{itemize}
  We say that $G$ is
  \deft{generated} by $X \subseteq S$ if all elements of~$S$ can be written as a
  product of elements of~$X$ and $X^{-1} \defeq \{x^{-1} \mid x \in X\}$.

  Given a group $G = (S, \cdot)$ generated by~$X$, assuming $\card{S} > 2$, the
  \deft{girth} of~$G$ under $X$ is the length of the shortest non-empty word
  $\mybf{w}$ of elements of~$X$ and $X^{-1}$
  such that $w_1 \cdots w_n = e$ and $w_i \neq w_{i+1}^{-1}$ for all $1 \leq i <
  n$.
\end{definition}

The following result, originally from \cite{margulis1982explicit}, is proven
for $\card{X} > 1$
in, e.g., \cite{otto2012highly} (Section~2.1), and is straightforward for
$\card{X} = 1$
(take $\ZZ/n\ZZ$):

\begin{lemma}
  \label{lem:groups}
  For all $n \in \mathbb{N}$ and finite non-empty set~$X$, there is a finite group $G = (S,
  \cdot)$ generated by $X$
  with girth $\geq n$ under $X$.
  We call $G$ an \deft{$\bm{n}$-acyclic group generated by~$X$}.
\end{lemma}

In other words, in an $n$-acyclic group generated by $X$, there is no short
product of elements of~$X$ and their inverses which evaluates to~$e$, except
those that include a factor $x \cdot x^{-1}$.

We now explain how to take the product of a superinstance~$I$ of~$I_0$ with such a finite group~$G$. This 
ensures that any cycles in the product instance are large, because they project to
cycles in~$G$, i.e., words evaluating to~$e$ as in Definition~\ref{def:group}. We use a specific generator:

\begin{definition}
  The \deft{fact labels} of a superinstance~$I$ of~$I_0$ are $\lab{I} \defeq
  \{\ll^F_i \mid F \in I \backslash I_0, 1 \leq i \leq \card{F}\}$, where
  $\card{F}$ is the arity of the relation for fact~$F$.
\end{definition}

Now, we define the product of a superinstance $I$ of~$I_0$
with a group generated by~$\lab{I}$.
We make sure not to blow up cycles in~$I_0$, so the result remains a
superinstance of~$I_0$:

\begin{definition}
  \label{def:prod}
  Let $I$ be a finite superinstance of~$I_0$ and 
  $G$ be a finite group generated by $\lab{I}$. The \deft{product of~$I$
  by~$G$ preserving $I_0$},
  written
  $(I, I_0) \sprod G$,
  is the finite instance
  with domain $\dom(I) \times G$
  consisting of the following facts, for all $g \in G$:
  \begin{itemize}
  \item For every fact $R(\mybf{a})$ of~$I_0$, the fact $R((a_1, g), \ldots,
    (a_{\card{R}}, g))$.
  \item For every fact $F = R(\mybf{a})$ of~$I \backslash I_0$, the
    fact
    $R((a_1, g \cdot \ll^F_1), \ldots, (a_{\card{R}}, g \cdot \ll^F_{\card{R}}))$.
  \end{itemize}
  We identify $(a, e)$ to $a$ for $a \in \dom(I_0)$,
  so $(I, I_0) \sprod G$ is still a superinstance of~$I_0$.
\end{definition}

It will be simpler to reason about initial instances $I_0$ where each element
has been \defo{individualized} by the addition of a fresh fact that is unique to
that element. We give a name to this notion:

\begin{definition}
  \label{def:individualizing}
  An \deft{individualizing} instance $I_0$ is such that, for each $a \in
  \dom(I_0)$, $I_0$ contains a fact $P_a(a)$ where $P_a$ is a fresh unary
  predicate which does not occur in queries, in $\uid$s or in $\fd$s.

  An \deft{individualizing superinstance} of an instance $I_0$ is a
  superinstance $I_1$ of~$I_0$ that adds precisely one
  unary fact $P_a(a)$, for a fresh unary relation $P_a$, to each 
  $a \in \dom(I_0)$, so that $I_1$ is individualizing. In particular, we have
  $\dom(I_0) = \dom(I_1)$, and $I_0$ and $I_1$ match for all relations
  of~$\sigma$ that occur in the query $q$ and the constraints~$\con$.
\end{definition}

We can now state the following property, which we will prove in the rest of
this \subsecname:

\begin{lemma}[Simple product]
  \label{lem:prodppty}
  Let $\con$ be finitely closed $\fd$s and $\uid$s,
  let $I$ be a finite superinstance of an individualizing~$I_0$ and
  let $G$ be a finite $(2k+1)$-acyclic group generated by~$\lab{I}$.
  If $I$ is $(k \cdot (\arity{\sigma}+1))$-sound for~$\acq$, $I_0$, and~$\con$,
  then $(I, I_0) \sprod G$ is \mbox{$k$-sound} for~$\cq$, $I_0$, and~$\con$.
\end{lemma}

The following example illustrates the idea of taking the simple product of an instance
with a group of high girth:

\begin{figure}
  \centering
  \begin{tikzpicture}[
  text height=1.3ex,text depth=0ex,xscale=1.4,yscale=2,
  mispos/.append style={red, font=\scriptsize},
  newfact/.append style={red, dashed, -{>[scale=1.5]}}
]
  \node (a) at (-4.5, -1) {$a$};
  \node (b) at (-3.5, -1) {$b$};

  \path (a) edge[->,bend left=15] node[above,midway] {$R$} (b);
  \path (b) edge[->,bend left=15,newfact] node[below,midway] {$S$} (a);

  \node (a0) at (-2, -1) {$a_0$};
  \node (b0) at (-1, -1) {$b_0$};

  \node (a1) at (0, -1) {$a_1$};
  \node (b1) at (1, -1) {$b_1$};

  \node (a2) at (2, -1) {$a_2$};
  \node (b2) at (3, -1) {$b_2$};

  \path (a0) edge[->,bend left=15] node[above,midway] {$R$} (b0);
  \path (a1) edge[->,bend left=15] node[above,midway] {$R$} (b1);
  \path (a2) edge[->,bend left=15] node[above,midway] {$R$} (b2);

  \path (b0) edge[->,bend right=15,newfact] node[below,midway] {$S$} (a1);
  \path (b1) edge[->,bend right=15,newfact] node[below,midway] {$S$} (a2);
  \path (b2) edge[->,bend left=18,newfact] node[above,midway] {$S$} (a0);
\end{tikzpicture}
  \caption{Product with a group of large girth (see Example~\ref{exa:product})}
  \label{fig:product}
\end{figure}
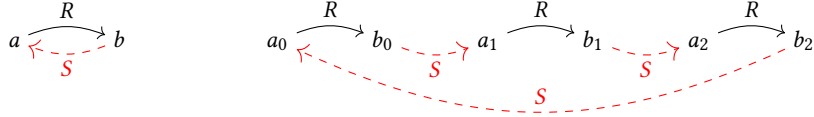

\begin{example}
  \label{exa:product}
  Consider $F_0 \defeq R(a, b)$ and $I_0 \defeq \{F_0\}$, illustrated in solid black
  in the left part of Figure~\ref{fig:product}. Consider $\idsb$ consisting of $\tau: \ui{R^2}{S^1}$, $\tau':
  \ui{S^2}{R^1}$, $\tau^{-1}$, and $(\tau')^{-1}$. Let $F \defeq S(b, a)$, and
  $I \defeq I_0 \sqcup \{F\}$, where $F$ is a red dashed edge in the drawing. $I$ satisfies~$\idsb$ and is sound for $\acq$, but not for
  $\cq$: take for instance $q : \exists x y ~ R(x, y) \wedge S(y, x)$, which is
  cyclic and holds in $I$ while $(I_0, \idsb) \not\entunr q$.

  We have $\lab{I} = \{\ll^F_1, \ll^F_2\}$. Identify $\ll^F_1$ and $\ll^F_2$
  to~$1$ and $2$ and
  consider the group $G \defeq (\{0, 1, 2\}, +)$ where $+$ is
  addition modulo~3. The group $G$ has girth 2 under $\lab{I}$.

  The product $I_\p \defeq (I, I_0) \sprod G$, writing pairs as subscripts for brevity, is
  $\{R(a_0, b_0), \allowbreak R(a_1, b_1),\allowbreak  R(a_2, b_2),\allowbreak
    S(b_1, a_2),\allowbreak  S(b_2, a_0),\allowbreak  S(b_0,
  a_1)\}$. The right part of Figure~\ref{fig:product} represents $I_\p$. Here, $I_\p$ happens to be \mbox{$5$-sound} for $\cq$.
\end{example}

We cannot directly use the simple product for our purposes, however, because
$I_\p \defeq (I_\f, I_0) \sprod G$ may violate $\ufds$ even though our instance
$I_\f$ satisfies $\fds$.
Indeed, there may be a relation~$R$,
a $\ufd$ $\phi: R^p \rightarrow R^q$ in~$\ufds$,
and two $R$-facts $F$ and $F'$ in~$I_\f \backslash I_0$
with $\pi_{R^p, R^q}(F) = \pi_{R^p, R^q}(F')$.
In $I_\p$ there will be images of~$F$ and $F'$ that overlap only on~$R^p$,
so they will violate~$\phi$.

Nevertheless, in the remainder of this \subsecname we prove 
the Simple Product Lemma (Lemma~\ref{lem:prodppty}),
as it will be useful for our purposes later.
Remember that a \emph{match} of a $\cq$ in an instance is witnessed by a
homomorphism~$h$, and that we also call the \emph{match} the image of~$h$.
We start by proving an easy lemma:

\begin{lemma}
  \label{lem:diffact}
  For any $\cq$ $q$ and instance $I$, if
  $I \models q$ with a witnessing homomorphism $h$ that maps two different atoms
  of~$q$ to the same fact, then there is a $\cq$ $q'$ such that:
  \begin{itemize}
    \item $\card{q'} < \card{q}$
    \item $q'$ \deft{entails} $q$, meaning that for any instance $I$, if $I \models q'$
      then $I \models q$
   \item $I \models q'$
  \end{itemize}
\end{lemma}

\begin{proof}
  Fix $q$, $I$, $h$, and let $A = R(\mybf{x})$ and $A' = R(\mybf{y})$ be the two
  atoms of~$q$ mapped to the same fact $F$ by $h$.
  Necessarily $A$ and $A'$ are atoms for the same relation $R$ of the fact $F$,
  and $h(A) = h(A')$ means that $h(x_i) = h(y_i)$ for all $R^i \in \pos(R)$.

  Let $\dom(q)$ be the set of variables occurring in $q$. Consider the
  map $f$ from $\dom(q)$ to $\dom(q)$ defined by $f(y_i) = x_i$ for all
  $i$, and $f(z) = z$ if $z$ does not occur in~$A'$. Observe that this ensures
  that $h(z) = h(f(z))$ for all $z \in \dom(q)$. Let $q' = f(q)$ be the query
  obtained by replacing every variable $z$ in $q$ by $f(z)$, and, as $f(A') =
  f(A)$, removing one of those duplicate atoms so that $\card{q'} < \card{q}$.
  We claim that $h' \defeq \restr{h}{\dom(q')}$ is a match of~$q'$ in $I$.
  Indeed, observe that any atom $f(A'')$
  of~$q'$ is homomorphically mapped by $h'$ to $h(A'')$ because $h'(f(z)) =
  h(z)$ for all $z$ so $h'(f(A'')) = h(A'')$.

  To see why $q'$ entails $q$, observe that $f$ defines a homomorphism from
  $q$ to $q'$, so that, for any instance $I'$, if $q'$ has a match $h''$ in~$I'$, then $h''
  \circ f$ is a match of~$q$ in~$I'$.
\end{proof}

Let us now prove the Simple Product Lemma (Lemma~\ref{lem:prodppty}).
Fix the constraints $\con$ and
the superinstance $I$ of the individualizing~$I_0$ such that $I$ is
$((\arity{\sigma}+1)\cdot k)$-sound for $\acq$, $I_0$, and $\con$.
Fix the $(2k+1)$-acyclic group~$G$ generated by~$\lab{I}$.
Consider $I_\p \defeq (I, I_0) \sprod G$, which is a
superinstance of~$I_0$, up to our identification of~$(a, e)$ to $a$ for $a \in
\dom(I_0)$, where $e$ is the neutral element of~$G$. We must show that $I_\p$ is
$k$-sound for~$\cq$, $I_0$, and~$\con$.

We call a match $h$ of a $\cq$ $q$ in $I_\p$ \deft{pure-instance-cyclic} if
every atom containing two occurrences of the same variable is mapped by~$h$ to a
fact of $I_0 \times G$, and
every Berge cycle of~$q$ contains an atom mapped by $h$ to a fact of~$I_0 \times G$.
In particular, if $q$ is in $\acq$ then any match $h$ of~$q$ in $I_\p$ is
vacuously pure-instance-cyclic. Our proof
consists of two claims:
\begin{enumerate}
  \item If a $\cq$ $q$ with $\card{q} \leq k$ has a pure-instance-cyclic match
    $h$ in $I_\p$, then $\chase{I_0}{\idsb} \models q$.
  \item If a $\cq$ $q$ with $\card{q} \leq k$ has a match $h$ in $I_\p$ which is
    not pure-instance-cyclic, then there is a $\cq$ $q'$ with $\card{q'} <
    \card{q}$ such that $q'$ entails $q$ and $q'$ has a match in $I_\p$.
\end{enumerate}
The fact that $I_\p$ is $k$-sound for~$\cq$ clearly follows from the two claims:
if a $\cq$~$q$ with $\card{q} \leq k$ has a match in $I_\p$, then we can apply the
second claim repeatedly until we obtain a $\cq$ $q'$ with $\card{q'} < \card{q}
\leq k$, $q'$ entails $q$, and $q'$ has a pure-instance-cyclic match in $I_\p$:
this must eventually occur because the empty query is in $\acq$.
Then use the first claim to deduce that $\chase{I_0}{\idsb} \models q'$, where
it follows that $\chase{I_0}{\idsb} \models q$. So it suffices to prove 
these two claims.

\medskip

We start by proving the first claim.
Let $q$ be a $\cq$ with $\card{q} \leq k$ that has a pure-instance-cyclic match
$h$ in $I_\p$.

We partition the atoms of~$q$ between the atoms $\calA$ matched by $h$ to $I_0
\times G$ and the atoms $\calA'$ which are not: we can then write $q$ as $\exists
\mybf{x} ~ \calA(\mybf{x}) \wedge \calA'(\mybf{x})$. Let $\calA''$ consist of the atom
$P_a(z)$ for each variable $z$ occurring in $\calA'$ which is mapped by $h$ to
an element $a \in \dom(I_0 \times G)$, and let $q'$ be the query $\exists
\mybf{x} ~ \calA'(\mybf{x}) \wedge \calA''(\mybf{x})$. 
As $I_0$ is individualizing,
it is immediate that $h$ is a match
of~$q'$ in~$I_\p$.

We first claim that $q'$ is in $\acq$. Indeed, no Berge cycle in $q'$
can use the atoms of~$\calA''$ as they are unary, and 
for the same reason no atom in~$\calA''$ contains two occurrences of the same
variable. Further,
$\calA'$
does not contain any Berge cycle or atom with two occurrences of the same
variable,
by definition of~$h$ being pure-instance-cyclic. Hence, $q'$ is indeed
in~$\acq$.
Further, we have
$\card{q'} \leq k \cdot (\arity{\sigma}+1)$, as $\card{\calA''} \leq
\arity{\sigma} \cdot \card{\calA'}$ and we have $\card{\calA'} \leq \card{q}
\leq k$, so that $\card{q'} \leq k \cdot (\arity{\sigma}+1)$.
Now, we know that $I
\models q'$, as evidenced by the homomorphism $\pr$ from $I_{\p}$ to $I$ 
defined by $\pr: (a, g) \mapsto a$ for
all $a \in \dom(I)$ and $g \in G$.
As $I$ is $(k \cdot (\arity{\sigma}+1))$-sound for $\acq$, and $q'$ is an $\acq$
query that holds in $I$ with $\card{q'} \leq k \cdot (\arity{\sigma}+1)$, we
know that $\chase{I_0}{\idsb} \models q'$.

Now, as $\calA''$ covers all
variables of~$q'$, by definition of~$I_0$ being individualizing, the only
possible match of~$q'$ in the chase is the one that maps each variable $z$ to
the $a \in \dom(I_0)$ such that the atom $P_a(z)$ is in $\calA''$. Further, as
$h$ matched $\calA$ to facts of~$I_0$ such that $h(z) = a$ where $P_a(z)$ occurs
in $\calA''$, we can clearly extend the match of~$q'$ in $\chase{I_0}{\idsb}$ to
a match of~$q$ in $\chase{I_0}{\idsb}$. This concludes the proof of the first
claim.

\medskip

We now prove the second claim.
Let $q$ be a $\cq$ with $\card{q} \leq k$ that has a match $h$ 
in $I_\p$ which is not pure-instance-cyclic.
Consider a Berge cycle $C$ of~$q$, of the form $A_1, x_1, A_2, x_2, \ldots,
A_n, x_n$, where the $A_i$ are pairwise distinct atoms and the $x_i$ pairwise
distinct variables,
where the $A_i$ are mapped by $h$ to facts not in $I_0 \times G$,
and where for all $1 \leq i \leq n$, variable $x_i$ occurs at position $q_i$
of atom $A_i$ and position $p_{i+1}$ of~$A_{i+1}$, with addition modulo $n
\defeq \card{C}$. We assume without loss of generality that $p_i \neq q_i$ for
all $i$. However, we do not assume that $n \geq 2$: either $n \geq 2$ and $C$ is
really a Berge cycle according to our previous definition, or $n = 1$ and
variable $x_1$ occurs in atom $A_1$ at positions
$p_1 \neq q_1$, which corresponds to the case where there are
multiple occurrences of the same variable in an atom.

For $1 \leq i \leq n$,
we write $F_i = R_i(\mybf{a}^{\mybf{i}})$ the image of~$A_i$ by $h$ in $I_\p$;
by definition of~$I_\p$, as $F_i$ is not a fact of $I_0 \times G$, there
is a fact $F'_i = R_i(\mybf{b}^{\mybf{i}})$ of~$I$ and $g_i \in G$ such that $a^i_j = (b^i_j,
g_i \cdot \ll^{F'_i}_j)$ for $R_i^j \in \pos(R_i)$.
Now, for all $1 \leq i \leq n$, 
as $h(x_i) = a^i_{q_i} = a^{i+1}_{p_i+1}$ for all $1 \leq i \leq n$, we deduce
by projecting on the second component that $g_i \cdot \ll^{F'_i}_{q_i} = g_{i+1}
\cdot \ll^{F'_{i+1}}_{p_{i+1}}$, so that, by collapsing the equations of the cycle
together, $\ll^{F'_1}_{q_1} \cdot
(\ll^{F'_{2}}_{p_{2}})^{-1} \cdot \cdots \cdot \ll^{F'_{n-1}}_{q_{n-1}} \cdot
(\ll^{F'_{n}}_{p_{n}})^{-1} \cdot \ll^{F'_{n}}_{q_{n}} \cdot
(\ll^{F'_{1}}_{p_{1}})^{-1} = e$.

As the girth of~$G$ under $\lab{I}$ is $\geq 2k+1$, and this product contains
$2n \leq 2k$ elements, we must have either $\ll^{F'_i}_{q_i} =
\ll^{F'_{i+1}}_{p_{i+1}}$ for some $i$, or $\ll^{F'_i}_{p_i} = \ll^{F'_i}_{q_i}$ for
some $i$. The second case is impossible because we assumed that $p_i \neq q_i$
for all $1 \leq i \leq n$. Hence, necessarily $\ll^{F'_i}_{q_i} = \ll^{F'_{i+1}}_{p_{i+1}}$, so in
particular we must have $n > 1$ and $F'_i = F'_{i+1}$. Hence the atoms $A_i \neq A_{i+1}$ of~$q$ are
mapped by $h$ to the same fact $F'_i = F'_{i+1}$. We conclude by
Lemma~\ref{lem:diffact} that there is a strictly smaller $q'$ which entails $q$
and has a match in $I_\p$, which is what we wanted to show. This concludes the
proof of the second claim, and of the Simple Product Lemma (Lemma~\ref{lem:prodppty}).

\subsection{Cautiousness}
\label{sec:cautiousness}

As the simple product may cause $\fd$ violations, we
will define a more refined notion of product, which intuitively does
not attempt to blow up cycles within fact overlaps.
In order to clarify this, however, we will first need to study in more detail the
instance $I_\f$ to which we will apply the process, namely,
the one that we constructed to prove Theorem~\ref{thm:hfds}.
We will consider a \defo{quotient} of~$I_\f$:

\begin{definition}
  The \deft{quotient} $\quot{I}{\sim}$ of an instance $I$ by an equivalence
  relation $\sim$ on $\dom(I)$ is defined as follows:
  \begin{itemize}
  \item $\dom(\quot{I}{\sim})$ is the equivalence classes of~$\sim$ on~$\dom(I)$,
  \item $\quot{I}{\sim}$ contains one fact $R(\mybf{A})$
    for every fact $R(\mybf{a})$ of~$I$,
    where $A_i$ is the $\sim$-class of~$a_i$ for all $R^i \in \pos(R)$.
  \end{itemize}
  The \deft{quotient homomorphism} $\chi_{\sim}$ is the homomorphism from~$I$ to
  $\quot{I}{\sim}$ defined by mapping each element of~$\dom(I)$ to its
  $\sim$-class.
\end{definition}

We quotient $I_{\f}$ by the equivalence relation $\bbsim_k$ (recall
Definition~\ref{def:bbsim}). The result may no longer satisfy $\con$. However, it is still $k$-sound for $\acq$, for the following reason:

\begin{lemma}
  \label{lem:ksimhomom}
  Any $k$-bounded simulation from an instance~$I$ to an instance $I'$ defines a
  $k$-bounded simulation from $\quot{I}{\bbsim_k}$ to~$I'$.
\end{lemma}

\begin{proof}
Fix the instance $I$ and the $k$-bounded simulation $\cov$ to an instance $I'$, and
consider $I'' \defeq \quot{I}{\bbsim_k}$. We show that there is a
$k$-bounded simulation $\cov'$ from $I''$ to $I$, because $\cov
\circ \cov'$ would then be a $k$-bounded simulation from $I''$ to $I'$, the
desired claim. We define $\cov'(A)$ for all $A \in I''$ to be $a$ for any member
$a \in A$ of the equivalence class~$A$ in~$I$, and show that $\cov'$ thus defined is
indeed a $k$-bounded simulation.

We will show the stronger result that $(I'', A) \bsim_{k} (I, a)$ for all $A
\in \dom(I'')$ and for any $a \in
A$. We do it by proving, by induction on~$0 \leq k' \leq k$,
that $(I'', A) \bsim_{k'} (I, a)$ for all $A \in \dom(I'')$ and $a \in
A$.
The case $k' = 0$ is trivial.
Hence, fix $0 < k' \leq k$, assume that $(I'', A) \bsim_{k'-1} (I, a)$ for all $A
\in \dom(I'')$ and $a \in A$, and show that this is also true for $k'$. Choose
$A \in \dom(I'')$, $a \in A$, we must show that $(I'', A) \bsim_{k'} (I, a)$. To do so,
consider any fact $F = R(\mybf{A})$ of~$I''$ such that $A_p = A$ for
some $R^p \in \pos(R)$. Let $F' = R(\mybf{a}')$ be a fact of~$I$ that is a preimage
of~$F$ by $\chi_{\bbsim_k}$, so that $a'_q \in A_q$ for all $R^q \in \pos(R)$.
We have $a'_p \in A$ and $a \in A$, so that $a'_p \bbsim_k a$
holds in $I$. Hence, in particular we have $(I, a'_p) \bsim_{k'} (I, a)$ because
$k' \leq k$, so there
exists a fact $F'' = R(\mybf{a}'')$ of~$I$ such that $a''_p = a$ and $(I, a'_q)
\bsim_{k'-1} (I, a''_q)$ for all $R^q \in \pos(R)$. We show that $F''$ is a
witness fact for $F$. Indeed, we have $a''_p = a$. Let us now choose $R^q \in
\pos(R)$ and show that $(I'', A_q) \bsim_{k'-1} (I, a''_q)$. By induction
hypothesis, as $a'_q \in A_q$, we have $(I'', A_q) \bsim_{k'-1} (I, a'_q)$, and as
$(I, a'_q) \bsim_{k'-1} (I, a''_q)$, by transitivity we have indeed $(I'', A_q)
\bsim_{k'-1} (I, a''_q)$. Hence, we have shown that $(I'', A) \bsim_{k'} (I, a)$.

By induction, we conclude that $(I'', A) \bsim_{k} (I, a)$ for all $A
\in \dom(I'')$ and $a \in A$, so that there is indeed a $k$-bounded simulation
from $I''$ to $I$, which, as we have explained, implies the desired claim.
\end{proof}

Let us thus consider $I_{\f}' \defeq
\quot{I_{\f}}{\bbsim_k}$ which is still $k$-sound for $\acq$
by the previous lemma, and
consider the homomorphism $\chi_{\bbsim_k}$ from $I_\f$ to~$I'_\f$. Our idea is
to blow up cycles in $I_\f$ by a \defo{mixed product} that only distinguishes
facts that have a different image in $I'_\f$ by $\chi_{\bbsim_k}$.
This is sufficient to lift $k$-soundness from $\acq$ to $\cq$, and it will not
create $\fd$ violations on facts that have the same image by~$\chi_{\bbsim k}$.
Crucially, however, we can show from our construction that all overlapping facts
of~$I_\f$ have the same image by~$\chi_{\bbsim k}$.
Let us formalize this condition:

\begin{definition}
  \label{def:cautious}
  Let $I$ be an instance, let $I_1 \subseteq I$, and let $f$ be any mapping with
  domain~$I$.
  We say $I$ is \deft{cautious} for~$f$ and~$I_1$
  if for any two \deft{overlapping} facts, namely, two facts
  $F = R(\mybf{a})$ and $F' = R(\mybf{b})$ 
  of the same relation with $a_p =
  b_p$ for some 
  $R^p \in \pos(R)$, one of the following holds: $F, F' \in I_1$, or $f(a_p) =
  f(b_p)$ for all $R^p \in \pos(R)$.
\end{definition}

We conclude the \subsecname by presenting a strengthening of
Theorem~\ref{thm:hfds}. This is the only point in this \secname where we rely on
the details of the process of the previous \secnames:

\begin{theorem}[Cautious models]
  \label{thm:cautiousmodels}
  For any finitely closed $\con$ formed of $\uid$s $\idsb$ and $\fd$s $\fds$, instance $I_0$, and
  $k \in \NN$,
  we can build 
  a finite superinstance $I_{\f}$ of an instance $I_1$ such that:
  \begin{itemize}
    \item $I_1$ is an
      individualizing superinstance of a disjoint union of copies of~$I_0$;
    \item $I_\f$ satisfies $\con$;
    \item $I_\f$ is $k$-sound for $\con$, $\acq$, and $I_1$;
    \item $I_\f$ is cautious for $\chi_{\bbsim_k}$ and $I_1$.
  \end{itemize}
\end{theorem}

We will use the Cautious Models Theorem in the next \subsecname. For now, let us
show how to prove it.
Fix $\con$, $I_0$, and $k \in \NN$.
Let $I_{0,\ii}$ be an individualizing superinstance of $I_0$, and 
apply $k$ $\uid$ chase rounds with the $\uid$s of $\idsb$ to $I_{0,\ii}$ to obtain $I_{0,\ii}'$.
Apply the Sufficiently Envelope-Saturated Proposition (Proposition~\ref{prp:preproc})
to~$I_{0,\ii}'$ to obtain an
aligned superinstance $J$ of a disjoint union $I_{0,\ii}''$ of copies of
$I_{0,\ii}'$.
Now, modify $J$ to $J'$ and $I_{0,\ii}''$ to $I_1'$ by replacing the copies of the facts of $I_{0,\ii} \backslash I_0$
by new individualizing facts (i.e., make the individualizing facts unique across
copies of $I_{0,\ii}'$).
This ensures by definition that $I_1'$ is the result of applying $k$ $\uid$
chase rounds to an individualizing superinstance of a
disjoint union of copies of~$I_0$.
Further, the modification to~$J'$ can be done so as to ensure that $J'$ is 
an aligned superinstance of~$I_1'$;
the $k$ chase rounds applied when defining~$I_{0,\ii}'$ ensure that
the $\cov$ mapping can still be defined notwithstanding the change in the
individualizing facts.
Further, we have $\card{J'} = \card{J}$, so
$J'$ is still sufficiently envelope-saturated.

We now apply the Envelope-Thrifty Completion Proposition (Proposition~\ref{prp:etcomp})
to the aligned
superinstance $J'$ of~$I_1'$ to obtain
a superinstance $J_\f$ of~$I_1'$ which is $k$-sound for $\con$, $\acq$, and
$I_1'$, and that satisfies
$\con$. Now, define $I_1$ from $I_1'$ by removing the facts created in the $k$
$\uid$ chase rounds, so it is by definition an individualizing superinstance of
a disjoint union of copies of~$I_0$. As $I_1'$ is the result of applying chase rounds to~$I_1$,
$I_\f$ is also $k$-sound for $\con$, $\acq$ and $I_1$. Hence, $I_\f$ satisfies the first three conditions that we have to show
in the Cautious Models Theorem (Theorem~\ref{thm:cautiousmodels}). The only thing left is to show the last one,
namely:

\begin{lemma}[Cautiousness]
  \label{lem:cautious}
  $I_\f$ is cautious for $\chi_{\bbsim_k}$ and~$I_1$.
\end{lemma}

We show the Cautiousness Lemma in the rest of the \subsecname, which concludes
the proof of the Cautious Models Theorem.

We
first show that overlapping facts in $J_{\f} = (I_\f, \cov_\f)$ are cautious
for the $\cov$ mapping that we construct, in terms of $\bbsim_k$-classes.
Formally, let
$I_{\cc} \defeq \chase{I_1}{\idsb}$, and let $\chi_{\bbsim_k}'$ be the
homomorphism from $I_\cc$ to $\quot{I_\cc}{\bbsim_k}$. We claim:

\begin{lemma}
  \label{lem:covcautious}
  $I_\f$ is cautious for $\chi_{\bbsim_k}' \circ \cov$ and~$I_1$.
\end{lemma}

In other words, whenever two facts $F = R(\mybf{a})$ and $F' = R(\mybf{b})$
have non-empty overlap in $I_\f$ and are not both in $I_1$, then, for any position $R^p \in
\pos(R)$, we have $\cov(a_p) \bbsim_k \cov(b_p)$ in $I_\cc$.

\begin{proof}
  We first check that this claim holds on the result~$J'$ of the
  Sufficiently Envelope-Saturated Proposition (Proposition~\ref{prp:preproc}),
  with our modifications to the individualizing facts.
  $J'$ is a disjoint union of instances $J_D$ for each fact class $D \in
  \afactcl$. If $D$ is safe, no facts overlap in $J_D$ except possibly fact
  pairs in the copy
  of~$I_0$, hence, in~$I_1$. For unsafe $D$, in Lemma~\ref{lem:oneenv},
  the only facts with non-empty overlap in $J_D$
  are fact pairs in some copy of $I_0$, hence in~$I_1$, or they are the facts
  $f'(F_i)$,
  which all map to $\bbsim_k$-equivalent $\cov$-images by construction. So the
  claim holds on $J'$.

  Second, it suffices to show that the claim is preserved by envelope-thrifty
  chase steps.
  By their definition, whenever we create a new fact $F_{\fnew}$ for a fact
class $D$, the only
  elements of $F_{\fnew}$ that can be part of an overlap between $F_{\fnew}$ and an
existing fact are envelope elements, appearing at the one position at which
they appear in $\calE(D)$.
Then, by condition~4 of the definition of envelopes
(Definition~\ref{def:envelope}), we deduce that the two overlapping facts
achieve the same fact class.
\end{proof}

Returning to the proof of the Cautiousness Lemma (Lemma~\ref{lem:cautious}),
we now show that two elements in $J_\f$ having $\bbsim_k$-equivalent
$\cov$ images in $I_\cc$
must themselves be $\bbsim_k$-equivalent in~$J_\f$. We do it by showing that, in
fact, for any $a \in \dom(I_\f)$, not only do we have
$(I_\f, a) \bsim_k (I_\cc, \cov(a))$, as required by the $k$-bounded simulation
$\cov$, but we also have the reverse:
$(I_\cc, \cov(a)) \bsim_k (I_\f, a)$; in fact, we even have a
\emph{homomorphism} from
$I_\cc$ to $I_\f$ that maps $\cov(a)$ to~$a$.
The existence of this homomorphism is thanks to our
specific definition of~$\cov$, and on the directionality condition of aligned
superinstances; further, it only holds for the final result $I_\f$,
which satisfies $\ids$; it is not respected at intermediate steps of the process.

To prove this, and conclude the proof of the Cautiousness Lemma,
remember the forest structure on the $\uid$ chase 
(Definition~\ref{def:chaserev}).
We define the \deft{ancestry} $\calA_F$ of a fact $F$ in $I_\cc$ as
$I_1$ plus the facts of the path in the chase forest that leads to $F$;
if $F \in I_1$ then $\calA_F$ is just $I_1$.
The \deft{ancestry} $\calA_a$ of~$a \in
\dom(I_\cc)$ is that of the fact where $a$ was introduced. 

We now claim the following lemma about $J_\f$, which relies on the
directionality condition:

\begin{lemma}
  \label{lem:ancesthom}
  For any $a \in \dom(I_\f)$, there is a homomorphism $h_a$ from $\calA_{\cov(a)}$
  to~$I_\f$ such that $h_a(\cov(a)) = a$.
\end{lemma}

\begin{proof}
  We prove that this property holds on $I_\f$, by first showing that it is true
  of~$J'$ constructed by our modification of 
  the Sufficiently Envelope-Saturated Solutions Proposition (Proposition~\ref{prp:preproc}).
  This is clearly the case because the instances created by
  Lemma~\ref{lem:oneenv} are just truncations of the chase where some elements
  are identified at the last level.

  Second, we show that the property is maintained by envelope-thrifty steps;
  in fact, by any thrifty chase steps (Definition~\ref{def:ft2})
  Consider a thrifty chase step where,
  in a state $J_1 = (I_1, \cov_1)$
  of the construction of our aligned superinstance,
  we apply a $\uid$ $\tau: \ui{R^p}{S^q}$ to a fact $F_{\factive} = R(\mybf{a})$
  to create a fact $F_{\fnew} = S(\mybf{b})$
  and obtain the aligned superinstance $J_2 = (I_2, \cov_2)$.
  Consider the chase witness $F_\w = S(\mybf{b}')$. By
  Lemma~\ref{lem:wadef}, $b'_q$ is the exported element
  between $F_\w$ and its parent in $\chase{I_0}{\idsb}$.
  So we know that for any $S^r \neq S^q$,
  we have $\calA_{b'_r} = \calA_{b'_q} \sqcup \{F_\w\}$.

  We must build the desired homomorphism $h_a$ for all $a
  \in \dom(I_2) \backslash \dom(I_1)$.
  Indeed, for $a \in \dom(I_1)$, by hypothesis on~$I_1$, there is a homomorphism $h_a$ from
  $\calA_{\cov_1(a)}$ to $I_1$ with $h_a(\cov_1(a)) = a$, and as $\cov_2(a) =
  \cov_1(a)$, we can use $h_a$ as the desired homomorphism from
  $\calA_{\cov_2(a)}$ to~$I_2$.
  So let us pick $b \in \dom(I_2) \backslash \dom(I_1)$ and construct $h_b$. By
  construction of~$I_2$, $b$ must occur in the new fact $F_{\fnew}$; further, by
  definition of thrifty chase steps, we have defined $\cov_2(b) \defeq b'_r$ for
  some $S^r$ where $b_r = b$.
  Now, as $a_p = b_q$ is in $\dom(I_1)$,
  we know that there is a homomorphism $h_{b_q}$
  from~$\calA_{\cov(b_q)} = \calA_{b'_q}$ to~$I_1$
  such that we have $h_{b_q}(b'_q) = b_q$.
  We extend $h_{b_q}$ to the homomorphism $h_{b}$
  from~$\calA_{b'_r} = \calA_{b'_q} \sqcup \{F_\w\}$ to~$I_2$
  such that $h_{b}(b'_r) = b$,
  by setting $h_{b}(F_\w) \defeq F_{\fnew}$
  and $h_{b}(F) \defeq h(F)$ for any other~$F$ of~$\calA_{b'_r}$;
  we can do this because, by definition of the $\uid$ chase,
  $F_\w$ shares no element with the other facts of~$\calA_{b'_r}$
  (that is, with $\calA_{b'_q}$),
  except $b'_q$ for which our definition coincides
  with the existing image of~$b'_q$ by~$h_{b_q}$.
  This proves the claim.
\end{proof}

This allows us to deduce the following, which is specific to~$J_\f$, and relates
to the universality of the chase~$I_\cc$:

\begin{corollary}
  \label{cor:covrevb}
  For any $a \in \dom(I_\f)$, there is a homomorphism $h_a$ from
  $I_\cc$ to $I_\f$ such that $h_a(\cov(a)) = a$. 
\end{corollary}

\begin{proof}
  Choose $a \in \dom(I_\f)$ and let us construct $h_a$.
Let $h'_a$ be the homomorphism
from $\calA_{\cov(a)}$ to $I_\f$ with $h'_a(\cov(a)) = a$ whose existence was
proved in Lemma~\ref{lem:ancesthom}. Now start by setting $h_a \defeq h'_a$, and
extend $h'_a$ to be the desired homomorphism,  fact by fact,
using the property that $I_\f \models \idsb$: for any $b \in
\dom(I_\cc)$ not in the domain of~$h'_a$ but which was
introduced in a fact $F$ whose exported element $c$ is in
the current domain of~$h'_a$, let us extend $h'_a$ to the elements of~$F$ in the
following way: consider the parent fact $F'$ of~$F$ in~$I_\cc$ and its match
by $h'_a$ in~$I_\f$, let $\tau$ be the $\uid$ used to create $F'$ from $F$, and
$c' \in \dom(I_\cc)$ be the exported element between~$F$ and~$F'$ (so $h'_a(c')$
is defined). We know that $c \defeq h'_a(c')$ occurs in~$I_\f$ at all positions
where $c'$ occurs in~$I_\cc$. Hence, because
$I_\f \models \tau$, there must be a suitable fact $F''$ in~$I_\f$ to extend $h'_a$ to all
elements of~$F$ by setting $h'_a(F) \defeq F''$,
which is consistent with the image of~$c$ previously defined in~$h'_a$.
The (generally infinite) result of this process is the desired homomorphism~$h_a$.
\end{proof}

We are now ready to show our desired claim:

\begin{lemma}
  \label{lem:covbbsim}
  For any $a, b \in \dom(I_\f)$, if $\cov(a)
\bbsim_k \cov(b)$ in $I_\cc$, then $a \bbsim_k b$ in~$I_\f$.
\end{lemma}

\begin{proof}
  Fix $a, b \in \dom(I_\f)$.
  We have $(I_\f, a) \bsim_k (I_\cc, \cov(a))$
  because $\cov$ is a $k$-bounded simulation;
  we have $(I_\cc, \cov(a)) \bsim_k (I_\cc, \cov(b))$
  because $\cov(a) \bbsim_k \cov(b)$;
  and we have $(I_\cc, \cov(b)) \bsim_k (I_\f, b)$
  by Corollary~\ref{cor:covrevb} as witnessed by~$h_b$.
  By transitivity, we have $(I_\f, a) \bsim_k (I_\f, b)$.
  The other direction is symmetric,
  so the desired claim follows.
\end{proof}

The Cautiousness Lemma (Lemma~\ref{lem:cautious}) follows immediately from
Lemma~\ref{lem:covcautious} and Lemma~\ref{lem:covbbsim}. This concludes the
proof of the Cautious Models Theorem (Theorem~\ref{thm:cautiousmodels}).

\subsection{Mixed Product}
\label{sec:mixedprod}

Using the Cautious Models Theorem, we now define the notion of mixed product,
which uses the same fact label for facts with the same image by~$h \defeq
\chi_{\bbsim_k}$:

\begin{definition}
  \label{def:prodm}
  Let $I$ be a finite superinstance of~$I_1$
  with a homomorphism $h$ to another finite superinstance $I'$ of~$I_1$
  such that $\restr{h}{\dom(I_1)}$ is the identity and $\restr{h}{\dom(I\backslash I_1)}$
  maps to~$I' \backslash I_1$.
  Let $G$ be a finite group generated by $\lab{I'}$.

  The \deft{mixed product} of~$I$ by $G$ via $h$ preserving $I_1$,
  written $(I, I_1) \mprod^h G$,
  is the finite superinstance of~$I_1$ with domain $\dom(I) \times G$
  consisting of the following facts, for every $g \in G$:
  \begin{itemize}
  \item For every fact $R(\mybf{a})$ of~$I_1$, the fact $R((a_1, g), \ldots,
    (a_{\card{R}}, g))$.
  \item For every fact $F = R(\mybf{a})$ of~$I \backslash I_1$, the 
    fact
    $R((a_1, g \cdot \ll^{h(F)}_1), \ldots, (a_{\card{R}}, g \cdot
    \ll^{h(F)}_{\card{R}}))$. \qedef
  \end{itemize}
\end{definition}

We now show that the mixed product preserves $\uid$s and $\fd$s when
cautiousness is assumed.

\begin{lemma}[Mixed product preservation]
  \label{lem:mixedprod}
  For any $\uid$ or $\fd$ $\tau$, if $I \models \tau$
  and $I$ is cautious for~$h$, then $(I, I_1) \mprod^h G \models \tau$.
\end{lemma}

\begin{proof}
Write $I_\m \defeq (I, I_1) \mprod^h G$ and write $I'$ for the range of~$h$ as
before.

If $\tau$ is a $\uid$, the claim is immediate even without the cautiousness
hypothesis. (In fact, the analogous claim could even be proven for the simple
product.) Indeed, for any $a \in
\dom(I)$ and $R^p \in \positions(\sigma)$, if $a \in \pi_{R^p}(I)$ then
$(a, g) \in \pi_{R^p}(I_\m)$ for all $g \in G$; conversely, if
$a \notin \pi_{R^p}(I)$ then
$(a, g) \notin \pi_{R^p}(I_\m)$ for all $g \in G$. Hence,
letting $\tau : \ui{R^p}{S^q}$ be a $\uid$ of~$\idsb$, if there is $(a, g) \in
\dom(I_\m)$ such that $(a, g) \in \pi_{R^p}(I_\m)$ but $(a, g) \notin
\pi_{S^q}(I_\m)$ then $a \in \pi_{R^p}(I)$ but $a \notin \pi_{S^q}(I)$. Hence any
violation of~$\tau$ in $I_\m$ implies the existence of a violation of~$\tau$
in $I$, so we conclude because $I \models \tau$.

Assume now that $\tau$ is a $\fd$ $\phi : R^L \rightarrow R^r$.
Assume by contradiction that there are two facts
$F_1 = R(\mybf{a})$ and $F_2 = R(\mybf{b})$ in~$I_\m$
that violate~$\phi$, i.e., we have $a_l = b_l$ for all $l \in L$,
but $a_r \neq b_r$.
Write $a_i = (v_i, f_i)$ and $b_i = (w_i, g_i)$ for all $R^i \in \pos(R)$.
Consider $F_1' \defeq R(\mybf{v})$ and $F_2' \defeq R(\mybf{w})$ 
the facts of~$I$ that are the images of~$F_1$ and~$F_2$
by the homomorphism from~$I_\m$ to~$I$
that projects on the first component.
As $I \models \tau$, $F_1'$ and $F_2'$ cannot violate~$\phi$,
so as $v_l = w_l$ for all $l \in L$, we must have $v_r = w_r$.
Now, as $I$ is cautious for~$h$ and $F_1'$ and $F_2'$ overlap
(take any $R^{l_0} \in R^L$),
either $F_1', F_2' \in I_1$ or $h(F_1') = h(F_2')$.

In the first case,
by definition of the mixed product,
there are $f, g \in G$
such that $f_i = f$ and $g_i = g$ for all $R^i \in \pos(R)$.
Thus, taking any $l_0 \in L$, as we have $a_{l_0} = b_{l_0}$,
we have $f_{l_0} = g_{l_0}$, so $f = g$, which implies that $f_r = g_r$.
Hence, as $v_r = w_r$,
we have $(v_r, f_r) = (w_r, g_r)$,
contradicting the fact that $a_r \neq b_r$.

In the second case,
as $h$ is the identity on $I_1$
and maps $I \backslash I_1$ to $I' \backslash I_1$,
$h(F_1') = h(F_2')$ implies that
either $F_1'$ and $F_2'$ are both facts of~$I_1$
or they are both facts of $I \backslash I_1$;
but we have already excluded the former possibility in the first case,
so we assume the latter.
By definition of the mixed product,
there are $f, g \in G$ such that
$f_i = f \cdot \ll^{h(F'_1)}_i$ and $g_i = g \cdot \ll^{h(F'_2)}_i$
for all $R^i \in \pos(R)$.
Picking any $l_0 \in L$, from $a_{l_0} = b_{l_0}$,
we deduce that $f \cdot \ll^{h(F'_1)}_{l_0} = g \cdot \ll^{h(F'_2)}_{l_0}$;
as $h(F'_1) = h(F'_2)$, 
this simplifies to $f = g$. Hence, $f_r = g_r$
and we conclude like in the first case.
\end{proof}

Second, we show that $h: I \to I'$ lifts to a homomorphism from the mixed product to the
simple product, so we can rely on the result 
of the Simple Product Lemma (Lemma~\ref{lem:prodppty}).

\begin{lemma}[Mixed product homomorphism]
  \label{lem:mixedhom}
  There is a homomorphism from $(I, I_1) \mprod^h G$ to~$(I, I_1) \sprod G$.
\end{lemma}

\begin{proof}
We use the homomorphism $h: I \to I_1$
to define the homomorphism $h'$ from $I_\m \defeq (I, I_1) \mprod^h G$ to $I_\p
\defeq (I, I_1) \sprod G$
by $h'((a, g)) \defeq (h(a), g)$
for every $(a, g) \in \dom(I) \times G$.

Consider a fact $F = R(\mybf{a})$ of~$I_\m$, with $a_i
= (v_i, g_i)$ for all $R^i \in \pos(R)$. Consider its image $F' = R(\mybf{v})$
by the homomorphism from $I_\m$ to
$I$ obtained by projecting to the first component,
and the image $h(F')$ of~$F'$ by the homomorphism~$h$.
As $\restr{h}{\dom(I_1)}$ is the identity and $\restr{h}{\dom(I \backslash I_1)}$ maps
to $I_1 \backslash I_1$, $h(F')$ is a fact of~$I_1$ iff $F'$ is.
Now by definition of the simple product it is clear that $I_\p$ contains the
fact~$h'(F)$: it was created in~$I_\p$ from~$h(F')$
for the same choice of $g \in G$. This shows that $h'$ is indeed a homomorphism,
which concludes the proof.
\end{proof}

We can now conclude our proof of the Universal Models Theorem
(Theorem~\ref{thm:univexists}).
Let $I_1$ be the individualizing union of disjoint copies of~$I_0$ and
$I_\f$ be the superinstance of~$I_1$ 
given by the Cautious Models Theorem (Theorem~\ref{thm:cautiousmodels}) applied to
$k' \defeq k \cdot (\arity{\sigma} + 1)$.
As $I_1$ is individualizing, we know that each element of~$I_1$ is alone in its
$\bbsim_{k'}$-class in $I_\f$, so the restriction of $\quot{I_\f}{\bbsim_{k'}}$ to
$\chi_{\bbsim_{k'}}(I_1)$ is actually $I_1$ up to isomorphism; so we define 
$I_\f'$ to be $\quot{I_\f}{\bbsim_{k'}}$ modified by identifying
$\chi_{\bbsim_{k'}}(I_1)$ to $I_1$; it is a finite superinstance of~$I_1$.
Let $h$ be the homomorphism
from $I_\f$ to $I_\f'$ obtained by modifying $\chi_{\bbsim_{k'}}$ accordingly,
which ensures that $\restr{h}{\dom(I_1)}$ is the identity and $\restr{h}{\dom(I_\f
\backslash I_1)}$ maps to~$I_\f' \backslash I_1$.

Let $G$ be a $(2k+1)$-acyclic group generated by $\lab{I'_\f}$,
and consider $I_\p \defeq (I'_\f, I_1) \sprod G$.
As $I_\f$ was $k'$-sound for $\acq$, $I_1$ and~$\con$,
so is~$I'_\f$ by Lemma~\ref{lem:ksimhomom},
so, as $I_1$ is individualizing,
$I_\p$ is $k$-sound for~$\cq$, $I_1$ and~$\con$ 
by the Simple Product Lemma (Lemma~\ref{lem:prodppty}).
However, as we explained, it may be the case that 
 $I_\p \not\models \con$.
We therefore construct $I_\m \defeq (I_\f, I_1) \mprod^h G$.
By the Mixed Product Homomorphism Lemma (Lemma~\ref{lem:mixedhom}),
$I_\m$ has a homomorphism to $I_\p$,
so it is also $k$-sound for $\cq$, $I_1$ and~$\con$.
Now, as $I_1$ is an individualizing superinstance 
of a disjoint union of copies of~$I_0$,
and as the 
fresh relations in the individualizing superinstance~$I_1$ 
do not occur in queries or in constraints,
it is clear that $I_\m$ is also $k$-sound for $\cq$, $I_0$ and~$\con$.
Further, by the conditions ensured 
by the Cautious Models Theorem (Theorem~\ref{thm:cautiousmodels}),
$I_\f$ is cautious for $h$ and $I_1$.
So, by the Mixed Product Preservation Lemma (Lemma~\ref{lem:mixedprod}),
we have $I_\m \models \con$ because $I_\f \models \con$. 

Hence, the mixed product $I_\m$ is a finite $k$-universal instance for~$\cq$,
$I_0$ and~$\con$.
This concludes the proof of the Universal Models Theorem, and hence of our main
theorem (Theorem~\ref{thm:mymaintheorem}).

\section{Conclusion}
\label{sec:conclusion}

In this work we have developed the first techniques 
to build finite models on arbitrary arity schemas that satisfy both referential constraints and number
restrictions, while controlling which $\cq$s are satisfied. We
have used these techniques to prove that finite open-world query answering for $\cq$s,
$\uid$s and $\fd$s is finitely controllable up to finite closure of the
dependencies. This allowed us to isolate the complexity of FQA for $\uid$s and $\fd$s.

As presented the constructions are quite specific to dependencies, so we leave
as future work the question of 
constraint languages
featuring conjunction, disjunction, constants, and other such features. For
instance, one goal could be to generalize
to higher arity
the rich arity-2 constraint languages of, e.g.,
\cite{ibanezgarcia2014finite,pratt2009data}, while maintaining the decidability
of FQA.

\myparagraph{Acknowledgements}
This work was supported in part by the
Engineering and Physical Sciences Research Council, UK (EP/G004021/1) and the
French ANR NormAtis project. We are very grateful to Balder ten
Cate, Thomas Gogacz, Andreas Pieris, and Pierre Senellart for comments on earlier
drafts, and to the anonymous reviewers for their valuable feedback.

\bibliographystyle{ACM-Reference-Format-Journals}
\bibliography{lics-full}

\appendix

\section{Details about the $\uid$ Chase and Unique Witness Property}
\label{apx:chase}

In this appendix, we give details about the $\uid$ chase and the \emph{Unique
Witness Property}. Recall its statement:

\medskip 

For any element $a \in \dom(\chase{I}{\ids})$ and position $R^p$ of~$\sigma$,
if two facts of~$\chase{I}{\ids}$ contain~$a$ at position~$R^p$,
then they are both facts of~$I$.

\medskip

We first give an example showing why this may not be guaranteed by the first round of the
$\uid$ chase. Consider the instance $I = \{R(a), S(a)\}$ and the $\uid$s
$\tau_1: \ui{R^1}{T^1}$ and $\tau_2: \ui{S^1}{T^1}$, where $T$ is a binary
relation.
Applying a round of the $\uid$ chase creates the instance $\{R(a), S(a), T(a, b_1), T(a,
b_2)\}$, with $T(a, b_1)$ being created by applying $\tau_1$ to the active fact
$R(a)$, and $T(a, b_2)$ being created by applying $\tau_2$ to the active fact
$S(a)$.

By contrast, the core chase would create only one of these two facts, because it
would consider that two new facts are \defp{equivalent}: they have the same
exported element occurring at the same position. In general, the core chase
keeps only one fact within each class of facts that are equivalent in this sense.

However, after one chase round by the core chase, there is no longer any
distinction between the $\uid$ chase and the core chase, because the following property
holds on the result $I'$ of a chase round (be it by the core chase or by the $\uid$ chase)
on any instance $I''$: (*)~for any $\tau \in \ids$ and element $a \in \appelem{I'}{\tau}$, $a$ occurs in only one fact of~$I'$.
This is true because $\ids$ is transitively closed, so we know
that no $\uid$ of~$\ids$ is applicable to an element
of $\dom(I'')$ in $I'$; hence the only elements that witness violations occur
in the one fact where they were introduced in~$I'$.

We now claim that (*) implies that the Unique Witness Property holds when we
chase by the core chase for the first round and the $\uid$ chase for subsequent
rounds. Indeed, assume to
the contrary that $a \in \dom(\chase{I}{\ids})$ violates the Property, and that
two facts $F_1$ and $F_2$ contain~$a$ at some position~$R^p$.

If $a \in \dom(I)$, because $\ids$ is transitively closed, after the first chase
round on~$I$, we no longer create any fact that involves~$a$. Hence, each one of $F_1$ and
$F_2$ is either a fact of~$I$ or a fact created in the first round of the chase
(which is a chase round by the core chase). However, if one of $F_1$ and $F_2$
is in $I$, then it witnesses that we could not have $a \in \appelem{I}{R^p}$, so it
is not possible that the other fact was created in the first chase round. It
cannot be the case either that $F_1$ and $F_2$ were both created in the first
chase round, by definition of the core chase. Hence, $F_1$ and $F_2$ must 
both be facts of~$I$.

If $a \in \dom(\chase{I}{\ids}) \backslash \dom(I)$, assume that $a$ occurs at position $R^p$ in two
facts $F_1$, $F_2$. As $a \notin \dom(I)$, none of them is a fact of~$I$. We
then show a contradiction. It is not possible that one of those facts was
created in a chase round before the other, as otherwise the second created fact
could not have been created because of the first. Hence,
both facts were created in the same chase round.
So there was a chase round from $I''$ to $I'$ where we had $a \in
\appelem{I''}{R^p}$ and
both $F_1$ and $F_2$ were created respectively from active facts $F_1'$ and
$F_2'$ of $I''$ by $\uid$s
$\tau_1 : \ui{S^q}{R^p}$ and $\tau_2: \ui{T^r}{R^p}$.
But then, by property (*), $a$ occurs in only one fact, so as
it occurs in $F'_1$ and $F_2'$ we have $F_1' = F_2'$. Further, as $a \notin
\dom(I)$, $F'_1$ and $F'_2$ are not facts of~$I$ either, so by definition of the
$\uid$ chase and of the core chase, it is easy to see $a$ occurs at only one
position in $F'_1 = F'_2$.
This implies that $\tau_1 = \tau_2$. Hence, we must have $F_1 = F_2$, a
contradiction. This establishes the Unique Witness Property.

\end{document}